\def\N{\mathbb{N}}
\def\Z{\mathbb{Z}}
\def\Q{\mathbb{Q}}
\def\R{\mathbb{R}}
\def\C{\mathbb{C}}
\def\P{\mathbb{P}}
\begin{document}

\baselineskip 0.6cm
\newcommand{\vev}[1]{ \left\langle {#1} \right\rangle }
\newcommand{\bra}[1]{ \langle {#1} | }
\newcommand{\ket}[1]{ | {#1} \rangle }
\newcommand{\Dsl}{\mbox{\ooalign{\hfil/\hfil\crcr$D$}}}
\newcommand{\nequiv}{\mbox{\ooalign{\hfil/\hfil\crcr$\equiv$}}}
\newcommand{\nsupset}{\mbox{\ooalign{\hfil/\hfil\crcr$\supset$}}}
\newcommand{\nni}{\mbox{\ooalign{\hfil/\hfil\crcr$\ni$}}}
\newcommand{\nin}{\mbox{\ooalign{\hfil/\hfil\crcr$\in$}}}
\newcommand{\Slash}[1]{{\ooalign{\hfil/\hfil\crcr$#1$}}}
\newcommand{\EV}{ {\rm eV} }
\newcommand{\KEV}{ {\rm keV} }
\newcommand{\MEV}{ {\rm MeV} }
\newcommand{\GEV}{ {\rm GeV} }
\newcommand{\TEV}{ {\rm TeV} }

\def\diag{\mathop{\rm diag}\nolimits}
\def\tr{\mathop{\rm tr}}

\def\Spin{\mathop{\rm Spin}}
\def\SO{\mathop{\rm SO}}
\def\SU{\mathop{\rm SU}}
\def\U{\mathop{\rm U}}
\def\Sp{\mathop{\rm Sp}}
\def\SL{\mathop{\rm SL}}

\def\change#1#2{{\color{blue}#1}{\color{red} [#2]}\color{black}\hbox{}}

\theoremstyle{definition}
\newtheorem{thm}{Theorem}[subsection]
\newtheorem{defn}[thm]{Definition}
\newtheorem{notn}[thm]{Notation}
\newtheorem{exmpl}[thm]{Example}
\newtheorem{props}[thm]{Proposition}
\newtheorem{lemma}[thm]{Lemma}
\newtheorem{rmk}[thm]{Remark}
\newtheorem{anythng}[thm]{}
\newtheorem{cor}[thm]{Corollary}

\begin{titlepage}
  
\begin{flushright}
IPMU19-0191
\end{flushright}
 
 \vskip 1cm
 \begin{center}
  
  {\large \bf Modular Parametrization as Polyakov Path Integral: \\
---Cases with CM Elliptic Curves as Target Spaces---}
 
 \vskip 1.2cm
  
 Satoshi Kondo$^{1,2}$ and Taizan Watari$^2$
  
 \vskip 0.4cm
  
  {\it $^1$ Middle East Technical University, Northern Cyprus Campus, 
    99738 Kalkanli, Guzelyurt, Mersin 10, Turkey
  \\[2mm] 
     $^2$Kavli Institute for the Physics and Mathematics of the Universe, 
    University of Tokyo, Kashiwa-no-ha 5-1-5, 277-8583, Japan
    }
 
 \vskip 1.5cm
    
 \abstract{
For an elliptic curve $E$ over an abelian extension $k/K$ with CM by $K$ of
Shimura type, the L-functions of its $[k:K]$ Galois
representations are Mellin transforms of Hecke theta functions;
a modular parametrization (surjective map) from a modular curve to $E$
pulls back the 1-forms on $E$ to give the Hecke theta functions.
This article refines the study of our earlier work and shows that
certain class of chiral correlation functions in Type II string theory
with $[E]_\C$ ($E$ as real analytic manifold) as a target space yield 
the same Hecke theta functions as objects on the modular curve.   
The K\"{a}hler parameter of
the target space $[E]_\C$ in string theory plays the role of the index
(partially ordered) set in defining the projective/direct limit of
modular curves.
 } 
 \end{center}
 \end{titlepage}

\tableofcontents

\newpage 
\section{Introduction}
\label{sec:Intro}

The $L$-function of an elliptic curve with complex multiplication 
defined over $\Q$, or over the imaginary quadratic field $K$ in question, 
is the Mellin transform of a Hecke theta function, or of the product
of them \cite{Deuring}. 
This statement has been generalized to multiple directions; one is 
to think of elliptic curves of Shimura type \cite{Shimura-AA}, 
and the other is to think of elliptic curves defined over $\Q$ 
not necessarily with complex multiplication \cite{Taylor-Wiles}.

It is not immediately clear why the inverse Mellin transform 
of the $L$-functions $f(\tau)$ come to have modular invariance.
Another question closely related to the one above is what 
the argument $\tau$ and the value $f(\tau)$ stand for; functions appearing 
in natural science describe how the output changes as an input changes, and 
the input/output often has substance such as time, energy, entropy, and 
scattering amplitude, not just an abstract number in $\Z$, $\R$, or $\C$. 
When one comes to have an idea about the substance in the argument $\tau$ 
and value $f(\tau)$ of those modular forms, one might also begin to have 
an idea where the modularity also comes from.   

In our previous paper \cite{prev.paper},\footnote{For an earlier attempt
in this direction, see \cite{Schimmrigk-05}.} 
we identified a class 
of observables in superstring theory with those Hecke theta functions, 
in the cases of elliptic curves of Shimura type. 
To be more specific, Ref. \cite{prev.paper} pointed out that 
chiral correlation functions of the form 
\begin{align}
  f_{1\Omega'}^{\rm II}(\tau_{ws};\beta) := {\rm Tr}_{V_\beta}^{\rm Rmnd}
   \left[ F e^{\pi i F} q_{ws}^{L_0 - \frac{c}{24}} (\partial X^\C) \right]
   \label{eq:the-CCF}
\end{align}
are equal to those Hecke theta functions $f(\tau)$ after appropriate linear 
combinations of those $f_{1\Omega'}^{\rm II}$ are formed and the argument
$\tau_{ws}$ is rescaled somewhat. The observation in \cite{prev.paper}
therefore hints that the modular nature of the inverse Mellin transform 
of the $L$-function may originate from (a part of) the ${\rm SL}(2;\Z)_{ws}$
transformation on the homology basis of the genus-one world sheet in 
string theory, the complex structure modulus $\tau_{ws}$ of the genus-one 
world sheet is somewhat the argument $\tau$ of the modular form in question, 
and the value of the modular forms are the chiral correlation functions 
in string theory. See the main text for notations.  

Missing the most in \cite{prev.paper} was why the specifically chosen 
chiral correlation functions (\ref{eq:the-CCF}) are relevant to the 
$L$-functions of elliptic curves. One can think of infinitely many 
observables in string/superstring theory even after a target space is 
chosen and fixed; we have seen in \cite{prev.paper} that the observables 
on the right hand side of (\ref{eq:the-CCF}) do reproduce the Hecke theta 
functions corresponding to the $L$-functions (after taking linear combinations 
and rescaling of the argument), but no explanation was given to why 
the specific choice of the observables (\ref{eq:the-CCF}) is natural 
or sensible for that purpose. It is one of the primary objectives 
of this article to explain why (\ref{eq:the-CCF}) is a natural choice.
Here is a digest version of the explanation.
In the theory of modular parametrization, a modular form $f(\tau)$
regarded as a 1-form $d\tau f(\tau)$ on a modular curve corresponds 
to the pull-back of a 1-form of an elliptic curve. The observables 
(\ref{eq:the-CCF}) also measure the same 1-form of the target-space 
elliptic curve (pulled-back to the genus-one world sheet of string theory) 
because of the operator $(\partial X^\C)$ in (\ref{eq:the-CCF});
see (\ref{eq:pulled-back-1forms}).
The modular invariance of the chiral correlation functions indicates 
that they are functions on a quotient of the upper complex half plane,  
${\cal H}/\Gamma$. The subgroup $\Gamma \subset {\rm SL}(2;\Z)_{ws}$
is determined by the monodromy representation associated with the 
rational model of $T^2$-target conformal field theory (CFT); 
see Prop. \ref{props:chi-corrl-II-lift}.  
In this way, both the 1-form $d\tau f(\tau)$ in the Langlands 
correspondence and the chiral correlation functions $f^{\rm II}_{1\Omega'}$
are regarded as objects on the modular curve ${\cal H}/\Gamma$, and 
are identified with each other; see a schematic diagram in 
(\ref{eq:schematic-triangle}). The idea is presented in 
section \ref{sec:lift}, and the triangle in (\ref{eq:schematic-triangle}) 
closes in Thm. \ref{thm:reproduce-LLdual-by-CCF}
after details are discussed.
 
This article also introduces improved/clarified understandings on 
a couple of things that have already been treated in \cite{prev.paper}. 
For example, 
the rescaling of the argument necessary in seeing $f^{\rm II}_{1\Omega'}(\tau_{ws})$ 
as a modular form of some $\Gamma_1(M)$ is now given by a simpler 
formula (\ref{eq:the-argument-rescaling}), which is in association 
with the isomorphism in Lemma \ref{lemma:isom-strokeOp-cuspformSpace}. 
Another improvement is in identifying the action of complex 
multiplication as a guiding principle\footnote{
In \cite{prev.paper}, we determined the linear combinations 
in the way Hecke theta functions are reproduced from $f^{\rm II}_{1\Omega'}$.
So, the result of Ref. \cite{prev.paper} should be read as follows: {\it there 
exist} appropriate linear combinations of $f^{\rm II}_{1\Omega'}$ 
{\it that become} the Hecke theta functions. In this article, however, 
we say that the linear combinations of $f^{\rm II}_{1\Omega'}$ that diagonalize 
the action of complex multiplication {\it are} Hecke theta functions if $h=1$, 
and {\it are} their building blocks if $h > 1$. } 
in determining choices of linear combinations of $f^{\rm II}_{1\Omega'}$, 
as explained in section \ref{sec:CCF-eigenform}. 

We also write down an observation that possible choices of the K\"{a}hler 
parameter on the target-space elliptic curve, which is necessary in 
formulating string theory but not in defining the $L$-function, can be 
regarded as a directed partially ordered set. By taking the direct limit 
of the vector space of chiral correlation functions with respect to this 
set of K\"{a}hler parameters, we arrive at the notion of the vector space 
of certain string-theory observables that does not depend on a specific choice 
of a K\"{a}hler parameter. This way of thinking fits very well with the 
subject that we deal with in this article. 
See sections \ref{ssec:KahlerVsShimuraC} and \ref{ssec:CG-D}.

In section \ref{ssec:CG-D}, we add one remark (\ref{statmnt:GTtheory-final}) 
on the relation between (a) the theory of modular parametrization and 
(b) Galois action on the monodromy representations of models of rational 
CFT; both (a) and (b) are placed within the Galois--Teichm\"{u}ller theory.  

Contents of section \ref{sec:CCF-eigenform} may also be regarded 
as how chiral correlation functions of various rational vertex operator 
algebras occupy the vector space of modular forms for various congruence 
subgroups of ${\rm SL}(2;\Z)$, and how Hecke operators act on those 
chiral correlation functions (of course Hecke operators act on modular forms). 
Section \ref{ssec:HTheta-TypeII-CCF} shows, in particular, that 
the Hecke operators bring the chiral correlation functions 
($f^{\rm II}_{1\Omega'}$'s) of one rational 
vertex operator algebra associated with one CM elliptic curve 
$([E_z]_\C, f_\rho)$ to linear combinations of the chiral correlation 
functions ($f^{\rm II}_{1\Omega'}$'s) of other rational vertex operator 
algebras, in general, but the action of the Hecke operators is closed 
within the set of those $f^{\rm II}_{1\Omega'}$'s associated with the elliptic 
curves with complex multiplication by the maximal order ${\cal O}_K$
forming a Galois orbit of size $h({\cal O}_K)$. So, the study in 
section \ref{sec:CCF-eigenform} is not far in spirit from the work 
\cite{Harvey} on characters of general models of rational CFT. 
 
{\bf Reading Guide}: 
Large fraction of the materials in 
sections \ref{ssec:string}--\ref{ssec:modC4CCF} are review on elementary 
things in vertex operator algebra, but a small number of 
ideas crucial in this article are scattered here and there; 
most of those important ideas are already referred to above, so 
one may jump directly to those places. We wish that the presentation 
in sections \ref{ssec:string}--\ref{ssec:modC4CCF}
are readable enough for number-theory experts. 
Section \ref{ssec:review-modular}, on the other hand, is a 
densely packed review on almost all we need about modular 
forms in this article; they are all textbook materials 
but we wish this review saves time for people in string-theory 
community. 

Materials in sections \ref{ssec:map2shimuraEC} 
and \ref{ssec:Grothendieck} are almost entirely a review
on things known in arithmetic geometry. 
The theory of modular parametrization has been explained in many places 
for elliptic curves defined over $\Q$, but there is no reference 
with a systematic exposition of the one for elliptic curves of 
Shimura type (except \cite{Wortmann} covering special cases referred 
to in \ref{statmnt:wortmann} and \ref{statmnt:wortmann2}). So, 
section \ref{ssec:map2shimuraEC} brings various known facts together 
from the literature, and add some statements (in 
sections \ref{sssec:surjective-morph-2BK} and \ref{sssec:Langlands}) 
not found in the literature, to provide a systematic one, and prepares 
for discussions in section \ref{ssec:KahlerVsShimuraC}. 
Section \ref{ssec:Grothendieck}, on the other hand, reviews some aspects 
of Galois--Teichm\"{u}ller theory. This material is included in this 
article only for the purpose of setting notations to be used in 
section \ref{ssec:CG-D}. 

An article \cite{prev.paper} by the authors has already provided
a concise review (in \S4.2) on Hecke characters of a number field and 
also arithmetic models of elliptic curves of Shimura type. We do not include 
those reviews in this article again, but they will be helpful for 
readers not familiar with those materials. Almost the same set of 
notations is being used in \cite{prev.paper} and in this article.   
 
This article has partially adopted a style of presentation with {\bf Theorem}, 
{\bf Remark}, etc., because that is useful in making contexts explicit, 
and also in referring to a specific result/logic. 
Because it often happens in physics that certain observations are more 
important than solid technical relations between well-defined objects, 
it is not necessarily meant in this article that Theorems are more important 
than Remarks, Statements, and Discussions.

\section{Modular Curves for a $T^2$-target RCFT Model}
\label{sec:lift}

\subsection{String Theory}
\label{ssec:string}

There are a couple of different versions (formulations) of string theory, 
such as bosonic string theory and Type II string theory; each of those versions
assigns\footnote{String theory does more than assigning a model of CFT 
to a set of data. String theory 
characterizes and/or constrains the kinds of data to which models of CFT 
can be assigned. Here, however, we will not pay attention to 
this aspect of string theory. It is well understood that at least 
a pair of data $([E]_\C, f_\rho)$ in the main text is well within 
the kinds of data to which a model of CFT is assigned, and we will 
focus our attention to the mathematical relation between i) the input 
geometry data $([E]_\C, f_\rho)$ for string theory, ii) the model 
of vertex operator algebra, and iii) arithmetic models of $[E]_\C$,
by using the geometry of modular curve as a hub among them.} 
a model of CFT to a set of data of geometry. Here, our primary interest 
is in a set of data of the form $([E]_\C, f_\rho)$, where $[E]_\C$ is 
a $\C$-isomorphism class of elliptic curves with complex multiplication, 
and $f_\rho$ a positive integer; bosonic string theory assigns a model 
of CFT to a set of data $([E]_\C,f_\rho)$. 

We begin by recalling basic facts about elliptic curves with complex 
multiplication.
First, let $z$ be a complex number in the upper half plane, $z \in {\cal H}$, 
and $\C/(\Z \oplus z\Z)$ be an analytic representation of an elliptic curve:
\begin{align}
  [E_z]_\C = \left\{ X^\C \in \C \right\} / \left( X^\C \sim X^\C + 1, \; 
  X^\C \sim X^\C + z \right) = \C/\mathfrak{b}_z; \qquad
     \mathfrak{b}_z := (\Z \oplus z\Z). 
\end{align}
The $\C$-isomorphism class of the elliptic curve specified by 
$z \in {\cal H}$ in this way is denoted by $[E_z]_\C$. Now, suppose that 
an elliptic curve $[E_z]_\C$ has complex multiplication. This extra condition 
is known to be equivalent to the presence of a set of integers $[a_z,b_z,c_z]$ 
prime to each other, $(a_z,b_z,c_z)=1$, such that 
\begin{align}
 a_zz^2+b_zz+c_z=0. 
\end{align}
The root $z \in {\cal H}$ of this equation is algebraic, and hence 
there is a quadratic extension field $K:=\Q(z)$ associated with this 
elliptic curve $[E_z]_\C$ with complex multiplication. 
The discriminant $D_K < 0$ of the degree-2 extension field $K$ and the 
discriminant of the quadratic equation are related by 
\begin{align}
  D_z := 4a_zc_z - b_z^2 = (-D_K)f_z^2, \qquad {}^\exists f_z \in \N_{>0}.
\end{align}
We have $K=\Q(\sqrt{D_K})$ because 
\begin{align}
   z = \frac{-b_z + f_z \sqrt{D_K}}{2a_z}. 
\end{align}

Algebraic integers in $K$ form a ring, which is denoted by ${\cal O}_K$. 
It is expressed, as an abelian group, as ${\cal O}_K = \Z + w_K \Z$ with 
$w_K = \sqrt{D_K}/2$ if $D_K \equiv 0$ mod 4, and with 
$w_K = (1+\sqrt{D_K})/2$ 
if $D_K \equiv -3$ mod 4. A subring of ${\cal O}_K$ in the form of 
$\Z + w_K f_z \Z$ with an integer $f_z \in \N_{>0}$ is denoted by 
${\cal O}_{f_z}$, and is said to be an {\it order} of $K$. 
When any element $\alpha \in {\cal O}_{f_z}$ is given, 
multiplication 
of $\alpha$ on the complex coordinate $X^\C$ maps the rank-2 lattice 
$\mathfrak{b}_z$ to $\mathfrak{b}_z$ (not necessarily one-to-one).
Conversely, the set of elements in $K$ that maps $\mathfrak{b}_z$ to 
$\mathfrak{b}_z$ is the ring ${\cal O}_{f_z}$. So, a choice of an elliptic 
curve $[E_z]_\C$ with complex multiplication specifies an imaginary quadratic 
field $K$, its order ${\cal O}_{f_z}$, and an (${\rm SL}(2;\Z)$ orbit of) 
algebraic number(s) $z \in {\cal H}$ (equivalently a fractional ideal\footnote{
$a_z \mathfrak{b}_z \subset {\cal O}_{f_z}$.} 
of ${\cal O}_{f_z}$ that is proper, $\mathfrak{b}_z$).  Conversely, when 
an imaginary quadratic field $K$ and its order ${\cal O}_{f_z}$ are specified, 
there is only a finite number of $\C$-isomorphism classes of elliptic curves
with complex multiplication by ${\cal O}_{f_z}$, 
\begin{align}
  {\cal E}ll({\cal O}_{f_z}) := 
       \left\{ [E_{z_a}]_\C \; | \; a = 1,\cdots, h({\cal O}_{f_z}) \right\}.
  \label{eq:def-Ell-Ofz}
\end{align}

A set of data $([E_z]_\C, f_\rho)$, to which either bosonic string theory 
or Type II string theory assigns a model of CFT, contains extra 
information $f_\rho$. The positive integer $f_\rho$ is used to 
set the complexified K\"{a}hler parameter $\rho = f_\rho a_z z$, which 
determines the K\"{a}hler metric on $[E_z]_\C$:
\begin{align}
  ds^2 = \frac{1}{2} \left(
    dX^\C \otimes d\overline{X}^\C + d\overline{X}^\C \otimes dX^\C\right)
  \; \frac{{\rm Im}(\rho)}{{\rm Im}(z)} (2\pi)^2\alpha'. 
\end{align}
A set of data $([E_z]_\C, f_\rho)$ therefore specifies a $\C$-isomorphism 
class of elliptic curves with a (complexified) K\"{a}hler metric on it, 
rather than $[E_z]_\C$ with just the complex structure specified. 

Let us now move on and describe the model of CFT that bosonic string theory 
or Type II string theory assigns to a set of data $([E_z]_\C, f_\rho)$.
As a preparation, we introduce three lattices that are determined by the data 
$([E_z]_\C, f_\rho)$. First, think of a free abelian group 
$H^1(E;\Z) \oplus H_1(E;\Z)$ of rank 4, where $E$ is an elliptic curve, 
and introduce an intersection form on this rank-4 abelian group by 
using the cohomology--homology natural pairing. The lattice obtained in this 
way is ${\rm II}_{2,2}$, the even unimodular lattice of signature (2, 2), 
for any complex structure of the elliptic curve $E=[E_z]_\C$.

The remaining two lattices are denoted by $\Lambda_-$ and $\Lambda_+$, which 
are both sublattices of ${\rm II}_{2,2}$ introduced above. 
Let $\{ \alpha, \beta\}$ be a basis of $H_1(E;\Z)$ and 
$\{ \hat{\alpha}, \hat{\beta} \}$ the basis of $H^1(E;\Z)$ dual to 
$\{ \alpha, \beta\}$. Then think of four linear maps, $\Omega'_-$, 
$\overline{\Omega}'_-$, $\Omega'_+$ and $\overline{\Omega}'_+$, from 
${\rm II}_{2,2}$ to $\C$.
\begin{align}
  \Omega'_-: (\hat{\alpha}, \alpha, \hat{\beta}, \beta) & \; \longmapsto 
  \frac{i}{\sqrt{2{\rm Im}(z){\rm Im}(\rho)}} (-z, \bar{\rho}, 1, z\bar{\rho}),
    \\ 
  \overline{\Omega}'_-: (\hat{\alpha}, \alpha, \hat{\beta}, \beta) & \; 
    \longmapsto 
  \frac{-i}{\sqrt{2{\rm Im}(z){\rm Im}(\rho)}} (-\bar{z}, \rho, 1, \bar{z}\rho),
  \\
  \Omega'_+: (\hat{\alpha}, \alpha, \hat{\beta}, \beta) & \; \longmapsto 
  \frac{i}{\sqrt{2{\rm Im}(z){\rm Im}(\rho)}} (-z, \rho, 1, z\rho), \\
  \overline{\Omega}'_+: (\hat{\alpha}, \alpha, \hat{\beta}, \beta) & \; 
    \longmapsto   \frac{-i}{\sqrt{2{\rm Im}(z){\rm Im}(\rho)}} 
    (-\bar{z}, \bar{\rho}, 1, \bar{z}\bar{\rho}),
\end{align}
so $\overline{\Omega}'_{\mp} = cc \cdot \Omega'_{\mp}$. The sublattices 
$\Lambda_-$ and $\Lambda_+$ of ${\rm II}_{2,2}$ are defined to be 
the kernel of $\Omega'_+: {\rm II}_{2,2} \rightarrow \C$ and 
$\Omega'_-: {\rm II}_{2,2} \rightarrow \C$, respectively. So, the choice 
of the sublattices $\Lambda_{\mp}$ in ${\rm II}_{2,2}$ depends on 
the data $z$ and $\rho = f_\rho a_z z$. To be more concrete, 
\begin{align}
 \Lambda_- & \; = {\rm Span}_\Z \left\{ 
   f_\rho a_z \hat{\alpha} + \alpha, \;
   f_\rho b_z \hat{\alpha} -f_\rho c_z \hat{\beta} - \beta \right\} 
   =: \left\{ e_{-2}, e_{-1} \right\} \\
 \Lambda_+ & \; = {\rm Span}_\Z \left\{ 
   f_\rho a_z \hat{\alpha} -\alpha -f_\rho b_z \hat{\beta},  \;
   -f_\rho c_z \hat{\beta} + \beta \right\} =:
   \left\{ e_{+2}, e_{+1} \right\}
\end{align}
as free rank-2 abelian groups, and the intersection form of $\Lambda_-$ and 
$\Lambda_+$ on the basis above is given by 
\begin{align}
   f_\rho \left[ \begin{array}{cc} 2a_z  & b_z \\ b_z & 2c_z \end{array} \right],
 \quad {\rm and} \quad 
 -  f_\rho \left[ \begin{array}{cc} 2a_z  & b_z \\ b_z & 2c_z \end{array} \right],
  \label{eq:mom-latt-int-form}
\end{align}
respectively. Each is an even lattice, and the signature is 
$(2,0)$ and $(0,2)$, respectively. The discriminant is 
${\rm discr}(\Lambda_{\mp}) = f_\rho^2D_z = f_\rho f_z^2 |D_K|$. 
(We may use, later on, a lattice $\Lambda_{[E_z]_\C} := \Lambda_-[1/f_\rho]$, 
which depends only on the information $[E_z]_\C$, not on $f_\rho$.)
The dual lattice of $\Lambda_{\mp}$ is denoted by $\Lambda_{\mp}^\vee$ and is 
isomorphic to ${\rm II}_{2,2}/\Lambda_{\pm}$ as an abelian group.
Now, we have finished introducing three lattices ${\rm II}_{2,2}$ and 
$\Lambda_{\mp}$; let us now use those three lattices to describe the 
models of CFT assigned to $([E_z]_\C, f_\rho)$.

There are a multiple different ways to specify a model of CFT.
One way to do this is to specify a couple of things starting with 
i) a pair of vertex operator algebra\footnote{The terminology 
``vertex operator algebra'' in math corresponds to ``chiral algebra'' in 
the language of string theorists; $V_-$ [resp. $V_+$] is the set of 
states in a model of CFT whose vertex operator is purely holomorphic [resp. 
purely anti-holomorphic], and $Y_-$ [resp. $V_+$] is the linear operation 
assigning the vertex operator to such a state.} $(V_\mp,Y_\mp)$ labeled 
by $\{ -, +\}$, where $(V_-, Y_-)$ contains the Virasoro algebra associated 
with the holomorphic coordinate reparametrization of a local neighborhood 
of a Riemann surface $\Sigma$, and $(V_+, Y_+)$ contains 
that with the anti-holomorphic 
coordinate reparametrization. This is followed by ii) a set $A_-$ of 
representations of $(V_-,Y_-)$ and another set $A_+$ of representations of 
$(V_+,Y_+)$, iii) a $\Z$-valued matrix $d_{\alpha\beta}$ ($\alpha \in A_-$, 
$\beta \in A_+$), and iv) a choice of a set of intertwining operators (chiral 
vertex operators, OPE coefficients) subject to some constraints. 
Bosonic string theory assigns one set of data i--iv) of a model of CFT 
to a set of data $([E]_\C, f_\rho)$, and Type II superstring theory assigns 
another set of data i--iv) of a model of CFT to the same set of data 
$([E]_\C, f_\rho)$.

\begin{props}
\label{props:bos-str-T2target}
Bosonic string theory assigns to a set of data $([E]_\C, f_\rho)$ 
a model of CFT in the following way. $(V_-,Y_-)$ is the vertex operator 
algebra associated with the even rank-2 positive definite lattice $\Lambda_-$, 
and $(V_+, Y_+)$ that with the even rank-2 positive definite lattice 
$\Lambda_+[-1]$. The set of irreducible representations are 
\begin{align}
 A_- = \Lambda_-^\vee/\Lambda_- \cong {\rm II}_{2,2}/(\Lambda_- \oplus \Lambda_+)
  \cong \Lambda_+^\vee/\Lambda_+ = A_+,
\end{align}
where the isomorphism from ${\rm II}_{2,2}/(\Lambda_- \oplus \Lambda_+)$
to $\Lambda_\mp^\vee/\Lambda_\mp$ is given by $\Omega'_{\pm}$. Let 
$\phi: A_- \rightarrow A_+$ be the isomorphism above 
between the two sets (two 
abelian groups in fact). Then $d_{\alpha\beta} = \delta_{\phi(\alpha)\beta}$.
\end{props}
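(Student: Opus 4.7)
The plan is to quantize the bosonic string on the target $[E_z]_\C$ with Kähler parameter $\rho = f_\rho a_z z$ and read off the algebraic data i)--iv) from the resulting Narain lattice compactification. First I would set up the $\sigma$-model, whose zero-mode sector produces winding vectors in $H_1(E;\Z)$ and momentum vectors in $H^1(E;\Z)$. The cohomology--homology pairing assembles these into the rank-$4$ even unimodular lattice ${\rm II}_{2,2}$, and the classical mode expansion identifies $\sqrt{2\alpha'}\,p_L$ and $\sqrt{2\alpha'}\,p_R$ with the four linear maps $\Omega'_\pm,\overline{\Omega}'_\pm$: the pair $(\Omega'_-,\overline{\Omega}'_-)$ records the two holomorphic/antiholomorphic components of one chirality and $(\Omega'_+,\overline{\Omega}'_+)$ those of the opposite chirality. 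Verifying this is a direct calculation using the explicit embedding $E_z=\C/\mathfrak{b}_z$ and the Kähler metric $ds^2$ given in the text.

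Next I would identify the purely left-moving sector. A zero-mode state in the Narain Hilbert space is purely left-moving (belongs to $V_-$) precisely when its right-moving momentum vanishes, i.e., the underlying ${\rm II}_{2,2}$ vector lies in $\ker \Omega'_+ = \Lambda_-$. Combined with the bosonic oscillators, this identifies $(V_-,Y_-)$ with the standard lattice vertex operator algebra of $\Lambda_-$, which is positive definite of signature $(2,0)$ by (\ref{eq:mom-latt-int-form}). A symmetric argument, with the sign flip appropriate to anti-holomorphic movers, identifies $(V_+,Y_+)$ with the lattice VOA of $\Lambda_+[-1]$.

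The set of irreducible representations of the lattice VOA for an even positive-definite lattice $\Lambda_\mp$ is then canonically in bijection with the discriminant group $\Lambda_\mp^\vee/\Lambda_\mp$, which gives $A_\mp = \Lambda_\mp^\vee/\Lambda_\mp$. The middle isomorphism $\mathrm{II}_{2,2}/(\Lambda_-\oplus\Lambda_+) \cong \Lambda_\mp^\vee/\Lambda_\mp$ follows from the short exact sequences
\begin{equation*}
0 \longrightarrow \Lambda_\mp \longrightarrow {\rm II}_{2,2} \xrightarrow{\;\Omega'_\pm\;} \Lambda_\pm^\vee \longrightarrow 0
\end{equation*}
together with the unimodularity of ${\rm II}_{2,2}$, and the map $\phi$ is then the natural resulting identification.

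Finally, to pin down $d_{\alpha\beta}$ I would decompose the full target-space Hilbert space into $V_\alpha \otimes \overline{V}_\beta$ blocks. Because every zero-mode vector of the CFT lies inside ${\rm II}_{2,2}$, a block $(\alpha,\beta)$ contributes iff the pair lies in the image of the diagonal map ${\rm II}_{2,2} \to A_- \oplus A_+$; unimodularity of ${\rm II}_{2,2}$ forces this image to be exactly the graph of $\phi$, and each coset contributes with multiplicity one, yielding $d_{\alpha\beta}=\delta_{\phi(\alpha)\beta}$. The main obstacle I expect is the bookkeeping in the first step: carefully checking that the explicit normalizations for $\Omega'_\pm$ (in particular the factor $1/\sqrt{2\,{\rm Im}(z)\,{\rm Im}(\rho)}$ and the precise dependence on $\rho=f_\rho a_z z$) really do match the left/right momentum decomposition produced by canonical quantization in the paper's conventions, so that $\Lambda_\mp = \ker \Omega'_\pm$ coincides with the physical chiral momentum lattice on the nose rather than up to an unintended rescaling.
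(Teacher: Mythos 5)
The paper offers no proof of this Proposition: it is presented as the standard assignment of a Narain lattice CFT to a toroidal target, i.e.\ essentially as a definition recalling known facts about $T^2$ compactification, and the text moves straight on to the Remark about the model being diagonal. Your blind derivation supplies exactly the standard argument that the paper leaves implicit, and it is sound: the identification of the zero-mode lattice with ${\rm II}_{2,2}=H^1(E;\Z)\oplus H_1(E;\Z)$, the characterization of the purely chiral sublattices as $\ker\Omega'_{\pm}$, the classification of irreducible modules of a positive-definite even lattice VOA by the discriminant group (Dong's theorem), the glue-group isomorphisms ${\rm II}_{2,2}/(\Lambda_-\oplus\Lambda_+)\cong\Lambda_{\pm}^\vee/\Lambda_{\pm}$ from unimodularity, and the multiplicity-one graph-of-$\phi$ decomposition giving $d_{\alpha\beta}=\delta_{\phi(\alpha)\beta}$. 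Two small remarks. First, the "main obstacle" you flag is largely immaterial for the lattice identification: the kernel of $\Omega'_{+}$ is unchanged under any nonzero rescaling, so the factor $1/\sqrt{2\,{\rm Im}(z){\rm Im}(\rho)}$ cannot shift which sublattice is purely chiral; the normalization only matters when matching conformal weights $h=(w,w)_{\Lambda}/2$ against the intersection form (\ref{eq:mom-latt-int-form}), which is where the $\rho=f_\rho a_z z$ dependence is actually tested. Second, your exact sequence pairs $\Omega'_{\pm}$ with $\Lambda_{\pm}^\vee/\Lambda_{\pm}$ (same sign), whereas the Proposition as printed pairs $\Omega'_{\pm}$ with $\Lambda_{\mp}^\vee/\Lambda_{\mp}$; your version is the one consistent with the paper's later explicit use of $\Omega'_-$ to realize $\Lambda_-^\vee/\Lambda_-$ inside $\C/\Omega'_-(\Lambda_-)$ in \ref{statmnt:iReps-as-torsion-pts}, so the sign placement in the Proposition appears to be an index slip rather than a defect in your argument.
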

   
\begin{rmk}
The model of CFT assigned to $([E]_\C,f_\rho)$ by the bosonic string theory 
is a model of rational CFT that is diagonal. Due to the isomorphism between 
the vertex operator algebra for the holomorphic part and anti-holomorphic 
part ($\Lambda_+[-1] \cong \Lambda_-$), we will sometimes drop the reference 
to the distinction between them, and use the notation $(V,Y)$. The sets of 
representations $A_-$ and $A_+$ are written as $iReps$ instead. 
$iReps \cong \Lambda^\vee/\Lambda$, where 
$\Lambda := \Lambda_- = \Lambda_+[-1]$.
\end{rmk}
 
Let us now leave a statement about how Type II string theory assigns a model 
of SCFT to a set of data $([E_z]_\C, f_\rho)$. A model of ${\cal N}=(2,2)$ 
SCFT is specified \cite{SVOA} by things such as i) a pair of ${\cal N}=2$ 
super vertex 
operator algebras $(V_\pm, Y_\pm)$ of NS-type, ii) the sets $A_-^{NS}$ and 
$A_-^{R}$ of NS-type representations and R-type representations\footnote{
The Ramond sector in string-theory language corresponds to  
parity-twisted modules of a super vertex operator algebra \cite{Rsectr-SVOA}. } 
of the 
${\cal N}=2$ super vertex operator algebra $(V_-,Y_-)$, and the sets $A_+^{NS}$
and $A_+^R$ for $(V_+,Y_+)$, iii) $\Z$-valued matrices $d_{\alpha\beta}^{NSNS}$, 
$d_{\alpha\beta}^{NSR}$, $d_{\alpha\beta}^{RNS}$, and $d_{\alpha\beta}^{RR}$, and iv) 
a choice of intertwining operators (OPE coefficients) satisfying certain 
sets of conditions.  

\begin{props}
The Type II superstring theory assigns to a set of data $([E_z]_\C, f_\rho)$ 
a model of ${\cal N}=(2,2)$ superconformal field theory (SCFT) in the 
following way.
Both of the ${\cal N}=2$ super vertex algebras $(V_-, Y_-)$ and $(V_+,Y_+)$ are 
that of ${\cal N}=2$ superconformal symmetry and the lattice $\Lambda$.
$A_+^{NS}=A_+^R = A_-^{NS}=A_-^R = \Lambda^\vee/\Lambda$ (due to the 
spectral flow in the case of $T^2$-target models). The common set 
$\Lambda^\vee/\Lambda$ is denoted by $iReps.$. 
$\phi: A_- \cong A_+$ is the trivial identification, and 
$d_{\alpha\beta}=\delta_{\phi(\alpha)\alpha}$ for all the four matrices 
$d_{\alpha\beta}^{**}$.
\end{props}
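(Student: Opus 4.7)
The plan is to supersymmetrize the bosonic construction of Prop.~\ref{props:bos-str-T2target}. First, I would form the ${\cal N}=2$ super vertex algebras $(V_\mp, Y_\mp)$ by tensoring the lattice vertex algebras associated with $\Lambda_-$ and $\Lambda_+[-1]$ with two complex free fermions $\psi^\C, \bar\psi^\C$ of conformal weight $1/2$. The ${\cal N}=2$ supercurrents $G^\pm$ and the $U(1)_R$ current $J$ are built as normal-ordered products of the fermions with the free bosons $\partial X^\C, \partial \bar X^\C$ generated by the Heisenberg subalgebra inside each lattice VOA. Checking the standard OPEs shows that each of $(V_\mp, Y_\mp)$ is an ${\cal N}=2$ super vertex operator algebra of central charge $3$ attached to the lattice $\Lambda$, settling item~i) of the data.

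For item~ii), in the NS sector irreducible modules of the lattice-plus-fermion SVOA are classified by cosets in $\Lambda^\vee/\Lambda$, since the fermion NS module is unique and tensors in trivially; hence $A^{NS}_\mp = \Lambda^\vee/\Lambda$. For the Ramond (parity-twisted) sector, the key point is that for a $T^2$-target model the spectral-flow operator $U_{1/2}$ of the ${\cal N}=2$ algebra is implemented by a vertex operator lying inside the chiral algebra itself (at $c=3$ the corresponding flow charge lies in $\Lambda^\vee$), so $U_{1/2}$ intertwines NS and R modules as an equivalence of module categories. This produces the bijection $A^R_\mp \cong A^{NS}_\mp = \Lambda^\vee/\Lambda$ asserted in the statement.

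Finally, for items~iii)--iv) I would fix the four matrices $d^{**}_{\alpha\beta}$ by imposing modular invariance of the one-loop torus partition function in each spin structure, together with mutual locality of the full non-chiral vertex operators. Decomposing the Narain lattice partition function on ${\rm II}_{2,2}$ along the orthogonal splitting $\Lambda_- \oplus \Lambda_+$ exactly as in Prop.~\ref{props:bos-str-T2target} produces the diagonal pairing
\begin{equation*}
\sum_{\alpha \in \Lambda^\vee/\Lambda} \chi^-_\alpha(\tau_{ws})\, \overline{\chi^+_\alpha(\tau_{ws})},
\end{equation*}
tensored with the universal fermion contribution in each spin structure, which forces $\phi$ to be the identity on $\Lambda^\vee/\Lambda$ and $d^{**}_{\alpha\beta}=\delta_{\alpha\beta}$ uniformly across the four sectors. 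The main technical obstacle is verifying that the spectral flow $U_{1/2}$ truly lives inside the chiral vertex algebra rather than only in an extension of it --- this is precisely the content of the parenthetical ``due to the spectral flow in the case of $T^2$-target models'' in the statement; once this point is granted, both the coincidence of NS and R representation sets and the diagonal form of the gluing matrices reduce to the bosonic calculation already carried out in Prop.~\ref{props:bos-str-T2target}.
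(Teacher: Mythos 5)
The paper offers no proof of this Proposition at all: like Prop.~\ref{props:bos-str-T2target} for the bosonic string, it is stated as the record of the standard assignment of a sigma-model (S)CFT to the geometric data $([E_z]_\C,f_\rho)$, and the surrounding text simply uses it. Your sketch is therefore supplying a construction the authors took as known, and in outline it is the standard and correct one: lattice VOA for $\Lambda$ tensored with a complex free fermion, ${\cal N}=2$ currents built from $\psi^\C,\overline{\psi}^\C$ and $\partial X^\C,\partial\overline{X}^\C$ at $c=3$, NS modules inherited from $\Lambda^\vee/\Lambda$ because the fermion NS module is unique, spectral flow to identify $A^R$ with $A^{NS}$, and the Narain sum over ${\rm II}_{2,2}$ split along $\Lambda_-\oplus\Lambda_+$ to read off the diagonal gluing.

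Two points deserve correction. First, the spectral-flow operator $U_{1/2}$ does \emph{not} lie inside the chiral algebra: bosonizing $J=i\partial H$, it is $e^{iH/2}$, a twist field of weight $1/8$ and $U(1)$ charge $1/2$, hence not in the NS vacuum module, and its "flow charge" sits in the fermionic $U(1)$ direction, not in $\Lambda^\vee$. The statement you actually need is that spectral flow by $\eta=1/2$ is an isomorphism of the ${\cal N}=2$ super vertex algebra onto its parity twist, so that pulling back along it gives an equivalence between the categories of NS-type and R-type modules; that yields $A^R\cong A^{NS}=\Lambda^\vee/\Lambda$ without requiring $U_{1/2}$ to be a chiral-algebra element. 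Second, modular invariance by itself does not "force" the diagonal $d^{**}_{\alpha\beta}=\delta_{\alpha\beta}$ (non-diagonal modular invariants exist for these discriminant groups in general); what fixes the diagonal form is the geometric definition of the model, namely that the one-loop path integral over $Map(\Sigma_{ws},[E_z]_\C)$ produces the Narain theta function of ${\rm II}_{2,2}$, whose decomposition under $\Omega'_\pm$ is diagonal with respect to the identification $\Lambda_-^\vee/\Lambda_-\cong{\rm II}_{2,2}/(\Lambda_-\oplus\Lambda_+)\cong\Lambda_+^\vee/\Lambda_+$ — exactly the mechanism of Prop.~\ref{props:bos-str-T2target}, extended sector by sector over the spin structures. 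With these two repairs your argument is sound.
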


\subsection{The Chiral Correlation Functions of Interest: Bosonic String Theory}

\subsubsection{Objects of Interest in the Operator Formalism}

In our previous paper \cite{prev.paper}, we have seen that certain sets of 
operator matrix elements in the model of rational CFT corresponding to 
$([E_z]_\C, f_\rho)$ are closely related to the $L$-functions 
of arithmetic models\footnote{{\bf Definition}: 
We refer to an elliptic curve $E$ defined over a number field $k$ (modulo 
isomorphism over $k$) as an {\it arithmetic model} of a $\C$-isomorphism class 
of elliptic curves, $[E]_\C$, if $E \times_k \C=[E]_\C$.} 
of the elliptic curve $[E_z]_\C$.
They were 
\begin{align}
 f_0^{\rm bos}(\tau_{ws}; \beta) & \; 
     :=  {\rm Tr}_{V_\beta}\left[ q^{L_0-\frac{c}{24}} \right],  
    \label{eq:chi-corrl-bos-0} \\
 f_{1\Omega'_-}^{\rm bos}(\tau_{ws};\beta) & \;
     := {\rm Tr}_{V_\beta} \left[ q^{L_0-\frac{c}{24}} \alpha_0^\C \right], 
     \label{eq:chi-corrl-bos-1O} \\
 f_{1\overline{\Omega}'_-}^{\rm bos}(\tau_{ws}; \beta) & \;
     := {\rm Tr}_{V_\beta} \left[ q^{L_0-\frac{c}{24}} \overline{\alpha}_0^\C \right],
     \label{eq:chi-corrl-bos-1Obar}
\end{align}
which are functions of $\tau_{ws}$ in the upper half complex plane ${\cal H}$
defined for individual irreducible representations $\beta \in iReps.$ of 
the chiral algebra $(V_-,Y_-)$. $V_\beta$ is the representation space of 
$(V_-,Y_-)$, $L_0$ is the Cartan generator of the Virasoro algebra of 
the holomorphic coordinate reparametrization of Riemann surfaces, and $c = 2$ 
is the central charge of the Virasoro algebra. 
The linear operators $\alpha_0^\C$ and $\overline{\alpha}_0^\C$ on the vector 
space $V_\beta$ are described by using the decomposition 
\begin{align}
  V_\beta = \oplus_{w \in \Lambda_-^\vee, \; w + \Lambda_- = \beta} \; V_w; 
  \label{eq:KK-sum}
\end{align}
the operators $\alpha_0^\C$ and $\overline{\alpha}_0^\C$ take $V_w$'s as 
eigenspaces, 
and the eigenvalues are given by $\Omega'_-(w)$ and $\overline{\Omega}'_-(w)$, 
respectively, where $\Omega'_-$ and $\overline{\Omega}'_-$ have been 
restricted to $\Lambda_- \subset {\rm II}_{2,2}$ and then extended 
$\Q$-linearly from $\Lambda_-$ to $\Lambda_-^\vee$.

With the definitions given, it is straightforward (for experts of string 
theory and conformal field theory) to find an explicit formula for 
$f_0^{\rm bos}$ for the model of RCFT assigned to a set of data 
$([E_z]_\C, f_\rho)$:
\begin{align}
 f_0^{\rm bos} 
  = \frac{\vartheta_{\Lambda_-}(\tau_{ws};\beta)}
        {(\eta(\tau_{ws}))^2}.  
  \label{eq:f0-bos-formula}
\end{align}
Here, $\eta(\tau) := q^{\frac{1}{24}} \prod_{n=1}^\infty(1-q^n)$ for 
$q := e^{2\pi i \tau}$ is the Dedekind $\eta$-function of $\tau  \in {\cal H}$.
The theta function $\vartheta_L(\tau; x)$ for an even lattice $L$, 
an element $x \in L^\vee/L$, and $\tau \in {\cal H}$ is given by 
\begin{align}
  \vartheta_L(\tau; x) := \sum_{w \in L^\vee, \; w+L = x} e^{2\pi i \tau \frac{(w,w)_L}{2}}
   = \sum_{w \in x} q^{\frac{(w,w)_L}{2}}, 
  \label{eq:theta0-def}
\end{align}
where $(-,-)_L$ is the intersection form of $L$ extended to 
$L^\vee \subset L \otimes \Q$ linearly. 
In the expression (\ref{eq:f0-bos-formula}, \ref{eq:theta0-def}), 
the sum over $w$'s in a coset $\beta \in \Lambda_-^\vee/\Lambda_-$ 
is from the decomposition (\ref{eq:KK-sum}) of the vector space $V_\beta$. 

The expressions for $f_{1\Omega'_-}^{\rm bos}$ and $f_{1\overline{\Omega}'_-}^{\rm bos}$
are 
\begin{align}
\qquad 
 f_{1\Omega'_-}^{\rm bos} 
   = \frac{\vartheta_{\Lambda_-}^{1\Omega'_-}(\tau_{ws};\beta)}
        {(\eta(\tau_{ws}))^2},  \qquad 
 f_{1\overline{\Omega}'_-}^{\rm bos} 
   = \frac{\vartheta_{\Lambda_-}^{1\overline{\Omega}'_-}(\tau_{ws};\beta)}
        {(\eta(\tau_{ws}))^2};  
\end{align}
type-1 theta functions $\vartheta^{1\omega}_L(\tau;x)$ for an even lattice $L$, 
an element $x \in L^\vee/L$, and a linear map $\omega: L^\vee \rightarrow \C$
is given by 
\begin{align}
  \vartheta^{1\omega}_L(\tau;x) := \sum_{y \in L^\vee, \; y +L = x}
       \omega(y)q^{\frac{(y,y)_L}{2}}.
\end{align}
%

\subsubsection{Operator Formalism and Conformal Blocks}

The three functions $f_0^{\rm bos}(\tau_{ws};\beta)$, $f_{1\Omega'_-}^{\rm bos}$ 
and $f_{1\overline{\Omega}'_-}^{\rm bos}$ on $\tau_{ws} \in {\cal H}$ can be 
thought of as chiral correlation functions of one local operator on 
a Riemann surface (worldsheet) $\Sigma_{ws}$ of genus $g=1$. A genus $g=1$ 
Riemann surface is given an analytic representation $\Sigma_{ws} = \C/(\Z + \tau_{ws}\Z)$, 
or 
\begin{align}
  \Sigma_{ws} = \left\{ u \in \C \right\} / 
       (u \sim u+1, \; u \sim u + \tau_{ws}).
  \label{eq:anal-repr-SigmaWs}
\end{align}
For a point $p \in \Sigma_{ws}$, $u(p)$ stands for the complex value $u$ 
of the point $p$ modulo $\Z + \tau_{ws}\Z$. First, 
\begin{align}
  f_0^{\rm bos}(\tau_{ws};\beta) = \langle \varphi_\beta, \; {\bf 1} 
      \rangle_{\Sigma_{ws}},       \label{eq:bos-confBl-0}
\end{align}
where the notation $\langle \varphi_\beta, \; {\cal O}(u(p)) \rangle_{\Sigma_{ws}}$
represents a $(g,n)=(1,1)$ chiral correlation function computed by inserting 
a local operator ${\cal O}(u)$ at the point $p$, and  
having restricted the states propagating in the handle part of 
Figure~\ref{fig:g1n1} to be those in $V_\beta$. 
\begin{figure}[tbp]
\begin{center}
\begin{tabular}{cc}
  \includegraphics[width=0.4\linewidth]{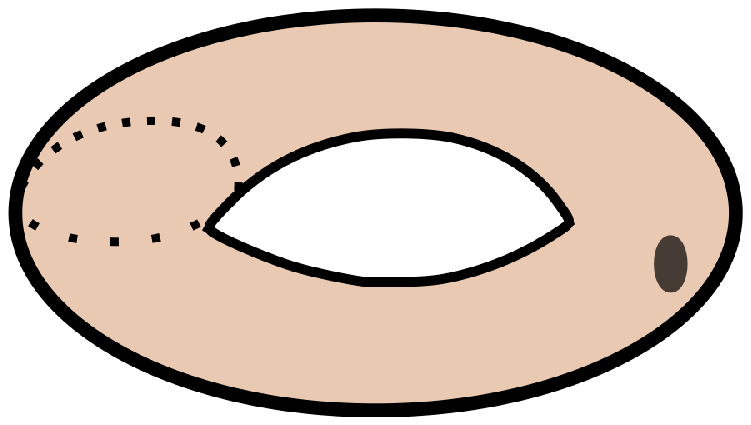} &
  \includegraphics[width=0.4\linewidth]{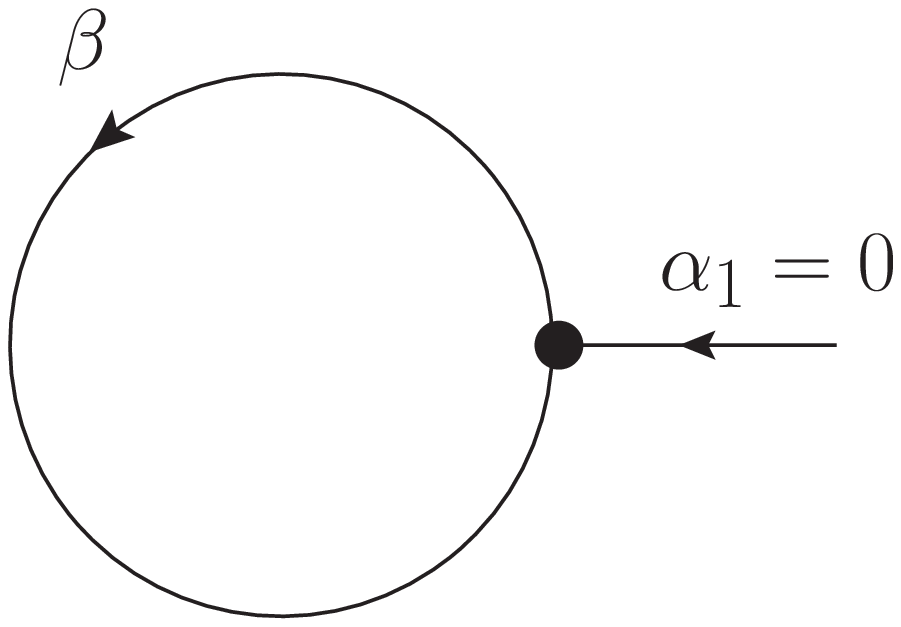} \\
 (a) & (b) 
\end{tabular}
 \caption{\label{fig:g1n1}A $g=1$ Riemann surface $\Sigma_{ws}$ with $n=1$ 
marked point is regarded as one pants with its two legs sewn together 
along the dotted line in (a). 
The degeneration limit corresponding to the pants decomposition (a) 
is described by the skeleton graph in (b). }
\end{center}
\end{figure}

For $f_{1\Omega'_-}^{\rm bos}$ 
and $f_{1\overline{\Omega}'_-}^{\rm bos}$, string theorists realize immediately 
that 
\begin{align}
  {\rm Tr}_{V_\beta}\left[ q^{L_0-\frac{c}{24}} \alpha^\C_0 \right]
  & \; = -\frac{1}{2\pi} \sqrt{\frac{2}{\alpha'}} 
  {\rm Tr}_{V_\beta} \left[ q^{L_0-\frac{c}{24}} (\partial_u X^\C)(u(p)) \right], 
    \label{eq:bos-alpha0=partX} \\
  {\rm Tr}_{V_\beta}\left[ q^{L_0-\frac{c}{24}} \overline{\alpha}^\C_0 \right]
  & \; = -\frac{1}{2\pi} \sqrt{\frac{2}{\alpha'}} 
  {\rm Tr}_{V_\beta} \left[ q^{L_0-\frac{c}{24}} (\partial_u \overline{X}^\C)(u(p))
      \right], 
    \label{eq:bos-alpha0=partX-cc}
\end{align}
so that the operators $\alpha_0^\C$ and $\overline{\alpha}_0^\C$ can be 
replaced by local operators $(\partial_u X^\C)$ and 
$(\partial_u \overline{X}^\C)$. Here, 
\begin{align}
  \sqrt{\frac{2}{\alpha'}} (\partial_u X^\C)(u) & \; = -(2\pi) \alpha_0^\C 
   -(2\pi) \sum_{0 \neq m \in \Z} \alpha_m^\C e^{2\pi i m u}, 
     \label{eq:partX-expansion} \\ 
  \sqrt{\frac{2}{\alpha'}} (\partial_u \overline{X}^\C)(u) & \; =
   -(2\pi) \overline{\alpha}_0^\C 
   -(2\pi) \sum_{0 \neq m \in \Z} \overline{\alpha}_m^\C e^{2\pi i m u}, 
    \label{eq:partXbar-expansion}
\end{align}
with the commutation relation $[\alpha^\C_m, \overline{\alpha}_n^\C] = 
2m \delta_{m+n}$, $[\alpha^\C_m, \alpha_n^\C] = 0$, and 
$[\overline{\alpha}_m^\C, \overline{\alpha}^\C_n] = 0$. Because 
the operators $\alpha_m^\C$ and $\overline{\alpha}^\C_m$ with $m \neq 0$ 
map an $L_0=h$ eigenstate into $L_0 = h-m$ eigenstates with the eigenvalue 
different from the one before, the $\alpha^\C_{m \neq 0}$ terms in 
$\partial_uX^\C$ do not contribute to the trace, which justifies the 
relation (\ref{eq:bos-alpha0=partX}, \ref{eq:bos-alpha0=partX-cc}).
So, 
\begin{align}
  f_{1\Omega'_-}^{\rm bos} & \; = - \frac{1}{2\pi} \sqrt{\frac{2}{\alpha'}} \; \;
      \langle \varphi_\beta, \; (\partial_u X^\C)(u(p)) \rangle_{\Sigma_{ws}}, 
    \label{eq:bos-confBl-1O} \\
  f_{1\overline{\Omega}'_-}^{\rm bos} & \; =
    - \frac{1}{2\pi} \sqrt{\frac{2}{\alpha'}} \; \; 
       \langle \varphi_\beta, \; (\partial_u \overline{X}^\C)(u(p))
        \rangle_{\Sigma_{ws}}. 
    \label{eq:bos-confBl-1Obar}
\end{align}
The expressions on the right-hand side of (\ref{eq:bos-confBl-0}, 
\ref{eq:bos-confBl-1O}, \ref{eq:bos-confBl-1Obar}) do not depend 
on which point $p \in \Sigma_{ws}$ we choose to insert those operators; 
we can see this both from the translational symmetry of $\Sigma_{ws}$, and 
also from the fact that the $\alpha_0^\C$ term in (\ref{eq:partX-expansion}) 
and the $\overline{\alpha}_0^\C$ term in  (\ref{eq:partXbar-expansion})
do not have $u(p)$-dependence. 

All of (\ref{eq:bos-confBl-0}, \ref{eq:bos-confBl-1O}, 
\ref{eq:bos-confBl-1Obar}) are in the form of a $(g,n)=(1,1)$ chiral 
correlation functions of one local operator, 
\begin{align}
   \langle \varphi_\beta, \; Y_-(v;u_*) \rangle_{\Sigma_{ws}}
  \label{eq:bos-confBl-gen}
\end{align}
for a state $v \in V_{\alpha}$ with $\alpha = 0 \in \Lambda_-^\vee/\Lambda_- 
= iReps$. For the vertex operators $Y_-(v; u)$ to be ${\bf 1}$, 
$-(2\pi)^{-1} \sqrt{(2/\alpha')} \; (\partial_uX^\C)$ and 
$-(2\pi)^{-1}\sqrt{(2/\alpha')} (\partial_u\overline{X}^\C)$, the states $v$
should be 
\begin{align}
   v = \ket{0}, \qquad  \frac{i}{2\pi} \alpha_{-1}^\C \ket{0}, \qquad 
   \frac{i}{2\pi} \overline{\alpha}_{-1}^\C \ket{0}.
  \label{eq:bos-states-VirPrim-011}
\end{align}
All those states are Virasoro primary, with the conformal weight $h=0$, $h=1$, 
and $h=1$, respectively. Let $h(v)$ be the conformal weight, the 
$L_0$-eigenvalue, of a Virasoro primary state $v$ in $V_{\alpha = 0}$. Then 
\begin{align}
    \langle \varphi_\beta, \; Y_-(v; u(p)) \rangle_{\Sigma_{ws}} \; (du)^{h(v)}
  \label{eq:bos-confBl-gen-wDiff}
\end{align}
with an extra factor of formal differential $du$ to the $h(v)$-th power 
is independent of the choice of a local coordinate at the point of 
$\Sigma_{ws}$ where the vertex operator $Y_-(v;u(p))$ is inserted.

One can see (\ref{eq:bos-confBl-gen}) as a definition of a $\C$-valued 
linear functional of the space of local sections of the bundle 
${\cal V}_{\alpha=0}$ of the vertex algebra $(V_-,Y_-)$ on a genus $g=1$ 
Riemann surface $\Sigma_{ws}$ \cite[\S6.5]{FBz}. We need to invoke a notion 
of the bundle ${\cal V}_{\alpha}$ over $\Sigma_{ws}$ for $\alpha \in iReps$
because the vertex operator $Y_-(v;u)$ for a state $v \in V_{\alpha = 0}$, 
in fact, depends also on a choice of local coordinate on $\Sigma_{ws}$;  
a set of $(v,u)$, a combination of 
a state $v \in V_{\alpha = 0}$ and a choice of local coordinate of 
$\Sigma_{ws}$, determines a local section $Y_-(v;u)$ of the bundle 
${\cal V}_{\alpha = 0}$ on $\Sigma_{ws}$; a set $(v, u')$ of the same state 
$v$ but with a different choice of a local coordinate $u'$ determines 
a local section $Y_-(v;u')$ of ${\cal V}_{\alpha = 0}$ that is different 
from the local section $Y_-(v;u)$. The expression (\ref{eq:bos-confBl-gen}) 
is a linear functional of the space of such local sections over $\Sigma_{ws}$; 
$[f_{\{0\}}^{\rm bos}]_{\{\beta\}}: \Gamma(U; {\cal V}_{\alpha=0}) \rightarrow \C$, 
where $U$ is a patch of local coordinate $u$ of $\Sigma_{ws}$. 
Such a linear functional is called a conformal block.

For different $\beta$'s in $iReps$ and for a $g=1$ Riemann surface 
$\Sigma_{ws}$ with different complex structure parameter $\tau_{ws}$'s, 
the conformal blocks $[f_{\{0\}}^{\rm bos}]_{\{\beta\}}$ introduced above are 
different. The parameter $\tau_{ws}$ should be regarded as a local coordinate 
of the moduli space ${\cal M}_{1,1}$ of $(g,n)=(1,1)$ pointed Riemann surfaces, 
Riemann surfaces of genus $g=1$ with $n=1$ point $u=u_*$ specified; 
the one point at $u=u_*$ can be brought
to the origin in the analytic representation $\Sigma_{ws} = \C/(\Z+\tau_{ws}\Z)$
(by translational symmetry or by coordinate redefinition $u' = u-u_*$), 
and hence just $\tau_{ws} \in {\cal H}$ is enough. 
The mapping class group of 
$(g,n)=(1,1)$ pointed Riemann surfaces is ${\rm SL}(2;\Z)$, which acts on 
$\tau_{ws}$ and $u_*$ through 
\begin{align}
  \tau_{ws} \longmapsto \gamma \cdot \tau_{ws} := \frac{p \tau_{ws} + q}{r \tau_{ws}+s}, \qquad 
  u_* \longmapsto u^\gamma_* := \frac{u_*}{r \tau_{ws} + s} 
\label{eq:SL2Z-act-on-tauNu}
\end{align}
for 
\begin{align}
\gamma = [p,q; r,s] := \left( \begin{array}{cc} p & q \\ r & s \end{array} \right) \in {\rm SL}(2;\Z), 
\end{align}
so the moduli space ${\cal M}_{1,1}$ is ${\cal H}/{\rm PSL}(2;\Z)$, but 
we keep in mind that there is a residual $\Z/(2) \subset {\rm SL}(2;\Z)$ 
symmetry acting on the differential $du$ even after setting $u_*=0$.
The way the conformal block $[f_{\{0\}}^{\rm bos}]_{\{\beta\}}$ changes 
over ${\cal M}_{1,1}$ is described by the Knizhnik--Zamoldchikov equation 
adapted to the model of rational CFT for $([E_z]_\C, f_\rho)$.

Generally in a model of rational CFT, we have a set of i) 
the vertex operator algebra $(V_-,Y_-)$, ii) the set of irreducible 
representations $A_-$ of $(V_-,Y_-)$ and iii) the set of OPE coefficients 
satisfying certain conditions (cf comments just before 
Prop. \ref{props:bos-str-T2target}). 
For any choice of a $(g,n)$ pointed Riemann 
surface $m_* \in {\cal M}_{g,n}$, which is a Riemann surface $\Sigma_{ws}$ of 
genus $g$ and $n$ points on it, we can specify a set of $n$ irreducible 
representations 
$\{ \alpha\} = \{ \alpha_1, \alpha_2, \cdots, \alpha_n\}$ from $A_-$; 
an idea is to insert at the $i$-th point $u_i \in \Sigma_{ws}$ the vertex 
operator for a state $v_i$ in the representation $\alpha_i$ of $(V_-,Y_-)$.
We can further introduce a pants decomposition of the $(g,n)$ pointed 
Riemann surface $m_*$, where $\Sigma_{ws}$ with $n$ points are decomposed 
into $2g-2+n$ pants sewn together along $3g-3+n$ pairs of pants-boundaries.
Once a pants decomposition is given, then one can think of assigning 
an irreducible representation $\beta_a \in A_-$ to the $a$-th sewing 
locus for each one of $a=1,\cdots, 3g-3+n$; an idea here is to have 
states in the representation $\beta_a$ propagate at the $a$-the sewing locus
in computing chiral correlation functions of the model of rational CFT.
So, in a given model of rational CFT, for a given $(g,n)$ pointed Riemann 
surface $m_* \in {\cal M}_{g,n}$, once $\{\alpha\} = 
\{ \alpha_1, \cdots, \alpha_n\} \in (A_-)^n$ is chosen, then 
\begin{align}
 [f_{\{\alpha\}}]_{\{\beta\}}: \{ v_1,\cdots, v_n\} \longmapsto 
      \langle \varphi_{\{\beta\}}, \; \prod_{i=1}^n Y_-(v_i;u_i) \rangle_{\Sigma_{ws}}
\end{align}
for each choice of $\{ \beta \} \in (A_-)^{3g-3+n}$ is a linear functional 
on the space of choice of states for $v_i \in V_{\alpha_i}$, where $V_{\alpha_i}$ 
is the vector space of the representation $\alpha_i$ of $(V_-,Y_-)$. 
To be a little more\footnote{Because of the fact that the conformal weights 
(the eigenvalues of $L_0$ in the Virasoro algebra) are not 
integer valued in all the irreducible representations $\alpha \in A_-$ 
except $\alpha = 0$, such notions as ``the bundle ${\cal V}_{\alpha}$ on 
a Riemann surface $\Sigma_{ws}$'' fail in capturing mathematical structures 
of rational CFT appropriately \cite[\S7.3]{FBz}.  
Discussions in the main text completely ignore this aspect.
This article only deals with conformal blocks for $\{\alpha \} = \{0\}$, 
where this issue is absent. So we consider that too much reference 
to this issue would be rather distracting. } precise, a choice of a state 
$v_i$ (along with a choice 
of local coordinate $u_i$ in a neighborhood of the $i$-th point) specifies 
a local section of the bundle $\widetilde{\cal V}_{\alpha_i}$ on 
${\cal M}_{g,1}$, the bundle obtained by treating as a family the bundles 
${\cal V}_{\alpha_i}$ on individual genus $g$ Riemann surfaces $\Sigma$ 
\cite[6.5.3]{FBz}; the bundles $\widetilde{\cal V}_{\alpha_i}$ 
for $i=1,\cdots, n$ are pulled back from ${\cal M}_{g,1}$ to ${\cal M}_{g,n}$ 
to form a bundle 
$\otimes_{i=1}^n \widetilde{\cal V}_{\alpha_i}$, and a choice of $n$ states 
$\{ v_1, \cdots, v_n\}$ specifies a local section of the bundle 
$\otimes_{i=1}^n \widetilde{\cal V}_{\alpha_i}$. The conformal block 
$[f_{\{\alpha\}}]_{\{\beta\}}$ for a given choice $\{\beta\} \in (A_-)^{3g-3+n}$
is a linear functional on the space of such local sections.

So long as we deal with the $(g,n)=(1,1)$ conformal blocks in $T^2$-target 
models of rational CFT (and their ${\cal N}=(2,2)$ supersymmetric versions), 
there is just one $(3g-3+n=1)$ sewing locus in the $(g,n)=(1,1)$ pointed 
Riemann surface, so the label $\{\beta\} \in (A_-)^{3g-3+n}$ is just 
a choice $\beta \in A_-$. Moreover, within this class of models, 
the conformal blocks can be non-trivial only for $\{ \alpha \} = 
\{ \alpha_1\}$ with $\alpha_1 = 0 \in \Lambda_-^\vee/\Lambda_- = A_-$ 
because of the momentum conservation (in the $T^2$ target space). 
Therefore, $[f_{\{0\}}]_{\{\beta\}}$ is the only form of conformal 
blocks that can be non-trivial. 

\subsubsection{Conformal Blocks and Monodromy Representations}

For a model of rational CFT, the Friedan--Shenker bundle ${\cal F}_{\{\alpha\}}$ 
on ${\cal M}_{g,n}$ for a choice of $\{ \alpha_{i=1,\cdots,n}\}$ is the 
bundle of linear functionals of $\otimes_{i=1}^n \widetilde{\cal V}_{\alpha_i}$ 
on ${\cal M}_{g,n}$ that are consistent with the action of $(V_-,Y_-)$. 
The $(g,n)$ conformal blocks of that model, $[f_{\{\alpha\}}]_{\{\beta\}}$ for 
$\{\beta\} \in (A_-)^{3g-3+n}$, are sections of the Friedan--Shenker bundle 
${\cal F}_{\{\alpha\}}$ defined locally in ${\cal M}_{g,n}$.
The bundle ${\cal F}_{\{\alpha\}}$ further has a structure locally in 
${\cal M}_{g,n}$ of 
\begin{align}
  {\cal F}_{\{\alpha\}} \cong
      \oplus_{\{\beta \} \in (A_-)^{3g-3+n}} [{\cal F}_{\{\alpha\}}]_{\{\beta\}},
\end{align}
and $[f_{\{\alpha\}}]_{\{\beta\}}$ for a given $\{\beta\} \in (A_-)^{3g-3+n}$ 
is a section of $[{\cal F}_{\{\alpha\}}]_{\{\beta\}}$. Neither 
the bundle ${\cal F}_{\{\alpha\}}$ has this structure of direct sum globally 
over ${\cal M}_{g,n}$, nor the local sections $[f_{\{\alpha\}}]_{\{\beta\}}$ can 
be extended globally over ${\cal M}_{g,n}$ without giving up single-valuedness.

\begin{props}
\label{props:bos-mondrmy-repr-g-n}
A conformal block $f_{\{ \alpha\}} = ([f_{\{\alpha\}}]_{\{\beta\}})$, a local 
section of the bundle ${\cal F}_{\{\alpha\}}$ on ${\cal M}_{g,n}$, undergoes 
monodromy transformation as one moves along a closed path on ${\cal M}_{g,n}$. 
Let $\gamma \in \pi_1({\cal M}_{g,n}; m_*)$, where $m_* \in {\cal M}_{g,n}$ is 
a base point. For any point $m \in {\cal M}_{g,n}$, let $\gamma_m$ be 
a path from $m_*$ to $m$. Then another local section 
$\gamma \circ f_{\{\alpha\}} = ([\gamma \circ f_{\{\alpha\}}]_{\{\beta\}})$ of 
the bundle ${\cal F}_{\{\alpha\}}$ on ${\cal M}_{g,n}$, is given by 
\begin{align}
   [\gamma \circ f_{\{\alpha\}}]_{\{\beta\}} (m) 
 := [f_{\{\alpha\}}]_{\{\beta\}} 
    ( \gamma_m \cdot \gamma^{-1} \cdot \gamma_m^{-1} \cdot m), 
  \qquad m \in {\cal M}_{g,n}, 
\end{align}
which is the analytic continuation of $[f_{\{\alpha\}}]_{\{\beta\}}$ along 
the path $\gamma_m \cdot \gamma \cdot \gamma_m^{-1}$.
Let 
\begin{align}
  [\gamma \circ f_{\{\alpha\}}]_{\{\beta\}} = \sum_{\{\beta'\} \in (iReps)^{3g-3+n}}
    [f_{\{\alpha\}}]_{\{\beta'\}} \;  
    [\rho^{(g,n)}_{\{\alpha\}}(\gamma)]_{\{\beta'\} \{\beta\}}. 
\end{align}
For $\gamma_1, \gamma_2 \in \Gamma_{g,n}$, 
$(\gamma_1\gamma_2)\circ f_{\{\alpha\}} = \gamma_1 \circ 
\left( \gamma_2 \circ f_{\{\alpha\}}\right)$. So, this 
$\rho^{(g,n)}_{\{\alpha\}}$ is a representation of 
$\Gamma_{g,n} := \pi_1^{\rm top}({\cal M}_{g,n})$ 
(only a projective representation for general $(g,n)$'s, even 
for $\{\alpha\}=\{0\}$), and 
is called the monodromy representation of the mapping class group 
$\Gamma_{g,n}$ of $n$-pointed genus-$g$ Riemann surfaces. The representation 
$\rho^{(g,n)}_{\{\alpha\}}$ is determined by a model of rational CFT, $(g,n)$, and 
a choice of $\{\alpha_1, \cdots, \alpha_n\} \in (A_-)^n$. 
See \cite{FS, TK, Kohno, MS, BK, Schneps-PS} for more information. $\bullet$
\end{props}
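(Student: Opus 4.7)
The plan is to establish the statement by combining two standard ingredients: (i) the fact that the Friedan--Shenker bundle ${\cal F}_{\{\alpha\}}$ carries a flat (projective) connection whose horizontal sections are precisely the conformal blocks, and (ii) the general principle that analytic continuation of flat sections of a vector bundle over a base $B$ along closed loops gives a representation of $\pi_1(B)$. First I would fix a base point $m_* \in {\cal M}_{g,n}$ together with a pants decomposition and sewing data, so that the fiber $({\cal F}_{\{\alpha\}})_{m_*}$ is canonically identified with $\oplus_{\{\beta\}} [{\cal F}_{\{\alpha\}}]_{\{\beta\}, m_*}$ and the collection $([f_{\{\alpha\}}]_{\{\beta\}})_{\{\beta\}}$ provides a basis. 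I would then verify that the defining compatibility of conformal blocks with the action of $(V_-,Y_-)$ is expressed by a first-order system of linear differential equations in the moduli (the Knizhnik--Zamolodchikov equations adapted to the rational model for $([E_z]_\C, f_\rho)$), which is flat, so that the prescription $[\gamma\circ f_{\{\alpha\}}]_{\{\beta\}}(m) := [f_{\{\alpha\}}]_{\{\beta\}}(\gamma_m \cdot \gamma^{-1} \cdot \gamma_m^{-1}\cdot m)$ produces another local section of ${\cal F}_{\{\alpha\}}$ depending only on the homotopy class of $\gamma$ (up to a scalar, cf.\ the projective ambiguity).

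Next I would check that the expansion coefficients $[\rho^{(g,n)}_{\{\alpha\}}(\gamma)]_{\{\beta'\}\{\beta\}}$ are well-defined finite matrices. This uses rationality: $A_-$ is finite, so $(A_-)^{3g-3+n}$ is finite and each $[{\cal F}_{\{\alpha\}}]_{\{\beta\}}$ is finite-dimensional. Because analytic continuation preserves the $(V_-,Y_-)$-compatibility and hence maps sections of ${\cal F}_{\{\alpha\}}$ to sections of ${\cal F}_{\{\alpha\}}$, any continued section is a finite linear combination of the $[f_{\{\alpha\}}]_{\{\beta'\}}$ in a small neighborhood of $m_*$, giving the claimed expansion.

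For the composition law, I would invoke the standard groupoid argument: if $\gamma_1,\gamma_2$ are loops at $m_*$ and $\gamma_m$ is a fixed path from $m_*$ to $m$, then analytic continuation along $\gamma_m\cdot (\gamma_1\gamma_2)^{-1}\cdot \gamma_m^{-1}$ equals analytic continuation along $\gamma_m\cdot \gamma_2^{-1}\cdot \gamma_m^{-1}$ followed by $\gamma_m\cdot \gamma_1^{-1}\cdot \gamma_m^{-1}$, because the intermediate loop $\gamma_m\cdot \gamma_2^{-1}\cdot \gamma_m^{-1}$ returns to the base point where $\gamma_1^{-1}$ is anchored. Reading off matrix coefficients then yields $\rho^{(g,n)}_{\{\alpha\}}(\gamma_1\gamma_2) = \rho^{(g,n)}_{\{\alpha\}}(\gamma_1)\,\rho^{(g,n)}_{\{\alpha\}}(\gamma_2)$, up to the scalar cocycle that makes the representation only projective in general.

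The main obstacle I anticipate is the flatness / well-definedness step: one must know that the Friedan--Shenker bundle admits a flat projective connection whose flat sections are exactly the conformal blocks, and this is a non-trivial theorem in the theory of vertex operator algebras (essentially the content of the references \cite{FS,TK,Kohno,MS,BK,Schneps-PS} cited in the statement). Rather than reprove this, I would cite these works directly and use them as a black box, together with the rationality of the model assigned to $([E_z]_\C, f_\rho)$ which guarantees the finiteness needed for the matrix $\rho^{(g,n)}_{\{\alpha\}}(\gamma)$ to make sense.
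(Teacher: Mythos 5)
Your proposal is correct and is essentially the route the cited literature takes; note that the paper itself offers no proof of this Proposition at all, stating it as standard material and deferring to the same references \cite{FS, TK, Kohno, MS, BK, Schneps-PS} that you invoke for the flat projective connection. Your added detail (KZ-type flatness, finiteness of $(A_-)^{3g-3+n}$ from rationality, and the groupoid argument for the composition law, with the scalar cocycle accounting for projectivity) is consistent with how the paper uses the result downstream, e.g.\ in Example \ref{exmpl:bos-T2models-monodromy}.
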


Example \ref{exmpl:bos-T2models-monodromy} will describe the monodromy 
representations $\rho$ for $T^2$-target models of rational CFT, for $(g,n)=(1,1)$, 
and the only non-trivial choice of $\{ \alpha\} = \{ \alpha_1\}$, which is 
$\alpha_1 = 0 \in \Lambda_-^\vee/\Lambda_- = A_-$. As a preparation for 
the Example \ref{exmpl:bos-T2models-monodromy}, we write down a few definitions. 

\begin{defn}
A pair $(M, q_M)$ of a finite abelian group $M$ and a map 
$q_M: M \rightarrow \Q/2\Z$ is called a {\it finite quadratic module}, if 
i) $q_M(-x)=q_M(x)$ for any $x \in M$, 
ii) $(-,-)_M: M \times M \rightarrow \Q/\Z$ defined by 
$(x,y)_M = [q_M(x+y)-q_M(x)-q_M(y)]/2$ is $\Z$-bilinear. 
Such a pair may be denoted by $\underline{M} = (M, q_M)$.

It follows from the conditions i) and ii) that 
$q_M(ax) = a^2q_M(x)$ for any $a \in \Z$ and $x \in M$. 
In particular, $q_M(0)=0$.  $\bullet$
\end{defn}

\begin{defn}
A finite quadratic module $\underline{M} = (M, q_M)$ is said to be 
{\it non-degenerate}, if, for $u \in M$, there exists $y \in M$ so that 
$(u,y)_M \neq 0 \in \Q/\Z$, unless $u=0$. 
\end{defn}

\begin{defn}
For an even positive definite lattice $L$, which consists of a free abelian group 
$L$ and a bilinear form $(-,-)_L$, its {\it discriminant form}
$DL := (G_L, q_L)$ is a pair of finite abelian group $G_L := L^\vee/L$ and 
a quadratic form $q_L: G_L \ni x+L \mapsto (x,x)_L + 2\Z \in \Q/2\Z$. 
The symmetric bilinear form $G_L \times G_L \ni (x,y) \mapsto 
[q_L(x+y)-q_L(x)-q_L(y)]/2 \in \Q/\Z$ is denoted by $(-,-)_L$ by recycling 
the notation. The discriminant form $DL = (G_L, q_L)$ is a non-degenerate 
finite quadratic form.
\end{defn}

\begin{defn}
Let $DL = (G_\Lambda, q_L)$ be the discriminant form associated with an even 
lattice $L$ of even rank $2r_0$, with the signature either $(2r_0,0)$ or 
$(0,2r_0)$. The {\it Weil representation} of ${\rm SL}(2;\Z)$
of $DL$, denoted by $\rho_{DL}^{\rm Weil}$, takes the vector space 
$\C[G_L] = {\rm Span}_\C\{ e_x \; | \; x \in G_L\}$ as the representation 
space, and is given by the following representation of the two generators 
$T := [1, 1; 0, 1] \in {\rm SL}(2;\Z)$ and $S := [0, -1; 1, 0] \in {\rm SL}(2;\Z)$: 
\begin{align}
  T \cdot e_x & \; = \sum_{y \in G_L} e_y \delta_{x,y} 
      \mathbb{E}\left[ \frac{q_L(x,x)}{2} \right] =: 
  \sum_{y} e_y \; (\rho_{DL}^{\rm Weil}(T))_{yx}, \\
  S \cdot e_x & \; = \frac{\mathbb{E}[-{\rm sgn}(L)/8]}{\sqrt{|G_L|}} 
         \sum_{y \in G_L} e_y \mathbb{E}\left[-(y,x)_L \right]
   =: \sum_{y \in G_L} e_y \; (\rho_{DL}^{\rm Weil}(S))_{yx}.
\end{align}
The group ${\rm SL}(2;\Z)$ is generated by the two elements $T$ and $S$ with the 
two following relations:
\begin{align}
   S^2 = (ST)^3, \qquad (S^2)^2 = 1 \in {\rm SL}(2;\Z).
  \label{eq:relatn-SL2Z}
\end{align}
It is known that the representation matrices $\rho_{DL}^{\rm Weil}(T)$ and 
$\rho_{DL}^{\rm Weil}(S)$ also satisfy the same relations, and hence the Weil 
representation is a linear representation of ${\rm SL}(2;\Z)$ 
rather than a projective representation of ${\rm SL}(2;\Z)$.  
\end{defn}

\begin{defn}
A 1-dimensional representation $(\vartheta_\eta)^2: {\rm SL}(2;\Z) 
\rightarrow S^1$
is given by 
\begin{align}
(\vartheta_\eta)^2: T \longmapsto \mathbb{E}[-1/12], \quad {\rm and} \quad
S \longmapsto \mathbb{E}[+3/12].
\end{align}
This is a linear representation of ${\rm SL}(2;\Z)$, rather than a 
projective representation; this can be verified easily by substituting 
the complex phases $\mathbb{E}[-1/12]$ and $\mathbb{E}[+3/12]$ into 
$T$ and $S$ in (\ref{eq:relatn-SL2Z}).
\end{defn}

Now, we are ready to write down the following well-known 

\begin{exmpl}
\label{exmpl:bos-T2models-monodromy}
Think of a model of rational CFT for $([E_z]_\C, f_\rho)$, where 
$[E_z]_\C$ is a $\C$-isomorphism class of elliptic curves with complex 
multiplication and $f_\rho \in N_{>0}$ controlling the complexified 
K\"{a}hler parameter $\rho = f_\rho a_z z$. For $(g,n)=(1,1)$, and 
$\alpha = 0 \in G_{\Lambda_-} = A_-$, the monodromy representation 
$\rho^{(1,1)}_{\{\alpha\}}$ of the mapping class group $\Gamma_{1,1} = 
{\rm SL}(2;\Z)$ determined by the non-trivial conformal block $f_{\{\alpha\}}$ 
(Prop. \ref{props:bos-mondrmy-repr-g-n}) is 
\begin{align}
  \rho^{(1,1)}_{\{0\}} = (\vartheta_{\eta})^{-2} \rho_{D\Lambda_-[-1]}^{\rm Weil} .
\end{align}
\end{exmpl}

\begin{proof} The moduli space ${\cal M}_{1,1}$ of $(g,n)=(1,1)$ pointed 
Riemann surfaces is isomorphic to ${\cal H}/{\rm PSL}(2;\Z)$, where 
$\tau_{ws}$ can be used as a coordinate of ${\cal H}$. Let us choose a base 
point $[\tau_*]$ of $\Gamma_{1,1} = \pi_1^{\rm top}({\cal M}_{1,1}; [\tau_*])$ 
at the ${\rm PSL}(2;\Z)$-orbit of $\tau_{*} = i\infty - i \epsilon \in {\cal H}$.
The generator $\gamma_T \in \Gamma_{1,1} = 
\pi_1^{\rm top}({\cal M}_{1,1}; [\tau_*])$ is a path in ${\cal H}$ that starts 
from $\tau_{ws}=\tau_*$ and ends at $\tau_{ws}=\tau_*+1$
represented by a line segment horizontal to the real axis; 
the path $\gamma_T^{-1}$ is from $\tau_{ws}=\tau_*$ to $\tau_{ws}=\tau_*-1$ then. 
The other generator $\gamma_S \in \Gamma_{1,1} = 
\pi_1^{\rm top}({\cal M}_{1,1}; [\tau_*])$ is a path starting from the same 
point $\tau_{ws}=\tau_*$ and ending at $\tau_{ws}= -1/\tau_*$ which avoids 
the point $\tau_{ws}=+i$ by staying within the second quadrant; 
$\gamma_S^{-1}$ is a path from $\tau_{ws}=\tau_*$ to $\tau_{ws}=-1/\tau_*$
that goes through the first quadrant.

Let us first work out the monodromy representation $\rho^{(1,1)}_{\{0\}}$
by using the image of the state $v = \ket{0} \in V_{\alpha = 0}$ under the 
conformal block $f_{\{0\}}$, 
\begin{align}
  [f_{\{0\}}]_{\{\beta\}} : \ket{0} \longmapsto f_0^{\rm bos}(\tau_{ws};\beta),  
\end{align}
rather than $f_{\{0\}}$ as a whole. The analytic continuation of 
$f_0^{\rm bos}(\tau_{ws};\beta)$ in (\ref{eq:f0-bos-formula}) by $T$ and by $S$ 
turns them into 
\begin{align}
  \left(\gamma_T \circ f_0^{\rm bos}\right)(\tau_{ws};\beta) & \; =
  f_0^{\rm bos}(\tau_{ws}-1;\beta)
    = \frac{ \mathbb{E}[-q_{\Lambda_-}(\beta,\beta)/2] \;
                  \vartheta_{\Lambda_-}(\tau_{ws};\beta) }
           { \mathbb{E}[-1/12] \; (\eta(\tau_{ws}))^2 },    \nonumber \\
   & \; = \sum_{\gamma \in G_{\Lambda_-}} \; 
     [(\vartheta_\eta)^{-2} \rho_{D\Lambda_-[-1]}^{\rm Weil}](T)_{\beta \gamma} \; 
         f_0^{\rm bos}(\tau_{ws};\gamma).    \\
  \left(\gamma_S \circ f_0^{\rm bos}\right)(\tau_{ws};\beta) & \; =
    f_0^{\rm bos}(-1/\tau_{ws};\beta) 
    = \frac{ \frac{i(-\tau_{ws})}{\sqrt{|G_{\Lambda_-}|}}
               \sum_{\gamma \in G_{\Lambda_-}} \mathbb{E}[(\gamma,\beta)] \;
                         \vartheta_{\Lambda_-}(\tau_{ws};\gamma)}
          { (\mathbb{E}[+1/8]\sqrt{-\tau_{ws}})^2 (\eta(\tau_{ws})^2 } ,
              \nonumber \\
   & \; = \sum_{\gamma \in G_{\Lambda_-}}
       [(\vartheta_\eta)^{-2} \rho_{D\Lambda_-[-1]}^{\rm Weil}](S)_{\beta \gamma} \;
          f_0^{\rm bos}(\tau_{ws};\gamma).
\end{align}
Here, we used Lemma \ref{lemma:eta-transf} and Lemma \ref{lemma:theta-transf}
for the modular transformation law of the $\eta$-function and 
$\vartheta$-functions, respectively. 
This proves that the monodromy 
representation $\rho^{(1,1)}_{\{0\}}$ of a $T^2$-target model of rational CFT 
is $(\vartheta_\eta)^{-2}\rho_{D\Lambda_-[-1]}^{\rm Weil}$. 

The monodromy representation $\rho^{(1,1)}_{\{0\}}$ can also be computed by 
using a state $v \in V_{\alpha = 0}$ other than $v = \ket{0}$; let us see 
that the results remain the same for the two other choices of $v$ 
in (\ref{eq:bos-states-VirPrim-011}). Naive analytic continuation of 
$f_{1\Omega'_-}^{\rm bos}(\tau_{ws};\beta)$ and $f_{1\overline{\Omega}'_-}^{\rm bos}$ 
in the argument $\tau_{ws}$ from $\tau_{ws}=\tau_*$ to 
$\tau_{ws} = \gamma^{-1} \cdot \tau_{*}$ in ${\cal H}$ would then turn 
$\langle \varphi_\beta, \; \partial_uX^\C \rangle_{\Sigma_{ws}(\tau_*)}$ to 
$\langle \varphi_\beta, \; \partial_{u^{\gamma^{-1}}}X^\C \rangle_{\Sigma_{ws}(\gamma^{-1} \cdot \tau_*)}$, 
rather than to 
$\langle \varphi_\beta, \; \partial_uX^\C \rangle_{\Sigma_{ws}(\gamma^{-1} \cdot \tau_*)}$; see 
(\ref{eq:SL2Z-act-on-tauNu}) for the definition of the choice of local coordinate 
$u^{\gamma^{-1}}$. So, the inserted vertex operator $\partial_uX^\C$ in $f_{1\Omega'_-}^{\rm bos}$ 
before the analytic continuation is $(\partial u^{\gamma^{-1}}/\partial u)=1/(-c\tau_*+a)$ 
times the inserted vertex operator in $f_{1\Omega'_-}^{\rm bos}$ after the continuation.\footnote{An alternative (but equivalent) to this argument is to study analytic 
continuation of $f_{1\Omega'_-}^{\rm bos}(\tau_{ws};\beta) du$; for the state 
$v \propto \alpha_{-1}^\C \ket{0}$, which is a Virasoro primary state, the 
combination $(\partial_u X^C)du$ is independent of the choice of a local coordinate 
on the Riemann surface $\Sigma_{ws}$. The continuation of $du$ to $du^\gamma$ then 
yields a factor $du^\gamma/du = 1/(c\tau_{ws}+d)$. }
Therefore, in order to work out the monodromy representation of the conformal 
block $f_{\{0\}}: {\cal V}_{\alpha = 0} \rightarrow \C$, we should look at the following:
\begin{align}
  \left(\gamma_T \circ f_{1\Omega'_-}^{\rm bos}\right)(\tau_{ws};\beta) & \; =
    1 \cdot f_{1\Omega'_-}^{\rm bos}(\tau_{ws}-1; \beta)
     = \frac{\mathbb{E}[-q_{\Lambda_-}(\beta, \beta)] \;
                  \vartheta_{\Lambda_-}^{1\Omega'_-}(\tau_{ws};\beta)}
            { \mathbb{E}[-1/12] \; (\eta(\tau_{ws}))^2 },   \nonumber \\
  & \; = \sum_{\gamma \in G_{\Lambda_-}} 
            [ (\vartheta_\eta)^{-2}\rho_{D\Lambda_-[-1]}^{\rm Weil}](T)_{\beta \gamma} \; 
            f_{1\Omega'_-}^{\rm bos}(\tau_{ws};\gamma). \\
  \left( \gamma_S \circ f_{1\Omega'_-}^{\rm bos}\right)(\tau_{ws};\beta) & \; =
    \frac{1}{(-\tau_{ws})} f_{1\Omega'}^{\rm bos}(-1/\tau_{ws};\beta), \nonumber \\
   & \;  = \frac{1}{(-\tau_{ws})}
     \frac{ \frac{i (-\tau_{ws})^2 }{\sqrt{|G_{\Lambda_-}|}}
             \sum_{\gamma \in G_{\Lambda_-}} \mathbb{E}[(\beta, \gamma)] \; 
                             \vartheta_{\Lambda_-}^{1\Omega'_-}(\tau_{ws};\gamma) }
         { (\mathbb{E}[1/8] \sqrt{-\tau_{ws}})^2 \; 
            (\eta(\tau_{ws}))^2 }, \nonumber \\
  & \; = \sum_{\gamma \in G_{\Lambda_-}} 
           [ (\vartheta_\eta)^{-2}\rho_{D\Lambda_-[-1]}^{\rm Weil}](S)_{\beta \gamma} \; 
            f_{1\Omega'_-}^{\rm bos}(\tau_{ws};\gamma).
\end{align}
Here, we used Lemma \ref{lemma:theta-transf} for the modular transformation 
of the type-1 theta functions. The extra factor $1/(-c\tau_{ws}+a)^{w=1}$ of 
the $(\partial u^{\gamma^{-1}} / \partial u)^{w=1}$ for the state level-$L_0=w=1$ 
Virasoro 
primary state $v$ cancels against the extra $(-c\tau_{ws}+a)^w$ in the modular 
transformation law of the weight $w=1$ theta function, and eventually 
the same monodromy representation $\rho^{(1,1)}_{\{0\}} = (\vartheta_\eta)^{-2} 
\rho_{D\Lambda_-[-1]}^{\rm Weil}$ of the conformal block $f_{\{0\}}$ is reproduced 
when we use a state $v \propto \alpha_{-1}^\C\ket{0}$. The same result is 
obtained for a state $v \propto \overline{\alpha}_{-1}^\C\ket{0}$. 

Note that the Weil representation on $\C[G_L]$ splits on the subspace spanned 
by $\{ (e_x + e_{-x}) \; | \; x \in G_L\}$ and another subspace 
by $\{ (e_x - e_{-x}) \; | \; x \in G_L\}$. 
Because $f_0^{\rm bos}(\tau_{ws};-\beta) = f_0^{\rm bos}(\tau_{ws};\beta)$, and 
$f_{1\Omega'_-}^{\rm bos}(\tau_{ws};-\beta) = - f_{1\Omega'_-}^{\rm bos}(\tau_{ws};\beta)$, 
both of the sub-representations are covered by the study above. 
\end{proof}

\begin{lemma}
\label{lemma:eta-transf} 
The Dedekind $\eta(\tau)$ function transforms under the action of 
$\gamma=[a,b; c,d] \in {\rm SL}(2;\Z)$ on the argument $\tau \in {\cal H}$ 
as follows:
\begin{align}
  (\eta(\gamma \cdot \tau))^2 = \vartheta_\eta^2(\gamma^{-1}) 
        \; (c\tau+d) \; (\eta(\tau))^2,
  \label{eq:eta-transf}
\end{align}
where $\vartheta_\eta^2(\gamma^{-1})=\mathbb{E}[2b/24]$ when $c=0$ and $d=1$, 
while\footnote{
The flooring function rounds down $x \in \R$ to the largest integer 
$\lfloor x \rfloor$ not greater than $x$. 
$\lfloor -2.4 \rfloor = -3$ and $\lfloor 2\rfloor =2$.} 
\begin{align}
 \vartheta^2(\gamma^{-1}) = \mathbb{E}\left[ \frac{2(a+d)}{24c} - s(d,c)
     - \frac{2}{8}\right],  \quad 
 s(d,c) := \sum_{r=1}^{c-1}\frac{r}{c}\left( \frac{dr}{c}- \left\lfloor \frac{dr}{c}\right\rfloor-\frac{1}{2} \right)
  \label{eq:vartheta-eta-repr-formula}
\end{align}
if $c>0$. 
For $\gamma$'s with $c = 0$ and $d=-1$ or $c < 0$, let $\gamma' := (-{\bf 1}_{2\times 2}) \cdot \gamma$, and $\vartheta_{\eta}^2(\gamma^{-1}) := 
- \vartheta_\eta^2(\gamma^{'-1})$.

In particular, 
\begin{align}
  {\rm for~}\gamma = T^{-1}, 
  & \quad 
 (\eta(\tau-1))^2 = \mathbb{E}[-2/24] (\eta(\tau))^2, \qquad 
   \vartheta^2_\eta(T) = \mathbb{E}[-2/24], \\
  {\rm for~}\gamma = S^{-1}, 
  & \quad 
  (\eta(-1/\tau))^2 = \mathbb{E}[+2/8] (\sqrt{-\tau} \eta(\tau))^2,  \qquad 
    \vartheta^2_\eta(S) = \mathbb{E}[+2/8].
\end{align}
For $\gamma = (- {\bf 1}_{2 \times 2})$, $\vartheta_\eta^2(\gamma) = -1$ and 
$(c\tau+d)=-1$, so (\ref{eq:eta-transf}) is trivial.

The multiplier $\vartheta_\eta^2(S) = \mathbb{E}[+2/8]$ for 
$S: \tau \rightarrow -1/\tau$ is the inverse-cubic power of $\vartheta^2_\eta(T) 
= \mathbb{E}[-2/24]$ for $T: \tau \rightarrow \tau +1$, which is not 
surprising because the $\tau \rightarrow -1/\tau$ transformation is obtained 
by a consecutive three Dehn twists of a $g=1$ curve $\Sigma = \Sigma_{ws}$.
$\bullet$ 
\end{lemma}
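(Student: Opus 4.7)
The plan is to derive \eqref{eq:eta-transf} by squaring the classical Dedekind transformation law for $\eta$ and then matching the resulting multiplier against the explicit formula \eqref{eq:vartheta-eta-repr-formula}. The key simplification is that $(\eta(\tau))^2$ is a weight-one modular form on $\mathrm{SL}(2;\Z)$, so no square-root branch subtleties arise; the multiplier $\vartheta_\eta^2(\gamma^{-1})$ is a genuine character $\mathrm{SL}(2;\Z)\to \mathbb{C}^\times$ valued in $12$th roots of unity. Indeed, writing out $\eta((\gamma_1\gamma_2)\tau)^2$ in two ways using the transformation law shows that $\vartheta_\eta^2(\gamma_2^{-1}\gamma_1^{-1})=\vartheta_\eta^2(\gamma_1^{-1})\vartheta_\eta^2(\gamma_2^{-1})$ with no cocycle correction, so the values on the generators $T$ and $S$ determine everything.

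First I would verify the two generator cases directly. From the product expansion $\eta(\tau)=e^{\pi i\tau/12}\prod_{n\geq 1}(1-e^{2\pi i n\tau})$ one reads off $\eta(\tau+1)^2=\mathbb{E}[1/12]\,\eta(\tau)^2=\mathbb{E}[2/24]\,\eta(\tau)^2$, which matches the $c=0,d=1$ case of the formula applied to $\gamma^{-1}=T^{-1}$ (with $b(\gamma^{-1})=-1$) and gives $\vartheta_\eta^2(T)=\mathbb{E}[-2/24]$. For $S:\tau\mapsto -1/\tau$, one invokes the Jacobi imaginary transformation $\eta(-1/\tau)=\sqrt{-i\tau}\,\eta(\tau)$, whose standard derivation uses Poisson summation together with the Jacobi triple product (or a contour-integration argument for $\log\eta$). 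Squaring yields $\eta(-1/\tau)^2=(-i\tau)\,\eta(\tau)^2=\mathbb{E}[1/4]\,(-\tau)\,\eta(\tau)^2$, giving $\vartheta_\eta^2(S)=\mathbb{E}[2/8]$.

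For a general $\gamma=[a,b;c,d]$ with $c>0$ I would invoke Dedekind's classical formula
\[
\eta(\gamma\tau)^2=\mathbb{E}\!\left[\frac{a+d}{12c}-s(d,c)-\frac{1}{4}\right](c\tau+d)\,\eta(\tau)^2,
\]
with $s(d,c)$ the Dedekind sum; rewriting the exponent as $\frac{2(a+d)}{24c}-s(d,c)-\frac{2}{8}$ recovers \eqref{eq:vartheta-eta-repr-formula} exactly. Rather than reprove this, I would cite the standard reference (e.g., Apostol, \emph{Modular Functions and Dirichlet Series}, Ch.~3). A self-contained route is to factor $\gamma$ using the continued-fraction expansion of $d/c$ into a word in $S$ and $T$, then apply the cocycle-free multiplicativity established above; the Dedekind sum arises as the bookkeeping of the accumulated $T$-phases along the factorization, and reciprocity $s(d,c)+s(c,d)=\tfrac{1}{12}(\tfrac{c}{d}+\tfrac{d}{c}+\tfrac{1}{cd})-\tfrac{1}{4}$ provides the consistency check. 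The remaining cases $c=0,d=-1$ and $c<0$ are handled by the stated convention of passing to $-\gamma$, which is forced by $\vartheta_\eta^2(-I)=-1$: the matrix $-I$ acts trivially on $\tau$ but flips the sign of $c\tau+d$, so the multiplier must compensate.

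The main obstacle is the general $c>0$ formula involving Dedekind sums, but this is entirely classical. As a final internal consistency check one can use the presentation \eqref{eq:relatn-SL2Z}: the relation $(ST)^3=S^2$ combined with multiplicativity yields $\vartheta_\eta^2(S)^2=(\vartheta_\eta^2(S)\vartheta_\eta^2(T))^3$, hence $\vartheta_\eta^2(S)=\vartheta_\eta^2(T)^{-3}$, which is precisely the observation that $\mathbb{E}[2/8]=\mathbb{E}[-2/24]^{-3}$ noted at the end of the lemma. The mapping-class-group interpretation referred to in the lemma --- that the $S$-transformation decomposes into three Dehn twists on a once-punctured torus --- is the topological avatar of this algebraic relation.
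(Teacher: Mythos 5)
Your proposal is correct. The paper does not actually prove this lemma --- it is stated as textbook material (marked with ``$\bullet$'' and no proof environment), the only in-text justification being the earlier verification that the assignment $T \mapsto \mathbb{E}[-1/12]$, $S \mapsto \mathbb{E}[+3/12]$ satisfies the relations (\ref{eq:relatn-SL2Z}) and hence defines a genuine (non-projective) character of ${\rm SL}(2;\Z)$. Your route --- direct computation on the generators from the $q$-product and the Jacobi imaginary transformation, multiplicativity of the weight-one multiplier system, the classical Dedekind-sum formula for $c>0$, and the $-\mathbf{1}_{2\times 2}$ sign convention forced by $\vartheta_\eta^2(-\mathbf{1}) = \vartheta_\eta^2(S)^2 = -1$ --- is exactly the standard argument the paper is implicitly leaning on, and your closing consistency check $\vartheta_\eta^2(S)=\vartheta_\eta^2(T)^{-3}$ reproduces the paper's own remark; delegating the general $c>0$ case to Apostol is appropriate for a classical fact of this kind.
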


\begin{lemma}
\label{lemma:theta-transf}
Let $L$ be an even positive definite lattice of rank $2r_0$. The theta 
functions $\vartheta_L(\tau;x)$ for $x \in G_L$ transform under the 
action of $\gamma = [a,b; c,d] \in {\rm SL}(2;\Z)$ on the argument 
$\tau \in {\cal H}$ as follows:
\begin{align}
 {\rm for~}\gamma = T^{-1},  & \quad 
      \vartheta_L(\tau-1;x) = \mathbb{E}\left[ -\frac{q_L(x,x)}{2} \right] \;
   \vartheta_L(\tau; x), \\
 {\rm for~}\gamma = S^{-1}, & \quad 
  \frac{\vartheta_L(-1/\tau;x)}{(-\tau)^{r_0}} = \frac{\mathbb{E}[2r_0/8]}{\sqrt{|G_L|}}
     \sum_{y \in G_L} \mathbb{E}\left[ (x,y)_L \right] \; 
       \vartheta_L(\tau;y).
\end{align}

Let us fix a linear functional $\omega: L^\vee \rightarrow \C$, and think 
of how the type-1 theta functions $\vartheta_L^{1\omega}(\tau; x)$ for 
$x \in G_L$ transform under the action of $\gamma \in {\rm SL}(2;\Z)$.
Now, 
\begin{align}
  {\rm for~} \gamma = T^{-1}, & \quad 
  \vartheta_L^{1\omega}(\tau-1;x) =
        \mathbb{E}\left[ - \frac{q_L(x,x)}{2} \right] \; 
        \vartheta_L^{1\omega}(\tau; x), \\
  {\rm for~} \gamma = S^{-1}, & \quad 
    \frac{\vartheta_L^{1\omega}(-1/\tau;x)}{(-\tau)^{r_0+1}} = 
        \frac{\mathbb{E}[2r_0/8]}{\sqrt{|G_L|}} \; 
        \sum_{y \in G_L} \; \mathbb{E}\left[ (x,y)_L \right] \;
            \vartheta_L^{1\omega}(\tau;y).
\end{align}
$\bullet$
\end{lemma}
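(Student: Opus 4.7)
The plan is to treat the two generators $T^{-1}$ and $S^{-1}$ separately. In each case I first handle $\vartheta_L$ and then upgrade to $\vartheta_L^{1\omega}$ by tracking the extra linear factor $\omega$.

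For $T^{-1}:\tau\mapsto\tau-1$, both identities are formal. In the defining series $\vartheta_L(\tau;x)=\sum_{w\in x+L}q^{(w,w)_L/2}$, replacing $\tau$ by $\tau-1$ multiplies each summand by $e^{-\pi i(w,w)_L}$. Since $L$ is even, $(w,w)_L \bmod 2\Z$ depends only on the class $[w]=x\in G_L$ and equals $q_L(x)$ (written $q_L(x,x)$ in the statement), so this factor is the common prefactor $\mathbb{E}[-q_L(x,x)/2]$ and can be pulled out of the sum. The same manipulation applies verbatim to $\vartheta_L^{1\omega}$ because $\omega(w)$ is independent of $\tau$.

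For $S^{-1}:\tau\mapsto -1/\tau$ the essential input is Poisson summation on $L\otimes\R$. A completion of the square shows that the Fourier transform of $f_\tau(v):=e^{-\pi i (v,v)_L/\tau}$ is $\hat f_\tau(\xi)=(-i\tau)^{r_0}e^{\pi i\tau(\xi,\xi)_L}$. Applied to the shifted sum with shift $x\in L^\vee$, and using $\vol(L)=\sqrt{|G_L|}$, Poisson summation yields
\begin{equation*}
\vartheta_L(-1/\tau;x)=\frac{(-i\tau)^{r_0}}{\sqrt{|G_L|}}\sum_{\xi\in L^\vee}\mathbb{E}[(\xi,x)]\,e^{\pi i\tau(\xi,\xi)_L}.
\end{equation*}
Grouping the sum by cosets $y=[\xi]\in G_L$, and observing that $(\ell,x)\in\Z$ for $\ell\in L$, $x\in L^\vee$, the inner sums collapse to $\mathbb{E}[(x,y)_L]\,\vartheta_L(\tau;y)$. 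Finally $(-i\tau)^{r_0}=i^{r_0}(-\tau)^{r_0}=\mathbb{E}[2r_0/8]\,(-\tau)^{r_0}$ produces the stated form.

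For the type-1 theta I would apply the same Poisson summation to $g_{\tau,\omega}(v):=\omega(v)f_\tau(v)$. Writing $\omega$ via the bilinear form and differentiating $\hat f_\tau$ in the dual direction gives
\begin{equation*}
\hat g_{\tau,\omega}(\xi)=(-\tau)\,\omega(\xi)\,(-i\tau)^{r_0}e^{\pi i\tau(\xi,\xi)_L},
\end{equation*}
so the derivation goes through with an extra factor of $(-\tau)$ pulled out, promoting the weight from $r_0$ to $r_0+1$. The point that requires real care is sign/branch bookkeeping: the branch of $(-i\tau)^{r_0}$ on $\mathcal{H}$ must be chosen consistently so that it assembles into exactly $\mathbb{E}[2r_0/8](-\tau)^{r_0}$, and the sign of the argument of $\mathbb{E}[(x,y)_L]$ must agree with the convention used in the Weil representation of Example \ref{exmpl:bos-T2models-monodromy}. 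The latter is essentially automatic since the sum runs over the whole of $G_L$ and $y\leftrightarrow -y$ is a symmetry of the summand up to a fixed sign, and the overall normalization is then pinned down by consistency with the monodromy computation of $\rho^{(1,1)}_{\{0\}}$ already carried out there.
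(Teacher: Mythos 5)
Your proof is correct. Note that the paper does not actually prove this lemma — it is stated with a bullet and invoked as a standard property of lattice theta functions ("see Lemma \ref{lemma:theta-transf}, or many other references") — so there is nothing to compare against; your Poisson-summation argument is the standard derivation and all the constants check out: the Gaussian integral gives $(-i\tau)^{r_0}|G_L|^{-1/2}$, the shift by $x$ produces $e^{+2\pi i(x,\xi)}$ which descends to $\mathbb{E}[(x,y)_L]$ on cosets, and $(-i\tau)^{r_0}=i^{r_0}(-\tau)^{r_0}=\mathbb{E}[2r_0/8](-\tau)^{r_0}$ with no branch ambiguity since $r_0\in\Z$. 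One small caution on your closing remark: the claim that the sign in $\mathbb{E}[\pm(x,y)_L]$ is immaterial because $y\mapsto -y$ is a symmetry of the sum is true for $\vartheta_L$ (which is even in $y$) but \emph{false} for $\vartheta_L^{1\omega}$, which is odd in $y$ — there the two sign conventions differ by an overall factor of $-1$. Fortunately your Poisson computation fixes the sign unambiguously (the Fourier transform of $v\mapsto f(x+v)$ carries $e^{+2\pi i(x,\xi)}$), so the proof itself is unaffected; just don't lean on the symmetry argument to settle that convention.
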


\begin{rmk}
We have remarked that both of the representations $\vartheta_\eta^2$ and 
$\rho_{D\Lambda_-[-1]}^{\rm Weil}$ are not projective representations of 
${\rm SL}(2;\Z)$ but its linear representations; this means that 
the monodromy representation $\rho^{(1,1)}_{\{0\}}$ for the $T^2$-target 
models of rational CFT are also linear (rather than projective) 
representations of the mapping class group 
${\rm SL}(2;\Z) = \Gamma_{1,1}$.

In general, monodromy representations $\rho^{(1,1)}_{\{\alpha\}}$ of 
a $(g,n)=(1,1)$ conformal block $f_{\{\alpha\}}$ of a general model 
of rational CFT are projective representations of $\Gamma_{1,1}$, 
not necessarily linear representations. The special fact above for 
the $T^2$-target models of rational CFT is consistent with the observation 
in \cite{MS}; Ref. \cite{MS} states that the projective representation 
of $\Gamma_{1,1}$ can be regarded a linear representation, when $\Gamma_{1,1}$
is replaced by its central extension of $\Gamma_{1,1}$ by a central element 
$R$, which multiplies\footnote{$\Delta(\alpha) +\Z \in \Q/\Z$ is the 
fractional part of the $L_0$-eigenvalues (conformal weights) of all the 
states in $V_\alpha$ to be inserted at the $n=1$ point of a genus $g$ Riemann 
surface $\Sigma_{ws}$.}
the complex phase $\mathbb{E}[\Delta(\alpha)] \in \C$; the second 
relation in (\ref{eq:relatn-SL2Z}) among the generators of $\Gamma_{1,1}$
is modified into $(S^2)^2 = R$; the first relation $S^2 = (ST)^3$ remains 
the same. For the obvious representation $\alpha = 0$ in $A_-$, 
$\Delta(\alpha = 0) = 0 \in \Q/\Z$, 
so the element $R$ is just an identity operator for the $(g,n)=(1,1)$ 
conformal block $f_{\{\alpha\}}$ for any model of rational CFT. We have 
remarked earlier that $f_{\{0\}}$ is the only non-zero $(1,1)$-conformal 
blocks in the case of the $T^2$-target models of rational CFT. $\bullet$
\end{rmk}

\subsection{The Chiral Correlation Functions of Interest: Type II String Theory}
\label{ssec:TypeII-confBlock}

In our previous paper \cite{prev.paper}, 
we focused on the following set of chiral 
correlation functions of the rational model of ${\cal N}=(2,2)$ SCFT 
for a set of data $([E_z]_\C, f_\rho)$:
\begin{align}
  f_0^{\rm II}(\tau_{ws};\beta) & \; := {\rm Tr}_{V_\beta^R} \left[ 
     e^{\pi i F_-} q^{L_0-\frac{c}{24}} F_- \right] 
        = \vartheta_{\Lambda_-}(\tau_{ws};\beta), \label{eq:chi-corrl-fcn-4-L0} \\
  f_{1\Omega'_-}^{\rm II}(\tau_{ws}; \beta) & \; :=
    - \frac{-i}{2\pi}\sqrt{\frac{2}{\alpha'}}{\rm Tr}_{V_\beta^R} \left[ 
     e^{\pi i F_-} q^{L_0-\frac{c}{24}} F_-  (\partial_u X^\C)(u=u_1) \right] 
        = \vartheta_{\Lambda_-}^{1\Omega'_-}(\tau_{ws}; \beta), 
       \label{eq:chi-corrl-fcn-4-L1}\\
  f_{1\overline{\Omega}'_-}^{\rm II}(\tau_{ws}; \beta) & \; := 
       - \frac{-i}{2\pi}\sqrt{\frac{2}{\alpha'}} {\rm Tr}_{V_\beta^R} \left[
      e^{\pi i F_-} q^{L_0-\frac{c}{24}} F_- (\partial_u \overline{X}^\C)(u=u_1)
      \right] = \vartheta_{\Lambda_-}^{1\overline{\Omega}'_-}(\tau_{ws};\beta),
  \label{eq:chi-corrl-fcn-4-L1Bar}
\end{align}
where $\tau_{ws} \in {\cal H}$ is the complex structure parameter 
of the worldsheet (Riemann surface) $\Sigma_{ws}$ of genus 1, and 
$\beta \in \Lambda_-^\vee/\Lambda_- = A_-$ is one of the irreducible 
representations of the left-mover (holomorphic) chiral algebra $(V_-,Y_-)$. 
The trace is taken over $V_\beta^R$, the vector space of all the left mover 
sates in 
the Ramond sector. The linear operator $F_-$ is the fermion number operator,
which is also the zero-modes of the U(1) current $J_-$ in the left-mover 
(holomorphic) ${\cal N}=2$ superconformal algebra on $\Sigma_{ws}$.
We have seen that the Mellin transform of $f_{1\Omega'_-}^{\rm II}$ and 
$f_{1\overline{\Omega}'_-}^{\rm II}$ [resp. $f_0^{\rm II}$] with respect to $\tau_{ws}$ 
can be used to write down the $L$-functions $L(H^1_{et}(E_z))$ 
[resp. Dedekind zeta function $\zeta_K(s)$ of the CM field $K$] of 
(some class of) arithmetic models $E_z$ of the $\C$-isomorphism 
class $[E_z]_\C$ we have started with. 

Now, beginning in this section \ref{ssec:TypeII-confBlock}, we 
wish to elaborate more on the observation above by exploiting 
the geometry of modular curves. For that purpose, we start off
by  re-capturing the chiral correlation functions above as 
objects defined on the moduli space of pointed Riemann surfaces 
of genus $g=1$.

Note, first, that $f_{1\Omega'_-}^{\rm II}$ and $f_{1\overline{\Omega}'_-}^{\rm II}$ 
could have been defined by using the operators $\alpha^\C_{-1}$ and 
$\overline{\alpha}^\C_{-1}$ 
in (\ref{eq:partX-expansion}, \ref{eq:partXbar-expansion}). In a basis 
of $V_\beta^R$ where the linear operators $L_0$ and $F_-$ are diagonalized, 
$\alpha_0^\C$ and $\overline{\alpha}^\C_0$ are diagonal;  
all other $\alpha^\C_m$'s and $\overline{\alpha}^\C_m$'s have zero diagonal 
entries, however, and hence they do not contribute to the trace. 
Exploiting this observation (as in (\ref{eq:bos-alpha0=partX}, 
\ref{eq:bos-alpha0=partX-cc})), we have chosen to use an expression 
with a local operator inserted, $\partial_uX^\C$ and 
$\partial_u \overline{X}^C$, rather than $\alpha^\C_0$ and 
$\overline{\alpha}^\C_0$.

Next, remember that 
\begin{align}
 F_- = \int_0^1 d\sigma_1 \; J_-(u), 
\end{align}
where $u$ is the analytic coordinate of the genus $g=1$ worldsheet 
Riemann surface $\Sigma_{ws}$, as in (\ref{eq:anal-repr-SigmaWs}), 
and let $(\sigma_1,\sigma_2) \in [0,1] \times [0,1]$ be a set of real 
coordinates on $\Sigma_{ws}$ so that $u = \sigma_1 + \tau_{ws} \sigma_2$ 
in a unit cell of the torus $\Sigma_{ws} = \C/(\Z+\tau_{ws}\Z)$.
Due to the conservation of the fermion number $F_-$, we can further replace 
this expression by 
\begin{align}
  F_- = \int_{\Sigma_{ws}} d\sigma_1 d\sigma_2 \; J_-(u) =
   \int_{\Sigma_{ws}} \frac{dud\bar{u}}{2{\rm Im}(\tau_{ws})} \; J_-(u).
\end{align}
So,\footnote{The $J_-$--$(\partial_uX^\C)$ chiral correlation function
and also the $J_-$--$(\partial_u\overline{X}^\C)$ chiral correlation function 
do not have singularity as a function on the coordinate $u_2-u_1$ on 
$\Sigma_{ws}$.} the chiral correlation functions $f_0^{\rm II}$, 
$f_{1\Omega'_-}^{\rm II}$, and $f_{1\overline{\Omega}'_-}^{\rm II}$ have alternative 
expressions:
\begin{align}
  f_0^{\rm II}(\tau_{ws};\beta) & \; =
     \int_{\Sigma_{ws}} \frac{du_2d\bar{u}_2}{2{\rm Im}(\tau_{ws})} \; 
      \langle \varphi_\beta, \; J_-(u_2) \; Y_-(v_0; u_1) \;
    \rangle_{\Sigma_{ws}},  \label{eq:f0-as-12confBl} \\
  f_{1\Omega'_-}^{\rm II}(\tau_{ws};\beta) & \; =
     \int_{\Sigma_{ws}} \frac{du_2d\bar{u}_2}{2{\rm Im}(\tau_{ws})} \; 
      \langle \varphi_\beta, \; J_-(u_2) \; Y_-(v_{1\Omega'_-}; u_1) \;
      \rangle_{\Sigma_{ws}},   \label{eq:f1Omega-as-12confBl} \\
 f_{1\overline{\Omega}'_-}^{\rm II} (\tau_{ws};\beta) & \; =
     \int_{\Sigma_{ws}} \frac{du_2d\bar{u}_2}{2{\rm Im}(\tau_{ws})} \; 
      \langle \varphi_\beta, \; J_-(u_2) \; Y_-(v_{1\overline{\Omega}'_-}; u_1)
        \; \rangle_{\Sigma_{ws}},   \label{eq:f1OmegaBar-as-12confBl}
\end{align}
where the three states $v_0$, $v_{1\Omega_-}$, and $v_{1\overline{\Omega}'_-}$ in 
$V_{\alpha_1 = 0}^{NS}$ are those in (\ref{eq:bos-states-VirPrim-011}). Here, 
we take $p_1$ and $p_2$ to be arbitrary points in 
$\Sigma_{ws} = \C/(\Z+\tau_{ws}\Z)$, and $u_1$ and $u_2$ 
to be the values of their 
complex analytic $u$-coordinate (modulo $\Z+\tau_{ws}\Z$) in the analytic 
representation (\ref{eq:anal-repr-SigmaWs}). 
The notation 
\begin{align}
  \langle \varphi_\beta, \; {\cal O}_2(p_2) \; {\cal O}_1(p_1) \; 
  \rangle_{\Sigma_{ws}}
\end{align}
for two vertex operators ${\cal O}_i$ inserted at two points 
$p_i \in \Sigma_{ws}$ ($i=1,2$) means the chiral correlation functions 
of the two vertex operators with contributions from all the states in the 
representation $\beta \in \Lambda_-^\vee/\Lambda_-$ (but nothing else) 
in the two pants-leg-sewing loci designated in Figure \ref{fig:g1n2}.
\begin{figure}[tbp]
\begin{center}
\begin{tabular}{cc}
  \includegraphics[width=0.4\linewidth]{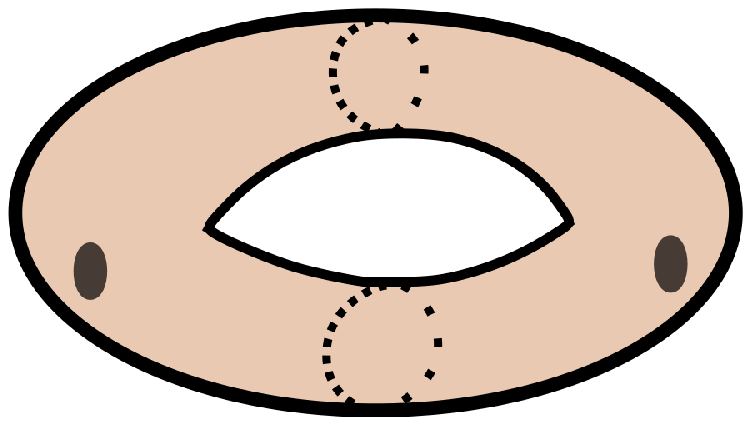} &
  \includegraphics[width=0.4\linewidth]{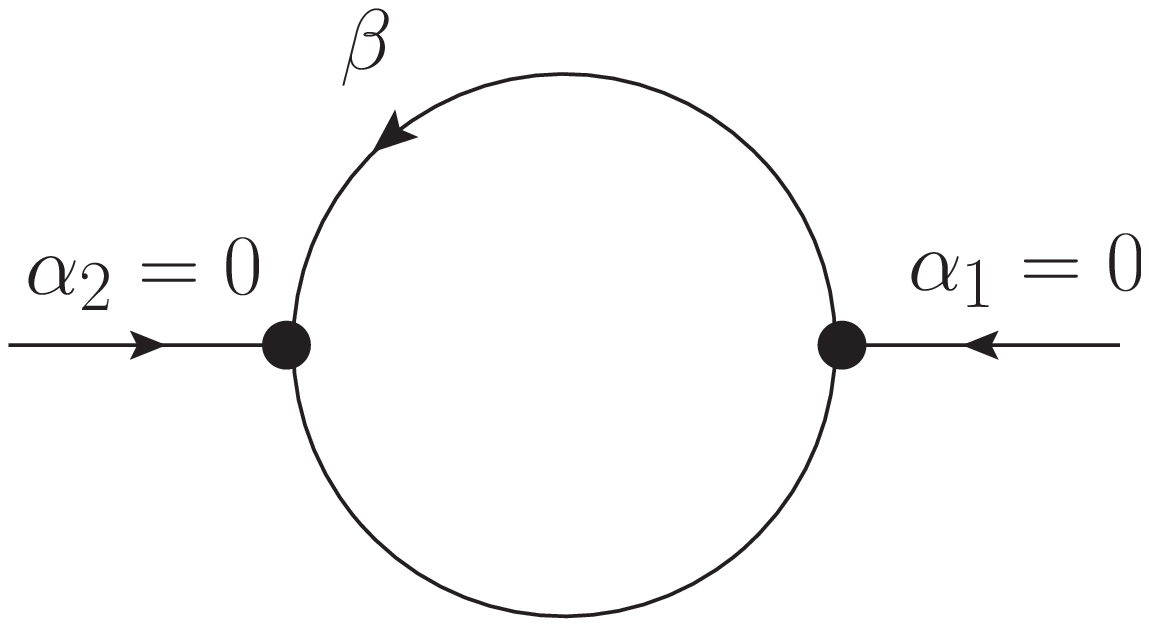} \\
 (a) & (b) 
\end{tabular}
 \caption{\label{fig:g1n2}A $g=1$ Riemann surface with $n=2$ marked points 
is regarded as two pants sewn together at their legs (along the two 
dotted lines) in (a). The degeneration limit corresponding to this pants 
decomposition is described by the skeleton graph (b).}
\end{center}
\end{figure}

The observations up to this point show that the integrands of 
(\ref{eq:f0-as-12confBl}, \ref{eq:f1Omega-as-12confBl}, 
\ref{eq:f1OmegaBar-as-12confBl}) are all $(g,n)=(1,2)$ chiral correlation 
functions of local operators on $m_* = [(\Sigma_{ws}(\tau_{ws}); \{ p_1, p_2\})] 
\in {\cal M}_{1,2}$, a $(g,n)=(1,2)$ pointed Riemann surface.
Those integrands (\ref{eq:f0-as-12confBl}, \ref{eq:f1Omega-as-12confBl}, 
\ref{eq:f1OmegaBar-as-12confBl}) are the $(g,n)=(1,2)$ conformal blocks 
$[f_{\{0NS,0NS\}}]_{\{\beta R,\beta R\}}$ on $m_* \in {\cal M}_{1,2}$ for the choice 
$\{ \alpha_1, \alpha_2\} = \{ 0NS,0NS \}$ of the irreducible representations 
for the vertex operator insertion at $n=2$ points, evaluated for 
the choice of $Y_-(v_{0,1\Omega'_-,1\overline{\Omega}'_-}, u)$ for the vertex 
operator at $p_1$ and $J_-(u)$ for the one at $p_2$.

The vertex operators to be inserted, $Y_-(v_{1\Omega'_-, 1\overline{\Omega}'_-}; u)$ 
and $J_-(u)$, depend on the choice of a local coordinate of $\Sigma_{ws}$ 
at the point of insertion.\footnote{Although the complex $u$-coordinate 
in (\ref{eq:anal-repr-SigmaWs}) looks like ``the most canonical'' choice 
of such a local coordinate for the case of $g=1$ Riemann surface $\Sigma_{ws}$, 
in fact, there is ${\rm SL}(2;\Z)$ transformation acting on such a choice 
of the holomorphic coordinate of the analytic 
representation (\ref{eq:anal-repr-SigmaWs}).}  Only a pair of 
a state $v \in V_{\alpha = 0}^{NS}$ and a choice of a local coordinate of 
$\Sigma_{ws}$ determines a local section of the bundle 
${\cal V}_{\alpha = 0}^{NS}$ on $\Sigma_{ws}$ of the vertex algebra of 
a rational model of ${\cal N}=(2,2)$ SCFT, and 
a local section of the bundle $\widetilde{\cal V}_{\alpha = 0}^{NS}$ on 
${\cal M}_{g,1}$. 
However, in the case the vertex operator $Y_-(v;u)$ is Virasoro primary, 
which is the case for all of $Y_-(v_{0,1\Omega'_-,1\overline{\Omega}'_-};u)$ and 
$J_-(u)$, the combination $Y_-(v;u) (du)^{h(v)}$ with $h(v)$ the $L_0$ 
eigenvalue of the state $v$ determines a local section of the bundle 
${\cal V}_{\alpha = 0}^{NS} \otimes (T^*\Sigma_{ws})^{\otimes h(v)}$ on $\Sigma_{ws}$, 
and a local section of the bundle $\widetilde{\cal V}_{\alpha = 0}^{NS} \otimes 
(T^*\Sigma_{ws})^{\otimes h(v)}$ over ${\cal M}_{g,1}$ that depends only on 
the choice of a Virasoro primary state $v$, but not on the choice of 
a local coordinate on $\Sigma_{ws}$ \cite{TUY, FBz}. In this perspective, 
the $(g,n)=(1,2)$ conformal blocks $[f_{\{0NS,0NS\}}]_{\{\beta R,\beta R\}}$ send 
linearly the local sections of the bundle 
$\otimes_{i=1}^n \pi_i^* \left( \widetilde{\cal V}_{\alpha_i=0}^{NS} \otimes 
(T^*\Sigma)^{h(v_i)} \right)$ (when all the $v_i$'s are Virasoro primary) to 
sections of $\otimes_{i=1}^n \pi_i^*((T^*\Sigma_{ws})^{h(v_i)})$ over 
${\cal M}_{g,n}$. The monodromy representation $\rho^{(1,2)}_{\{0NS,0NS\}}$ of 
the group $\Gamma_{1,2}$ (the mapping class group of $(g,n)=(1,2)$ pointed 
Riemann surfaces) for the conformal blocks $[f_{\{0NS,0NS\}}]_{\{\beta R,\beta R\}}$, 
or for the chiral correlation 
functions in the integrands of $f_0^{\rm II}$, $f_{1\Omega'_-}^{\rm II}$ and 
$f_{1\overline{\Omega}'_-}^{\rm II}$, can be studied without paying attention 
to the choice of a local coordinate on $\Sigma_{ws}$. The representation 
matrices of the elements of $\Gamma_{1,2}$ are locally constant.

We are interested in the integrals, $f_0^{\rm II}$, $f_{1\Omega'_-}^{\rm II}$, and 
$f_{1\overline{\Omega}'_-}^{\rm II}$, rather than their integrands. The local chiral 
correlation functions in (\ref{eq:f0-as-12confBl}--\ref{eq:f1OmegaBar-as-12confBl})---the integrands---are completely independent of the choice of two points 
$p_1, p_2 \in \Sigma_{ws}$
in the genus $g=1$ Riemann surface $\Sigma_{ws}$, so there is no non-trivial 
monodromy associated with the choice of the point $p_2$ relatively to $p_1$, 
or there is no ambiguity in the definition of the integrand. So, the 
integration $\int_{\Sigma_{ws}} \; du_2 d\bar{u}_2$ is simple and straightforward. 
Suppose that 
\begin{align}
 \gamma \circ 
  \langle \varphi_{\beta}, & \; J_-(u_2) \; Y_-(v_{1\Omega'_-}; u_1) \rangle \; 
      du_2 du_1 \nonumber \\
 &  = \sum_{\beta' \in iReps.^-} \left(\rho^{(1,2)}_{\{0,0\}}(\gamma)\right)_{\beta,\beta'} 
      \langle \varphi_{\beta'}, \; J_-(u_2) \; Y_-(v_{1\Omega'_-}; u_1) \rangle \; 
      du_2 du_1
\end{align}
where $\gamma$ is a closed path in ${\cal M}_{1,2}$. 
Because the integration measure $du_2d\bar{u}_2 / {\rm Im}(\tau_{ws})$ 
is invariant under the ${\rm SL}(2;\Z)$ transformation, 
we have the following transformation law under the analytic continuation along 
$[\gamma] \in \Gamma_{1,1} = {\rm SL}(2;\Z)$:
\begin{align}
 \left( [\gamma] \circ f_{1\Omega'_-}^{\rm II} du_1 \right)(\tau_{ws}; \beta) & \; = 
   \sum_{\beta' \in iReps} \left( \rho^{(1,2)}_{\{0,0\}}(\gamma) \right)_{\beta,\beta'}
    (c\tau_{ws}+d) f_{1\Omega'_-}^{\rm II}du_1, \\
 \left( [\gamma] \circ f_{1\Omega'_-}^{\rm II} \right)(\tau_{ws}; \beta) & \; = 
   \sum_{\beta' \in iReps} \left( \rho^{(1,2)}_{\{0,0\}}(\gamma) \right)_{\beta,\beta'}
    (c\tau_{ws}+d)^2 f_{1\Omega'_-}^{\rm II}, 
\end{align}
if $[\gamma] = [a, b; c,d]$ in ${\rm SL}(2;\Z)$. We find in this way 
that $f_{1\Omega'_-}^{\rm II}(\tau_{ws};\beta)$ labeled by $\beta \in iReps$ 
are weight-2 modular forms taking their values in the vector space 
$\C[iReps]$. When they are multiplied by a differential $d\tau_{ws}$ on 
${\cal H}$, each---$f_{1\Omega'_-}^{\rm II}(\tau_{ws};\beta) d\tau_{ws}$---can be 
regarded as a vector-valued holomorphic 1-form without a pole on the 
compactification of the curve ${\cal H}/{\rm SL}(2;\Z)$, subject to the 
monodromy determined by $\rho^{(1,2)}_{\{0,0\}}$; due to the $p_1$--$p_2$ 
independence of the $(g,n)=(1,2)$ local correlation function with 
the two vertex operators $\partial_u X^\C$ and $J_-$ inserted, 
the monodromy matrix $\rho^{(1,2)}_{\{0NS,0NS\}}(\gamma)$ of $[\gamma] \in 
\Gamma_{1,1}$ does not depend on the choice of a 
representative $\gamma \in \Gamma_{1,2}$.
Repeating the same logic, we obtain the same transformation law for 
$f_{1\overline{\Omega}'_-}^{\rm II}$ (cf. \cite{prev.paper}).

\begin{defn}
There is no convenient jargon for $(g,n)$ chiral correlation functions 
on ${\cal M}_{g,n}$ that are integrated along the fiber of 
${\cal M}_{g,n} \rightarrow {\cal M}_{g,n-1}$. As we have to refer to 
$f_{1\Omega'_-}^{\rm II}$ and $f_{1\overline{\Omega}'_-}^{\rm II}$ many times in this 
article, we refer to them as ``$(g,n)=(1,2)$ chiral correlation functions'',
or ``chiral correlation functions'' with the quotation marks.
\end{defn}

\begin{props}[\S4.1 of \cite{prev.paper}]
Think of a rational model of ${\cal N}=(2,2)$ SCFT associated with 
a set of data $([E_z]_\C, f_\rho)$. We have seen that the ``$(g,n)=(1,2)$ 
chiral correlation functions'' $\{ f_{1\Omega'_-}^{\rm II}(\tau_{ws};\beta) \; 
| \; \beta \in iReps \}$ are regarded as a local section on 
${\cal M}_{1,1}$ of the bundle $\C^{\oplus |iReps|} \otimes T^*{\cal M}_{1,1}$, 
which fails to be a global section on ${\cal M}_{1,1}$, due to the 
monodromy that descends from the monodromy representation of 
$\Gamma_{1,2}$. The monodromy representation 
of $\Gamma_{1,1} \cong {\rm SL}(2;\Z)$ for the sections
$\{ d\tau_{ws} f_{1\Omega'_-}^{\rm II}(\tau_{ws};\beta) \; | \; \beta \in iReps\}$
is $\rho_{D\Lambda_-}^{\rm Weil}$.
\end{props}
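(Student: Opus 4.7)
The plan is to leverage the explicit closed-form evaluations in (\ref{eq:chi-corrl-fcn-4-L0})--(\ref{eq:chi-corrl-fcn-4-L1Bar}), which identify $f_{1\Omega'_-}^{\rm II}(\tau_{ws};\beta)$ with the type-1 theta function $\vartheta_{\Lambda_-}^{1\Omega'_-}(\tau_{ws};\beta)$. Once this closed form is in hand, the monodromy of the section $d\tau_{ws}\,f_{1\Omega'_-}^{\rm II}(\tau_{ws};\beta)$ along $\pi_1^{\rm top}({\cal M}_{1,1}) = {\rm SL}(2;\Z)$ reduces to the modular transformation laws already recorded in Lemma \ref{lemma:theta-transf}, and the work is essentially bookkeeping on the two generators $\gamma_T,\gamma_S$ of $\Gamma_{1,1}$. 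In parallel with the bosonic calculation in Example \ref{exmpl:bos-T2models-monodromy}, it suffices to verify the claim on these two generators.

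For $\gamma_T$ the pull-back $\tau_{ws}\mapsto \tau_{ws}-1$ leaves $d\tau_{ws}$ invariant, and Lemma \ref{lemma:theta-transf} gives a diagonal phase $\mathbb{E}[-q_{\Lambda_-}(\beta,\beta)/2]$ on $\vartheta_{\Lambda_-}^{1\Omega'_-}$. This reproduces the $T$-entry of the Weil representation. For $\gamma_S$ the pull-back $\tau_{ws}\mapsto -1/\tau_{ws}$ sends $d\tau_{ws}$ to $d\tau_{ws}/\tau_{ws}^2$, while Lemma \ref{lemma:theta-transf} produces the automorphy factor $(-\tau_{ws})^{r_0+1}$ with $r_0 = {\rm rk}(\Lambda_-)/2 = 1$, together with the finite discrete Fourier kernel $\mathbb{E}[2r_0/8]\,\mathbb{E}[(\beta,\gamma)_{\Lambda_-}]/\sqrt{|G_{\Lambda_-}|}$. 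The crucial arithmetic is then $r_0+1 - 2 = 0$: the weight-$(-2)$ factor contributed by $d\tau_{ws}$ exactly cancels the weight-$2$ factor contributed by the rank-$2$ type-1 theta, leaving a pure matrix acting on $\C[G_{\Lambda_-}]$ which coincides with the $S$-entry of the Weil representation.

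The conceptual point I would emphasize is that the $d\tau_{ws}$ factor, which is forced on us by the geometric interpretation of $f_{1\Omega'_-}^{\rm II}$ as a fibre-integrated $(g,n)=(1,2)$ conformal block multiplying $du_1$, is exactly the weight-$(-2)$ automorphy factor that neutralises the weight-$2$ transformation of the type-1 theta. Contrast with the bosonic case, where the extra $\eta^{-2}$ in (\ref{eq:f0-bos-formula}) produced the residual twist $(\vartheta_\eta)^{-2}$ in Example \ref{exmpl:bos-T2models-monodromy}: here the Type II formula has no $\eta$-factor at all, so no such twist appears and one gets the Weil representation on the nose.

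The only non-routine step I would have to spell out carefully is that the monodromy of the ``$(g,n)=(1,2)$ chiral correlation function'' can legitimately be descended from $\Gamma_{1,2}$ to $\Gamma_{1,1}$ via the fibre-integral $\int_{\Sigma_{ws}} du_2\,d\bar u_2 / (2\,{\rm Im}\,\tau_{ws})$. This uses that the integrand $\langle \varphi_\beta, J_-(u_2)\,Y_-(v_{1\Omega'_-}; u_1)\rangle_{\Sigma_{ws}}$ is a single-valued function of $u_2-u_1$ without singularities and is translation invariant on $\Sigma_{ws}$, so integration over $p_2$ kills the $\Gamma_{1,2}/\Gamma_{1,1}$ representative and also respects the ${\rm SL}(2;\Z)_{ws}$-invariance of the measure $du_2\,d\bar u_2/{\rm Im}(\tau_{ws})$. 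Granting this, the main obstacle reduces to a purely phase-level verification: matching the eighth-roots of unity and the $(G_{\Lambda_-},q_{\Lambda_-})$-orientation conventions used in the definition of $\rho_{D\Lambda_-}^{\rm Weil}$ against the signs produced by Lemma \ref{lemma:theta-transf}, which is a mechanical check.
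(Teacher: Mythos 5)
Your proposal is correct and follows essentially the same route as the paper: the paper's own proof of this Proposition is a one-line appeal to the type-1 theta transformation law (Lemma \ref{lemma:theta-transf}), with the descent of the monodromy from $\Gamma_{1,2}$ to $\Gamma_{1,1}$ via the translation-invariant, nonsingular integrand and the ${\rm SL}(2;\Z)$-invariant measure $du_2 d\bar u_2/{\rm Im}(\tau_{ws})$ already established in the discussion immediately preceding the statement, exactly as you outline. The only point requiring care in your ``mechanical check'' is the row-vector versus column-vector (equivalently $\gamma$ versus $\gamma^{-1}$) convention, which is what reconciles the $\rho^{\rm Weil}_{D\Lambda_-}$ appearing here with the $\rho^{\rm Weil}_{D\Lambda_-[-1]}$ of the bosonic Example \ref{exmpl:bos-T2models-monodromy}.
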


\begin{proof} 
The last sentence is from a property of theta functions; 
see Lemma \ref{lemma:theta-transf}, or many other references.
\end{proof}

\begin{rmk}
\label{rmk:triangle-shcematic}
In the theory of modular parametrization, an elliptic curve $E$ defined 
over $\Q$ has its $L$-function $L(H^1_{et}(E \times_\Q \overline{\Q}),s)$, 
and it is known that 
there must be a weight-2 modular form $f(\tau)$ of a congruence subgroup 
$\Gamma_0(N)$ of ${\rm SL}(2;\Z)$ for some $N \in \N_{>0}$ so that 
i) the Mellin transform of $f(\tau)$ 
agrees with $L(H^1_{et}(E \times_\Q \overline{\Q}),s)$ 
(i.e., the $L$-function of the automorphic representation on $f$ agrees with 
$L(H^1_{et}(E \times_\Q \overline{\Q} ),s)$), and ii) the correspondence 
between $f$ and $E/\Q$ is given by a morphism $(\nu \circ \mu): X_0(N) 
\rightarrow E$ defined over $\Q$, in that $(\nu \circ \mu)^*(\omega_E) = 
d\tau f(\tau)$ for a holomorphic (1,0)-form $\omega_E$ on $E$ 
(cf. \cite[p. 845]{Taylor-Wiles}).

In the previous paper \cite{prev.paper}, we have seen for a class of elliptic 
curves $E$ (explained shortly) that 
the ``$(g,n)=(1,2)$ chiral correlation functions'' 
$f_{1\Omega'_-}^{\rm II}(\tau_{ws};\beta)$ in the rational model of 
${\cal N}=(2,2)$ SCFT for $([E]_\C, f_\rho)$ for some appropriately chosen 
$f_\rho \in \N$ can play a role a little more general than that of $f(\tau)$ 
above for $L(H^1_{et}(E \times \overline{\Q}),s)$. The class of arithmetic 
models of elliptic curves under consideration (in \cite{prev.paper} as well 
as in this article) is the {\it elliptic curves of Shimura type}; 
section \ref{sssec:shimuraEC} of this article explains what it is; 
the field of definition of $E$ is not necessarily $\Q$ (more general), 
but $E$ has complex multiplication (more special). 

It is one of central questions in this article how general the relation 
between the CFT correlation functions and the arithmetic modular forms is. 
Here is an observation that makes us feel that the relation is not just 
outright coincidence. Remember that $f_{1\Omega'_-}^{\rm II}$'s are obtained (by 
integrating partially) from the $n=2$ point correlation functions 
\begin{align}
  \langle \varphi_\beta, \; (J_-du)(p_2) \; (du\partial_uX^\C)(p_1) \rangle,
\end{align}
and one of the two operators to be inserted is $du (\partial_u X^\C)$.  
This $dX^\C$ is the holomorphic (1,0)-form of the target space $[E]_\C$, 
so it is $\omega_E$. The model of superconformal field theory in 
consideration is formulated (roughly speaking) by using 
path-integration over the space $Map(\Sigma_{ws}, ([E]_\C,f_\rho))$, and 
the insertion of an operator $(du \partial_uX^\C)$ in a correlation function 
is to include a factor 
\begin{align}
  P^{(1,0)} \left( \phi^*(dX^\C) \right) \propto 
  P^{(1,0)} \left( \phi^*(\omega_E) \right) 
   \label{eq:pulled-back-1forms}
\end{align}
in the integrand of the path-integration over 
$Map(\Sigma_{ws}, ([E]_\C, f_\rho))$; here, $P^{(1,0)}$ is the 
projection $H^1(\Sigma_{ws};\C) \rightarrow H^{1,0}(\Sigma_{ws};\C)$,   
and $\phi \in Map(\Sigma_{ws}; ([E]_\C, f_\rho))$. So, the ``$(g,n)=(1,2)$ 
chiral correlation functions'' $f_{1\Omega'_-}^{\rm II} du$, associated with 
$\omega_E \in H^{1,0}([E]_\C;\C)$ in its definition, has a good reason to 
play a role similar to that of the modular form $(\nu \circ \mu)^*(\omega_E)
 \propto f(\tau)d\tau$ in the 
theory of modular parametrization. In the rest of this article, we will 
elaborate more on this observation, by closing the remaining edges of the 
following triangle (\ref{eq:schematic-triangle}). 
\begin{align}
  \vcenter{ \xymatrix{
 f \in S_2(\Gamma) &   X_0(N) \; {\rm or} \; X_\Gamma \ar[rdd]^{\nu \circ \mu} & (\nu \circ \mu)^*(\omega_E)=d\tau f(\tau)
   \\ f^{\rm II}_{1\Omega'_-}(\tau_{ws};\beta) \propto \int_{\Sigma_{ws}} d^2u \langle J_-(u) \phi^*(\omega_E) \rangle & & \\
Map(\Sigma_{ws},([E]_\C,f_\rho)) \ni \phi:\Sigma_{ws} \ar[rr] \ar@{.}[ruu] &   & E/\Q \; {\rm or} \; E/k 
  }}
  \label{eq:schematic-triangle}
\end{align}
Sections \ref{ssec:modC4CCF}--\ref{ssec:diff-frho} are for the upper left 
edge, while section \ref{ssec:map2shimuraEC} is for the upper right edge. 
$\bullet$
\end{rmk}

\subsection{Modular Curves for the $g=1$ Chiral Correlation Functions}
\label{ssec:modC4CCF}

\subsubsection{Bosonic String Theory: Warming Up}

The $(g,n)=(1,1)$ chiral correlation functions $f_0^{\rm bos}(\tau_{ws};\beta)$, 
$f_{1\Omega'_-}^{\rm bos}(\tau_{ws};\beta)$, and 
$f_{1\overline{\Omega}'_-}^{\rm bos}(\tau_{ws};\beta)$ for $\beta \in iReps$ in 
the given model of rational CFT for a set of data $([E_z]_\C, f_\rho)$ can 
be regarded as functions on the entire upper complex half plane, 
$\tau_{ws} \in {\cal H}$; for each one of $\beta \in iReps$, 
they are single valued functions.

An alternative way to look at them is to think of 
$[f_0^{\rm bos}(\beta)]_{\beta \in iReps}$ as a section of rank-$|iReps|$
trivial bundle $\C^{|iReps|}$ over ${\cal M}_{1,1} = {\cal H}/{\rm PSL}(2;\Z)$
that is defined locally; $[f_{1\Omega'_-}^{\rm bos}]_{\beta \in iReps}$ as a whole 
can be regarded as a section of the rank=$|iReps|$ 
bundle $\C^{|iReps|} \otimes (T^*{\cal M}_{1,1})^{\otimes 1/2}$ over 
${\cal M}_{1,1}$ defined locally. The sections 
$[f_0^{\rm bos}]_{\beta \in iReps}$ and $[f_{1\Omega'_-}^{\rm bos}]_{\beta \in iReps}$
fail to be extended to a well-defined section defined globally on 
${\cal M}_{1,1}$ because of the non-trivial monodromy representation 
$\rho^{(1,1)}_{\{0\}}$ of the mapping class group 
$\Gamma_{1,1} = {\rm SL}(2;\Z)$ (Example \ref{exmpl:bos-T2models-monodromy}).

There is a third way that comes in between the two different (but compatible)
ways described above in thinking of $f_0^{\rm bos}$, $f_{1\Omega'_-}^{\rm bos}$ and 
$f_{1\overline{\Omega}'_-}^{\rm bos}$. Let $\Gamma^{(1,1)}_{\{0\}} \subset 
{\rm SL}(2;\Z)$ be the kernel of the monodromy representation 
$\rho^{(1,1)}_{\{0\}}$. For any subgroup $\Gamma$ that is contained in 
$\Gamma^{(1,1)}_{\{0\}}$, we can think of a curve $X_\Gamma$, which is 
a compactification of the open space ${\cal H}/\Gamma$, and a projection 
morphism $p_\Gamma: X_\Gamma \rightarrow {\cal H}/{\rm SL}(2;\Z) \cong 
{\cal M}_{1,1}$. The pull-back of the Friedan--Shenker bundle ${\cal F}_{\{0\}}$, 
$p_\Gamma^*({\cal F}_{\{0\}})$ on $X_\Gamma$, is a trivial bundle 
$\C^{|iReps|}$ then. The locally defined section 
$[f_0^{\rm bos}]_{\beta \in iReps}$ of $\C^{|iReps|}$ over ${\cal M}_{1,1}$ is 
also pulled back to $p_\Gamma^*(\C^{|iReps|})$ over $X_\Gamma$ where 
the section $p_\Gamma^*([f_0^{\rm bos}]_{\beta \in iReps})$ is globally 
well-defined (the monodromy is trivial). Similarly, the section 
$p_\Gamma^*([f_{1\Omega'_-}^{\rm bos}]_{\beta \in iReps})$ is a section 
of the bundle $(T^*X_\Gamma)^{\otimes 1/2}$ that is well-defined globally over 
the curve $X_\Gamma$ (the monodromy is trivial). 

This intermediate way to look at the conformal blocks and chiral correlation 
functions will sound very natural to arithmetic geometers, but may sound 
a little strange to string theorists. If we do not want to have 
non-trivial monodromy around, then why don't we lift all the chiral 
correlation functions all the way to the upper complex half plane, which 
is available for any model, and for any choice of the representation 
$\{ \alpha\}$ for the 
vertex operator insertion. Why do we bother to think of a partial lift 
to $X_\Gamma$ that depends on a model of rational CFT? An answer to this 
question is given in Remark \ref{rmk:motivatn4partialLift} after 
we repeat the same discussion for the ``$(g,n)=(1,2)$ chiral correlation 
functions'' $f_0^{\rm II}$, $f_{1\Omega'_-}^{\rm II}$, and 
$f_{1\overline{\Omega}'_-}^{\rm II}$ of Type II string theory. 

Let us stick to the model of rational CFT assigned by bosonic string theory 
for a set of data $([E_z]_\C, f_\rho)$ for a while. In this case, 
the monodromy representation $\rho^{(1,1)}_{\{0\}}$ is known explicitly; 
see Example \ref{exmpl:bos-T2models-monodromy}. So, we can extract more 
information on the kernel $\Gamma^{(1,1)}_{\{0\}} \subset \Gamma_{1,1} = 
{\rm SL}(2;\Z)$. 

\begin{lemma}
\label{lemma:kern-theta-eta}
The kernel of the 1-dimensional representation $(\vartheta_\eta)^2: 
{\rm SL}(2;\Z) \rightarrow S^1$ contains the principal congruence subgroup 
$\Gamma(12)$.
\end{lemma}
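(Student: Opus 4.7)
The plan is to combine the identity $\eta^{24}=\Delta$, which has trivial multiplier on ${\rm SL}(2;\Z)$, with a direct Dedekind-sum computation using the explicit formula (\ref{eq:vartheta-eta-repr-formula}). Raising $\eta^2(\gamma\tau)=(\vartheta_\eta)^2(\gamma)\,(c\tau+d)\,\eta^2(\tau)$ to the twelfth power and using the trivial multiplier of $\Delta=\eta^{24}$ yields $((\vartheta_\eta)^2(\gamma))^{12}=1$ for every $\gamma$, so $(\vartheta_\eta)^2$ is a character into $\mu_{12}$. Because $\Gamma(12)$ is normal in ${\rm SL}(2;\Z)$, it then suffices to verify $(\vartheta_\eta)^2(\gamma)=1$ for every $\gamma=[a,b;c,d]\in\Gamma(12)$.

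I first dispose of the easy cases. For $c=0$ only $d=+1$ is possible in $\Gamma(12)$, and $b\in 12\Z$; the formula gives $\mathbb{E}[b/12]=1$. The $c<0$ case reduces to $c>0$ by applying the formula to $-\gamma$ and using $(\vartheta_\eta)^2(-I)=-1$. The substantive case is $c>0$, where the $\Gamma(12)$-congruences force $c=12c'$ with $c'\ge 1$, $a=1+12a'$, $d=1+12d'$, $b=12b'$, and the claim reduces to
$$
\frac{a+d}{12c}\;-\;s(d,c)\;-\;\tfrac{1}{4}\;\in\;\Z.
$$

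The core step is a Dedekind-sum manipulation. I would apply Dedekind reciprocity
$$
s(d,c)+s(c,d)=-\tfrac{1}{4}+\tfrac{1}{12}\!\left(\tfrac{d}{c}+\tfrac{c}{d}+\tfrac{1}{cd}\right),
$$
together with the congruences $c\equiv 0,\,d\equiv 1\pmod{12}$ and the identity $ad-bc=1$ (which forces $ad\equiv 1\pmod{144}$), to replace $s(d,c)$ by $s(c,d)$ plus explicit rational corrections. Since $d\equiv 1\pmod{12}$, the periodicity $s(c,d)=s(c\bmod d,d)$ makes $s(c,d)$ tractable modulo $\Z$, and the rational corrections combine with $(a+d)/(12c)-1/4$ to yield an integer. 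The pilot family $\gamma=[1,0;12k,1]$ is a useful check: $s(1,12k)=(12k-1)(12k-2)/(144k)$, so $12c\cdot s(d,c)-(a+d)+3c=144k^{2}\equiv 0\pmod{12c}$, confirming the claim for $k\ge 1$.

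The main obstacle is the arithmetic bookkeeping: the claim is a vanishing in $\R/\Z$ of a rational number whose natural denominator is of order $12c$, so no reduction modulo a fixed integer suffices and the reciprocity/periodicity must be tracked delicately. A cleaner, less self-contained route would be to invoke the classical result (Rademacher--Grosswald, \emph{Dedekind Sums}) that the $\eta$-multiplier factors through ${\rm SL}(2;\Z/24\Z)$ and that its square factors through ${\rm SL}(2;\Z/12\Z)$, which gives the claim at once.
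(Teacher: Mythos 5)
Your route is genuinely different from the paper's: the paper proves this lemma by brute force, using SAGE to produce an explicit finite generating set of $\Gamma(12)$ (the element $T^{12}$, $66$ generators with $c>0$, and $30$ with $c<0$) and evaluating the closed formula (\ref{eq:vartheta-eta-repr-formula}) on each one, whereas you attempt a uniform argument for all elements via Dedekind reciprocity. Your reduction is correct as far as it goes: the character takes values in $\mu_{12}$, the $c=0$ and $c<0$ cases reduce appropriately, and for $c>0$ the claim is exactly $\frac{a+d}{12c}-s(d,c)-\frac14\in\Z$.

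However, the central step is asserted rather than proved, and it is precisely where the difficulty lives. Carrying out the reciprocity substitution and using $ad-bc=1$, the quantity collapses to $s(c,d)+\frac{b-c}{12d}$, i.e.\ (with $b=12b'$, $c=12c'$) to $s(c,d)+\frac{b'-c'}{d}$. This still has denominator of order $d$ (and $6d\,s(c,d)$ is in general only an integer after multiplication by $6$), so its integrality is not a routine consequence of "periodicity making $s(c,d)$ tractable modulo $\Z$"; it requires the Rademacher-type congruences for Dedekind sums (equivalently the mod-$24$ congruence for the Dedekind $\Phi$-function), which you do not supply. Your pilot family $[1,0;12k,1]$ is consistent but is a one-parameter check, not the general case. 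A second, smaller gap: for $\gamma\in\Gamma(12)$ with $c<0$ you must evaluate the formula on $-\gamma$, which has $a\equiv d\equiv -1\pmod{12}$, so the congruences you feed into the core computation ($a,d\equiv 1$) do not apply verbatim there; the paper handles these $30$ generators as a separate explicit check. Your fallback — citing Rademacher--Grosswald for the fact that the squared $\eta$-multiplier factors through ${\rm SL}(2;\Z/12\Z)$ — would indeed close the argument, but then the proof is that citation, not the sketched computation; as written, the proposal does not contain a complete proof.
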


\begin{proof} One can use \cite{SAGE} to generate a set of generators of 
$\Gamma(12)$, 
which consists of $A = [1,12; 0,1] \in {\rm SL}(2;\Z)$, 66 elements 
of the form $A = [a,b; c,d] \in {\rm SL}(2;\Z)$ with $c>0$, and 
30 other elements of the form $A = [a,b;c,d] \in {\rm SL}(2;\Z)$ 
with $c<0$. To prove that $\Gamma(12)$ is contained within the kernel, it is 
enough to verify that $\vartheta_\eta^2(A) = 1$ for all those generators 
($A$'s) of $\Gamma(12)$. 

As for the first generator $A = [1,12; 0,1] = T^{12}$ of $\Gamma(12)$, 
$(\vartheta_\eta)^2(T^{12}) = (\mathbb{E}[2/24])^{12}=1$. 

For $A$'s with $c>0$, we have evaluated 
(\ref{eq:vartheta-eta-repr-formula}) for all the 66 elements, and confirmed 
that they all satisfy $\vartheta^2_\eta(A)=1$. 

For $A$'s with $c<0$, we have verified that $\vartheta^2_\eta(-A)=
-1$ by evaluating (\ref{eq:vartheta-eta-repr-formula}) 
explicitly for all the 30 generators of this type; this means once again 
that $\vartheta^2_\eta(A)=1$ for those 30 generators of $\Gamma(12)$. 
\end{proof}

Let us now move on to the kernel of the Weil representation 
$\rho_{D\Lambda_-}^{\rm Weil}$; we quote the relevant result in 
Lemma \ref{lemma:Weil-repr-kernel}, but before doing so, we need 
to review more technical results. 

\begin{defn}
\label{defn:level-DGamma}
Let $DL :=(G_L,q_L)$ be the discriminant form associated with an even 
lattice $L$. The {\it level of the discriminant form $DL$} is defined by 
\begin{align}
N_{DL} := {\rm GCD}[ N \in \Z \; | \; N(x,x) \in 2\Z \; 
{\rm for~}{}^\forall x \in G_L].
\end{align}
We also introduce a closely related integer, 
\begin{align}
N_{DL}' := {\rm GCD}[ N \in \Z \; | \; 
Nx = 0 \in G_L \; {\rm for~}{}^\forall x \in G_L]. 
\end{align}
$N_{DL}$ is either equal to $N'_{DL}$ or $2N'_{DL}$.  
\end{defn}

\begin{lemma}
\label{lemma:NDLambda}
Think of the discriminant form $D\Lambda_- = (G_{\Lambda_-}, q_{\Lambda_-})$ 
of the even rank-2 positive definite lattice $\Lambda_-$ introduced 
in section \ref{ssec:string}. Then $N_{D\Lambda_-}=D_z f_\rho$. Furthermore, 
\begin{align}
  N'_{D\Lambda_-} = \left\{ \begin{array}{ll} 
    N_{D\Lambda_-},  & b_z \equiv 1 (2), \\
    N_{D\Lambda_-}/2, & b_z \equiv 0 (2).
    \end{array} \right.
\end{align}
\end{lemma}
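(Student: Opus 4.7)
The plan is to reduce everything to the Gram matrix
\[
M := f_\rho \begin{pmatrix} 2a_z & b_z \\ b_z & 2c_z \end{pmatrix}
\]
of $\Lambda_-$ in the basis $\{e_{-2}, e_{-1}\}$, whose determinant is $f_\rho^2 D_z$, and to work with its inverse
\[
M^{-1} = \frac{1}{f_\rho D_z}\begin{pmatrix} 2c_z & -b_z \\ -b_z & 2a_z \end{pmatrix},
\]
whose entries are, in the dual basis $\{e^\vee_{-2}, e^\vee_{-1}\}$, exactly the pairings $(e^\vee_i, e^\vee_j)_{\Lambda_-}$ on $\Lambda_-^\vee$.

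For the first assertion $N_{D\Lambda_-} = D_z f_\rho$, I would write a general $y\in\Lambda_-^\vee$ in coordinates $(n_1,n_2)\in\Z^2$ with respect to $\{e^\vee_{-2}, e^\vee_{-1}\}$ and express $(y,y)_{\Lambda_-}$ as $\tfrac{2}{f_\rho D_z}$ times the integral binary form $Q(n_1,n_2) := c_z n_1^2 - b_z n_1 n_2 + a_z n_2^2$. The defining condition of $N_{D\Lambda_-}$, namely $N(y,y)_{\Lambda_-}\in 2\Z$ for all $y\in\Lambda_-^\vee$, then becomes divisibility of $N\cdot Q(n_1,n_2)$ by $f_\rho D_z$ for every $(n_1,n_2)\in\Z^2$. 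The hypothesis $(a_z,b_z,c_z)=1$ is precisely primitivity of $Q$, so the integer values of $Q$ have $\gcd$ equal to $1$ (test with $(1,0)$, $(0,1)$, $(1,1)$), and the minimum of the (additively closed) set of admissible $N$ is $f_\rho D_z$.

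For $N'_{D\Lambda_-}$, I would use the equivalent characterization that $N'_{D\Lambda_-}$ is the smallest positive integer for which $N\cdot M^{-1}$ has integer entries: indeed, $Nx = 0$ in $G_{\Lambda_-}$ for every $x$ is equivalent to $N e^\vee_i\in\Lambda_-$ for the generators $e^\vee_{-2}, e^\vee_{-1}$ of $\Lambda_-^\vee$, which reads off the rows of $N\cdot M^{-1}$. Reading off the entries of $M^{-1}$ above gives
\[
N'_{D\Lambda_-} = \frac{f_\rho D_z}{\gcd(2a_z,\,b_z,\,2c_z)}.
\]
A short parity split on $b_z$ finishes the lemma: if $b_z$ is odd, the $\gcd$ is itself odd and hence divides $\gcd(a_z,b_z,c_z)=1$, so it equals $1$ and $N'_{D\Lambda_-}=N_{D\Lambda_-}$; if $b_z$ is even, one factors out a $2$ and notes that any common divisor of $(a_z,b_z/2,c_z)$ also divides $(a_z,b_z,c_z)=1$, giving $\gcd(2a_z,b_z,2c_z)=2$ and hence $N'_{D\Lambda_-}=N_{D\Lambda_-}/2$.

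The main (modest) obstacle is purely bookkeeping: running the parity case-split cleanly and sanity-checking the $b_z$-even case, where $b_z\in 2\Z$ forces $4\mid D_z$ so that $f_\rho D_z/2$ is indeed a positive integer (and in fact divisible by $2f_\rho$, consistent with the $d_1\mid d_2$ requirement on the elementary divisors of $G_{\Lambda_-}$).
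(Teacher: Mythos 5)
Your proof is correct and follows essentially the same route as the paper's: the same three test vectors $(1,0)$, $(0,1)$, $(1,1)$ together with primitivity of $(a_z,b_z,c_z)$ pin down $N_{D\Lambda_-}=f_\rho D_z$, and the computation of $N'_{D\Lambda_-}$ reduces in both arguments to $\gcd(2a_z,b_z,2c_z)\in\{1,2\}$ according to the parity of $b_z$. The only cosmetic difference is that you read $N'_{D\Lambda_-}$ off as the denominator of $M^{-1}$ (the exponent of $G_{\Lambda_-}$), whereas the paper phrases the same fact through the elementary-divisor decomposition $G_{\Lambda_-}\cong\Z/(f_\rho\delta_1)\oplus\Z/(f_\rho\delta_2)$.
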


\begin{proof}
For any $x, y \in \Z$, $N_{D\Lambda_-}$ has to satisfy 
\begin{align}
  \left( x , y \right) \frac{N_{D\Lambda_-}}{f_\rho D_z}
     \left( \begin{array}{cc} 2c_z & -b_z \\ -b_z & 2a_z \end{array} \right)
  \left( \begin{array}{c} x \\ y \end{array} \right) \in 2\Z.
\end{align}
It is easy to see that $N_{D\Lambda_-} | f_\rho D_z $, because the condition 
above is satisfied when $N_{D\Lambda_-}$ is replaced by $f_\rho D_z$.
Conversely, we also see that $2f_\rho D_z | 2c_z N_{D\Lambda_-}$ is necessary 
for $(x,y)=(1,0)$. Similarly, $2f_\rho D_z | 2a_z N_{D\Lambda_-}$ and 
$2f_\rho D_z | 2b_z N_{D\Lambda_-}$ are necessary for $(x,y)=(0,1)$ and 
 for $(x,y)=(1,1)$, respectively. 
Because $(a,b,c)=1$, it is also necessary that $f_\rho D_z | N_{D\Lambda_-}$. 
So, we see that $N_{D\Lambda_-} = f_\rho D_z$. 

As for $N'_{D\Lambda_-}$, note that the discriminant group of $\Lambda_-$ is always 
of the form $G_{\Lambda_-} \cong \Z/(f_\rho \delta_1) \oplus \Z/(f_\rho \delta_2)$, 
with $\delta_1 | \delta_2$ and $\delta_1\delta_2 = D_z = |D_K|f_z^2$. 
If $\delta_1 > 1$, then that means that the matrix $[2a_z,b_z; b_z,2c_z]$---the intersection 
form (\ref{eq:mom-latt-int-form}) without the factor $f_\rho$---is divisible 
by $\delta_1$. So, $\delta_1 > 1$ if and only if $b_z$ is even (because 
$(a_z,b_z,c_z)$ are mutually prime). In that case, $\delta_1 = 2$ and  
$N'_{D\Lambda_-}=f_\rho \delta_2$, which is $f_\rho D_z/\delta_1 = N_{D\Lambda_-}/2$. 
If $b_z$ is odd, instead, then $\delta_1 = 1$, and 
$N'_{D\Lambda_-} = f_\rho \delta_2 = f_\rho D_z = N_{D\Lambda_-}$. 
\end{proof}

\begin{cor}
\label{cor:discr-grp-isom}
The discriminant group $G_\Lambda$ (as an abelian group) is determined by 
$D_K$, $f_z$, and $f_\rho$. In particular, even when there are more than one 
$\C$-isomorphism classes of elliptic curves with complex multiplication by 
${\cal O}_{f_z}$, $h({\cal O}_{f_z}) > 1$, the abelian groups $G_{\Lambda}$ 
of all those $h({\cal O}_{f_z})$ CM elliptic curves in (\ref{eq:def-Ell-Ofz}) 
are all isomorphic to one another. 
\end{cor}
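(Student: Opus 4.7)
The plan is to piggyback on the proof of Lemma \ref{lemma:NDLambda}, which already establishes the structural decomposition $G_{\Lambda_-} \cong \Z/(f_\rho \delta_1) \oplus \Z/(f_\rho \delta_2)$ with $\delta_1 \mid \delta_2$, $\delta_1 \delta_2 = D_z = |D_K| f_z^2$, and $\delta_1 = 2$ if $b_z$ is even, $\delta_1 = 1$ if $b_z$ is odd. Since $D_z = |D_K| f_z^2$ manifestly depends only on $D_K$ and $f_z$, the whole corollary reduces to showing that the parity of $b_z$ is determined by $D_K$ and $f_z$, independently of the particular representative $[E_{z_a}]_\C \in {\cal E}ll({\cal O}_{f_z})$.

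To do this, I would extract the 2-adic information from the defining equation $a_z z^2 + b_z z + c_z = 0$ with $(a_z,b_z,c_z)=1$. The relation $4a_z c_z - b_z^2 = |D_K| f_z^2$ rewrites as
\begin{align}
b_z^2 \equiv -|D_K| f_z^2 \pmod{4},
\end{align}
so the parity of $b_z$ is controlled entirely by $|D_K| f_z^2 \bmod 4$. Combining this with the dichotomy $D_K \equiv 0$ or $D_K \equiv -3 \equiv 1 \pmod 4$ (the only residues a fundamental discriminant can take), a short case check gives: if $D_K \equiv 0 \pmod 4$, then $b_z^2 \equiv 0 \pmod 4$ and $b_z$ is even; if $D_K \equiv 1 \pmod 4$ and $f_z$ is even, then again $b_z$ is even; if $D_K \equiv 1 \pmod 4$ and $f_z$ is odd, then $|D_K| f_z^2 \equiv 3 \pmod 4$, so $b_z^2 \equiv 1 \pmod 4$ and $b_z$ is odd.

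In every case the parity of $b_z$, and hence $\delta_1 \in \{1,2\}$ and $\delta_2 = |D_K| f_z^2 / \delta_1$, is a function of $(D_K, f_z)$ alone. The abelian group structure of $G_{\Lambda_-}$ then depends only on $(D_K, f_z, f_\rho)$. For the second assertion, note that the CM data associated with any $[E_{z_a}]_\C \in {\cal E}ll({\cal O}_{f_z})$ share the same imaginary quadratic field $K$ (hence the same $D_K$) and the same conductor $f_z$ of the order, so all $h({\cal O}_{f_z})$ discriminant groups coincide as abstract abelian groups.

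I do not anticipate a real obstacle: the work was essentially done in Lemma \ref{lemma:NDLambda}, and the remaining step is just the elementary mod-4 congruence above. The one thing to double-check carefully is that the invariant factors $(\delta_1, \delta_2)$ really are recovered from $(N_{D\Lambda_-}, N'_{D\Lambda_-})$ together with the constraint $\delta_1 \delta_2 = D_z$, so that parity of $b_z$ is the only remaining freedom; this is exactly what the proof of Lemma \ref{lemma:NDLambda} verifies.
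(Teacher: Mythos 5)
Your proof is correct and follows essentially the same route as the paper: the paper likewise reduces the corollary to the structural decomposition established in the proof of Lemma \ref{lemma:NDLambda} and then observes that $b_z$ is odd precisely when both $D_K$ and $f_z$ are odd. Your mod-4 computation $b_z^2 \equiv -|D_K|f_z^2 \pmod 4$ simply makes explicit the parity claim that the paper states without calculation.
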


\begin{proof} Discussion in the proof of the preceding Lemma almost already 
proved this. To add a little more, we see that $b_z$ is odd if both $f_z$ and 
$D_K$ are odd; $b_z$ is even otherwise, which means that either all the 
$h({\cal O}_{f_z})$ CM elliptic curves have even $b_z$, or all of them have 
odd $b_z$. 
\end{proof}

\begin{lemma}
Let $\underline{M} = (M, q_M)$ be a non-degenerate finite quadratic module. 
Then there exists a decomposition 
\begin{align}
  \underline{M} \cong \oplus_{p: {\rm prime}} \; \underline{M}(p); 
\end{align}
the finite abelian group $M$ is decomposed into the direct sum of 
the groups $M(p)$ labeled by prime numbers $p$, where $M(p)$ is the set of 
elements of $M$ whose order is a power of $p$; the bilinear form $(-,-)_M$ 
derived from $q_M$ splits (becomes block diagonal) into the pieces labeled 
by the prime numbers. Moreover, for an odd prime number $p$, 
\begin{align}
  \underline{M}(p) \cong \oplus_{i=1}^{n_p} \left( \underline{A}_{p^{k_i}}^{t_i} \right)
\end{align}
for an appropriately chosen set of $\{ (k_i, t_i)_{i=1,\cdots, n_p} \}$ where 
$k_i \in \N_{>0 }$ and $t_i \in [\Z/(p^{k_i})]^\times$. 
$\underline{A}_{p^k}^t$ is a finite quadratic module on the abelian group $\Z/(p^k)$ 
where the quadratic form assigns the generator $1$ of the cyclic group to 
$2t/p^k + 2\Z \in \Q/2\Z$.

The finite quadratic module $\underline{M}(2)$ also has a similar result 
on its structure, although that is a little more complicated. 
See section 2.3 of \cite{Stroemberg} for more details. $\bullet$
\end{lemma}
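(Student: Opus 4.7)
The plan is to attack this in three separate stages. First, I decompose $M$ as an abelian group via the primary decomposition $M = \bigoplus_p M(p)$ with $M(p) = \{x \in M : p^n x = 0 \text{ for some } n\}$; this is just the classical structure theorem for finite abelian groups. Second, I show that both the bilinear form and the quadratic form respect this decomposition. Third, for each odd prime $p$, I inductively split $\underline{M}(p)$ as an orthogonal sum of rank-one cyclic pieces $\underline{A}_{p^k}^t$.

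Stage two is a coprimality argument. Given $x \in M(p)$ annihilated by $p^a$ and $y \in M(q)$ annihilated by $q^b$ with $p \neq q$, both $p^a$ and $q^b$ kill $(x,y)_M \in \Q/\Z$, and B\'ezout forces $(x,y)_M = 0$. Combined with the identity $q_M(x+y) - q_M(x) - q_M(y) = 2(x,y)_M$ in $\Q/2\Z$, this also forces $q_M(x+y) = q_M(x) + q_M(y)$, so $q_M$ splits orthogonally across the primary components. Non-degeneracy of each piece follows from that of the total form by a quick dimension-count-style argument.

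For stage three, I proceed by induction on $|M(p)|$. The crux — and what I expect to be the main obstacle — is the splitting lemma: produce an element $x \in M(p)$ of maximal order $p^k$ whose self-pairing $(x,x)_M$ also has order $p^k$ in $\Q/\Z$. Starting from an arbitrary element $x_0$ of order $p^k$, if $(x_0,x_0)_M$ has lesser order then non-degeneracy supplies a $y$ with $(x_0,y)_M$ of order $p^k$; a short calculation shows that the modification $x_0 + y$ has self-pairing of order $p^k$, crucially using that $2$ is a unit modulo $p^k$ since $p$ is odd (this is the step where the oddness of $p$ really enters). The order of $x_0 + y$ is then automatically $p^k$, because the order of $(z,z)_M$ divides the order of $z$. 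Once such $x$ is found, the cyclic submodule $\langle x \rangle$ is a non-degenerate orthogonal summand, its orthogonal complement $\langle x \rangle^\perp$ inherits non-degeneracy, and the induction hypothesis closes the argument.

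A final bookkeeping step identifies $\langle x \rangle$ with some $\underline{A}_{p^k}^t$. From $q_M(2x) = 4 q_M(x)$ one deduces $q_M(x) \equiv (x,x)_M \pmod{\Z}$, so $q_M(x)$ has denominator exactly $p^k$; since $p^k$ is odd, among its two representatives with that denominator in $\Q/2\Z$ exactly one has even numerator $2t$, and $t$ is automatically a unit modulo $p^k$ because $(x,x)_M$ has order $p^k$. This is precisely the structure of $\underline{A}_{p^k}^t$. The $p=2$ case is more delicate because the factor of $2$ is no longer invertible and one must allow non-diagonal cyclic blocks; this is why the authors defer it to \cite{Stroemberg}.
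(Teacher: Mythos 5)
The paper does not actually prove this lemma: it is quoted as a known structural fact, with the reader referred to \S 2.3 of Str\"{o}mberg's paper (and ultimately to the classical Wall/Nikulin theory of finite quadratic forms) for the details. So there is no in-paper argument to compare against; your proposal supplies the standard proof, and its architecture --- primary decomposition, orthogonality of the $p$-parts via B\'{e}zout, then inductive diagonalization at odd $p$ by splitting off a cyclic summand generated by an element of maximal order with self-pairing of maximal order --- is exactly the right one and is essentially what the cited reference does.

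Two small points in your write-up deserve tightening, though neither is fatal. First, in the splitting lemma: after choosing $y$ with $(x_0,y)_M$ of order $p^k$ (which you should get by applying non-degeneracy to $p^{k-1}x_0\neq 0$), the element $x_0+y$ need not have self-pairing of order $p^k$ if $(y,y)_M$ itself has order $p^k$, since then $2(x_0,y)_M+(y,y)_M$ can drop in order; the standard fix is to observe that at least one of $x_0$, $y$, $x_0+y$ works (and in the case where $(y,y)_M$ has order $p^k$, $y$ automatically has order $p^k$ because the order of $(z,z)_M$ divides the order of $z$). Second, the final identification with $\underline{A}_{p^k}^{t}$: an element of $\Q/2\Z$ with denominator $p^k$ has a \emph{unique} numerator modulo $2p^k$, so there is no choice of representative to be made; the correct reason the numerator is even is that $q_M(p^k x)=p^{2k}q_M(x)=q_M(0)=0$ in $\Q/2\Z$ forces $q_M(x)\in \tfrac{2}{p^{2k}}\Z/2\Z$, whence $q_M(x)=2t/p^k+2\Z$ with $t$ a unit mod $p^k$ because $(x,x)_M=q_M(x)\bmod\Z$ has order $p^k$. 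With those repairs your argument is complete for odd $p$, and your remark about why $p=2$ is harder (non-invertibility of $2$, hence the need for rank-two indecomposable blocks) is accurate.
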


\begin{defn}
Let $\underline{M} = (M,q_M)$ be a non-degenerate finite quadratic module. 
Then it is known that 
\begin{align}
  \frac{1}{\sqrt{|M|}} \sum_{y \in M} \mathbb{E}\left[ \frac{q_M(y)}{2} \right]
  =: \mathbb{E}\left[ \frac{{\rm sgn}(\underline{M})}{8} \right]
\end{align}
is an 8-th root of unity. An integer ${\rm sgn}(\underline{M})$ mod 8 is 
defined by the relation above, and is called the {\it signature of the finite 
quadratic module}.  
If a finite quadratic module is that of an even positive 
definite lattice, its signature is the same as the rank of the lattice modulo 8. 
$\bullet$
\end{defn}

\begin{lemma} [cf. Lemma 5.5 of \cite{Stroemberg}; also \cite{Weil-repr}] 
\label{lemma:Weil-repr-kernel}
Let $L$ be a rank-2 even lattice that is either positive definite or 
negative definite, and $DL$ its discriminant form. 
Then the kernel of the Weil representation $\rho^{\rm Weil}_{DL}$ contains 
the group $\Gamma(N_{DL})$. 
Furthermore, for an element\footnote{
The group $\Gamma_0^0(N) \subset {\rm SL}(2;\Z)$ is defined 
in (\ref{eq:def-Gamma00}).
} $A \in \Gamma_0^0(N_{DL}) \subset {\rm SL}(2;\Z)$, 
whose $2 \times 2$ matrix representation is $[a, b; c, d]$,  
\begin{align}
  A \cdot e_x = e_{d \cdot x} \; \epsilon_{DL}^{-1}(d), 
\end{align}
where $x$ is any element of the set $G_L = L^\vee/L$, and $e_x$ the corresponding 
generator of the vector space $\C[G_L]$ of the Weil representation. 
$\epsilon_{DL}^{-1}$ is a character of the multiplicative group 
$[\Z/(N_{DL})]^\times$ given\footnote{$(-/-)$ stands for the Jacobi symbol. 
Ref. \cite[\S1.2]{Stroemberg} uses a convention (*a) that $(x/2)=-1$ for 
$x \equiv 3$ mod 8, 
and $(x/2)=+1$ for $x \equiv 7$ mod 8; another convention (*b) is to assign 
$(x/2)=0$ for $x \equiv 3,7$ mod 8 \cite[\S3]{Miyake}. We follow the 
convention (*a) in this article. Then $(2/y)(y/2)=1$ for any odd integer $y$, 
and $(x/y)$ is fully multiplicative in both $x$ and $y$. 
} 
explicitly by 
\begin{align}
  \epsilon_{DL}^{-1} : d \longmapsto \left( \frac{d}{|G_L|} \right)
      \mathbb{E}\left[\frac{(d-1){\rm sgn}(DL(2))}{8} \right].  
  \qquad \qquad \qquad 
   \label{eq:def-epsilon-DL}
\end{align}
\end{lemma}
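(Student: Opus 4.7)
The plan is to verify both claims by reducing to the explicit action of the generators $T$ and $S$ on the basis $\{e_x\}_{x \in G_L}$ supplied in the definition of $\rho^{\rm Weil}_{DL}$, together with quadratic Gauss sum identities for the finite quadratic module $DL$.

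For the containment $\Gamma(N_{DL}) \subset \ker \rho^{\rm Weil}_{DL}$, the key observation is that by the defining property of $N_{DL}$ in Definition \ref{defn:level-DGamma}, the quantity $N_{DL}\, q_L(x,x)$ lies in $2\Z$ for every $x \in G_L$. Therefore
\begin{align}
T^{N_{DL}} \cdot e_x \;=\; \mathbb{E}\!\left[ \tfrac{N_{DL}\, q_L(x,x)}{2}\right] e_x \;=\; e_x
\end{align}
for each basis vector, so $T^{N_{DL}}$ lies in the kernel. The kernel is normal in ${\rm SL}(2;\Z)$, and by a classical theorem of Wohlfahrt the principal congruence subgroup $\Gamma(N_{DL})$ is the normal closure of $T^{N_{DL}}$ in ${\rm SL}(2;\Z)$, so the containment follows. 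One can also verify directly that $ST^{N_{DL}}S^{-1} \cdot e_x = e_x$, since $T^{N_{DL}}$ acts as the identity on every basis vector including on $S^{-1} e_x$.

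For the formula on $\Gamma_0^0(N_{DL})$, I would proceed as follows. Since $\Gamma(N_{DL})$ already acts trivially, the action of $A = [a,b;c,d] \in \Gamma_0^0(N_{DL})$ depends only on its image in $\Gamma_0^0(N_{DL})/\Gamma(N_{DL})$; using $b,c \equiv 0 \bmod N_{DL}$ together with $ad \equiv 1 \bmod N_{DL}$, this image is encoded entirely by $d \in (\Z/N_{DL})^{\times}$. It therefore suffices to exhibit one representative for each such $d$ and compute $\rho^{\rm Weil}_{DL}$ on it directly. A convenient family consists of matrices of the form $A_d = [a,b; N_{DL}, d]$ with $ad - b N_{DL} = 1$, which one decomposes as a product of $T$'s and $S$'s via a short continued-fraction expansion of $d/N_{DL}$. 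Applying the given $T$- and $S$-formulas step by step produces an expression
\begin{align}
\rho^{\rm Weil}_{DL}(A_d) \cdot e_x \;=\; \frac{\mathrm{phase}(d)}{\sqrt{|G_L|}} \sum_{y \in G_L} \mathbb{E}\!\left[-(y,x)_L + \tfrac{d\, q_L(y)}{2}\right] e_y,
\end{align}
and completing the square in $y$ (using that $d$ is a unit in $\Z/N_{DL}$) reduces this to the claimed scalar multiple of $e_{dx}$; the residual sum is the twisted Gauss sum $\sum_y \mathbb{E}[d\, q_L(y)/2]$.

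The main technical obstacle is the clean evaluation of this twisted Gauss sum and the identification of the resulting phase with $\epsilon_{DL}^{-1}(d)$. Decomposing $DL = \bigoplus_p DL(p)$ into its $p$-primary pieces, the Gauss sum factors prime by prime. For odd primes the classical quadratic Gauss sum evaluation produces exactly the Legendre-symbol factor that, by multiplicativity in both variables, assembles into the Jacobi symbol $(d/|G_L|)$. For $p=2$ one must invoke the finer classification of $DL(2)$ recalled just before the present lemma, and it is there that the residual phase $\mathbb{E}[(d-1)\,\mathrm{sgn}(DL(2))/8]$ arises---essentially because the defining identity of $\mathrm{sgn}(DL(2))$ transforms in a $d$-dependent way under the substitution $y \mapsto dy$ at the dyadic place. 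Once these prime-by-prime evaluations are reassembled via the Chinese remainder theorem, the formula (\ref{eq:def-epsilon-DL}) for $\epsilon_{DL}^{-1}(d)$ emerges, completing the argument. This is essentially Stroemberg's proof of his Lemma 5.5 specialized to the rank-$2$ case.
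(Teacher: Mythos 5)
The paper never proves this lemma itself --- it is quoted from Lemma 5.5 of \cite{Stroemberg} (see also \cite{Weil-repr}) --- so the comparison is with the literature rather than with an in-paper argument. Your proposal has a genuine error in its first half. It is true that $T^{N_{DL}}$ acts trivially, hence so does the normal closure $\Delta(N_{DL})$ of $T^{N_{DL}}$ in ${\rm SL}(2;\Z)$. But it is \emph{not} true that this normal closure equals $\Gamma(N_{DL})$: that holds only for $N\leq 5$. For $N\geq 6$ the quotient ${\rm PSL}(2;\Z)/\Delta(N)$ is the infinite $(2,3,N)$ triangle group, so $\Delta(N)$ sits inside the finite-index subgroup $\Gamma(N)$ with \emph{infinite} index. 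Wohlfahrt's theorem asserts something different (a finite-index subgroup is a congruence subgroup iff it contains $\Gamma(n)$ for $n$ its general level); it does not identify $\Gamma(N)$ with the normal closure of $T^N$. Since in the applications here $N_{DL}=f_\rho D_z$ is typically $8$, $20$, etc., this is not a corner case. Compare the paper's own proof of Lemma \ref{lemma:kern-theta-eta}: precisely because this shortcut is unavailable, the containment $\Gamma(12)\subset\ker\vartheta_\eta^2$ is verified there on an explicit list of generators of $\Gamma(12)$ rather than deduced from $\vartheta_\eta^2(T^{12})=1$.

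The second half then inherits a circularity: you reduce the computation for $A\in\Gamma_0^0(N_{DL})$ to a single representative of each $d\in[\Z/(N_{DL})]^\times$ by invoking the not-yet-established triviality of $\Gamma(N_{DL})$. The workable route --- and the one taken in \cite{Stroemberg, Weil-repr} --- is to first derive a closed formula for $\rho^{\rm Weil}_{DL}(A)$ for an \emph{arbitrary} $A=[a,b;c,d]$ (your displayed sum is missing the $1/c$ in the exponent and the $a\,q_L(x)$ term that appear for general $c\neq 0$), evaluate the resulting Gauss sums prime by prime as you sketch, and only then read off both claims at once: the $\Gamma_0^0(N_{DL})$ formula directly, and the kernel statement as its specialization $d\equiv 1$, $\epsilon_{DL}^{-1}(1)=1$. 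Since you need that general formula anyway, the Wohlfahrt paragraph buys nothing and should be removed; with it removed, what remains is an outline of Str\"{o}mberg's argument rather than a self-contained proof, which is acceptable only if you are content to cite the Gauss-sum evaluation as the paper does.
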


The character $\epsilon_{DL}$ for a rank-2 positive definite lattice $L$
has a simpler expression (Lemma \ref{lemma:epsiln=DK}). 
There is a good reason to believe that this Lemma has been known for 
a long time, but we could not find out a reference, so we write it 
down here along with a proof.  

\begin{lemma}
\label{lemma:epsiln=DK}
Let $L$, $DL$, and $\epsilon_{DL}^{-1}$ be the same as in the previous Lemma.
Then it is 
\begin{align}
  \epsilon_{DL}^{-1}(d) = \left( \frac{D_K}{d} \right)
\end{align}
for ${}^\forall d \in [Z/(N_{DL})]^\times$. It is $\{ \pm 1 \}$-valued. 
There are $h({\cal O}_{f_z})$ distinct rank-2 lattices $L$ sharing 
the same set of $D_K$, $f_z$, and $f_\rho$, but the character 
$\epsilon_{DL}^{-1}$ on $[\Z/(N_{DL})]^\times$ is the same for all of 
those lattices, even when $h({\cal O}_{f_z}) > 1$. 
\end{lemma}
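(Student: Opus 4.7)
The plan is to apply the explicit formula from Lemma \ref{lemma:Weil-repr-kernel},
\[
\epsilon_{DL}^{-1}(d) = \left(\frac{d}{|G_L|}\right)\, \mathbb{E}\!\left[\frac{(d-1)\,{\rm sgn}(DL(2))}{8}\right],
\]
and identify each factor with a piece of the Kronecker symbol $(D_K/d)$. First I would simplify the Jacobi factor. By direct computation $|G_L| = {\rm discr}(\Lambda_-) = f_\rho^2 f_z^2 |D_K|$, and since $d \in [\Z/(N_{DL})]^\times$ is coprime to $N_{DL} = f_\rho f_z^2 |D_K|$ by Lemma \ref{lemma:NDLambda}, it is in particular coprime to $f_\rho$ and $f_z$; the squared factors then contribute trivially and
\[
\left(\frac{d}{|G_L|}\right) = \left(\frac{d}{|D_K|}\right).
\]
So the $f_\rho, f_z$-dependence of $\epsilon_{DL}^{-1}$ must live entirely inside the signature phase, and the lemma says this dependence cancels against the reciprocity defect converting $(d/|D_K|)$ into $(D_K/d)$.

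Next I would compute ${\rm sgn}(DL(2))$. Since $\Lambda_-$ is rank-$2$ positive-definite, ${\rm sgn}(DL) \equiv 2 \pmod 8$; additivity of the signature under the $p$-primary decomposition $DL \cong \bigoplus_p DL(p)$ gives
\[
{\rm sgn}(DL(2)) \equiv 2 - \sum_{p\ \mathrm{odd}} {\rm sgn}(DL(p)) \pmod 8.
\]
For each odd prime $p$, one decomposes $DL(p) \cong \bigoplus_i \underline{A}_{p^{k_i}}^{t_i}$ and reads off ${\rm sgn}(\underline{A}_{p^k}^t) \pmod 8$ from the standard Gauss-sum table for diagonal quadratic forms over $\Z/p^k$, with the $(k_i, t_i)$ determined by the local shape of the intersection form $f_\rho[2a_z,b_z;b_z,2c_z]$. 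The final step is Jacobi/Kronecker reciprocity, which says that for $d$ coprime to $D_K$
\[
\left(\frac{D_K}{d}\right) = \left(\frac{d}{|D_K|}\right)\cdot \sigma(d, D_K),
\]
where $\sigma(d, D_K)\in\{\pm 1\}$ depends on $d \pmod 4$ (or $\pmod 8$ when $D_K$ is even) and on $D_K \pmod 4$ (or $\pmod 8$). Matching $\sigma(d, D_K)$ with $\mathbb{E}[(d-1)\,{\rm sgn}(DL(2))/8]$ case by case yields the identification $\epsilon_{DL}^{-1}(d) = (D_K/d)$, which is in particular $\{\pm 1\}$-valued.

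The final assertion of the lemma then follows for free: $(D_K/d)$ manifestly depends only on $D_K$, so it is the same for all $h({\cal O}_{f_z})$ lattices arising from distinct CM classes sharing $D_K, f_z, f_\rho$. The main obstacle is the $2$-adic bookkeeping: the discriminant form $DL(2)$ depends subtly on $v_2(|D_K|)$, on $v_2(f_\rho)$ and $v_2(f_z)$, and on $b_z \pmod 8$ through the $\delta_1 \in \{1,2\}$ dichotomy of Lemma \ref{lemma:NDLambda}. One must run the case analysis $D_K \equiv 1 \pmod 4$ versus $D_K \equiv 0 \pmod 4$ (the latter refined modulo $16$) and verify in each case that the $2$-part signature produces precisely the reciprocity sign $\sigma(d, D_K)$.
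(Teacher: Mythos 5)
Your overall skeleton matches the paper's: reduce $\bigl(\tfrac{d}{|G_L|}\bigr)$ to $\bigl(\tfrac{d}{|D_K|}\bigr)$ using coprimality to $f_\rho f_z$, then show the signature phase supplies exactly the reciprocity defect between $\bigl(\tfrac{d}{|D_K|}\bigr)$ and $\bigl(\tfrac{D_K}{d}\bigr)$. Where you diverge is in how you get at ${\rm sgn}(DL(2))$. The paper does not touch the local Jordan decompositions at all: it invokes the identity (from the Weil-representation literature it cites) $\mathbb{E}[{\rm sgn}(DL(2))/4] = \mathbb{E}[{\rm sgn}(L)/4]\,\bigl(\tfrac{-1}{|G_L|}\bigr) = -\bigl(\tfrac{-1}{|G_L|}\bigr)$, which hands over precisely the mod-$4$ information of ${\rm sgn}(DL(2))$ as a function of $|G_L|$ alone, and then runs the three-case reciprocity check. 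Your route via Milgram plus additivity, ${\rm sgn}(DL(2)) \equiv 2 - \sum_{p\ \mathrm{odd}}{\rm sgn}(DL(p))$, and odd-prime Gauss sums will get there, but it buys you extra work and an extra pitfall: the odd-primary discriminant forms $DL(p)$ (hence the individual ${\rm sgn}(DL(p))$, hence ${\rm sgn}(DL(2))$ mod $8$) genuinely depend on the class $[a_z,b_z,c_z]$ and differ between genera of the same discriminant — e.g. for $D_z=20$ the forms $[1,0,5]$ and $[2,2,3]$ give $5$-adic signatures $0$ and $4$ mod $8$. Your computation will therefore produce form-dependent intermediate answers, and you must observe that only ${\rm sgn}(DL(2))$ mod $4$ enters $\mathbb{E}[(d-1){\rm sgn}(DL(2))/8]$ when $d$ is odd (since a shift of the signature by $4$ multiplies the phase by $(-1)^{d-1}=1$), so that the mod-$8$ ambiguity washes out. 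Without that observation your case analysis will appear not to close.

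A second point you leave unaddressed: when $N_{DL}$ is odd ($D_K\equiv -3\bmod 4$, $f_\rho f_z$ odd), the group $[\Z/(N_{DL})]^\times$ contains \emph{even} residues, and standard Jacobi reciprocity in the form you quote ("depends on $d\bmod 4$ or $\bmod 8$") presupposes $d$ odd. The paper handles this case separately: $DL(2)$ is then trivial so the phase factor is $1$, and one factors $d = 2^k d'$ and uses $\bigl(\tfrac{|D_K|}{2}\bigr) = \bigl(\tfrac{D_K}{2}\bigr)$ together with reciprocity for the odd part $d'$. You should add this branch; as written your argument only covers odd $d$.
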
 
\begin{proof}
We use \cite[\S4 (p.1503)]{Weil-repr}
\begin{align}
  \mathbb{E}\left[ \frac{{\rm sgn}(DL(2))}{4} \right]
 = \mathbb{E}\left[ \frac{{\rm sgn}(L)}{4} \right] \; 
   \left( \frac{-1}{|G_L|} \right)
 = (-1) \left( \frac{-1}{|G_L|} \right),
\end{align}
where ${\rm sgn}(L)=2$ has been used in the second equality. 
When $D_K \equiv -3$ mod 4, it is always $- (-1/|D_K|) = +1$.
When $D_K =: -4 D'_K$ for an odd square-free positive $D'_K \equiv 1$ mod 4, 
it is always $- (-1/|D_K|) = - (-1/D'_K) = -1$. Finally, 
when $D_K =: -8 D''_K$ for an odd square-free positive $D''_K$, 
it is $- (-1/|D_K|) = - (-1/2)(-1/D''_K) = - (-1)^{(D''_K-1)/2}$.

Let us evaluate $\epsilon_{DL}^{-1}(d)$ for $d \in [\Z/(N_{DL})]^\times$
for the three separate cases of $D_K$. 
The simplest case is for $D_K = -4 D'_K$ with an odd positive square free 
$D'_K \equiv 1$ mod 4. In this case, $d$ is always odd. 
\begin{align}
  \epsilon_{DL}^{-1}(d) = \left( \frac{d}{|D_K|} \right) \times (-1)^{\frac{d-1}{2}}
   = \left( \frac{d}{D'_K} \right) \times (-1)^{\frac{d-1}{2}}
   = \left( \frac{D'_K}{d} \right) \times (-1)^{\frac{d-1}{2}}
   = \left( \frac{D_K}{d} \right).
\end{align}

In the case of $D_K = -8 D''_K$ with an odd positive square-free $D''_K$, 
\begin{align}
  \epsilon_{DL}^{-1}(d) = \left(\frac{d}{2D''_K}\right) \times 
     (-1)^{\frac{d-1}{2} + \frac{D''_K-1}{2}\frac{d-1}{2}}
& \;  = \left(\frac{2}{d}\right) \times
    \left(\frac{D''_K}{d} \right) (-1)^{\frac{D''_K-1}{2}\frac{d-1}{2}} \times
    (-1)^{\frac{d-1}{2}+\frac{D''_K-1}{2}\frac{d-1}{2}} \nonumber \\
& \; = \left( \frac{-2D''_K}{d} \right) = \left( \frac{D_K}{d} \right).
\end{align}

The last case $D_K \equiv -3$ mod 4 needs to be treated separately for 
an odd $d$ and an even $d$. For an odd $d$, 
\begin{align}
 \epsilon_{DL}^{-1}(d) = \left( \frac{d}{|D_K|}\right) \times 1^{\frac{d-1}{2}}
    = \left( \frac{|D_K|}{d} \right) (-1)^{\frac{d-1}{2}\frac{|D_K|-1}{2}}
    = \left( \frac{|D_K|}{d} \right) (-1)^{\frac{d-1}{2}}
    = \left( \frac{D_K}{d} \right).
\end{align}
We need to evaluate $\epsilon_{DL}^{-1}(d)$ for an even $d$, only when 
$N_{DL}$ is odd, which means that the abelian group $DL(2)$
is trivial, and ${\rm sgn}(DL(2))\equiv 0$ mod 8. So, 
$\mathbb{E}[(d-1) {\rm sgn}(DL(2)/8] = 1^{d-1}=1$. 
Now, if $d = 2^{\rm even} \cdot d''$ with an odd $d''$, 
\begin{align}
 \epsilon_{DL}^{-1}(d) = \left( \frac{d}{|D_K|} \right)
   = \left( \frac{d''}{|D_K|}\right)
   = (-1)^{\frac{d''-1}{2}\frac{|D_K|-1}{2}} \left( \frac{|D_K|}{d''} \right)
   = (-1)^{\frac{d''-1}{2}} \left(\frac{|D_K|}{d''}\right)
   = \left( \frac{D_K}{d} \right).
\end{align}
If $d = 2^{\rm odd} \cdot d'$ with an odd $d'$ instead, 
\begin{align}
 \epsilon_{DL}^{-1}(d) = \left( \frac{2^{\rm odd}d'}{|D_K|} \right)
  = \left(\frac{|D_K|}{2^{\rm odd}} \right)\left( \frac{D_K}{d'} \right)
  = \left(\frac{D_K}{2^{\rm odd}} \right) \left(\frac{D_K}{d'}\right)
  = \left( \frac{D_K}{d} \right).
\end{align}

For all the cases, we have now verified that $\epsilon_{DL}^{-1}$ on 
$[\Z/(N_{DL})]^\times$ is the same as $(D_K/-)$.
\end{proof}

\begin{props}
\label{props:mndrmy-kernel-bos-str}
In the model of rational CFT assigned by bosonic string theory 
to the data $([E_z]_\C,f_\rho)$, the monodromy representation 
$\rho^{(1,1)}_{\{0\}}$ of $\Gamma_{1,1} = {\rm SL}(2;\Z)$ 
associated with the $(g,n)=(1,1)$ conformal block $f_{\{0\}}$ has a kernel 
that contains a group $\Gamma(N_{\rm bos})$,  where 
\begin{align}
  N_{\rm bos} = {\rm LCM}(N_{D\Lambda_-}, 12) = {\rm LCM}(f_\rho D_z, 12).
\end{align}
In particular, the kernel is a congruence subgroup of ${\rm SL}(2;\Z)$.

When the Friedan--Shenker bundle ${\cal F}_{\{0\}}$ 
of the conformal blocks  
on ${\cal M}_{1,1}={\cal H}/{\rm PSL}(2;\Z)$ is lifted to the modular curve 
$X_\Gamma$ with the congruence subgroup $\Gamma$ contained in the kernel of 
the monodromy representation, $X(N_{\rm bos})$ for example, 
$p_\Gamma^*({\cal F}_{\{0\}})$ is a trivial bundle on $X_\Gamma$. 
The chiral correlation functions $f_0^{\rm bos}(\beta)$ become a single 
$\C$-valued function defined globally on $X_\Gamma$ for individual  
$\beta \in iReps$; $f_{1\Omega'_-}^{\rm bos}(\beta)$ and 
$f_{1\overline{\Omega}'_-}^{\rm bos}(\beta)$ are globally well-defined 
sections of $(T^*X_\Gamma)^{\otimes 1/2}$ for individual $\beta \in iReps$.
\end{props}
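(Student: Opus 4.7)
The plan is to chain together the three preceding ingredients: the explicit formula for $\rho^{(1,1)}_{\{0\}}$ from Example \ref{exmpl:bos-T2models-monodromy}, the containment $\Gamma(12) \subset \ker(\vartheta_\eta)^2$ from Lemma \ref{lemma:kern-theta-eta}, and the containment $\Gamma(N_{DL}) \subset \ker \rho^{\rm Weil}_{DL}$ from Lemma \ref{lemma:Weil-repr-kernel}, together with the value $N_{D\Lambda_-} = f_\rho D_z$ from Lemma \ref{lemma:NDLambda}. The first step is to recall from Example \ref{exmpl:bos-T2models-monodromy} that $\rho^{(1,1)}_{\{0\}} = (\vartheta_\eta)^{-2} \, \rho^{\rm Weil}_{D\Lambda_-[-1]}$, and to observe that since the discriminant form of $L[-1]$ agrees with that of $L$ as an abelian group (with quadratic form negated), we have $N_{D\Lambda_-[-1]} = N_{D\Lambda_-}$, so Lemma \ref{lemma:Weil-repr-kernel} gives $\Gamma(f_\rho D_z) \subset \ker \rho^{\rm Weil}_{D\Lambda_-[-1]}$.

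Next I would argue that, since $\rho^{(1,1)}_{\{0\}}$ is the tensor product of two representations, its kernel contains the intersection of the two kernels, so
\begin{align*}
\ker \rho^{(1,1)}_{\{0\}} \;\supset\; \ker (\vartheta_\eta)^{-2} \cap \ker \rho^{\rm Weil}_{D\Lambda_-[-1]} \;\supset\; \Gamma(12) \cap \Gamma(f_\rho D_z) \;=\; \Gamma(\mathrm{LCM}(12, f_\rho D_z)) \;=\; \Gamma(N_{\rm bos}),
\end{align*}
using the elementary fact $\Gamma(M) \cap \Gamma(N) = \Gamma(\mathrm{LCM}(M,N))$. Since $\Gamma(N_{\rm bos})$ is a principal congruence subgroup by definition, any group containing it (in particular the full kernel) is a congruence subgroup; this gives the first half of the statement.

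For the second half, I would argue as follows. For any $\Gamma \subset \ker \rho^{(1,1)}_{\{0\}}$, the pulled-back Friedan--Shenker bundle $p_\Gamma^*({\cal F}_{\{0\}})$ on $X_\Gamma$ has trivial monodromy by construction, and since on the open locus ${\cal H}/\Gamma$ the bundle admits a flat trivialization given by the locally-defined sections $[f^{\rm bos}_0]_{\beta \in iReps}$, these sections extend to a single-valued global basis on ${\cal H}/\Gamma$; triviality across the cusps of the compactification $X_\Gamma$ then follows from the explicit $q$-expansions of $\vartheta_{\Lambda_-}(\tau;\beta)/\eta(\tau)^2$, which are bounded on the cusp neighborhoods after passing to local parameters on $X_\Gamma$ (thanks precisely to the level-$N_{\rm bos}$ condition killing all fractional phases). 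The same reasoning applies to $f^{\rm bos}_{1\Omega'_-}(\beta)$ and $f^{\rm bos}_{1\overline{\Omega}'_-}(\beta)$, the only modification being that the Virasoro-primary factor $(du)^{h(v)}$ with $h(v)=1$ turns them into sections of $(T^*X_\Gamma)^{\otimes 1/2}$ rather than $\C$-valued functions (the half-power arising because a $(g,n)=(1,1)$ conformal block is naturally a half-form on ${\cal M}_{1,1}$).

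The main obstacle I anticipate is not the kernel containment, which is essentially a book-keeping exercise given the cited lemmas, but rather checking behavior at the cusps: one must verify that the globally single-valued sections on ${\cal H}/\Gamma$ truly extend across the cusps of $X_\Gamma$ without poles, and identify the precise power of $(T^*X_\Gamma)$ they live in once the $\eta^{-2}$ factor and the $du$-weight are correctly accounted for. If greater rigor is required, one could also verify that the character $\epsilon_{DL}^{-1}$ from Lemma \ref{lemma:epsiln=DK} trivializes on $\Gamma(N_{D\Lambda_-})$, which is immediate since $\epsilon_{DL}^{-1}$ factors through $(\Z/N_{DL})^\times$ and $\Gamma(N_{DL})$ has $d \equiv 1 \pmod{N_{DL}}$.
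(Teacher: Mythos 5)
Your argument is exactly the paper's: the paper's entire proof is to combine Example \ref{exmpl:bos-T2models-monodromy} with Lemmas \ref{lemma:kern-theta-eta}, \ref{lemma:Weil-repr-kernel}, and \ref{lemma:NDLambda}, and your chain $\ker\rho^{(1,1)}_{\{0\}} \supset \Gamma(12)\cap\Gamma(f_\rho D_z)=\Gamma(N_{\rm bos})$ just spells out the bookkeeping (including the correct observation that $N_{D\Lambda_-[-1]}=N_{D\Lambda_-}$). The additional discussion of extension across the cusps and of the weight-$1/2$ power is not in the paper's proof, which leaves those points to the surrounding general discussion, but it is consistent with it.
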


\begin{proof} We just have to use the Lemmas \ref{lemma:kern-theta-eta},
\ref{lemma:Weil-repr-kernel}, and \ref{lemma:NDLambda}, in the 
Example \ref{exmpl:bos-T2models-monodromy}.  
\end{proof}

\subsubsection{Type II String Theory}

\begin{props}
\label{props:chi-corrl-II-lift}
In the model of rational ${\cal N}=(2,2)$ SCFT assigned by Type II string theory to the data $([E_z]_\C, f_\rho)$, the ``$(g,n)=(1,2)$ chiral correlation function(s)'' 
\begin{align}
 f_0^{\rm II} & \; = ([f_0^{\rm II}(\beta)]_{\beta \in iReps}), \\
 [{\rm resp.} \quad 
 f_{1\Omega'_-}^{\rm II} & \; = ([f_{1\Omega'_-}^{\rm II}(\beta)]_{\beta \in iReps} ), 
    \qquad 
 f_{1\overline{\Omega}'_-}^{\rm II} = 
([f_{1\overline{\Omega}'_-}^{\rm II}(\beta)]_{\beta \in iReps} )  \quad ]
\end{align}
is a section of $(T^*{\cal M}_{1,1})^{\otimes 1/2}$ 
[resp. are sections of $T^*{\cal M}_{1,1}$] defined locally. 

They fail to be a global section on ${\cal M}_{1,1}$ due to the monodromy 
representation $\rho_{D\Lambda_-}^{\rm Weil}$ of $\pi_1^{\rm top}({\cal M}_{1,1}) 
= {\rm SL}(2;\Z)$. The kernel of this representation contains 
$\Gamma(N_{D\Lambda_-}) = \Gamma(f_\rho D_z)$, and hence is a congruence subgroup
of ${\rm SL}(2;\Z)$.

When the ``$(g,n)=(1,2)$ chiral correlation functions'' are lifted to 
the modular curves $X_\Gamma$ whose group $\Gamma$ is contained in the kernel, 
such as $X(N_{D\Lambda_-}) = X(f_\rho D_z)$, they become sections of 
$(T^*X_\Gamma)^{\otimes 1/2}$ and $T^*X_\Gamma$, respectively, defined globally 
over the entire modular curve $X_\Gamma$, without a non-trivial monodromy 
among those with different $\beta$'s in $iReps$. 
\begin{align}
 d\tau_{ws} f_{1\Omega'_-}^{\rm II}(\tau_{ws};\beta) =   
 d\tau_{ws} \;\vartheta_{\Lambda_-}^{1\Omega'_-}(\tau_{ws}; \beta) \in 
    \Gamma(X_\Gamma; T^*X_\Gamma ), 
   \qquad   \beta \in iReps, 
  \label{eq:pulledback-theta1} \\
 d\tau_{ws} f_{1\overline{\Omega}'_-}^{\rm II}(\tau_{ws};\beta) =   
 d\tau_{ws} \;\vartheta_{\Lambda_-}^{1\overline{\Omega}'_-}(\tau_{ws}; \beta) \in 
    \Gamma(X_\Gamma; T^*X_\Gamma ), 
   \qquad   \beta \in iReps. 
  \label{eq:pulledback-theta1Bar} 
\end{align}
$\bullet$
\end{props}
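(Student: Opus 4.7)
The plan is to assemble the ingredients already in place. First, I would invoke the preceding Proposition (quoted from \S4.1 of \cite{prev.paper}), which identifies the monodromy representation of $\Gamma_{1,1} = \mathrm{SL}(2;\Z)$ acting on the collection $\{ d\tau_{ws}\, f_{1\Omega'_-}^{\rm II}(\tau_{ws};\beta) \}_{\beta \in iReps}$ as the Weil representation $\rho_{D\Lambda_-}^{\rm Weil}$; the same derivation (via Lemma \ref{lemma:theta-transf}) applies verbatim to $f_{1\overline{\Omega}'_-}^{\rm II}$ and, with the weight-$1$ transformation law in place of the weight-$2$ one, to $f_0^{\rm II}$. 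The weights of the theta functions $\vartheta_{\Lambda_-}$, $\vartheta_{\Lambda_-}^{1\Omega'_-}$, $\vartheta_{\Lambda_-}^{1\overline{\Omega}'_-}$ (namely $1$, $2$, $2$ for a rank-$2$ positive definite lattice) match the bundle identifications in the statement: a weight-$k$ form multiplied by $(d\tau_{ws})^{k/2}$ is a section of $(T^*{\cal M}_{1,1})^{\otimes k/2}$, so the claim that $f_0^{\rm II}$ is a local section of $(T^*{\cal M}_{1,1})^{\otimes 1/2}$ and that $f_{1\Omega'_-}^{\rm II}, f_{1\overline{\Omega}'_-}^{\rm II}$ are local sections of $T^*{\cal M}_{1,1}$ is immediate.

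Second, the congruence-subgroup statement follows by combining Lemma \ref{lemma:Weil-repr-kernel} with Lemma \ref{lemma:NDLambda}: the former asserts $\Gamma(N_{DL}) \subset \ker \rho_{DL}^{\rm Weil}$ for any even rank-$2$ definite lattice $L$, and the latter computes $N_{D\Lambda_-} = f_\rho D_z$. Hence $\Gamma(f_\rho D_z) \subset \ker \rho_{D\Lambda_-}^{\rm Weil}$, and since $\Gamma(N)$ is a congruence subgroup for every $N$, so is the kernel. Note that no analogue of Lemma \ref{lemma:kern-theta-eta} is needed here: unlike the bosonic case, the monodromy representation contains no $\vartheta_\eta^2$ factor because $f_{1\Omega'_-}^{\rm II}$ is built from the theta function alone, with no $\eta^{-2}$ denominator (this is the origin of the simplification relative to Prop. \ref{props:mndrmy-kernel-bos-str}, and explains the cleaner formula $N_{D\Lambda_-}$ in place of $\mathrm{LCM}(N_{D\Lambda_-},12)$).

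Third, for the global-section conclusion on $X_\Gamma$, I would argue as follows. Let $\Gamma \subset \mathrm{SL}(2;\Z)$ be any subgroup contained in $\ker \rho_{D\Lambda_-}^{\rm Weil}$, and let $p_\Gamma: X_\Gamma \to X(1) = \overline{\cal M}_{1,1}$ be the natural projection. The pulled-back Friedan--Shenker bundle $p_\Gamma^*({\cal F}_{\{0NS,0NS\}})$ becomes trivial (as a bundle with connection descending from the monodromy), so the local section $[f_{1\Omega'_-}^{\rm II}(\beta)]_{\beta\in iReps}$ extends to a single-valued global section on ${\cal H}/\Gamma$; since the theta functions $\vartheta_{\Lambda_-}^{1\Omega'_-}(\tau;\beta)$ are holomorphic on ${\cal H}$ with exponential decay at the cusps (the weight-$2$ cusp-form behavior of type-$1$ theta functions of positive definite lattices), the extension to the compactification $X_\Gamma$ is holomorphic, giving (\ref{eq:pulledback-theta1}); the same argument yields (\ref{eq:pulledback-theta1Bar}), and for $f_0^{\rm II}$ the half-weight version gives a section of $(T^*X_\Gamma)^{\otimes 1/2}$. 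Taking $\Gamma = \Gamma(f_\rho D_z)$ realizes the assertion with a specific choice.

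I do not anticipate any serious obstacle, since every nontrivial input (the explicit monodromy, the kernel of the Weil representation, the computation of $N_{D\Lambda_-}$) is already established. The only mild care needed is in matching the weights to the bundle identifications and in checking holomorphicity at the cusps to upgrade ``local section on ${\cal H}/\Gamma$'' to ``global section on the compactification $X_\Gamma$'', but both are standard consequences of the theta-function expressions given in (\ref{eq:chi-corrl-fcn-4-L0}--\ref{eq:chi-corrl-fcn-4-L1Bar}).
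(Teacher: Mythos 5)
Your proposal is correct and follows essentially the same route the paper takes (the paper gives no separate proof, treating the proposition as an immediate consequence of the previously derived weight-2 transformation law with monodromy $\rho_{D\Lambda_-}^{\rm Weil}$, Lemma \ref{lemma:Weil-repr-kernel}, and Lemma \ref{lemma:NDLambda}). Your added observation that the absence of the $\eta^{-2}$ denominator is exactly what removes the $\mathrm{LCM}(\cdot,12)$ of the bosonic case, together with the check of holomorphy at the cusps, makes the argument complete.
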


\begin{rmk}
We have emphasized already in Remark \ref{rmk:triangle-shcematic} 
that the chiral correlation functions $f_{1\Omega'_-}^{\rm II}$ take expectation 
values of an operator in quantum field theory whose classical interpretation 
is the pull-back of the holomorphic (1,0)-form of the target space 
elliptic curve $[E_z]_\C$. The Proposition above then implies that the lift 
of $f_{1\Omega'_-}^{\rm II}$ from ${\cal M}_{1,1}$ to an appropriate modular 
curve $X_\Gamma$, a finite covering of ${\cal M}_{1,1}$, allows us to 
see $d\tau_{ws} f_{1\Omega'_-}^{\rm II}(\tau_{ws}; \beta)$ as holomorphic 1-forms 
of the modular curve $X_\Gamma$ globally well-defined. Alternatively, 
we can see them as weight-2 cuspforms of the congruence subgroup $\Gamma$.

In \cite{prev.paper}, we have seen that the chiral correlation functions 
$f_{1\Omega'_-}^{\rm II}$ can be used to construct the $L$-functions, 
$L(H^1_{et}(E),s)$, of the CM elliptic curve $E$ defined over a certain 
class of number fields. We did so in \cite{prev.paper} by comparing the Mellin 
transform of $f_{1\Omega'_-}^{\rm II}$'s and the $L$-functions directly; this 
observation is now quite reasonable, and in line with the theory of 
modular parametrization, because the chiral correlation functions 
$d\tau_{ws} f_{1\Omega'_-}^{\rm II}$ can be seen as 1-forms on modular curves, 
once we unwind the monodromy representation of $\pi_1^{\rm top}({\cal M}_{1,1})$
by taking a finite covering $X_\Gamma$ over ${\cal M}_{1,1}$.  $\bullet$
\end{rmk}

\begin{rmk}
As reviewed in Lemmas \ref{lemma:isom-GammaN-GHNM} and 
\ref{lemma:isom-strokeOp-cuspformSpace}, a cusp form $f(\tau)$ 
of weight 2 for $\Gamma(N)$ for some $N$ turns into a cusp form 
of weight 2 for $\Gamma_1(N^2)$, when the argument $\tau$ is replaced 
by $N\tau$. For more background materials on modular forms, see 
section \ref{ssec:review-modular}, or introductory textbooks/lecture notes 
(e.g. \cite{DS, Ribet-Stein}). 
Therefore, the ``$(g,n)=(1,2)$ chiral correlation functions'' 
$f_{1\Omega'}^{\rm II}(\tau_{ws}; \beta)$ for $\beta \in iReps.^-$ in 
$\Gamma(N_{D\Lambda_-})$ can also be regarded, via the linear map 
$|[\diag(N_{D\Lambda_-},1)]_2$, as an element of $S_2(\Gamma_1(N_{D\Lambda_-}^2))$.

When the ``$(g,n)=(1,2)$ chiral correlation functions'' 
$f_{1\Omega'_-}^{\rm II}(\beta)$ are written as power series of 
\begin{align}
  e^{2\pi i (\tau_{ws}/N_{D\Lambda_-})} = q_{ws}^{1/N_{D\Lambda_-}} = q = e^{2\pi i \tau} 
  \qquad   ( \tau_{ws} = N_{D\Lambda_-} \tau ), 
  \label{eq:the-argument-rescaling}
\end{align}
we only need terms with integer-power of the variable\footnote{
\label{fn:commnt-tau-rescale}
Our previous paper \cite{prev.paper} introduced the following 
change-of-variable operation, 
\begin{align}
  \tau_{ws} = \frac{f_\rho D_z}{C^2a_z} \tau, 
\end{align}
where $C$ is an integer used in the presentation of \cite{prev.paper}, so 
$C^2a_z$ is also an integer. Because $N_{D\Lambda_-}=f_\rho D_z$, 
the variable $\tau$ in that paper and $\tau$ here are related by 
$\tau_{\rm there} = C^2 a_z \tau_{\rm here}$.
The normalization of the variable $\tau_{\rm there}$ was determined 
in \cite{prev.paper}, 
in effect, so that the power series expansion of $f_{1\Omega'_-}^{\rm II}(\beta)$'s
only have terms with integer-power of $e^{2\pi i \tau_{\rm there}}$ for 
a class of $\beta$'s within $iReps$, but not for all of $\beta$'s 
in $iReps$. }
$q$ for any $\beta \in iReps$. We can see 
this directly from the expression of $f_{1\Omega'_-}^{\rm II}$: contributions to 
$\vartheta_{\Lambda_-}^{1\Omega'_-}(\tau_{ws};\beta)$ are always of the form of a 
coefficient 
times $\mathbb{E}[ \tau_{ws} \; q_{\Lambda_-}(w)/2 ] = 
\mathbb{E}[ \tau \; N_{D\Lambda_-}q_{\Lambda_-}(w)/2 ]$, with $w \in \Lambda_-^\vee$, 
so one just have to remember that $N_{DL} q_L(w)/2 \in \Z$ for any lattice $L$
 (by definition). Alternatively, we can also tell from the fact 
that $f_{1\Omega'_-}^{\rm II}|[\diag(N_{D\Lambda_-},1)]_2 \in S_2(\Gamma_1(N_{D\Lambda_-}^2))$
has just terms that are integer-power of the variable $q=e^{2\pi i \tau}$;
the congruence subgroup $\Gamma_1(N_{D\Lambda_-}^2)$ contains the element 
$T = [1, 1; 0, 1]$.
\end{rmk}

\begin{rmk}
\label{rmk:motivatn4partialLift}
Here is a comment for string theorists. It may sound a little strange to think 
of lifting the conformal blocks and ``chiral correlation functions'' on 
${\cal M}_{1,1} = {\cal H}/{\rm SL}(2;\Z)$ not all the way to ${\cal H}$, 
but only half way to modular curves $X_\Gamma \supset {\cal H}/\Gamma$ 
where $\Gamma$' are 
congruence subgroups that are contained in the kernel of the monodromy 
representation of ${\rm SL}(2;\Z)$. Lifts to ${\cal H}$ always unwind the 
monodromy for any model of rational CFT, so one may wonder if there is any 
point in thinking of partial lifts to modular curves $X_\Gamma$ that vary 
from one model of rational CFT to another.

There is an important difference between ${\cal H}$ and ${\cal H}/\Gamma$ 
in fact, although they are both a covering space for ${\cal M}_{1,1}$. 
The projection  
${\cal H}/\Gamma \rightarrow {\cal H}/{\rm SL}(2;\Z) = {\cal M}_{1,1}$ is 
a morphism in {\it algebraic} geometry, while ${\cal H} \rightarrow 
{\cal H}/{\rm SL}(2;\Z) = {\cal M}_{1,1}$ is not; the $j$-function, which 
describes the latter projection, is a transcendental (non-polynomial, 
non-rational) function. So, by dealing with lifts of the chiral correlation 
functions and conformal blocks 
to various modular curves but not to ${\cal H}$, we can stay within 
the realm of algebraic geometry. By allowing to think of all kinds of 
possible modular curves (instead of just one covering space), 
in effect we still have a notion of ``universal cover'' within algebraic 
geometry (implemented in the form of the inverse system of modular curves
(cf sections \ref{sssec:surjective-morph-2BK} and \ref{ssec:KahlerVsShimuraC}, 
and the observation (ii) in \ref{statmnt:GTtheory-final})). 

Modular curves $X_\Gamma$ can be regarded as algebraic varieties (as stacks
for some $\Gamma$'s), while the upper complex half plane ${\cal H}$
or the unit disc in $\C$ is not. Moreover, the field of moduli $k_{\rm mod}$
of a modular curve is a number field within $\overline{\Q}$ \cite[\S7.6]{DS}, 
so one can 
think of the Galois group ${\rm Gal}(\overline{\Q}/k_{\rm mod})$
acting on the $\overline{\Q}$-coefficient function field and the 
cohomology groups of $X_\Gamma$. 
The Galois group action plays a central role in the discussion 
in sections \ref{sec:parametrizatn} and \ref{sec:two-Gal-actn}.

Another benefit of paying attention to intermediate things such as 
${\cal H}/\Gamma$ and $\Gamma \subset {\rm SL}(2;\Z)$ than just to 
the truly universal ${\cal H}$ and $\cap \Gamma$ is in the fact that 
the vector space of all the modular forms (for some congruence subgroup 
${}^\exists \Gamma \subset {\rm SL}(2;\Z)$) is horribly complicated. The 
vector space of modular forms for a given congruence subgroup $\Gamma$ 
forms a finite dimensional vector space, and moreover, one can derive 
a lot more knowledge on the substructure in this finite dimensional 
vector space (see section \ref{ssec:review-modular} for a review) 
by exploiting the action of the Hecke algebra. $\bullet$
\end{rmk}
%

\subsubsection{``$(g,n)=(1,2)$ Chiral Correlation Functions'' as Hecke 
Eigenforms: Easiest Examples}
\label{sssec:easiest-ex}

Let us fix one set of data $([E_z]_\C, f_\rho)$, and the rational model 
of ${\cal N}=(2,2)$ SCFT assigned by Type II string theory. For any 
$\beta \in iReps$, the ``$(g,n)=(1,2)$ chiral correlation function''
is an element of $S_2(\Gamma(N)) \cong S_2(\Gamma(H_N, N^2,1))$ 
with $N=N_{D\Lambda_-}$; see (\ref{eq:def-Hn-subgroup}) and 
Lemma \ref{lemma:isom-strokeOp-cuspformSpace} for the definition of 
$H_N$ and this isomorphism. 

\begin{anythng} [cf \cite{MooreArth, GV}]
\label{statmnt:iReps-as-torsion-pts}
Recall that $\C/\Omega'_-(\Lambda_-) \cong \C/\mathfrak{b}_z = [E_z]_\C$ 
(via homothety). This is because 
\begin{align}
   \Omega'_-(e_{-2}) = \sqrt{2a_zf_\rho}, \qquad 
   \Omega'_-(e_{-1}) = \sqrt{2a_zf_\rho} (-z), \qquad 
  \Omega'_-(\Lambda_-) = \sqrt{2a_zf_\rho} \mathfrak{b}_z.
\end{align}
When the linear map $\Omega'_-: \Lambda_- \rightarrow \C$ is extended linearly 
to $\Lambda_-^\vee \subset \Lambda_- \otimes \Q$, 
\begin{align}
  \Omega'_-(\hat{e}_{-2}) = -i \sqrt{\frac{2a_z}{f_\rho D_z}} z, \qquad 
  \Omega'_-(\hat{e}_{-1}) = -i \sqrt{\frac{2a_z}{f_\rho D_z}},
\end{align}
where $\{ \hat{e}_{-2}, \hat{e}_{-1}\}$ is the set of generators of the lattice 
$\Lambda_-^\vee$, 
\begin{align}
  \hat{e}_{-2} = \frac{1}{f_\rho D_z} (2c_z e_{-2}-b_ze_{-1}), \qquad 
  \hat{e}_{-1} = \frac{1}{f_\rho D_z} (-b_z e_{-2}+2a_z e_{-1}).
\end{align}
We have 
\begin{align}
  \Omega'_-(\Lambda_-^\vee) = \Omega'_-({\rm II}_{2,2}) = 
      -i \sqrt{\frac{2a_z}{f_\rho D_z}} \; \mathfrak{b}_z .
  \label{eq:image-Omega'-of-LambdaVee}
\end{align}
The set of irreducible representations $iReps \cong G_{\Lambda_-}$ of 
the chiral algebra of the model for $([E_z]_\C, f_\rho)$ can be identified 
with a subset of $[E_z]_\C$ through\footnote{
In the language of open string, the set $iReps$ is identified with 
the set of Cardy states of D0-brane type, and is identified naturally 
with a set of finite torsion points of $[E_z]_\C$ for this physical reason
\cite{GV}.
We will not discuss open string interpretations of $f_{1\Omega'_-}^{\rm II}$ 
in this article, however.} 
\begin{align}
  \Omega'(\Lambda_-^\vee/\Lambda_-) = 
   \Omega'_-(\Lambda_-^\vee)/\Omega'_-(\Lambda_-) \subset \C/\Omega'_-(\Lambda_-)
   \cong [E_z]_\C.
   \label{eq:isom-abstract-C/kL-originalE}
\end{align}
$\Omega'_-(iReps)$ forms a finite set of torsion points of $[E_z]_\C$.
\end{anythng}

Here, this identification comes with an ambiguity of the group of 
automorphisms of $[E_z]_\C$, ${\rm Aut}([E_z]_\C)$. We may fix one 
identification once and for all, and may sometimes refer to an irreducible 
representation $\beta \in iReps$ as a torsion point of $[E_z]_\C$.

\begin{anythng}
\label{statmnt:Fzfrho-is-in-S2GammaN}
When $iReps = G_{\Lambda_-}$ contains a 2-torsion element 
($2\beta = 0$ in the abelian group $G_{\Lambda_-} = \Lambda_-^\vee/\Lambda_-$), 
$f_{1\Omega'_-}^{\rm II}(\tau_{ws};\beta) = 
\vartheta_{\Lambda_-}^{1\Omega'_-}(\tau_{ws};\beta)$ is identically zero. 
This is because of the linearity of $\vartheta^{1\omega}_L(\tau;x)$ with 
respect to $x \in G_L$; $\vartheta_{\Lambda_-}^{1\Omega'_-}(\tau_{ws}; -\beta) 
= - \vartheta_{\Lambda_-}^{1\Omega'_-}(\tau_{ws};\beta)$.
Similarly, when $[E_z]_\C$ is that of $j([E_z]_\C) = 0$, which means that  
$[z] = [e^{2\pi i /3}]$ and the automorphism group is 
${\rm Aut}([E_z]_\C) \cong \Z/(6)$, 
$f_{1\Omega'_-}^{\rm II}(\tau_{ws}; \beta)$ vanishes for 
any $\beta \in \Lambda_-^\vee/\Lambda_-$ that remains fixed (modulo $\Lambda_-$)
under the multiplication of $e^{2\pi i/3}$. Let $G_{\Lambda_-}^*$ be all the 
elements of $G_{\Lambda_-} \cong iReps$ that are not a 2-torsion point 
or a fixed point of the $e^{2\pi i /3}$ multiplication. 

For the model for the data $([E_z]_\C, f_\rho)$, think of the vector space
\begin{align}
  F_{\{0,0\}}^{(\Omega,\int J)}([z],f_\rho)=
  F([z], f_\rho) & \; := {\rm Span}_\C \left\{ 
            f_{1\Omega'_-}^{\rm II}(\tau_{ws}; \beta) \; | \; 
            \beta \in G_{\Lambda_-}^* \right\}  \nonumber \\
   & \; = {\rm Span}_\C \left\{ \vartheta_{\Lambda_-}^{1\Omega'_-}(\tau_{ws}; \beta) 
       \; | \; \beta \in G_{\Lambda_-}^* \right\} 
     \subset S_2(\Gamma(N_{D\Lambda_-})).
\end{align}
We may also be interested in the vector space 
\begin{align}
  [F([z], f_\rho)]_K & \; = {\rm Span}_K \left\{ 
       f_{1\Omega'_-}^{\rm II}(\tau_{ws}; \beta) /\sqrt{2a_zf_\rho} \; | \; 
       \beta \in G_{\Lambda_-}^* \right\}.
\end{align}

In the homomorphism 
\begin{align}
 \C_{G_{\Lambda_-}} \ni e_\beta \longmapsto f_{1\Omega'_-}^{\rm II}(\tau_{ws};\beta)
 \in F([z], f_\rho), 
\label{eq:hom-CiReps-Fmodel-prelim}
\end{align}
where $\C_{G_{\Lambda_-}} := {\rm Span}_\C\{ e_\beta \; | \; \beta \in 
G_{\Lambda_-}\}$ is a vector space of dimension $|iReps| = |G_{\Lambda_-}|$, 
$e_\beta$'s with $\beta \in G_{\Lambda_-} \backslash G_{\Lambda_-}^*$ are 
in the kernel. Note also that, if there is an automorphism 
$\sigma \in {\rm Aut}([E_z]_\C) \cong {\cal O}_{f_z}^\times$,
\begin{align}
f_{1\Omega'_-}^{\rm II}(\tau_{ws}; \sigma \cdot \beta) = 
    \rho^1(\sigma) \; f_{1\Omega'_-}^{\rm II}(\tau_{ws}; \beta), 
  \label{eq:fII-AutE-relatn}
\end{align}
where $\rho^1$ is the embedding $K \rightarrow \C$ that maps 
$\sqrt{D_K}$ into the upper complex half plane; $\sigma$ is regarded 
as an element of ${\cal O}_{f_z} \subset K$ here. Therefore, the homomorphism
(\ref{eq:hom-CiReps-Fmodel-prelim}) factors through 
\begin{align}
 \C_{G_{\Lambda_-}} \rightarrow  
\C_{G_{\Lambda_-}^*} / {\rm Aut}([E_z]_\C) \longrightarrow F([z], f_\rho).
  \label{eq:CGLambda2F}
\end{align}
The same homomorphism can also be formulated with the coefficients in $K$, 
rather than in $\C$. 

The vector-space homomorphism 
$\C_{G_{\Lambda_-}^*}/{\rm Aut}([E_z]_\C) \longrightarrow F([z],f_\rho)$
is still not necessarily injective; there may be further linear relations 
among the ``chiral correlation functions'' than the relations that follow 
immediately from the automorphism ${\rm Aut}([E_z])$ of the target-space 
elliptic curve. We will see in section \ref{ssec:HTheta-TypeII-CCF}, 
in the cases of $f_z=1$, when and how this last homomorphism fails to 
be injective. $\bullet$
\end{anythng}

For concreteness, let us take a look at a few examples of how 
the vector space $F([z], f_\rho)$ of a model for $([E_z]_\C, f_\rho)$ fits 
into the vector space $S_2(\Gamma(N_{D\Lambda_-}))$. 

\begin{exmpl}
\label{exmpl:Dk4-fz1-N64}
Take an example of a CM elliptic curve $[E_z]_\C$ with $z = i$, the elliptic 
curve with the $j$-invariant $j(z)=1728$. Then 
\begin{align}
  G_{\Lambda_{[E_i]_\C}} \cong \Z/(2) \oplus \Z/(2), \qquad 
  G_{\Lambda} \cong \Z/(2f_\rho) \oplus \Z/(2f_\rho), \qquad 
  N_{D\Lambda_-} = D_z f_\rho = 4f_\rho.
\end{align}

In the case of the data $([z], f_\rho) = ([i]_\C, 1)$, the $f_\rho = 1$ case, 
$G_{\Lambda_-} \cong \Z/(2) \oplus \Z/(2)$ consists only of 2-torsion elements, 
and the ``$(g,n)=(1,2)$ chiral correlation function'' 
$f_{1\Omega'_-}^{\rm II}(\tau_{ws}; \beta)$ vanishes for any one of 
$\beta \in G_{\Lambda_-}$.  This observation fits very well with the fact 
that the kernel of $\rho^{\rm Weil}_{D\Lambda}$ for $f_\rho=1$ contains 
$\Gamma(4f_\rho) = \Gamma(4)$, and the vector space $S_2(\Gamma(4))$ 
is known to be empty.\footnote{There is no weight-2 cuspform for 
$\Gamma_0(16)$ with nebentypus $\chi_{16}: [\Z/(16)]^\times \rightarrow S^1$ 
whose conductor $\mathfrak{c}_f$ divides $(4)_\Z$. 
Although there are two independent weight-2 cuspforms for $\Gamma_1(16)$, 
they are for different nebentypuses: 
$\dim_\C [S_2(\Gamma_1(16))]
= \dim_\C[S_2(\Gamma_0(16), \chi_{16}(0,1))] 
+ \dim_\C [S_2(\Gamma_0(16),\chi_{16}(0,3))] = 1+1$ 
(\cite{SAGE} can be used to get this computation done), where 
$\chi_{16}(0,1): [15] \mapsto 1$ and $[5] \mapsto i$, and 
$\chi_{16}(0,3): [15] \mapsto 1$ and $[5] \mapsto (i)^3$. The conductor 
of the two characters $\chi_{16}(0,1)$ and $\chi_{16}(0,3)$ do not 
divide $(4)_\Z$. (Background materials, such as the nebentypus decomposition 
of the vector space $S_2(\Gamma_1(N))$, are reviewed in 
section \ref{ssec:review-modular}. Alternatively, see any basic textbooks on 
modular forms.)}

In the case of the data $([z],f_\rho)=([i]_\C, 2)$, the $f_\rho = 2$ case, 
the vector space $F([i],2)$ can be as large as 3-dimensional, because 
$G_{\Lambda_-}^*$ consists of $4^2 - 2^2 =12$ elements, and 
${\rm Aut}([E_i]_\C) \cong \Z/(4)$, so $|G_{\Lambda_-}^*|/|{\rm Aut}([E_i]_\C)|=3$.
To be more concrete, 
\begin{align}
  i \sqrt{|G_{\Lambda_-}|} \frac{f_{1\Omega'_-}^{\rm II}(\tau_{ws}; 1)}{\sqrt{2f_\rho}}
   & \; = \sqrt{a_z} \vartheta_{\Lambda_-}^{1\Omega_-}(\tau_{ws}; 1) 
        = (q -3 q^9 + 2 q^{17} + 5q^{25} + \cdots ), 
    \label{eq:CCF-II-Dk4-fr2-a} \\
  i \sqrt{|G_{\Lambda_-}|} \frac{f_{1\Omega'_-}^{\rm II}(\tau_{ws}; 3+2i)}{\sqrt{2f_\rho}}  & \; = \sqrt{a_z} \vartheta_{\Lambda_-}^{1\Omega_-}(\tau_{ws}; 3+2i) 
        = - 2 \left( q^5-q^{13} + \cdots \right), \\
  i \sqrt{|G_{\Lambda_-}|} \frac{f_{1\Omega'_-}^{\rm II}(\tau_{ws}; 1+i)}{\sqrt{2f_\rho}}
  & \; = \sqrt{a_z} \vartheta_{\Lambda_-}^{1\Omega_-}(\tau_{ws};1+i) 
        = (1+i) \left( q^2 -2 q^{10}-3q^{18} + \cdots \right),
     \label{eq:CCF-II-Dk4-fr2-c}
\end{align}
where we have introduced\footnote{In \cite{prev.paper}, the linear map 
$\Omega_-: \Lambda_-^\vee \rightarrow \C$ was introduced as 
$\Omega_-^{\rm there} = C^{-1} \sqrt{2a_zf_\rho} {\rm Im}(z) i \; \Omega'_-$, 
where $C$ is an integer used in the presentation in \cite{prev.paper};
this is a little different from the linear map $\Omega_-$ here 
(by normalization). 
Eq. (\ref{eq:def-Omega}) is equivalent to 
$\Omega_-^{\rm here} = \sqrt{2a_zf_\rho} {\rm Im}(z)i \; \Omega'_-$.
}
a linear map 
$\Omega_-: \Lambda_-^\vee \rightarrow \C$ simply rescaled from $\Omega'_-$ by 
\begin{align}
  \Omega_- = i \frac{\sqrt{|G_{\Lambda_-}|}}{\sqrt{2a_zf_\rho}} \Omega'_-, 
  \label{eq:def-Omega}
\end{align}
and used $q = e^{2\pi i \tau}$ in (\ref{eq:the-argument-rescaling}) 
as the variable on the right hand sides instead of $\tau_{ws}$.
The irreducible representations, $\beta$'s in $G_{\Lambda_-}$, are also 
referred to in (\ref{eq:CCF-II-Dk4-fr2-a}--\ref{eq:CCF-II-Dk4-fr2-c}) 
by the complex value (mod $\Omega_-(\Lambda_-)$) of their 
linear map $\Omega_-$ in the second argument of $f_{1\Omega'_-}^{\rm II}$ and 
$\vartheta_{\Lambda_-}^{1\Omega_-}$.
Obviously they are linearly independent over $\C$, and hence $F([i],2)$ 
is of 3-dimensions indeed. They must be a part of the space of weight-2 cusp 
forms for  $\Gamma(8)$. 

Independently from the discussion based on a rational model of 
${\cal N}=(2,2)$ SCFT, it is known (we used \cite{SAGE})\footnote{
SAGE allows us to deal with the vector space of cuspforms for 
the group $\Gamma(H_N, N^2,1) \subset \Gamma_0(N^2)$, and those for 
$\Gamma_0(N^2)$ with the nebentypus specified. To list up all the possible 
nebentypuses of the group $\Gamma_0(N^2)$ with $N=N_{D\Lambda}=8$, we see 
that $[\Z/(64)]^\times \cong \Z/(2) \times \Z/(16)$ where the first factor 
is generated by $63 \equiv (-1)$ mod 64, and the second factor by 
$5$ mod 64; the Dirichlet characters of 
$[\Z/(64)]^\times$---$\chi_{64}(a,b)$---are labeled by 
$(a,b) \in \Z/(2) \times \Z/(16)$, where $\chi_{64}(a,b): (-1) \mapsto (-1)^a$ 
and $\chi_{64}(a,b): [5] \mapsto \zeta_{16}^b$. A character $\chi_{64}(a,b)$
is trivial on the subgroup $H_N \subset [\Z/(N^2)]^\times$ with $N=N_{D\Lambda_-}=8$
if and only if $b \in \{ 0,8\} \subset \Z/(16)$. The two characters 
$\chi_{64}(0,0)$ and $\chi_{64}(0,8)$ of the group $[\Z/(N^2)]^\times \cong 
\Gamma_0(N^2)/\Gamma_1(N^2)$ with $N=N_{D\Lambda_-}=8$ correspond to 
the two characters $\chi_8(0,0)$ and $\chi_8(0,2)$, respectively, of the 
group $\Gamma_0^0(N)/\Gamma(N)$ with $N=N_{D\Lambda_-}=8$.  Whenever we need to 
carry out this kind of calculation using SAGE, we repeated this logic and 
procedure. 
}  
that\footnote{
The space of cusp forms with $\chi_8(1,b)$ is empty for any 
$b \in \Z/(2)$, because a non-zero weight $k=2$ cuspform should have 
nebentypus that is even for $-1 \equiv 7$ mod $N_{D\Lambda_-}=8$.
}
\begin{align}
 \dim_\C S_2(\Gamma_0^0(8), \chi_8(0,0)) = 3, \qquad 
 \dim_\C S_2(\Gamma_0^0(8),\chi_8(0,1)) = 2,
\end{align}
where $\chi_8(a,b)$'s with $(a,b) \in \Z/(2) \times \Z/(2)$
are Dirichlet characters $[\Z/(N_{D\Lambda_-})]^\times \rightarrow S^1$
modulo $N_{D\Lambda_-}=8$, such that $\chi_8(a,b): [7] \mapsto (-1)^a$ and 
$\chi_8(a,b): [5] \mapsto (-1)^b$.

As a basis of the vector space 
\begin{align}
  S_2(\Gamma_0^0(8),\chi_8(0,0)) \; |[\diag(8,1)]_2  \cong 
    S_2(\Gamma_0(64),\chi_8(0,0)),
\end{align}
we can choose the following three elements \cite{SAGE}, 
\begin{align}
  & (q -3q^9 + 2q^{17} - q^{25} +10q^{41} -7q^{49} -12q^{65}) \nonumber \\
  &  \qquad \pm 2(q^5 -3q^{13}+5q^{29}+q^{37}-3q^{45}-7q^{53}+5q^{61}), 
         \label{eq:nf-64-00} \\
  & (q^2 -2q^{10}-3q^{18}+6q^{26} + \cdots); 
    \label{eq:of-64-00}
\end{align}
this basis is chosen so that it suits the decomposition 
(\ref{eq:substr-Gamma0Chi-levelNr}, \ref{eq:substr-Gamma00Chi-levelNr}).
The cuspform (\ref{eq:nf-64-00}) with the $+$ sign is the newform 
of level $M_f=N_{D\Lambda_-}^2 =64$ generating a one-dimensional space of 
$[S_2(\Gamma_0(64), \chi_8(0,0))]^{\rm new}_{64}$, and the 
cuspform (\ref{eq:nf-64-00}) with the $-$ sign the newform of level $M_f=32$
generating a one-dimensional space $[S_2(\Gamma_0(32),\chi_8(0,0))]^{\rm new}_{32}$.
The last one (\ref{eq:of-64-00}) is the image of the previous level $M_f=32$ 
newform by $|[\diag(r,1)]_2$ with $r=2|(64/M_f)$. For more information, 
see a brief review in section \ref{ssec:review-modular} or literatures. 

By comparing the Fourier coefficients of the first several orders 
of (\ref{eq:CCF-II-Dk4-fr2-a}--\ref{eq:CCF-II-Dk4-fr2-c}) and 
(\ref{eq:nf-64-00}, \ref{eq:of-64-00}), we see that 
\begin{align}
  F([i],2) \cong S_2(\Gamma_0^0(8),\chi_8(0,0)).
\end{align}
Moreover, 
\begin{align}
    \frac{f_\rho f_z\sqrt{D_K}}{\sqrt{2f_\rho}}
      \left(  f_{1\Omega'_-}^{\rm II}(\tau_{ws};1)
            - f_{1\Omega'_-}^{\rm II}(\tau_{ws};3+2i) \right) \;
 |[\diag(N_{D\Lambda_-},1)]_2 \;  
   \in {\rm NewForm}(64,\chi_8(0,0)), \nonumber \\
    \frac{f_\rho f_z\sqrt{D_K}}{\sqrt{2f_\rho}}
      \left(  f_{1\Omega'_-}^{\rm II}(\tau_{ws};1)
            + f_{1\Omega'_-}^{\rm II}(\tau_{ws};3+2i) \right)  \;
  |[\diag(N_{D\Lambda_-},1)]_2 \;  
   \in {\rm NewForm}(32,\chi_8(0,0)), \label{eq:nf-64-00-byCCF} 
\end{align}
and 
\begin{align}
  f_{1\Omega'_-}^{\rm II}(\tau_{ws};1+i) = \left(   f_{1\Omega'_-}^{\rm II}(\tau_{ws};1)
            + f_{1\Omega'_-}^{\rm II}(\tau_{ws};3+2i) \right) |[\diag(2,1)]_2.
  \qquad  \bullet
   \label{eq:of-64-00-byCCF}
\end{align}
\end{exmpl}

\begin{exmpl}
\label{exmpl:Dk8-fz1-N64}
Take an example of a CM elliptic curve $[E_z]_\C$ with $z = \sqrt{2}i$, the 
elliptic curve with the $j$-invariant $j(z) = 8000$. In the rational model 
of ${\cal N}=(2,2)$ SCFT for the data $([E_z]_\C, f_\rho)$ with 
$(z,f_\rho)=([\sqrt{2}i], 1)$, 
\begin{align}
 G_{\Lambda_-} \cong \Z/(2) \oplus \Z/(4), \quad N_{D\Lambda_-} = D_z f_\rho = 
  8 \times 1 = 8.
\end{align}

The set $G_{\Lambda_-}^*/{\rm Aut}([E_z]_\C)$ consists of two elements, 
represented by two $\beta$'s in $iRpes$ that are mapped by 
$\Omega_-$ (introduced in (\ref{eq:def-Omega})) to $1+\Omega_-(\Lambda_-)$ 
(or $3+\Omega_-(\Lambda_-)$) 
and $(1+\sqrt{2}i)+\Omega_-(\Lambda_-)$ (or $(3+\sqrt{2}i)+\Omega_-(\Lambda_-)$).
The vector space $F([\sqrt{2}i], 1)$ is therefore generated by the 
two elements 
\begin{align}
 i \sqrt{|G_{\Lambda_-}|}\frac{f_{1\Omega'_-}^{\rm II}(\tau_{ws};1)}{\sqrt{2f_\rho}}
   & \; = (q - q^9 -6q^{17} +5q^{25} + \cdots ), 
     \label{eq:CCF-II-Dk8-fr1-a} \\
 i \sqrt{|G_{\Lambda_-}|}\frac{f_{1\Omega'_-}^{\rm II}(\tau_{ws};1+\sqrt{2}i)}{\sqrt{2f_\rho}}
   & \; = 2 (q^3 -3q^{11} + q^{19} +2 q^{27} + \cdots ). 
     \label{eq:CCF-II-Dk8-fr1-b}
\end{align}
They are linearly independent from each other, and hence the vector space 
$F([\sqrt{2}i],1)$ of the model for $([E_{\sqrt{2}i}]_\C, 1)$ is of 2-dimensions.

This vector space $F([\sqrt{2}i],1)$ should be a subspace of 
$S_2(\Gamma(N_{D\Lambda_-}))$ with $N_{D\Lambda_-}=8$ in this model. 
Using \cite{SAGE}, we see that we can choose a set of generators for 
the remaining component
\begin{align}
  S_2(\Gamma_0^0(8),\chi_8(0,1)) \;  |[\diag(8,1)]_2  \cong 
      S_2(\Gamma_0(64), \chi_8(0,1))
\end{align}
to be 
\begin{align}
& q - q^9 - 6 q^{17} + 5 q^{25} + 12q^{33} -6q^{41} -7q^{49} -4q^{57}
       \nonumber  \\
& \qquad \pm 2i ( q^3 - 3 q^{11} + q^{19} + 2 q^{27} + 5q^{43}-6q^{51}-3q^{59}) 
      + {\cal O}(q^{60}).
  \label{eq:nf-64-01}
\end{align}
Once again, this basis is chosen so that this is consistent with the 
substructure (\ref{eq:substr-Gamma0Chi-levelNr}, 
\ref{eq:substr-Gamma00Chi-levelNr}) of this vector space.
This time, the two elements are both in ${\rm NewForms}(64, \chi_{8}(0,1))$, 
the level $M_f=N_{D\Lambda_-}^2=64$ newforms of nebentypus $\chi_8(0,1)$.

Therefore, by comparing the first several Fourier coefficients 
of (\ref{eq:CCF-II-Dk8-fr1-a}, \ref{eq:CCF-II-Dk8-fr1-b}) 
and (\ref{eq:nf-64-01}), we find that 
\begin{align}
  F([\sqrt{2}i],1) \cong S_2(\Gamma_0^0(8), \chi_8(0,1)),
\end{align}
and 
\begin{align}
  \left\{ \frac{f_\rho f_z \sqrt{D_K}} {\sqrt{2f_\rho}}
    \left( f_{1\Omega'_-}^{\rm II}(\tau_{ws};1)
         \pm i f_{1\Omega'_-}^{\rm II}(\tau_{ws};1+\sqrt{2}i) \right) \right\}
   = {\rm NewForms}(64,\chi_8(0,1)). \qquad \bullet
    \label{eq:nf-64-01-byCCF}
\end{align}
\end{exmpl}

\begin{rmk}
We find, therefore, that the vector space $S_2(\Gamma(8))$ is completely 
recovered by the vector space of 
``$(g,n)=(1,2)$ chiral correlation functions'' $F([z],f_\rho)$ 
of the two rational models of ${\cal N}=(2,2)$ SCFT with $N_{D\Lambda_-}=8$; 
one is for $([z],f_\rho)=([i],2)$ and the other for $([\sqrt{2}i],1)$. 
Models with $([z]=[2i],f_\rho)$ never have $N_{D\Lambda_-}=8$
because $D_z = 16$.
\end{rmk}

\subsection{Appendix: Minimum on Modular Forms and Modular Curves}
  \label{ssec:review-modular}

\subsubsection{Congruence Subgroups}

There are a few series of subgroups of ${\rm SL}(2;\Z)$ that have 
a dedicated notation. 
\begin{defn}
Let $N$ be a positive integer. 
\begin{align}
\Gamma(N) & \; := \left\{ \left. 
     \left[\begin{array}{cc} a & b \\ c & d \end{array}\right]
        \in {\rm SL}(2;\Z) \; \right| \; 
            b, c \equiv 0 (N), \; a,d \equiv 1 (N) \right\}, \\
\Gamma_1(N) & \; := \left\{ \left. 
     \left[\begin{array}{cc} a & b \\ c & d \end{array}\right]
        \in {\rm SL}(2;\Z) \; \right| \; 
            c \equiv 0 (N), \; a,d \equiv 1 (N) \right\}, \\
\Gamma_0^0(N) & \; := \left\{ \left. 
     \left[\begin{array}{cc} a & b \\ c & d \end{array}\right]
        \in {\rm SL}(2;\Z) \; \right| \; 
            b, c \equiv 0 (N), \; a,d \in [\Z/(N)]^\times \right\}, 
     \label{eq:def-Gamma00} \\
\Gamma_0(N) & \; := \left\{ \left. 
     \left[\begin{array}{cc} a & b \\ c & d \end{array}\right]
        \in {\rm SL}(2;\Z) \; \right| \; 
            c \equiv 0 (N), \; a,d \in [\Z/(N)]^\times \right\}, 
\end{align}
so $\Gamma(N) \subset \Gamma_1(N) \subset \Gamma_0(N)$ and 
$\Gamma(N) \subset \Gamma_0^0(N) \subset \Gamma_0(N)$. $\Gamma(N)$ is a normal 
subgroup of ${\rm SL}(2;\Z)$, and hence also that of any subgroup of 
${\rm SL}(2;\Z)$ that contains $\Gamma(N)$.

All those subgroups can be regarded as special cases of a more general 
class of subgroups (\cite[\S3.3]{Shimura-AA} and \cite[\S9.1]{Ribet-Stein})
\begin{align}
 \Gamma(H,N,M) := \left\{ \left. 
      \left[\begin{array}{cc} a & b \\ c & d \end{array} \right]
         \in {\rm SL}(2;\Z) \; \right| \; 
         c \equiv 0 (N), \; b \equiv 0 (M), \; 
         a, d \in H \right\}
\end{align}
where $N$ and $M$ are positive integers, and $H$ is a subgroup of the 
multiplicative group $[\Z/(NM)]^\times$. $\Gamma(N)$ corresponds to 
the case of $M=N$ and $H=H_N$, where 
\begin{align}
H_N := \{1, N+1, 2N+1, \cdots, N^2-N+1 \} \subset [\Z/(N^2)]^\times.
   \label{eq:def-Hn-subgroup}
\end{align}
The subgroup $\Gamma_1(N)$ is reproduced by setting $M=1$ and $H = \{1\}$,
$\Gamma_0^0(N)$ by setting $M=N$ and $H = H_N$, and $\Gamma_0(N)$ by 
setting $M=1$ and $H=[\Z/(N)]^\times$.  $\bullet$
\end{defn}

\begin{defn}
A subgroup $\Gamma$ of ${\rm SL}(2;\Z)$ is said to be a 
{\it congruent subgroup}, if there exists a positive integer $N$ so that 
$\Gamma(N) \subset \Gamma$.
\end{defn}

\begin{lemma}
\label{lemma:isom-GammaN-GHNM}
There is an isomorphism from $\Gamma(H,N,M)$ to $\Gamma(H,NM,1)$ 
given by 
\begin{align}
  \left[ \begin{array}{cc} a & b \\ c & d \end{array} \right]
  \longmapsto 
  \left[ \begin{array}{cc} M^{-1} & \\ & 1 \end{array} \right]
  \left[ \begin{array}{cc} a & b \\ c & d \end{array} \right]
  \left[ \begin{array}{cc} M & \\ & 1 \end{array} \right] = 
  \left[ \begin{array}{cc} a & b/M \\ cM & d \end{array} \right].
\end{align}
In particular, this isomorphism identifies $\Gamma(N)$ with 
$\Gamma(H_N, N^2,1)$. 
\end{lemma}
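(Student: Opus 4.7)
The plan is to verify that the stated map $\phi: A \mapsto \diag(M^{-1},1) \cdot A \cdot \diag(M,1)$ is a well-defined group isomorphism, and then specialize to obtain the identification $\Gamma(N) \cong \Gamma(H_N, N^2, 1)$. The heart of the argument is bookkeeping on divisibility and residues, and there is no real obstacle; the main thing to watch is that the diagonal entries, which live modulo $NM$ in both source and target, are indeed preserved by $\phi$.

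First I would show $\phi$ is well-defined with image in $\Gamma(H, NM, 1)$. Given $[a,b;c,d] \in \Gamma(H, N, M)$, the condition $b \equiv 0 \pmod{M}$ ensures $b/M \in \Z$, and $c \equiv 0 \pmod{N}$ gives $cM \equiv 0 \pmod{NM}$; the determinant equation $ad - (b/M)(cM) = ad - bc = 1$ is automatic, so the image lies in $\SL(2;\Z)$. The diagonal entries $a, d$ are unchanged, and the condition $a, d \in H \subset [\Z/(NM)]^\times$ transfers verbatim to the target group $\Gamma(H, NM, 1)$ (where $H$ is still a subgroup of $[\Z/(NM)]^\times$ and the $M$-divisibility requirement on the upper-right entry is vacuous since $M=1$ there).

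Next, since $\phi$ is conjugation by a fixed invertible $2\times 2$ rational matrix, it is automatically a group homomorphism and injective on its domain. For surjectivity, given $[a', b'; c', d'] \in \Gamma(H, NM, 1)$ I would construct a preimage by setting $b := M b'$ and $c := c'/M$; the latter is an integer because $c' \equiv 0 \pmod{NM}$, and in fact $c \equiv 0 \pmod{N}$. The resulting matrix $[a', Mb'; c'/M, d']$ lies in $\Gamma(H, N, M)$, has determinant $1$, and maps to the given element under $\phi$.

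Finally, for the particular case I would just unravel the definitions: $\Gamma(H_N, N, N)$ consists of matrices $[a,b;c,d] \in \SL(2;\Z)$ with $b, c \equiv 0 \pmod{N}$ and $a, d \pmod{N^2}$ lying in $H_N = \{1, N+1, \ldots, N^2-N+1\}$; the latter condition is equivalent to $a, d \equiv 1 \pmod{N}$, so $\Gamma(H_N, N, N) = \Gamma(N)$. Applying the isomorphism above with $M = N$ and $H = H_N$ then yields $\Gamma(N) \cong \Gamma(H_N, N^2, 1)$, as claimed.
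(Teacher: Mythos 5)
Your proof is correct; the paper states this lemma as review material without proof, and your argument (conjugation by $\diag(M^{-1},1)$, checking the divisibility conditions transfer, exhibiting the explicit inverse, and identifying $H_N$ with the residues $\equiv 1 \bmod N$) is the standard one the authors implicitly rely on.
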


\subsubsection{The Vector Space of Modular Forms}

\begin{notn}
Let $A$ be an element of $\left\{ [a,b;c,d] \in {\rm GL}(2;\Z) \; 
| \; ad-bc>0 \right\}$, and $k \in \Z$. Then $|[A]_k$ is an operator 
acting on the space of holomorphic functions on ${\cal H}$ as follows. 
For a holomorphic function $f(\tau)$ on $\tau \in {\cal H}$, 
$f|[A]_k$ is another holomorphic function on ${\cal H}$ given by 
\begin{align}
  (f|[A]_k)(\tau) := \frac{[{\rm det}(A)]^{k/2}}{(c\tau+d)^k} f(A \cdot \tau).
\end{align}
The operator $|[A]_k$ is linear on the vector space of holomorphic 
functions on ${\cal H}$.   
\end{notn}

\begin{defn}
Let $\Gamma$ be a congruence subgroup of ${\rm SL}(2;\Z)$.
When a holomorphic function $f(\tau)$ on $\tau \in {\cal H}$ satisfies 
$f|[\gamma]_k=f$ for $\gamma \in \Gamma$, it is said to be a {\it modular 
form of weight-$k$ for $\Gamma$}. When such a modular form $f$ vanishes 
at all the cusp points of $\Gamma$, it is called a {\it cuspform}. 
They form vector spaces over $\C$, and are denoted by $M_k(\Gamma)$ and 
$S_k(\Gamma)$, respectively.
\end{defn}

\begin{lemma}
\label{lemma:isom-strokeOp-cuspformSpace}
The linear operator $|[\diag(N,1)]_k$ induces an isomorphism 
\begin{align}
  S_k(\Gamma(N)) \cong S_k(\Gamma(H_N,N^2,1)).
\end{align}
This isomorphism is a special case of 
$|[\diag(M,1)]_k : S_k(\Gamma(H,N,M)) \rightarrow S_k(\Gamma(H,NM,1))$, or 
even a special case of 
$|[\diag(r,1)]_k : S_k(\Gamma(H,N,M)) \rightarrow S_k(\Gamma(H,Nr, M/r))$ 
for any positive divisor $r$ of $M$.
\end{lemma}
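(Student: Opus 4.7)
The plan is to prove the most general statement directly: for any positive divisor $r$ of $M$, the operator $|[\diag(r,1)]_k$ induces an isomorphism $S_k(\Gamma(H,N,M)) \to S_k(\Gamma(H,Nr,M/r))$. The two preceding claims then follow by specialization (first $r=M$, then further $H=H_N$, $N=M$).

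First I would establish the group-theoretic input, which is essentially an extension of Lemma \ref{lemma:isom-GammaN-GHNM}. Set $A := \diag(r,1)$. A direct computation gives
\begin{align}
A^{-1}\left[\begin{array}{cc} a & b \\ c & d \end{array}\right] A
= \left[\begin{array}{cc} a & b/r \\ rc & d \end{array}\right],
\end{align}
and since $r\mid M \mid b$ whenever $[a,b;c,d]\in \Gamma(H,N,M)$, the right-hand side lies in $\mathrm{SL}(2;\Z)$; moreover $b/r \equiv 0 \pmod{M/r}$ and $rc \equiv 0 \pmod{Nr}$, while $a,d \in H$ is preserved. Thus conjugation by $A$ yields a group homomorphism $\Gamma(H,N,M)\to \Gamma(H,Nr,M/r)$, and the analogous computation with $A^{-1}$ shows that conjugation by $\diag(1,r)$ provides the inverse homomorphism on the group level, so we have a group isomorphism.

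Next I would transport this to the level of modular forms. For $f\in S_k(\Gamma(H,N,M))$ and $\gamma'\in \Gamma(H,Nr,M/r)$, write $\gamma' = A^{-1}\gamma A$ with $\gamma \in \Gamma(H,N,M)$. Using the cocycle property of the slash operator,
\begin{align}
(f|[A]_k)|[\gamma']_k = f|[A\gamma']_k = f|[\gamma A]_k = (f|[\gamma]_k)|[A]_k = f|[A]_k ,
\end{align}
so $f|[A]_k$ is invariant under $\Gamma(H,Nr,M/r)$. Holomorphy on $\mathcal{H}$ is immediate from the formula $(f|[A]_k)(\tau) = r^{k/2}f(r\tau)$. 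For the inverse, note $\diag(r,1)\cdot \diag(1,r) = rI$ and $(g|[rI]_k)(\tau)=g(\tau)$, so $|[\diag(1,r)]_k$ is a two-sided inverse of $|[A]_k$ on holomorphic functions, and the same conjugation argument (with the roles of the two groups swapped) shows it carries $\Gamma(H,Nr,M/r)$-invariant functions back to $\Gamma(H,N,M)$-invariant ones. Linearity of both operators gives vector-space isomorphisms once we have established that cusp forms are sent to cusp forms.

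The step requiring the most care is the behavior at the cusps. Since $\tau\mapsto r\tau$ is a biholomorphism $\mathcal{H}\to\mathcal{H}$ mapping $\mathbb{Q}\cup\{\infty\}$ to itself, and since it intertwines the $\Gamma(H,N,M)$-action with the $\Gamma(H,Nr,M/r)$-action via the group isomorphism above, the $\Gamma(H,N,M)$-orbits on cusps correspond bijectively to the $\Gamma(H,Nr,M/r)$-orbits. For any cusp $s$ of $\Gamma(H,Nr,M/r)$, choose $\delta\in\mathrm{SL}(2;\Z)$ with $\delta\cdot \infty = s$; then the local $q$-expansion of $f|[A]_k|[\delta]_k = f|[A\delta]_k$ at $\infty$ is obtained from the $q$-expansion of $f$ at the cusp $A\delta\cdot \infty$ of $\Gamma(H,N,M)$, up to a positive rational scaling of the local uniformizer and a $k/2$-power normalization, so vanishing of $f$ at all cusps of $\Gamma(H,N,M)$ is equivalent to vanishing of $f|[A]_k$ at all cusps of $\Gamma(H,Nr,M/r)$. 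This finishes the proof; the two particular cases in the statement are then obtained by specializing $(H,N,M,r)$ appropriately.
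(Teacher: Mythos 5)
Your proof is correct. The paper states this lemma without proof, as part of its review of textbook material on modular forms, so there is no in-paper argument to compare against; your argument — conjugation by $\diag(r,1)$ giving the group isomorphism (requiring $r\mid M\mid b$), the cocycle property of the slash operator, the explicit two-sided inverse $|[\diag(1,r)]_k$, and the reduction of the cusp condition to vanishing of $f|[A\delta]_k$ at $i\infty$ — is the standard and complete one.
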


\begin{lemma}
Let $\Gamma$ and $\Gamma'$ be both congruent subgroups of ${\rm SL}(2;\Z)$. 
Suppose that $\Gamma$ is a normal subgroup of $\Gamma'$. 
The group $\Gamma$ acts trivially (=as identity) on 
$S_k(\Gamma)$ via the $|[-]_k$ operator; 
the fact that $\Gamma$ is a normal subgroup of $\Gamma'$ can be used 
to see that $f|[\gamma]_k \in S_k(\Gamma)$ for $f \in S_k(\Gamma)$ and 
$\gamma \in \Gamma'$, so the group $\Gamma'$ also acts on the vector 
space $S_k(\Gamma)$ via $|[-]_k$, but non-trivially. The vector space 
$S_k(\Gamma)$ can be decomposed under the action of $\Gamma'$ into 
the form of 
\begin{align}
  S_k(\Gamma) \cong \oplus_{\rho \in iReps(\Gamma'/\Gamma)} S_k(\Gamma',\rho), 
\end{align}
where $iReps(\Gamma'/\Gamma)$ is the set of irreducible representations 
of the finite group $\Gamma'/\Gamma$; cusp forms in $S_k(\Gamma',\rho)$
transform under $\gamma \in \Gamma'$ as 
\begin{align}
   S_k(\Gamma', \rho) \ni f|[\gamma]_k \longmapsto  \rho(\gamma) \cdot f 
   \in S_k(\Gamma',\rho).
\end{align}
When a cuspform $f$ belongs to the subspace $S_k(\Gamma',\rho)$, we call 
the choice of the representation $\rho \in iReps(\Gamma'/\Gamma)$ the 
{\it nebentypus} of $f$.

Here are two examples of this decomposition. 
The first example is for $\Gamma = \Gamma_1(N)$ and $\Gamma' = \Gamma_0(N)$.
\begin{align}
  S_k(\Gamma_1(N)) \cong \oplus_{\chi_N \in {\rm Char}([\Z/(N)]^\times)} 
      S_k(\Gamma_0(N),\chi_N),
  \label{eq:SGamma1--SGamma0-decomp}
\end{align}
so $\chi_N$ is a Dirichlet character modulo $N$. 

The other example is for $\Gamma = \Gamma(N) \cong \Gamma(H_N,N^2,1)$ and 
$\Gamma_0^0(N^2)$. First, 
\begin{align}
 S_k(\Gamma(N)) \cong \oplus_{\chi_N \in {\rm Char}([\Z/(N)]^\times)} \; 
    S_k(\Gamma_0^0(N), \chi_N).
 \label{eq:SGamma--SGamma00-decomp}
\end{align}
Due to the isomorphism $\Gamma(N) \cong \Gamma(H_N,N^2,1)$ and 
$\Gamma_0^0(N) \cong \Gamma_0([\Z/(N^2)]^\times, N^2,1) = \Gamma_0(N^2)$, 
we can also recycle the decomposition (\ref{eq:SGamma1--SGamma0-decomp}) 
and use it after replacing $N$ with $N^2$ to make a statement 
equivalent to (\ref{eq:SGamma--SGamma00-decomp}). 
Because the group $H_N$ fits into the chain of subgroups 
of the multiplicative group $\{1\} \subset H_N \subset [\Z/(N^2)]^\times$, 
a cuspform for $\Gamma_1(N^2) = \Gamma(\{1\}, N^2,1)$ is also a cuspform 
for $\Gamma(H_N, N^2,1)$ if the nebentypus $\chi: [\Z/(N^2)]^\times 
\rightarrow S^1$ of the cuspform vanishes on $H_N \subset [\Z/(N^2)]^\times$.
This idea is schematically written down as 
\begin{align}
  S_k(\Gamma(H_N,N^2,1)) \cong 
    \oplus_{\chi_N \in iReps([\Z/(N)]^\times) } \; S_k(\Gamma_0(N^2), \chi_N),
  \label{eq:SGamma--SGamma00-decomp-4SAGE}
\end{align}
which is equivalent to (\ref{eq:SGamma--SGamma00-decomp}). SAGE allows us 
to work out this decomposition (\ref{eq:SGamma--SGamma00-decomp-4SAGE}) 
explicitly, so we can just translate the result into the language of 
(\ref{eq:SGamma--SGamma00-decomp}) appropriately. $\bullet$
\end{lemma}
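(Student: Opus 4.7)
The plan is to recognize the lemma as an instance of complete reducibility for representations of the finite quotient group $\Gamma'/\Gamma$ on the finite-dimensional $\C$-vector space $S_k(\Gamma)$. First I would verify the three structural claims that precede the decomposition. That $\Gamma$ acts trivially on $S_k(\Gamma)$ via $|[-]_k$ is just the defining property of a modular form of weight $k$ for $\Gamma$. That the action of $\Gamma'$ preserves $S_k(\Gamma)$ follows from normality: for $f \in S_k(\Gamma)$, $\gamma \in \Gamma'$ and $\gamma_0 \in \Gamma$, one computes
\begin{align*}
(f|[\gamma]_k)|[\gamma_0]_k = f|[\gamma \gamma_0]_k = f|[(\gamma\gamma_0\gamma^{-1})\gamma]_k = f|[\gamma]_k,
\end{align*}
since $\gamma\gamma_0\gamma^{-1} \in \Gamma$, so $f|[\gamma]_k$ is $\Gamma$-invariant under $|[-]_k$; the cusp condition is preserved because $\Gamma'$ permutes the cusps of $\Gamma$. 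Hence $|[-]_k$ factors through a representation of the finite group $\Gamma'/\Gamma$ on $S_k(\Gamma)$.

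Next I would invoke Maschke's theorem to decompose this finite-dimensional representation into isotypic components indexed by $\rho \in iReps(\Gamma'/\Gamma)$, which gives the stated direct sum. The definition of $S_k(\Gamma',\rho)$ as the $\rho$-isotypic piece, together with the ``nebentypus'' terminology, is then just a naming convention. I should remark that in both concrete examples the quotient $\Gamma'/\Gamma$ is abelian, so every irreducible representation is one-dimensional, and $\rho$ can be identified with a character of $\Gamma'/\Gamma$.

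For the first example, I would verify that the map $[a,b;c,d] \mapsto d \bmod N$ gives an isomorphism $\Gamma_0(N)/\Gamma_1(N) \cong [\Z/(N)]^\times$ (surjectivity uses the fact that for any $d$ coprime to $N$ one can complete to an element of $\Gamma_0(N)$; the kernel is exactly $\Gamma_1(N)$). This converts characters $\rho$ into Dirichlet characters $\chi_N$ modulo $N$, yielding (\ref{eq:SGamma1--SGamma0-decomp}). For (\ref{eq:SGamma--SGamma00-decomp}) I would show analogously that $[a,b;c,d] \mapsto d \bmod N$ gives $\Gamma_0^0(N)/\Gamma(N) \cong [\Z/(N)]^\times$: the kernel condition $d \equiv 1 \bmod N$ combined with $a \equiv d^{-1} \bmod N$ (from $ad \equiv 1 \bmod N$, since $b \equiv c \equiv 0 \bmod N$) forces $a \equiv 1 \bmod N$, and $b,c \equiv 0 \bmod N$ by definition of $\Gamma_0^0(N)$, placing the element in $\Gamma(N)$.

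Finally, for (\ref{eq:SGamma--SGamma00-decomp-4SAGE}) I would use Lemma~\ref{lemma:isom-GammaN-GHNM} to identify $\Gamma(N)$ with $\Gamma(H_N,N^2,1)$ and $\Gamma_0^0(N)$ with $\Gamma_0(N^2)$; under conjugation by $\diag(N,1)$, the operator $|[-]_k$ intertwines the two actions, so the decomposition (\ref{eq:SGamma1--SGamma0-decomp}) applied at level $N^2$ restricts to cusp forms for $\Gamma(H_N,N^2,1)$ as those $f \in S_k(\Gamma_1(N^2))$ whose nebentypus $\chi_{N^2}$ is trivial on $H_N$. The quotient $[\Z/(N^2)]^\times/H_N$ is canonically isomorphic to $[\Z/(N)]^\times$ via reduction mod $N$, which converts such $\chi_{N^2}$ back into Dirichlet characters modulo $N$, establishing the equivalence of (\ref{eq:SGamma--SGamma00-decomp}) and (\ref{eq:SGamma--SGamma00-decomp-4SAGE}). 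The main routine-but-fiddly step is the verification that the reduction-mod-$N$ map $[\Z/(N^2)]^\times \to [\Z/(N)]^\times$ has kernel exactly $H_N$; I expect this to be the only place one could slip up, and it is essentially bookkeeping.
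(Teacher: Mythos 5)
Your proof is correct, and it fills in exactly the standard argument the paper is implicitly relying on: the lemma appears in the review appendix as quoted textbook material and is given no proof there. Your normality computation, the use of Maschke's theorem (with the observation that $\Gamma'/\Gamma$ is abelian in both examples so $\rho$ is a character), the identifications $\Gamma_0(N)/\Gamma_1(N)\cong[\Z/(N)]^\times$ and $\Gamma_0^0(N)/\Gamma(N)\cong[\Z/(N)]^\times$ via $d\bmod N$, and the final check that the kernel of $[\Z/(N^2)]^\times\to[\Z/(N)]^\times$ is $H_N$ are all accurate and complete.
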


\begin{anythng}
We are not going to provide a review on such notions as 
(the subspace of) oldforms, (the subspace of) newforms, and Hecke operators
in this article. 
Single page is not enough to review those materials for string theorists, 
while they are quite standard for experts of number theory. We just refer 
the readers to such textbooks and lecture notes as \cite{DS} 
and \cite{Ribet-Stein}.
In the following, we quote from those references a result that we refer 
to in the text, without explaining the jargons. 
\end{anythng}

We begin by stating structure within the vector space $S_2(\Gamma_1(N))$ 
for positive integer $N$.

\begin{thm}(e.g., Thm. 5.8.2 of \cite{DS})
\label{thm:newform-SkGamma0Chi}
For any positive integer $N$, there is a vector subspace of 
$S_k(\Gamma_1(N))$ called the {\it space of newforms}, whose definition 
is found in standard textbooks and lecture notes on modular forms. 
This subspace is denoted by $[S_k(\Gamma_1(N))]^{\rm new}_{N}$, and 
the modular forms in this subspace is said to be of {\it level} $N$.
For any nebentypus $\chi_N$, $[S_k(\Gamma_0(N), \chi_N)]^{\rm new}_{N}$
is $S_k(\Gamma_0(N),\chi_N) \cap [S_k(\Gamma_1(N))]^{\rm new}_N$, 
so 
\begin{align}
  [S_k(\Gamma_1(N))]^{\rm new}_N \cong 
      \oplus_{\chi_N} [S_k(\Gamma_0(N),\chi_N)]^{\rm new}_N.
\end{align}

The Hecke algebra, an algebra over $\Z$ generated by Hecke operators, 
acts on the individual components $[S_k(\Gamma_0(N),\chi_N)]^{\rm new}_N$ 
within $S_k(\Gamma_1(N))$, and all the operators in the algebra 
are diagonalized simultaneously. There is no two simultaneous 
eigenvectors of the Hecke algebra in $[S_k(\Gamma_0(N),\chi_N)]^{\rm new}_N$ 
on which the 1-dimensional representation of the Hecke algebra are identical, 
except when the two eigenvectors are identical except the normalization, 
so there is no ambiguity in eigenspace decomposition 
$[S_k(\Gamma_0(N),\chi_N)]^{\rm new}_N$; this property is called 
the {\it multiplicity one theorem}. 
It is conventional to choose each one of those eigenvectors so that its 
power series expansion in $q = e^{2\pi i \tau}$ begins with the term $q$ 
(with the coefficient 1). Such eigenvectors in 
$[S_k(\Gamma_0(N),\chi_N)]^{\rm new}_N$ are called weight-$k$, level-$N$ 
{\it newforms} for $\Gamma_0(N)$ with the nebentypus $\chi_N$. 
Let ${\rm NewForms}(N, \chi_N)$ be the set of those eigenvectors (newforms). 
$\bullet$
\end{thm}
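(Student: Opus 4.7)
The plan is to prove this via the Atkin--Lehner theory of newforms, organized around the Petersson inner product on the space of cuspforms.

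First, I would define the oldform subspace explicitly and produce a newform subspace as its orthogonal complement. For each proper divisor $M$ of $N$ and each positive divisor $d$ of $N/M$, there is a degeneracy map $\iota_{M,d}\colon S_k(\Gamma_1(M)) \to S_k(\Gamma_1(N))$ induced by $f \mapsto f|[\diag(d,1)]_k$ (combined with the inclusion $\Gamma_1(N)\subset\Gamma_1(M)$ when $d=1$). I would set
\begin{equation*}
 [S_k(\Gamma_1(N))]^{\mathrm{old}}_N := \sum_{\substack{M\mid N\\ M<N}}\; \sum_{d\mid N/M} \iota_{M,d}\bigl(S_k(\Gamma_1(M))\bigr),
\end{equation*}
and define $[S_k(\Gamma_1(N))]^{\mathrm{new}}_N$ to be its orthogonal complement with respect to the Petersson inner product. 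The nebentypus decomposition respects all degeneracy maps since $\iota_{M,d}$ preserves the action of the diamond operators, which gives the claimed direct-sum decomposition indexed by $\chi_N$.

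Next I would establish that the Hecke algebra acts on this subspace and is simultaneously diagonalizable. The main input is that the Hecke operators $T_p$ for $p\nmid N$ and the diamond operators $\langle d\rangle$ are normal with respect to the Petersson product (specifically $\langle d\rangle$ is unitary and $T_p^{*}=\langle p\rangle^{-1}T_p$ for $p\nmid N$), and that they commute with all degeneracy maps in the sense needed to preserve both the old and new subspaces. A commuting family of normal operators on a finite-dimensional Hermitian vector space is simultaneously diagonalizable, so $[S_k(\Gamma_0(N),\chi_N)]^{\mathrm{new}}_N$ breaks into joint eigenspaces for $\{T_p : p\nmid N\}$.

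The crux is multiplicity one. I would prove it via the standard strong-multiplicity-one lemma: if $f\in S_k(\Gamma_1(N))$ is a joint eigenform for all $T_p$ with $p\nmid N$ whose $q$-expansion begins $a_1 q + O(q^2)$ with $a_1=0$, then $f$ is an oldform. Granting this, if two normalized newforms $f,g\in[S_k(\Gamma_0(N),\chi_N)]^{\mathrm{new}}_N$ share the eigenvalues $a_p$ for every $p\nmid N$, then $f-g$ is a newform with $a_1(f-g)=0$, hence an oldform by the lemma, hence zero; thus each simultaneous-eigenspace is one-dimensional and admits the canonical normalization $a_1=1$. I expect the main obstacle to be precisely this lemma: it requires exploiting the Hecke recursion $a_p a_n = a_{pn} + \chi(p)p^{k-1}a_{n/p}$ (for $p\nmid N$) to show all Fourier coefficients vanish at indices coprime to $N$, then using a careful Atkin--Lehner descent at the primes dividing $N$ to realize $f$ as coming from a smaller level. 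Once this lemma is in hand, the theorem assembles immediately from the three preceding steps.
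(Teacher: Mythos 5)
Your outline is correct, and it is essentially the standard Atkin--Lehner argument (oldforms via degeneracy maps, newforms as the Petersson-orthogonal complement, normality of $T_p$ for $p\nmid N$, and the Main Lemma plus the Hecke recursion $a_n = a_1(T_n f)$ to get multiplicity one) found in Chapter 5 of \cite{DS}, which is precisely the source the paper cites. The paper itself offers no proof --- it quotes this as textbook material --- so there is nothing further to compare against.
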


\begin{thm}(e.g, Thm. 5.8.3 of \cite{DS})
\label{thm:str-SkGamma0Chi}
Let $N$ be a positive integer. The vector space $S_k(\Gamma_0(N),\chi_N)$
contains a subspace $[S_k(\Gamma_0(N),\chi_N)]^{\rm new}_N$, as stated 
in Thm. \ref{thm:newform-SkGamma0Chi}; the whole space 
$S_k(\Gamma_0(N),\chi_N)$ has a structure that is made up of subspaces 
associated with the vector spaces $[S_k(\Gamma_0(M_f),\chi_N)]^{\rm new}_{M_f}$ 
labeled by all possible positive integers $M_f$ that divide $N$. 
To be more concrete, 
\begin{align}
  S_k(\Gamma_0(N),\chi_N) \cong \oplus_{M_f | N}
       \left[ S_k(\Gamma_0(N), \chi_N) \right]_{M_f}, 
\end{align}
and the individual components associated with $M_f | N$ are 
\begin{align}
 \left[ S_2(\Gamma_0(N), \chi_{N}) \right]_{M_f} 
  & \; \cong \oplus_{r | (N/M_f)} |[\diag(r,1)]_k \; 
     \left[ S_k(\Gamma_0(M_f),\chi_N) \right]^{\rm new}_{M_f}, \nonumber \\
  & \; = \oplus_{f \in {\rm NewForms}(M_f,\chi_{N} )} \oplus_{r | (N/M_f)} 
       \C f |[\diag(r,1)]_k ,   \label{eq:substr-Gamma0Chi-levelNr}  \\
  & \; =: \oplus_{f \in {\rm NewForms}(M_f,\chi_{N} )} \; [f]_{M_f}^{N}. 
\end{align}
A newform $f \in {\rm NewForms}(M_f, \chi_N)$ mapped into 
$S_k(\Gamma_0(N),\chi_N)$ via the $|[\diag(1,1)]_k$ operator is called 
a weight-$k$, level-$M_f$ {\it newform} for $\Gamma_0(N)$ of 
nebentypus $\chi_N$.

The set ${\rm NewForms}(M_f, \chi_N)$ and the vector space 
$[S_k(\Gamma_0(M_f),\chi_N)]^{\rm new}_{M_f}$ is empty, if 
the Dirichlet character $\chi_N: [\Z/(N)]^\times \rightarrow S^1$
cannot be induced from a Dirichlet character of $[\Z/(M_f)]^\times$.

The dimension of the vector space $[f]_{M_f}^N$ is the number of 
divisors of the integer $N/M_f$, including $1$ and $(N/M_f)$.
\begin{align}
   \dim_\C \left( [S_k(\Gamma_0(N),\chi_N)]_{M_f} \right) & \; = 
      | {\rm NewForms}(M_f, \chi_N) | \times | {\rm divisors~of~}(N/M_f) |, \\
  \dim_\C \left( S_k(\Gamma_0(N),\chi_N) \right) & \; = \sum_{M_f |N}
     | {\rm NewForms}(M_f, \chi_N) | \times | {\rm divisors~of~}(N/M_f) |.
  \quad 
   \label{eq:dim-str-SkGamma0Chi}
\end{align}
$\bullet$
\end{thm}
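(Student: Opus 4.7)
The plan is to prove this by induction on the level $N$, with the Petersson inner product and the Atkin--Lehner multiplicity one theorem (Thm.~\ref{thm:newform-SkGamma0Chi}) as the two main inputs. First I would equip $S_k(\Gamma_0(N),\chi_N)$ with the Petersson inner product, under which the Hecke operators $T_p$ for primes $p \nmid N$ are normal. This turns the subspace $[S_k(\Gamma_0(N),\chi_N)]^{\rm new}_N$ of newforms into the orthogonal complement of the subspace spanned by all images of cusp forms coming from proper divisors of $N$, yielding an orthogonal splitting $S_k(\Gamma_0(N),\chi_N) = [S_k(\Gamma_0(N),\chi_N)]^{\rm new}_N \oplus S_k^{\rm old}(N,\chi_N)$, where
\begin{align*}
S_k^{\rm old}(N,\chi_N) := \sum_{M \mid N,\, M < N}\; \sum_{r \mid (N/M)}\; |[\diag(r,1)]_k\, S_k(\Gamma_0(M),\chi_N).
\end{align*}
The nebentypus obstruction is built into this definition: the summand indexed by $M$ is empty unless $\chi_N$ factors through $[\Z/(M)]^\times$, which recovers the emptiness claim in the theorem.

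Next I would apply the inductive hypothesis to each $S_k(\Gamma_0(M),\chi_N)$ with $M < N$, expand it as $\oplus_{M_f \mid M} [S_k(\Gamma_0(M),\chi_N)]_{M_f}$, and then regroup the resulting triple sum (over $M < N$, over $r \mid N/M$, and over $M_f \mid M$) by the level $M_f$ of the originating newform. Using $|[\diag(r,1)]_k \circ |[\diag(r',1)]_k = |[\diag(rr',1)]_k$ and the bijection between divisors $d$ of $N/M_f$ and pairs $(r, r')$ with $d = rr'$ and $r' \mid M/M_f$, the triple sum collapses to
\begin{align*}
[S_k(\Gamma_0(N),\chi_N)]_{M_f} = \bigoplus_{f \in {\rm NewForms}(M_f,\chi_N)} \bigoplus_{d \mid (N/M_f)} \C\, f|[\diag(d,1)]_k,
\end{align*}
which is the content of (\ref{eq:substr-Gamma0Chi-levelNr}) once linear independence is in hand.

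The main obstacle will be showing that the sum on the right above is genuinely direct. Two sources of potential relations must be ruled out: relations among the images $\{f|[\diag(d,1)]_k : d \mid (N/M_f)\}$ of a single newform $f$, and relations mixing images of distinct newforms (possibly of different levels $M_f$). The cross-newform independence follows from multiplicity one together with the fact that $|[\diag(d,1)]_k$ commutes with $T_p$ for $p \nmid N$: any nontrivial relation would force two newforms to share all Hecke eigenvalues at primes away from $N$, contradicting Thm.~\ref{thm:newform-SkGamma0Chi}. The within-newform independence is the more delicate point and will be handled by examining $q$-expansions: $f|[\diag(d,1)]_k$ has its Fourier expansion supported in powers of $q^d$ with leading term a multiple of $q^d$, so the leading exponents across distinct $d \mid (N/M_f)$ are distinct and independence follows.

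Finally, the dimension formula (\ref{eq:dim-str-SkGamma0Chi}) is then immediate from the direct sum structure: each $[f]_{M_f}^N$ contributes a basis of size equal to the number of divisors of $N/M_f$, and summing over newforms $f$ and over levels $M_f \mid N$ gives the stated total.
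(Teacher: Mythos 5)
First, a point of context: the paper does not prove this theorem at all --- it is quoted verbatim from the Atkin--Lehner--Li theory of newforms (Thm.~5.8.3 of \cite{DS}), with the surrounding text explicitly stating that such results are cited ``without explaining the jargons.'' So there is no in-paper argument to compare against; what can be assessed is whether your sketch is a correct proof of the standard theorem. Your architecture is the standard one (Petersson inner product, new space as orthogonal complement of the old space, induction on the level, regrouping by the level $M_f$ of the originating newform), and your within-newform independence argument via leading $q$-exponents is correct: ordering the divisors $d_1<d_2<\cdots$ of $N/M_f$ and extracting the coefficient of $q^{d_1}$ gives a triangular system, since $a_1(f)=1$ and $d_i\nmid d_1$ for $i>1$.

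The genuine gap is in the cross-newform independence. To conclude that vectors lying in $[f]_{M_f}^{N}$ and $[g]_{M_g}^{N}$ for distinct newforms $f,g$ cannot satisfy a nontrivial relation, you need that distinct newforms of \emph{possibly different} levels $M_f, M_g$ dividing $N$ have distinct systems of eigenvalues $\{a_p\}_{p\nmid N}$. The multiplicity one statement you cite (Thm.~\ref{thm:newform-SkGamma0Chi}) only asserts uniqueness of eigenvectors \emph{within the new subspace at a single fixed level}; a newform of level $M_f<N$ sits in the \emph{old} space at level $N$, so that theorem says nothing about it, and even passing to $L=\mathrm{lcm}(M_f,M_g)$ leaves you comparing a vector in the new space with one in the old space, which are orthogonal $T_p$-stable complements --- nothing a priori prevents the same eigenvalue system from occurring in both. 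Ruling this out is precisely the content of the Atkin--Lehner ``Main Lemma'' (Thm.~5.7.1 of \cite{DS}): a cusp form in $S_k(\Gamma_1(N))$ with $a_n=0$ for all $n$ coprime to $N$ lies in $\sum_{p\mid N} S_k(\Gamma_1(N/p))\,|[\diag(p,1)]_k$. One applies it to $h=f-g$ (both normalized), whose coefficients vanish away from $N$, and then uses the newform property to force $h=0$. This Main Lemma is the actual technical heart of the decomposition theorem and is nowhere supplied or invoked in your sketch; without it the key directness claim, and hence the dimension formula (\ref{eq:dim-str-SkGamma0Chi}), is unproven.
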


Now, let us write down the statement on the structure of the vector space 
$S_k(\Gamma(H,N,M))$.

\begin{thm}\cite[\S9.1]{Ribet-Stein}
Let $N$ and $M$ be positive integers, and $H$ a subgroup of the multiplicative 
group $[\Z/(NM)]^\times$. In the decomposition and the isomorphisms
\begin{align}
 S_k(\Gamma(H,N,M)) \cong 
     \oplus_{\chi_{NM}} S_k(\Gamma([\Z/(NM)]^\times, N, M), \chi_{NM}), \nonumber \\
 \cong S_k(\Gamma(H,NM,1)) \cong 
  \oplus_{\chi_{NM}} S_k(\Gamma([\Z/(NM)]^\times, NM,1), \chi_{NM}),  
  \label{eq:temp-2-5-11}
\end{align}
we already have the notion of the subspace of newforms (introduced 
in Thm. \ref{thm:newform-SkGamma0Chi})
\begin{align}
[S_k(\Gamma([\Z/(NM)]^\times,NM,1),\chi_{NM})]^{\rm new}_{NM} = 
[S_k(\Gamma_0(NM), \chi_{NM})]^{\rm new}_{NM} \subset S_k(\Gamma_0(NM), \chi_{NM}),
   \nonumber 
\end{align}
so we can also introduce the notion of 
{\it the subspace of newforms} $[S_k(\Gamma(H, N, M))]^{\rm new}_{NM}$ and 
$[S_k(\Gamma([\Z/(NM)]^\times, N, M), \chi_{NM})]^{\rm new}_{NM}$
through the identification (\ref{eq:temp-2-5-11}) . 
So long as the character $\chi_{NM}$ vanishes 
on the subgroup $H \subset [\Z/(NM)]^\times$, there is one to one correspondence 
\begin{align}
  |[\diag(M,1)]_k: 
   {\rm NewForms}(\Gamma([\Z/(NM)]^\times, N, M),\chi_{NM}) \cong 
   {\rm NewForms}(\Gamma_0(NM), \chi_{NM}). 
\end{align}

The vector spaces $S_k(\Gamma([\Z/(NM)]^\times, N, M),\chi_{NM})$ with 
$\chi_{NM}$'s that vanish on $H$ have a substructure precisely the same 
as that of $S_k(\Gamma_0(NM),\chi_{NM})$ stated in 
Thm. \ref{thm:str-SkGamma0Chi}; we can just translate the substructure 
by using the isomorphism $|[\diag(M,1)]_k$. 

Of particular interest in this article is the case $M=N$ and $H=H_N$. 
So, let us write down the result of the translation in the notation 
for this particular $M=N$ and $H=H_N$ case. 
\begin{align}
  S_k(\Gamma_0^0(N),\chi_N) \cong \oplus_{M_f|N^2} 
     \left[ S_k(\Gamma_0^0(N),\chi_N) \right]_{M_f}, 
\end{align}
and the individual components associated with $M_f|N^2$ are 
\begin{align}
  [S_k(\Gamma_0^0(N),\chi_N)]_{M_f} & \; \cong \oplus_{r|(N^2/M_f)} 
     |[\diag(N^{-1},1)]_k \cdot |[\diag(r,1)]_k 
          \; [S_k(\Gamma_0(M_f),\chi_N)]^{\rm new}_{M_f},  \nonumber \\
  & \; = \oplus_{f \in {\rm NewForms}(M_f, \chi_N)} \oplus_{r|(N^2/M_f)} 
      \C f|[\diag(r/N,1)]_k,     \label{eq:substr-Gamma00Chi-levelNr} \\
  & \; =: \oplus_{f \in {\rm NewForms}(M_f, \chi_N)} \; [f]_{M_f}^{N^2,N}.
\end{align}
$\bullet$
\end{thm}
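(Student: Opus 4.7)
The plan is to deduce this decomposition from the preceding general structure theorem (Thm.~\ref{thm:str-SkGamma0Chi}) by transporting it across the isomorphism of Lemma~\ref{lemma:isom-strokeOp-cuspformSpace}, applied with $M=N$ and $H=H_N$:
\begin{align*}
|[\diag(N,1)]_k : S_k(\Gamma_0^0(N),\chi_N) \;\cong\; S_k(\Gamma_0(N^2),\chi_N).
\end{align*}
Here the Dirichlet character $\chi_N$ on the left, viewed as a character of $\Gamma_0^0(N)/\Gamma(N) \cong [\Z/(N)]^\times$, is identified on the right with the character of $[\Z/(N^2)]^\times$ obtained by composing with the natural surjection $[\Z/(N^2)]^\times \twoheadrightarrow [\Z/(N)]^\times$; equivalently, with a character of $[\Z/(N^2)]^\times$ trivial on $H_N$. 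Concretely, by Lemma~\ref{lemma:isom-GammaN-GHNM} an element $[a,b;c,d]\in \Gamma_0^0(N)$ is sent to $[a,b/N;cN,d] \in \Gamma_0(N^2)$, whose upper-left and lower-right entries have the same residues $a,d$ but now regarded modulo $N^2$, so the two nebentypuses agree.

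Next I would record the elementary composition law
\begin{align*}
(f|[\diag(r,1)]_k)\,|[\diag(N^{-1},1)]_k \;=\; f|[\diag(r/N,1)]_k,
\end{align*}
which follows in one line from $\diag(r,1)\diag(N^{-1},1)=\diag(r/N,1)$ and the definition of the stroke operator. This identifies $|[\diag(N^{-1},1)]_k$ as the two-sided inverse of $|[\diag(N,1)]_k$ on the relevant cuspform spaces and lets me pull decompositions back.

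With these preliminaries in place, I would apply Thm.~\ref{thm:str-SkGamma0Chi} to the space $S_k(\Gamma_0(N^2),\chi_N)$, obtaining
\begin{align*}
S_k(\Gamma_0(N^2),\chi_N) \;\cong\; \oplus_{M_f | N^2}\; \oplus_{f \in {\rm NewForms}(M_f,\chi_N)}\; \oplus_{r | (N^2/M_f)} \C\, f\,|[\diag(r,1)]_k,
\end{align*}
with the convention that ${\rm NewForms}(M_f,\chi_N)$ is empty unless the conductor of $\chi_N$ (pulled back to a character mod $N^2$) divides $M_f$. Pulling each one-dimensional summand $\C\,f|[\diag(r,1)]_k$ back by $|[\diag(N^{-1},1)]_k$ and applying the composition rule converts it into $\C\, f|[\diag(r/N,1)]_k$, which is precisely the summand appearing in (\ref{eq:substr-Gamma00Chi-levelNr}). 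The dimension formula (\ref{eq:dim-str-SkGamma0Chi}) then transports verbatim.

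The only step requiring real care is the nebentypus matching flagged in the first paragraph; everything else is formal transport of structure across an isomorphism. Because the subgroup $H_N \subset [\Z/(N^2)]^\times$ sits inside the kernel of the surjection $[\Z/(N^2)]^\times \twoheadrightarrow [\Z/(N)]^\times$, only those components of $S_k(\Gamma_1(N^2))$ whose characters factor through $[\Z/(N)]^\times$ survive in $S_k(\Gamma_0^0(N),\chi_N)$ — and these are exactly the characters parameterized by $\chi_N$ in the statement. Once this identification is fixed, the inherited direct-sum decomposition, together with multiplicity one and the newform/oldform substructure, gives the claim.
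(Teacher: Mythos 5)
Your argument is correct and is essentially the same as the paper's: the paper quotes this result from Ribet--Stein without a separate proof, and the justification it indicates is precisely your transport of Thm.~\ref{thm:str-SkGamma0Chi} across the isomorphism $|[\diag(M,1)]_k$ of Lemma~\ref{lemma:isom-strokeOp-cuspformSpace}, with the nebentypus identification through $[\Z/(N^2)]^\times \twoheadrightarrow [\Z/(N)]^\times$ handled exactly as you do. Your explicit check of the composition law $\diag(r,1)\diag(N^{-1},1)=\diag(r/N,1)$ and of the conjugation $[a,b;c,d]\mapsto[a,b/N;cN,d]$ fills in the only details the paper leaves implicit.
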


\newpage

\section{Chiral Correlation Functions and Hecke Eigenforms}
\label{sec:CCF-eigenform}

We have seen a few examples where the vector space 
$F([z],f_\rho)$ of the ``$(g,n)=(1,2)$ chiral correlation functions'' 
$\{ f_{1\Omega'_-}^{\rm II}(\beta) \; | \; \beta \in iReps\}$ fits into the 
vector space of weight-2 cuspforms. In this section, we study systematically 
how the vector space $F([z],f_\rho)$ of the model for a set of data 
$([z],f_\rho)$ fits into the vector space of cuspforms on various modular 
curves for different choices of $([z], f_\rho)$.  

\subsection{CM Group in Models of Rational CFT: Definition}
\label{ssec:CM-grp-def}

In Examples \ref{exmpl:Dk4-fz1-N64} and \ref{exmpl:Dk8-fz1-N64}, we have 
considered the linear combinations, (\ref{eq:nf-64-00-byCCF}) and 
(\ref{eq:nf-64-01-byCCF}), respectively, of the ``$(g,n)=(1,2)$ chiral 
correlation functions'' $f_{1\Omega'_-}^{\rm II}$ 
%
%
in the rational models of ${\cal N}=(2,2)$ SCFT associated with CM elliptic 
curves.
The complex phases that determine those linear combinations are chosen
so that the resulting linear combinations of the theta functions become 
the Hecke theta functions, so that they become newforms of some level $M_f$
(\cite[Lemma 3]{Shimura-ellCM} quoted as 
Lemma \ref{lemma:Shimura-ellCM-Lemma3} in this article). 

There is nothing {\it wrong} mathematically in choosing the linear 
combinations of $f_{1\Omega'_-}^{\rm II}$'s by hand in that way, even when 
we do not find much {\it motivation} for those particular linear combinations 
from the perspectives of string theory / vertex operator algebra. 
That was an attitude of the authors in \cite{prev.paper}; 
$f_{1\Omega'_-}^{\rm II}$'s span a vector space $F([z];f_\rho)$ over $\C$, and 
we 
may call $\{ f_{1\Omega'_-}^{\rm II}(\tau_{ws};\beta) \; | \; \beta \in iReps\}$ 
the VOA basis. But the individual 
$f_{1\Omega'_-}^{\rm II}(\tau_{ws};\beta)$'s for $\beta \in iReps$ are not 
Hecke eigenforms. 

In this section, we go back to the question of how to choose linear 
combinations of the ``$(g,n)=(1,2)$ chiral correlation functions'' 
$f^{\rm II}_{1\Omega'_-}$, and present a little more idea, spoken in the 
language of rational vertex operator algebra, that motivates the choice 
of the linear combination coefficients, to a large extent. Along the way, 
we arrive at the perspective of identifying the set $iReps$ as a finite 
quotient of ``the Tate modules'' $\otimes_\ell {\rm Tate}_\ell(E)$, and also 
provide conceptual clarification at a few points where we were confused 
in \cite{prev.paper}. 

A guiding principle from arithmetic geometry is the following. 
First, for an abelian variety $B$ defined on a number field $K$, 
the representation $\rho_\ell^\vee(B/K)$ of ${\rm Gal}(\overline{K}/K)$ 
on ${\rm Tate}_\ell(B/K) \cong \Z_\ell^{\oplus 2}$ splits into 
$\oplus_{\lambda|(\ell)} \rho_\lambda^\vee(B/K)$ on $\Z_\lambda^{\oplus 2}$ 
where $\lambda$'s are all the finite primes of ${\cal O}_K$ dividing $(\ell)$.
In section \ref{sec:parametrizatn}, we will be concerned about 
such representations with $(B,K)=(E,k)$ of an elliptic curve of Shimura type 
$E$ defined over a number field $k$, and also about 
its Weil restriction $(B,K)$; see section \ref{sssec:shimuraEC}. 
Second, any complex multiplication operation of a CM abelian variety 
$B$ defined over $K$ can be regarded as an element of ${\rm End}_K(B/K)$; 
that is, we can choose the field of definition of the complex multiplication 
operation as $K$ \cite[Thm. 5]{Shimura-zetaA}. 
Thirdly, the action of any element of ${\rm End}_K(B/K)$ (and hence 
any complex multiplication operation) on ${\rm Tate}_\ell(B/K)$ commutes 
with the action of the Galois group on ${\rm Tate}_\ell(B/K)$. So, the complex 
multiplication operation and the representation of the Galois group 
${\rm Gal}(\overline{K}/K)$ can 
be diagonalized on ${\rm Tate}_\ell(B/K) \otimes \overline{K} = 
\oplus_{\lambda|(\ell)} \overline{K}_\lambda^{\oplus 2}$. So, the Hecke theta 
functions, which are supposed to be in one-to-one with Galois representations
in the Langlands correspondence 
(see sections \ref{sssec:surjective-morph-2BK} and \ref{sssec:Langlands}), 
should be identified with something on which the action of complex 
multiplication operations is diagonalized already. Given those observations 
in arithmetic geometry, we are therefore motivated to capturing action of 
complex multiplication operations in the language of rational CFT.

\subsubsection{The CM Group}
\label{sssec:CM-group}

\begin{defn}Think of the model of rational CFT assigned by bosonic string 
theory for a set of data $([E_z]_\C, f_\rho)$; here, $[E_z]_\C$ is an elliptic 
curve with complex multiplication by an order ${\cal O}_{f_z}$ of an imaginary 
quadratic field $K$, as a reminder. 

Due to the isomorphism (\ref{eq:isom-abstract-C/kL-originalE}), the 
elliptic curve $\C/\Omega'_-(\Lambda_-)$ of the model of rational CFT also 
has complex multiplication by ${\cal O}_{f_z}$, the same order as that of 
the target-space elliptic curve $[E_z]_\C$. So, any complex multiplication 
$\alpha \in {\cal O}_{f_z}$ maps $\Omega'_-(\Lambda_-)$ to itself (though not 
necessarily surjective). Similarly, one can see 
from (\ref{eq:image-Omega'-of-LambdaVee}) that $\C/\Omega'_-(\Lambda_-^\vee)$
is isomorphic to $[E_z]_\C$ (complex analytically), and hence any complex 
multiplication $\alpha \in {\cal O}_{f_z}$ maps $\Omega'_-(\Lambda_-^\vee)$ 
to itself. This means that, when $iReps$ is regarded 
(see Statement \ref{statmnt:iReps-as-torsion-pts}) as a set of torsion 
points of the elliptic curve $\C/\Omega'_-(\Lambda_-)$, any complex 
multiplication $\alpha \in {\cal O}_{f_z}$ induces a map from $iReps$ to 
$iReps$.

Now, we define a subset of ${\cal O}_{f_z}$, 
\begin{align}
  \underline{\mathfrak{c}}([E_z]_\C, f_\rho) & \; 
    := \left\{ \alpha \in 
{\cal O}_{f_z} \; | \; (1+\alpha) = {\rm id}: iReps \rightarrow iReps \right\}
   \nonumber  \\
  & \; = \left\{ \alpha \in {\cal O}_{f_z} \; | \; \alpha \Omega'_-(\Lambda_-^\vee)
    \subset \Omega'_-(\Lambda_-)  \right\},
\end{align}
which is an integral ideal of ${\cal O}_{f_z}$. Two complex multiplications 
$\alpha_1, \alpha_2\in {\cal O}_{f_z}$ induce an identical map from 
$iReps \subset \C/\Omega'_-(\Lambda_-)$ to itself $iReps \subset 
\C/\Omega'_-(\Lambda_-)$, if and only if $\alpha_1 - \alpha_2$ is in the ideal 
$\underline{c}$. So, a map from $iReps$ to itself is defined for individual 
elements of ${\cal O}_{f_z}/\underline{\mathfrak{c}}$. Such a map 
$iRpes \rightarrow iReps$ is not necessarily injective or surjective 
for arbitrary element of $[{\cal O}_{f_z}/\underline{\mathfrak{c}}]$. 
For an element in $[{\cal O}_{f_z}/\underline{\mathfrak{c}}]^\times$, however,
there is an inverse in the multiplication law of 
${\cal O}_{f_z}/\underline{\mathfrak{c}}$, and forms a multiplicative group.
We call this finite abelian group {\it the CM group of the model of rational CFT} for $([E_z], f_\rho)$.  $\bullet$
\end{defn}

\begin{props}
\label{props:cUnderline-formula}
The CM group is given by 
\begin{align}
[{\cal O}_{f_z}/(f_\rho \sqrt{D_z}i)_{{\cal O}_{f_z}}]^\times
  = [{\cal O}_{f_z}/(f_\rho f_z \sqrt{D_K})_{{\cal O}_{f_z}}]^\times
\end{align}
for the model of rational CFT associated with the data $([E_z]_\C, f_\rho)$.

Because $D_z = f_z^2 (-D_K)$ depends only on the imaginary quadratic field 
$K$ and $f_z$, which specifies the order ${\cal O}_{f_z} \subset {\cal O}_K$, 
the models of rational CFT for 
${}^\forall [E_z]_\C \in {\cal E}ll({\cal O}_{f_z})$ and a common $f_\rho$ 
have the same CM group. So, the ideal 
$\underline{\mathfrak{c}}([E_z]_\C,f_\rho)$ may be denoted by 
$\underline{\mathfrak{c}}({\cal O}_{f_z},f_\rho)$ or 
$\underline{\mathfrak{c}}(K,f_z,f_\rho)$ instead.
\end{props}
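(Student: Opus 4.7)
The plan is to unwind the definition of $\underline{\mathfrak{c}}([E_z]_\C,f_\rho)$ using the explicit formulas for $\Omega'_-(\Lambda_-)$ and $\Omega'_-(\Lambda_-^\vee)$ given in Statement \ref{statmnt:iReps-as-torsion-pts}, reduce the condition on $\alpha$ to a containment of fractional ideals of $\mathcal{O}_{f_z}$ in $K$, and then use the standard fact (already invoked in section \ref{ssec:string}) that $\mathfrak{b}_z$ is a proper fractional ideal of $\mathcal{O}_{f_z}$.

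First, I would substitute
\[
  \Omega'_-(\Lambda_-) = \sqrt{2a_z f_\rho}\,\mathfrak{b}_z, \qquad
  \Omega'_-(\Lambda_-^\vee) = -i\sqrt{\tfrac{2a_z}{f_\rho D_z}}\,\mathfrak{b}_z
\]
into the defining condition $\alpha\,\Omega'_-(\Lambda_-^\vee)\subset \Omega'_-(\Lambda_-)$. After canceling the common factor $\sqrt{2a_z}$, this becomes $\frac{-i\alpha}{f_\rho\sqrt{D_z}}\,\mathfrak{b}_z \subset \mathfrak{b}_z$. Because $\mathfrak{b}_z$ is proper over $\mathcal{O}_{f_z}$, the set of elements of $K$ that preserve $\mathfrak{b}_z$ is exactly $\mathcal{O}_{f_z}$; hence the condition is equivalent to $\frac{-i\alpha}{f_\rho\sqrt{D_z}}\in\mathcal{O}_{f_z}$, i.e. $\alpha\in (i f_\rho\sqrt{D_z})\,\mathcal{O}_{f_z}$. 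This gives the first displayed description $\underline{\mathfrak{c}}=(f_\rho\sqrt{D_z}\,i)_{\mathcal{O}_{f_z}}$; quotienting by this ideal and taking units yields the first form of the CM group.

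For the second equality, I would use $D_z=f_z^{\,2}(-D_K)$ together with the convention $\sqrt{D_K}=i\sqrt{|D_K|}$ already in force in section \ref{ssec:string}, so that $i\sqrt{D_z}=i f_z\sqrt{|D_K|}=f_z\sqrt{D_K}$. Multiplying by $f_\rho$ identifies the two generators up to a unit, and therefore the two ideals (and hence the two unit groups) coincide. Note $f_\rho f_z\sqrt{D_K}$ visibly lies in $\mathcal{O}_{f_z}=\mathbb{Z}+f_z w_K\mathbb{Z}$, so the ideal is well-defined inside $\mathcal{O}_{f_z}$.

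Finally, the ``because'' clause is essentially a bookkeeping remark: the ideal $(f_\rho f_z\sqrt{D_K})_{\mathcal{O}_{f_z}}$ and the ring $\mathcal{O}_{f_z}$ depend only on $K$, $f_z$, $f_\rho$, not on the particular choice of proper fractional ideal $\mathfrak{b}_z$ representing an element of $\mathcal{E}ll(\mathcal{O}_{f_z})$; so all $h(\mathcal{O}_{f_z})$ target-space classes with the same $f_\rho$ share the same ideal $\underline{\mathfrak{c}}$ and thus the same CM group. The only step that requires any real care is the reduction in the first paragraph: I want to make sure that the containment $\frac{-i\alpha}{f_\rho\sqrt{D_z}}\mathfrak{b}_z\subset\mathfrak{b}_z$ automatically forces $\alpha$ itself to lie in $\mathcal{O}_{f_z}$ (rather than merely in $K$), which follows since $if_\rho\sqrt{D_z}=f_\rho f_z\sqrt{D_K}\in\mathcal{O}_{f_z}$ and $\mathcal{O}_{f_z}$ is a ring; this is the only place where the ``main obstacle,'' such as it is, occurs.
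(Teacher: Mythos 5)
Your proof is correct and follows essentially the same route as the paper: the paper's own proof simply asserts the result of an ``honest-to-definition study'' of $\underline{\mathfrak{c}}$ (relying on the same lemma that $\Omega'_-(\Lambda_-^\vee) = (i f_\rho\sqrt{D_z})^{-1}\,\Omega'_-(\Lambda_-)$ up to homothety with $\mathfrak{b}_z$, and on the properness of $\mathfrak{b}_z$), and your argument is just that study written out explicitly. The reduction to $\tfrac{-i\alpha}{f_\rho\sqrt{D_z}}\mathfrak{b}_z\subset\mathfrak{b}_z$ and the appeal to $\{\beta\in K \mid \beta\mathfrak{b}_z\subset\mathfrak{b}_z\}={\cal O}_{f_z}$ are exactly what the paper has in mind.
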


Before giving a proof of this Proposition, 

\begin{lemma}
Let $[E_z]_\C$ be an elliptic curve with complex multiplication by 
an order ${\cal O}_{f_z}$ of an imaginary quadratic field. We use the 
notation introduced in section \ref{ssec:string}. Then the lattice 
$\mathfrak{b}_z := (\Z + z \Z)$ for the analytic representation 
$[E_z]_\C \cong \C/\mathfrak{b}_z$ is a proper\footnote{
{\bf Definition}: 
A fractional ideal $\mathfrak{b}_z$ of an order ${\cal O}_{f_z}$ of an 
imaginary quadratic field $K$ is said to be {\it proper}, if its algebra 
of complex multiplication is no less, no larger than ${\cal O}_{f_z}$.} 
 ideal of ${\cal O}_{f_z}$. Let us then fix an isomorphism 
$\C/\Omega'_-(\Lambda_-) \cong \C/\mathfrak{b}_z$ 
(cf Lemma \ref{statmnt:iReps-as-torsion-pts}). Then the set 
$iReps$ corresponds to 
\begin{align}
  iReps \cong  (f_\rho\sqrt{D_z}i)^{-1} \mathfrak{b}_z/ \mathfrak{b}_z \subset 
   \C/\mathfrak{b}_z \cong \C/\Omega'_-(\Lambda_-).
\end{align}
\end{lemma}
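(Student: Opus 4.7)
The plan is to prove this as a direct bookkeeping computation using the explicit formulas already recorded in Statement \ref{statmnt:iReps-as-torsion-pts}; no new idea is really required, only care with the scaling factors. First I would dispose of the properness claim: the paper has already asserted, in the discussion right after the definition of $[E_z]_\C$ in section \ref{ssec:string}, that the full set of $\alpha\in K$ with $\alpha\mathfrak{b}_z\subset\mathfrak{b}_z$ is exactly the order ${\cal O}_{f_z}$. This is precisely the definition of $\mathfrak{b}_z$ being a proper fractional ideal of ${\cal O}_{f_z}$, so nothing further is needed.

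Next I would track the two lattices $\Omega'_-(\Lambda_-)$ and $\Omega'_-(\Lambda_-^\vee)$ inside $\C$. From the explicit values of $\Omega'_-$ on the bases $\{e_{-2},e_{-1}\}$ and $\{\hat e_{-2},\hat e_{-1}\}$ recorded in Statement \ref{statmnt:iReps-as-torsion-pts}, both lattices are scalar multiples of the same rank-$2$ $\Z$-module $\mathfrak{b}_z=\Z\oplus z\Z$:
\begin{align*}
\Omega'_-(\Lambda_-)\;=\;\sqrt{2a_z f_\rho}\;\mathfrak{b}_z,\qquad
\Omega'_-(\Lambda_-^\vee)\;=\;-i\sqrt{\tfrac{2a_z}{f_\rho D_z}}\;\mathfrak{b}_z.
\end{align*}
The fixed isomorphism $\C/\Omega'_-(\Lambda_-)\xrightarrow{\sim}\C/\mathfrak{b}_z$ is the homothety $X\mapsto X/\sqrt{2a_z f_\rho}$ used to identify $\C/\Omega'_-(\Lambda_-)\cong\C/\mathfrak{b}_z=[E_z]_\C$.

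Under this homothety, the subgroup $iReps\cong\Omega'_-(\Lambda_-^\vee)/\Omega'_-(\Lambda_-)\subset\C/\Omega'_-(\Lambda_-)$ is carried to $\lambda\,\mathfrak{b}_z/\mathfrak{b}_z\subset\C/\mathfrak{b}_z$ with
\begin{align*}
\lambda\;=\;\frac{-i\sqrt{2a_z/(f_\rho D_z)}}{\sqrt{2a_z f_\rho}}
\;=\;\frac{-i}{f_\rho\sqrt{D_z}}
\;=\;\bigl(f_\rho\sqrt{D_z}\,i\bigr)^{-1},
\end{align*}
where the cancellation of $\sqrt{2a_z}$ and the rearrangement $-i/x=(xi)^{-1}$ are the only arithmetic steps. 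This gives the asserted identification $iReps\cong (f_\rho\sqrt{D_z}\,i)^{-1}\mathfrak{b}_z/\mathfrak{b}_z$. The only place I would be careful is the consistent normalization of the homothety and the sign of $i$ in $\Omega'_-$; there is no real obstacle beyond this, and the claim reduces to the displayed scalar identity.
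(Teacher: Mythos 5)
Your proof is correct and follows the same route as the paper's: the paper likewise reduces the claim to the observation (already contained in Statement \ref{statmnt:iReps-as-torsion-pts}) that $\Omega'_-(\Lambda_-^\vee)$ is carried to $(i f_\rho\sqrt{D_z})^{-1}\mathfrak{b}_z$ once $\Omega'_-(\Lambda_-)$ is identified with $\mathfrak{b}_z$, and your explicit scalar computation of $\lambda$ just spells out what the paper calls "obvious." The properness claim is indeed disposed of by the remark in section \ref{ssec:string} that the multiplier ring of $\mathfrak{b}_z$ is exactly ${\cal O}_{f_z}$.
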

\begin{proof}: In the Statement \ref{statmnt:iReps-as-torsion-pts}, we have 
almost already argued that $\Omega'_-(\Lambda_-^\vee)$ is mapped into 
$(1/(if_\rho \sqrt{D_z})) \times \mathfrak{b}_z$, when $\Omega'_-(\Lambda_-)$ 
is mapped to $\mathfrak{b}_z$. The claim of this Lemma is then obvious. 
\end{proof}

\begin{proof} of Proposition \ref{props:cUnderline-formula}: 
it is easy to understand that $(f_\rho \sqrt{D_z}i)_{{\cal O}_{f_z}}$ is 
contained in the ideal $\underline{\mathfrak{c}}$ of ${\cal O}_{f_z}$. 
At least by resorting to the honest-to-definition study of the subset 
$\underline{\mathfrak{c}} \subset {\cal O}_{f_z}$, we have verified that 
$\underline{\mathfrak{c}}$ is indeed equal to 
$(f_\rho \sqrt{D_z}i)_{{\cal O}_{f_z}}$. 
\end{proof}

\begin{rmk}
\label{rmk:OK-CMgrp-ideal}
In the case $f_z=1$, so ${\cal O}_{f_z}={\cal O}_K$, there is a 
relation\footnote{The different $\mathfrak{d}_{K/\Q}$ is known to be 
equal to the principal integral ideal $(\sqrt{D_K})_{{\cal O}_K}$ for 
an imaginary quadratic field $K$.} 
$\underline{\mathfrak{c}}=(f_\rho \sqrt{D_K})_{{\cal O}_K} = 
f_\rho \mathfrak{d}_{K/\Q}$. It then follows that 
\begin{align}
  [{\cal O}_K: \underline{\mathfrak{c}}]
 = {\rm Nm}_{K/\Q} (\underline{\mathfrak{c}}) 
 = f_\rho^2 |D_K|.
\end{align}
One will also notice that 
\begin{align}
  N_{D\Lambda_-}^2 = |D_K|^2 f_\rho^2 
    = |D_K| \cdot {\rm Nm}_{K/\Q}(\underline{\mathfrak{c}}).
\end{align}
\end{rmk}

\subsubsection{Decomposition of $F([z],f_\rho)$ under the Action 
of the CM Group}

\begin{anythng}
\label{statmnt:fusn-alg-actn}
Think of a model of diagonal rational CFT, and let $A_-$ be the set 
of irreducible representations of the chiral algebra $(V_-,Y_-)$. Let 
$\Z[A_-]$ be the fusion algebra; the generators of $\Z[A_-]$ are 
denoted by $\{ e_\beta \; | \; \beta \in A_-\}$.

The symmetry group of the fusion algebra, ${\rm Aut}(\Z[A_-])$, 
acts on the vector space of conformal blocks. 
Let $[f_{\{\alpha\}}] = \{ [f_{\{\alpha\}}]_{\{\beta\}} \; 
| \; \{ \beta \} \in (A_-)^{3g-3+n} \}$ be a $(g,n)$ conformal block 
for a choice of irreducible representations $\{\alpha \} = 
\{ \alpha_1, \cdots, \alpha_n \} \in (A_-)^n$ of the $n$ external lines 
(vertex operator insertions); the $|A_-|^{(3g-3+n)}$ components are for 
different choices of irreducible representations $\beta$ at $(3g-3+n)$ 
sewing loci of a pants decomposition of a $(g,n)$ pointed Riemann surface. 
The group ${\rm Aut}(\Z[A_-])$ acts on the set $(A_-)^n$ naturally, 
where $\{ \alpha \} = \{0,0,\cdots \} \in (A_-)^n$ forms a trivial orbit. 
We consider the action of ${\rm Aut}(\Z[A_-])$ on the vector space 
of $(g,n)$ conformal blocks with $\{ \alpha \} = \{ 0 \}$ in the following, 
given by 
\begin{align}
   g: \;   [f_{\{0\}}]_{\{ \beta_1, \cdots, \beta_{3g-3+n}\}} \mapsto 
     [f_{\{0\}}]_{\{ g^{-1} \cdot \beta_1, \cdots, g^{-1} \cdot \beta_{3g-3+n}\}} ,
  \qquad \quad g \in {\rm Aut}(\Z[A_-]).
  \label{eq:autZG-acton-confblcks}
\end{align}

Suppose that $G_0$ is a symmetry group of the vertex operator algebra 
$(V_-,Y_-)$; a symmetry operation $g_0 \in G_0$ induces a map 
$A_- \rightarrow A_-$, $\beta \mapsto g_0 \cdot \beta$, in a way 
the vertex operators for states $v \in V_{\beta = 0}$ satisfy 
\begin{align}
  Y( g_0 \cdot v; u(p)) = [g_0] \cdot Y(v; u(p)) \cdot [g_0]^{-1} 
\end{align}
for $[g_0] \in \oplus_{\beta \in A_-} {\rm Hom}_{(V_-,Y_-)}(V_{\beta},V_{g_0 \cdot \beta})$.
So, we have a homomorphism $G_0 \rightarrow {\rm Aut}(\Z[A_-])$.
When we choose a set of $n$ representations $\{ \rho_{1,\cdots,n}\}$ 
of $G_0$ where each $\rho_i$ is one of irreducible components of 
$G_0$ on $V_-$, the chiral correlation functions 
\begin{align}
   f_{\otimes_i v_i} :=  \langle f_{\{ 0\}}, \;  \otimes_{i=1}^n v_i \rangle
   \label{eq:reltn-ccf-confBlock}
\end{align}
of $n$ states $\{ v_{i=1,\cdots,n} \}$ in $\rho_{i=1,\cdots,n}$ inserted 
at $n$ points transform under $G_0$ as 
\begin{align}
  [f_{\otimes_i v_i}]_{\{ g_0^{-1} \cdot \beta\} } = [f_{\otimes_i v_i}]_{\{ \beta\}} \cdot 
   \left( \otimes_{i=1}^n \rho_i(g_0) \right).
     \label{eq:gn-confBlock-relatn-gen}
\end{align}

Now let us assume that the group $G_0$ is abelian, and choose an abelian 
subgroup $G$ of ${\rm Aut}(\Z[A_-])$ that contains $G_0$. The representations
$\rho_{i=1,\cdots, n}$ of $G_0$ are all 1-dimensional then. For a chosen 
$\{ v_{i=1,\cdots, n}\}$, consider the representation of the group $G$ on the 
vector space of chiral correlation functions 
${\rm Span}_\C \{ [f_{\otimes_i v_i}]_{\{ \beta\}} \; | \; 
  \{ \beta \} \in (A_-)^{3g-3+n} \}$,'' where on which $g \in G$ acts 
via (\ref{eq:autZG-acton-confblcks}). Some of those chiral correlation 
functions are trivially zero, 
\begin{align}
  [f_{ \otimes_i v_i}]_{\{\beta\}} = 0, \qquad \{ \beta\} \in 
       (A_-)^{3g-3+n} \; \backslash \; (A_-)^{3g-3+n}_*;
\end{align}
those with $\{ \beta \} \in (A_-)^{3g-3+n}_*$ are subject to the relation 
(\ref{eq:gn-confBlock-relatn-gen}).
Thus the vector space ${\rm Span}_\C \{ [f_{\otimes_i v_i}]_{\{ \beta\}} \; |
 \; \{ \beta \} \in (A_-)^{3g-3+n} \}$ is obtained from the quotient vector 
space 
\begin{align}
 \C_{(A_--)^{3g-3+n}}  \rightarrow  \C_{(A_-)^{3g-3+n}_*/G_0},
  \label{eq:vct-spce-qnt-by-G0}
\end{align}
where $\C_{(set)}:= {\rm Span}_\C \{ e_x \; | \; x \in (set)\}$ is a formal 
vector space over $\C$ generated by formal independent basis elements $e_x$ 
that are in one-to-one with the elements of the $(set)$. 
Restriction of the representation of $G$ on this quotient vector space 
to $G_0$ reproduces the representation $\otimes_{i=1}^n \rho_i$, 
with the multiplicity $|(A_-)^{3g-3+n}_*/G_0|$.    $\bullet$
\end{anythng}

Now, let us go back to the cases of our interest in this article. 
We consider models of rational CFT for $([E_z], f_\rho)$ in bosonic 
string theory (a $T^2$-target model) and the corresponding version 
in Type II string theory.
\begin{anythng}
In the model of rational CFT and in the rational model of ${\cal N}=(2,2)$ 
SCFT for a set of data $([E_z]_\C, f_\rho)$, the vertex operator algebra 
$(V_-,Y_-)$ has a symmetry group $G_0 = {\rm Aut}([E_z]_\C) \cong 
{\cal O}_{f_z}^\times$. Each one of the CM group 
$[{\cal O}_{f_z}/\underline{\mathfrak{c}}]^\times$ elements induces an 
automorphism of the fusion algebra $\Z[iReps]$, and hence the CM group 
is contained\footnote{
The fusion algebra here is the group algebra $\Z[iReps]$ of $iReps$ over $\Z$
where $iReps \cong \Lambda^\vee/\Lambda$ is an abelian group.  We know
(cf. \cite[p.198, Prop 36.1]{Sehgal}) that  the
automorphism group of the fusion algebra is isomorphic to the
automorphism group of the abelian group $iReps$.   Thus, the CM group
can be regarded as a subgroup of the automorphism group of the fusion
algebra. It is in general merely a proper subgroup; not all
automorphisms of the fusion algebra can be regarded as elements of the
CM group.
}
 in ${\rm Aut}(\Z[iReps])$ as a subgroup.\footnote{
Reference \cite{Fuchs-Galois-AutoFusionAlg}
identifies a subgroup of ${\rm Aut}(\Z[A_-])$ of a model of rational CFT
in the following way. The Galois action on the monodromy representation 
matrices (as reviewed in \ref{statmnt:GT-quotient}) 
 induces permutation on 
$A_-$ where $0 \in A_-$ (the vacuum repr) is mapped to itself. 
Such a Galois action is a symmetry of the fusion algebra and the charge 
conjugation combined \cite{Fuchs-Galois-AutoFusionAlg}. 
The authors of the present article are not ready to state the relation 
between this subgroup of ${\rm Aut}(\Z[A_-])$ and the CM group in the 
case of $T^2$-target models. 
} 
So, the CM group plays the role of the group $G$ in the discussion above.

Consider the vector space generated by the ``$(g,n)=(1,2)$ chiral 
correlation functions.'' We set $\{\alpha \} = \{0,0\}$ as in 
\ref{statmnt:fusn-alg-actn}; this is because the two vertex operators 
$(\partial_u X^\C)$ and $J_-$ both belong to $0 \in iReps$. 
The ``$(1,2)$ chiral correlation functions'' are identical to zero, 
if $\{\beta_1, \beta_2\}$ does not satisfy $\beta_1 =\beta_2$. So, 
$(iReps)^{3g-3+n}_* = G_{\Lambda_-}^* \subset G_{\Lambda_-}$.

The representation $\rho$ of $G_0={\rm Aut}([E_z]_\C)$ is $(\rho^1)^{-1}$ 
and trivial for the states $(\partial_u X^\C)$ and $J_-$, respectively. 
The relation (\ref{eq:fII-AutE-relatn}) in the model for 
$([E_z]_\C, f_\rho)$ can be read as a special case of 
(\ref{eq:gn-confBlock-relatn-gen}).
As a special case of Statement \ref{statmnt:fusn-alg-actn}, the CM group 
$G = [{\cal O}_{f_z}/\underline{\mathfrak{c}}]^\times$ acts on the 
vector space of ``the $(g,n)=(1,2)$ chiral correlation functions''\footnote{
The corresponding discussion in our previous paper \cite{prev.paper} 
exploited too much a non-canonical embedding of 
$\Lambda_{\rm winding} \cong \Lambda_-$ into a sublattice of ${\cal O}_K$; 
in fact, $\Lambda_{\rm winding}$ and $\Lambda_{\rm Cardy} \cong \Lambda_-^\vee$ 
are the modules of the order ${\cal O}_K$, which may well be rescaled to 
be within ${\cal O}_K$, but it is not mandatory or essential to do so. 
In the presentation here, it is clear that the complex multiplication 
operations ${\cal O}_{f_z}$ [resp. 
$[{\cal O}_{f_z}/\underline{\mathfrak{c}}]^\times$] forms a ring [resp. a group];
lattices $\Lambda_-$ and $\Lambda_-^\vee$ [resp. $iReps \cong 
\Lambda_-^\vee/\Lambda_-$] are modules/lattices [resp. is a set] on which 
the ring [resp. the group] acts on. We do not think of multiplication 
between elements of $\Lambda_-^\vee$ or $iReps$. We were, in fact, a bit 
confused about this in our previous paper.
} 
$\C_{G_{\Lambda_-}^*/G_0}$; when the representation 
is restricted from the group $G$ to $G_0 = {\rm Aut}([E_z]_\C)$, 
we have just $|G_{\Lambda_-}^*/G_0|$ copy of the representation $(\rho^1)^{-1}$.

We will see in Thm. \ref{thm:inj-surj-CMform} that the homomorphism 
$\C_{G_{\Lambda_-}^*/{\rm Aut}([E_z]_\C)} \rightarrow F([z],f_\rho)$ is not 
necessarily injective; ``the $(g,n)=(1,2)$ chiral correlation functions'' 
may still be subject to linear relations not captured in the 
quotient (\ref{eq:vct-spce-qnt-by-G0}). 
The representation space $\C_{G_{\Lambda_-}^*/G_0}$ may have multiple copies 
of the same representation of the CM group, and this redundancy is 
just gone on the representation space $F([z],f_\rho)$.
Because the CM group 
$[{\cal O}_{f_z}/\underline{\mathfrak{c}}]^\times$ is abelian, the representation 
space $F([z],f_\rho)$ can be decomposed into subspaces 
\begin{align}
  F([z],f_\rho) \cong
  \oplus_{\chi_f^{-1} \in {\rm Char}([{\cal O}_{f_z}/\underline{\mathfrak{c}}]^\times)}
      F([z],f_\rho)^{\chi_f^{-1}}
\label{eq:decomp-Fzfrho-FzfrhoChiF}
\end{align}
where each subspace is the representation space (allowing 
multiplicity larger than 1) of a character $\chi_f^{-1}$ of 
the group $[{\cal O}_{f_z}/\underline{\mathfrak{c}}]^\times$.   $\bullet$
\end{anythng}
%

\subsubsection{Interpretation}

\begin{anythng}
Before moving on to the sections, we discuss in the following, what the 
complex multiplication operation in 
$[{\cal O}_{f_z}/\underline{\mathfrak{c}}]^\times$ 
does on the conformal blocks (and the states in the models of rational CFT) 
to the extent that we can at this moment.
\end{anythng}

\begin{rmk}
The most solid interpretation on the complex multiplication operation 
\begin{align}
[{\cal O}_{f_z}/\underline{\mathfrak{c}}]^\times \ni [\alpha] & \; : 
   iReps 
    \rightarrow iReps    
    \\
 [\alpha] & \; : \C_{G_\Lambda^* /{\rm Aut}([E_z]_\C)}
   \rightarrow \C_{G_\Lambda^* / {\rm Aut}([E_z]_\C)}, 
\end{align}
we have at the moment is that it is a map from $iReps$ to $iReps$ in the 
open string language. Here, we see $iReps = \Lambda_-^\vee/\Lambda_- \cong 
\Lambda_{\rm Cardy}/\Lambda_{\rm winding}$ as the D0-brane Cardy states in a model 
of diagonal rational CFT \cite{GV}. 
The complex multiplication operation $[\alpha]$ induces a
1-to-1 exchange among the Cardy states. The Cardy states are labeled by 
the finite group $\Lambda_{\rm Cardy} / \Lambda_{\rm winding} \subset 
\C/\Lambda_{\rm winding} = [E_z]_\C$ of torsion points of the CM elliptic curve 
$[E_z]_\C$. So, mathematicians are familiar with this action of the complex 
multiplication operation as that on a finite quotient of 
$\otimes_{\ell \in {\rm Spec}(\Z)} {\rm Tate}_\ell(E)$, where 
${\rm Tate}_\ell(E) \cong \Z_\ell \oplus \Z_\ell$ is the Tate module for 
a rational prime integer $\ell$ associated with an elliptic curve 
$[E_z]_\C = \C/\Lambda_{\rm winding}$. When a data $([E_z]_\C, f_\rho)$ is given 
and fixed, then the finite group of torsion points 
$\Lambda_{\rm Cardy}/\Lambda_{\rm winding}$ is fixed; when we scan $f_\rho$ 
while $[E_z]_\C$ fixed, then multiple different finite quotients of 
$\otimes_\ell {\rm Tate}_\ell(E)$ are obtained as $iReps$' for such 
a class of models of rational CFT. 
\end{rmk}

The interpretation of the complex multiplication operation as permutation 
of the Cardy states is well-defined and solid, as above;  here, we will 
now try to see if we can interpret the complex multiplication operation 
as a homomorphism on the Hilbert space of the corresponding model(s) 
of rational CFT. There will be multiple versions of defining such a 
homomorphism, but one version is presented in the following. 

Let $[E_z]_\C$ be an elliptic curve with complex multiplication, and 
$A \in {\rm End}([E_z]_\C) = {\cal O}_{f_z}$. It defines a linear map 
$A: H_1([E_z]_\C; \Z) \rightarrow H_1([E_z]_\C; \Z)$. Its dual isogeny 
$A^\vee \in {\rm End}([E_z]_\C)$ (so $A^\vee \cdot A = A \cdot A^\vee = 
{\rm deg}(A) \cdot 1$ in ${\rm End}([E_z]_\C)$) also defines a linear map 
$(A^\vee)^*: H^1([E_z]_\C; \Z) \rightarrow H^1([E_z]_\C; \Z)$. 

Now, think of the model of rational CFT for a set of data $([E_z]_\C, f_\rho)$ 
for some $f_\rho  \in \N$. 
The linear map
\begin{align}
 A \oplus (A^\vee)^*: 
   H_1([E_z]_\C; \Z) \oplus H^1([E_z]_\C;\Z)
    \longrightarrow H_1([E_z]_\C; \Z) \oplus H^1([E_z]_\C;\Z)
\end{align}
can be extended to the entire Hilbert space of the model of rational CFT by 
having $(A \oplus (A^\vee)^*)$ act only on the U(1) charges while keeping
stringy oscillator excitations intact. 

The endomorphism $(A \oplus (A^\vee)^*)$ on the Hilbert space of the 
model of rational CFT is not one-to-one; $(A \oplus (A^\vee)^*) \in 
{\rm End}({\rm II}_{2,2})$ maps ${\rm II}_{2,2}$ to its index ${\rm deg}(A)^2$
subgroup. Moreover, 
it is a homomorphism of the abelian group ${\rm II}_{2,2} = 
H_1([E_z]_\C;\Z) \oplus H^1([E_z]_\C; \Z)$, but violates the intersection 
form of the lattice ${\rm II}_{2,2}$ by a scalar factor ${\rm deg}(A)$.
The homomorphism $(A\oplus (A^\vee)^*): {\rm II}_{2,2} \rightarrow 
{\rm II}_{2,2}$ maps $\Lambda_{\mp}$ to themselves, and $\Lambda_{\mp}^\vee$ 
also to themselves, where the image is an index ${\rm deg}(A)$ subgroup. 
The complex multiplication operation
$(A + \underline{\mathfrak{c}}) \in {\cal O}_{f_z}/\underline{\mathfrak{c}}$
on $iReps \cong \Lambda^\vee_-/\Lambda_-$ in section \ref{sssec:CM-group}
agrees with the map $(A \oplus (A^\vee)^*): \Lambda^\vee_-/\Lambda_- 
\rightarrow \Lambda_-^\vee/\Lambda_-$. 

\subsection{CM Group Character and Nebentypus}

We have seen in \ref{statmnt:Fzfrho-is-in-S2GammaN} that the vector space 
$F([z],f_\rho)$ of the rational model of ${\cal N}=(2,2)$ SCFT for a set 
of data $([z],f_\rho)$ is a subspace of $S_2(\Gamma(N_{D\Lambda}))$. On the other hand, the vector space 
$S_2(\Gamma(N))$ has a decomposition (\ref{eq:SGamma--SGamma00-decomp}).  
In the following, we will see that the 
decomposition (\ref{eq:decomp-Fzfrho-FzfrhoChiF}) of the vector space 
$F([z],f_\rho)$ under the action of the CM group is compatible with the 
decomposition (\ref{eq:SGamma--SGamma00-decomp}) of $S_2(\Gamma(N_{D\Lambda}))$ 
with respect to the nebentypus 
$\chi_{N_{D\Lambda}} \in {\rm Char}([\Z/(N_{D\Lambda})]^\times)$.  

In the rest of this article, the linear maps $\Omega'_{\pm}$ and 
$\Omega_{\pm}$, and the lattices $\Lambda_{\pm}$ are denoted by 
$\Omega'$, $\Omega$, and $\Lambda$, respectively. 

\begin{props}
\label{props:nebentypus-formula-CCF}
The vector space $F([z],f_\rho)^{\chi_f^{-1}}$ for $\chi_f^{-1} \in 
{\rm Char}([{\cal O}_{f_z}/\underline{\mathfrak{c}}]^\times)$ is in the 
component $S_2(\Gamma_0^0(N_{D\Lambda}),\chi_{N_{D\Lambda}}[\chi_f])$ where the 
nebentypus $\chi_{N_{D\Lambda}} \in {\rm Char}([\Z/(N_{D\Lambda})]^\times)$ 
for $\chi_f^{-1}$---denoted by $\chi_{N_{D\Lambda}}[\chi_f]$---is given by 
\begin{align}
 \chi_{N_{D\Lambda}}[\chi_f] = \epsilon_{D\Lambda}^{-1} \cdot \underline{\chi}_f^{-1} 
   = \left( \frac{D_K}{-} \right) \underline{\chi}_f^{-1}.
\end{align}
Both $\epsilon_{D\Lambda}^{-1}$ and $\underline{\chi}_f$ are characters 
of the multiplicative group $[\Z/(N_{D\Lambda})]^\times$; 
$\epsilon_{D\Lambda}^{-1}$ has been introduced in 
Lemma \ref{lemma:Weil-repr-kernel}, while 
$\underline{\chi}_f^{-1} := \chi_f^{-1} \cdot i$, where\footnote{
Note that $\underline{\mathfrak{c}} \; | \; (N_{D\Lambda})_{{\cal O}_{f_z}}$ 
(and hence this homomorphism $i: [\Z/(N_{D\Lambda})]^\times \rightarrow 
[{\cal O}_{f_z}/\underline{\mathfrak{c}}]^\times$ is well-defined). 
This is because the former is $\underline{\mathfrak{c}} = 
(f_\rho\sqrt{D_z}i)_{{\cal O}_{f_z}}$, and the latter is $(N_{D\Lambda})_{{\cal O}_{f_z}} 
= (f_\rho D_z)_{{\cal O}_{f_z}}$; finally, 
$\sqrt{D_z}i = f_z \sqrt{D_K} \in {\cal O}_{f_z}$.}  
$i: [\Z/(N_{D\Lambda})]^\times \rightarrow [{\cal O}_{f_z}/\underline{\mathfrak{c}}]^\times$
is to regard multiplication of an integer (modulo $N_{D\Lambda}$) as 
a complex multiplication. 
\end{props}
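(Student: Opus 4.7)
The plan is to track how an element $\gamma=[a,b;c,d]\in\Gamma_0^0(N_{D\Lambda})$ acts on the generating set $\{f_{1\Omega'}^{\rm II}(\tau_{ws};\beta)\}_{\beta\in iReps}$ of $F([z],f_\rho)$ via the slash operator $|[\gamma]_2$, and then restrict this action to the isotypic piece $F([z],f_\rho)^{\chi_f^{-1}}$ of the CM group $[{\cal O}_{f_z}/\underline{\mathfrak{c}}]^\times$. First I would combine Prop.\ \ref{props:chi-corrl-II-lift} (which says the monodromy of the weight-$2$ form $d\tau_{ws}\,f_{1\Omega'}^{\rm II}$ under $\mathrm{SL}(2;\Z)$ is exactly $\rho_{D\Lambda}^{\rm Weil}$) with Lemma \ref{lemma:Weil-repr-kernel} applied to $L=\Lambda$ and $N=N_{D\Lambda}$. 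Since $\gamma\in\Gamma_0^0(N_{D\Lambda})$, Lemma \ref{lemma:Weil-repr-kernel} gives $\gamma\cdot e_\beta=\epsilon_{D\Lambda}^{-1}(d)\,e_{d\cdot\beta}$ on $\C[G_\Lambda]$. Translating this to the chiral correlation functions (the factor $(c\tau+d)^{-2}$ in $|[\gamma]_2$ is absorbed by $d(\gamma\cdot\tau)=(c\tau+d)^{-2}d\tau$), one obtains
\begin{align}
f_{1\Omega'}^{\rm II}(\tau_{ws};\beta)\,\big|[\gamma]_2 \; = \; \epsilon_{D\Lambda}^{-1}(d)\;f_{1\Omega'}^{\rm II}(\tau_{ws};d\cdot\beta).
\end{align}

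Next I would identify the map $\beta\mapsto d\cdot\beta$ on $iReps$ with the CM-group action of $i(d)\in[{\cal O}_{f_z}/\underline{\mathfrak{c}}]^\times$. This is immediate from Statement \ref{statmnt:iReps-as-torsion-pts}: under the identification $iReps\cong(f_\rho\sqrt{D_z}i)^{-1}\mathfrak{b}_z/\mathfrak{b}_z\subset\C/\mathfrak{b}_z$, scalar multiplication by an integer $d$ (in the abelian group $G_\Lambda$) coincides with complex multiplication by $d\in{\cal O}_{f_z}$ modulo $\underline{\mathfrak{c}}$, which is precisely the definition of the homomorphism $i:[\Z/(N_{D\Lambda})]^\times\to[{\cal O}_{f_z}/\underline{\mathfrak{c}}]^\times$. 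Consequently, the action of $|[\gamma]_2$ on the basis $\{f_{1\Omega'}^{\rm II}(\tau_{ws};\beta)\}_\beta$ factors as the scalar $\epsilon_{D\Lambda}^{-1}(d)$ times the permutation induced by $i(d)$ acting on $iReps$.

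For $v\in F([z],f_\rho)^{\chi_f^{-1}}$, the CM element $i(d)$ acts by $\chi_f^{-1}(i(d))=\underline{\chi}_f^{-1}(d)$ (by definition of $\underline{\chi}_f^{-1}$). Combining with the scalar $\epsilon_{D\Lambda}^{-1}(d)$ yields
\begin{align}
v\,\big|[\gamma]_2 \; = \; \epsilon_{D\Lambda}^{-1}(d)\;\underline{\chi}_f^{-1}(d)\;v,
\end{align}
so $v$ lies in $S_2(\Gamma_0^0(N_{D\Lambda}),\,\epsilon_{D\Lambda}^{-1}\cdot\underline{\chi}_f^{-1})$. The last equality $\epsilon_{D\Lambda}^{-1}=(D_K/-)$ is Lemma \ref{lemma:epsiln=DK}. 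The principal obstacle I anticipate is purely bookkeeping: one has to check the sign/conventions for $\gamma$ versus $\gamma^{-1}$ in passing from ``monodromy of $d\tau_{ws}f_{1\Omega'}^{\rm II}$ is $\rho_{D\Lambda}^{\rm Weil}$'' to the slash-operator formula, and verify that the CM-eigenspace labeling $\chi_f^{-1}$ in (\ref{eq:decomp-Fzfrho-FzfrhoChiF}) really lines up with $\underline{\chi}_f^{-1}(d)$ (rather than $\underline{\chi}_f(d)$) on the right-hand side above; this is a matter of chasing the conventions through (\ref{eq:fII-AutE-relatn}) and the discussion around (\ref{eq:autZG-acton-confblcks})--(\ref{eq:gn-confBlock-relatn-gen}), but no deep input is needed beyond what has already been set up.
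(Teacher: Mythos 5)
Your proposal is correct and follows essentially the same route as the paper: apply Lemma \ref{lemma:Weil-repr-kernel} (via Prop.\ \ref{props:chi-corrl-II-lift}) to get the $\Gamma_0^0(N_{D\Lambda})$ action as $\epsilon_{D\Lambda}^{-1}$ times the permutation of $iReps$ by an integer, identify that permutation with the CM-group element $i(d)$, and read off the character on the $\chi_f^{-1}$-eigenspace. The convention issue you flag ($\underline{\chi}_f^{-1}(d)$ versus $\underline{\chi}_f(d)$, and $a$ versus $d$ in the Weil-representation formula) is exactly what the paper settles by computing directly with the explicit averaged combinations (\ref{eq:lin-comb-Fzfrho-chiF}) and re-indexing the sum over the CM group, so nothing beyond that bookkeeping is missing.
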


\begin{proof}: The vector space $F([z],f_\rho)^{\chi_f^{-1}}$ is generated by 
the linear combinations of the ``$(g,n)=(1,2)$ chiral correlation functions'' 
of the form 
\begin{align}
  f_{1\Omega'}^{\rm II}(\tau_{ws}; [\beta_0]; \chi_f)
   := \frac{1}{|{\rm Aut}([E_z]_\C)|
    |[{\cal O}_{f_z}/\underline{\mathfrak{c}}]^\times_{\beta_0}|}
    \sum_{g' \in [{\cal O}_{f_z}/\underline{\mathfrak{c}}]^\times} 
     \chi_f(g') \; f_{1\Omega'}^{\rm II}(\tau_{ws}; (g')^{-1} \cdot \beta_0)
  \label{eq:lin-comb-Fzfrho-chiF}
\end{align}
for $\beta_0$'s in $G_\Lambda^*$. Here, $[{\cal O}_{f_z}/
\underline{\mathfrak{c}}]^\times_{\beta_0}$ is the isotropy subgroup of 
the CM group at $\beta_0 \in iReps$.
To see this, one just has to note that 
\begin{align}
  g \cdot f_{1\Omega'}^{\rm II}([\beta_0]; \chi_f) & \; = 
  \frac{1}{ |{\rm Aut}([E_z]_\C)| 
            |[{\cal O}_{f_z}/\underline{\mathfrak{c}}]^\times_{\beta_0}|} 
    \sum_{g' \in [{\cal O}_{f_z}/\underline{\mathfrak{c}}]^\times}
      \chi_f(g') \; f_{1\Omega'}^{\rm II}(g^{-1}\cdot (g')^{-1}\cdot \beta_0)
     \nonumber \\
& \; = \frac{\chi_f^{-1}(g)}{ |{\rm Aut}([E_z]_\C)| 
               |[{\cal O}_{f_z}/\underline{\mathfrak{c}}]^\times_{\beta_0}| } 
    \sum_{g' \in [{\cal O}_{f_z}/\underline{\mathfrak{c}}]^\times}
      \chi_f(g'g) \; f_{1\Omega'}^{\rm II}(g^{-1}\cdot (g')^{-1}\cdot \beta_0)
     \nonumber \\
& \; = \chi_f^{-1}(g) \; f_{1\Omega'}^{\rm II}([\beta_0]; \chi_f).
\end{align}
The action of $g \in [{\cal O}_{f_z}/\underline{\mathfrak{c}}]^\times$ 
on $f_{1\Omega'}^{\rm II}$'s is that of (\ref{eq:autZG-acton-confblcks}, 
\ref{eq:reltn-ccf-confBlock}).

Now, it is straightforward to see, 
for $A = [a,b; c,d] \in \Gamma_0^0(N_{D\Lambda})$, that 
\begin{align}
 f_{1\Omega'}^{\rm II}([\beta_0]; \chi_f)|[A]_2 & \; =
   \frac{1}{ |{\rm Aut}([E_z]_\C)| 
             |[{\cal O}_{f_z}/\underline{\mathfrak{c}}]^\times_{\beta_0}| }
    \sum_{g' \in [{\cal O}_{f_z}/\underline{\mathfrak{c}}]^\times}
        \chi_f(g') \; f_{1\Omega'}^{\rm II}((g')^{-1}\cdot \beta_0)|[A]_2
      \nonumber \\
  & \; = \frac{\epsilon_{D\Lambda[-1]}^{-1}(a)}{|{\rm Aut}([E_z]_\C)|
            |[{\cal O}_{f_z}/\underline{\mathfrak{c}}]^\times_{\beta_0}| }
    \sum_{g' \in [{\cal O}_{f_z}/\underline{\mathfrak{c}}]^\times}
        \chi_f(g') \; f_{1\Omega'}^{\rm II}(a \cdot (g')^{-1}\cdot \beta_0)
      \nonumber \\
  & \; = \epsilon_{D\Lambda}^{-1}(d) \underline{\chi}_f^{-1}(d) \; 
      f_{1\Omega'}^{\rm II}([\beta_0]; \chi_f); 
\end{align}
we have used Prop. \ref{props:chi-corrl-II-lift}, 
eq. (\ref{eq:chi-corrl-fcn-4-L1}), and Lemma \ref{lemma:Weil-repr-kernel},  
in the second equality above. 
\end{proof}

\subsubsection{The Example $S_2(\Gamma(8))$ Once Again}
\label{sssec:exmpl-S2-Gamma8-followup}

Let us revisit the examples we worked on in section \ref{sssec:easiest-ex}
to see that linear combinations (\ref{eq:lin-comb-Fzfrho-chiF}) 
diagonalizing the CM group action correspond indeed to Hecke newforms and 
their images in the vector space of oldforms. The 5-dimensional vector space
$S_2(\Gamma(8)) \cong S_2(\Gamma_0^0(8),\chi_8(0,0)) \oplus 
S_2(\Gamma_0^0(8),\chi_8(0,1))$ is generated by ``the chiral correlation 
functions'' of the model with the data $(j,f_\rho)=(1728,2)$ and the model 
with the data $(j,f_\rho)=(20^3,1)$; the two models have their own 
CM groups, which act on the 3-dimensional $F([i],2) \cong 
S_2(\Gamma_0^0(8),\chi_8(0,0))$ and the 2-dimensional $F([\sqrt{2}i],1) \cong 
S_2(\Gamma_0^0(8),\chi_8(0,1))$ separately. So, we study the action of the 
CM group on $F([i],2)$ and $F([\sqrt{2}i],1)$ separately. 

\begin{exmpl}
\label{exmpl:Dk4-fz1-N64-followup}
In the model of rational CFT associated with the data $(j(z),f_\rho)=(1728,2)$, 
$K=\Q(\sqrt{-1})$, and 
$\underline{\mathfrak{c}}([E_{z=i}]_\C,2)=(4)_{{\cal O}_K}$.
$[{\cal O}_K/\underline{\mathfrak{c}}]^\times \cong \Z/(2) \times \Z/(4)$, 
where the first factor is generated by an order-2 element $(3+2i)$ mod 
$\underline{\mathfrak{c}}$ and the second factor by an order-4 element $(i)$ 
mod $\underline{\mathfrak{c}}$. The ${\rm Aut}([E_z]_\C) \cong \Z/(4)$ 
subgroup of $[{\cal O}_K/\underline{\mathfrak{c}}]^\times$ is precisely 
the $\Z/(4)$ factor generated by $[(i)\times]$, the multiplication of $i$. 

The CM group $[{\cal O}_K/\underline{\mathfrak{c}}]^\times$ acts on the 
vector space $F([z],f_\rho)$ in the way the restriction 
of the characters $\chi_f^{-1}$ on ${\rm Aut}([E_z]_\C) \subset 
[{\cal O}_K/\underline{\mathfrak{c}}]^\times$ is $(\rho^1)^{-1}$. 
Let us parametrize the characters $\chi_f$ of the CM group by 
$\chi_f(a,b)$ with $a \in \Z/(2)$ and $b \in \Z/(4)$; 
$\chi_f: [(3+2i)\times] \mapsto (-1)^a$ and $\chi_f: [(i)\times] \mapsto i^b$. 
The condition $(\chi_f^{-1})|_{{\rm Aut}([E_z]_\C)} = (\rho^1)^{-1}$
leaves $b = 1 \in \Z/(4)$, and ${}^\forall a \in \Z/(2)$. 
The conductor is $\mathfrak{c}_f = (4)_{{\cal O}_K}$ if $a=1 \in \Z/(2)$, 
and $\mathfrak{c}_f = (2+2i)_{{\cal O}_K}$ if $a = 0 \in \Z/(2)$. 

Let us work out the nebentypus of the lift of chiral correlation functions
by using Prop \ref{props:nebentypus-formula-CCF}. First, 
$\epsilon_{D\Lambda}^{-1}$ on $[\Z/(N_{D\Lambda})]^\times$ is 
$\epsilon_{D\Lambda}^{-1} = (D_K/-) = \chi_8(1,0)$ (recall that $N_{D\Lambda}=8$ in 
this example) in the convention of Example \ref{exmpl:Dk4-fz1-N64}. 
Secondly, the character $\underline{\chi_f} = \chi_8(1,0)$
when $\chi_f=\chi_f(a,1)$, regardless of $a \in \Z/(2)$; this is because the 
homomorphism $i: [\Z/(8)]^\times = [\Z/(N_{D\Lambda})]^\times \rightarrow 
[{\cal O}_{f_z}/\underline{\mathfrak{c}}]^\times = \Z/(2) \times \Z/(4)$ 
is given by $i: [7] \mapsto [(i)\times]^2$ and $i:[5] \mapsto [(1)\times]$.
So, the nebentypus is $\chi_{N_{D\Lambda}}[\chi_f(a,1)] = 
\chi_8(1,0)\cdot \chi_8(1,0)^{-1} = \chi_f(0,0)$ regardless of $a \in \Z/(2)$.
Therefore, we see by applying Prop. \ref{props:nebentypus-formula-CCF} that 
all of $F([z=i],f_\rho=2)^{\chi_f(a,1)^{-1}}$ with ${}^\forall a \in \Z/(2)$ 
of this model should contribute to the cuspforms of nebentypus 
$\chi_8(0,0)$ (as we knew in Example \ref{exmpl:Dk4-fz1-N64}).

In the following, we see explicitly that the newforms (\ref{eq:nf-64-00-byCCF})
and the oldform (\ref{eq:of-64-00-byCCF}) both of nebentypus $\chi_8(0,0)$
are not just random linear combinations of the ``$(g,n)=(1,2)$ chiral 
correlation functions,'' but are those (\ref{eq:lin-comb-Fzfrho-chiF}) 
that diagonalize the action of the CM group. 
As a preparation, let us have a look at how $iReps$ of this SCFT model 
decomposes into the orbits of the CM group action. In the image 
$\Omega(iReps) \cong {\cal O}_K/(4)$, 
\begin{align}
 {\rm orb}_{\mathfrak{q}={\cal O}_K} & \; = \{ 1, i, 3,3i, 3+2i, 2+3i, 1+2i, 2+i \}, \\
 {\rm orb}_{\mathfrak{q}=(1+i)} & \; =\{ 1+i, 3+i, 3+3i, 1+3i\},  
\end{align}
and $\{2,2i\} \amalg \{2+2i\} \amalg \{0\}$; notations for the orbits 
to be introduced systematically in Lemma \ref{lemma:CM-grp-orbit-dcmp} are 
already used here.  

Let us start off with working on $F([z],f_\rho)^{\chi_f(1,1)^{-1}}$. 
The CM-group-diagonalizing combination (\ref{eq:lin-comb-Fzfrho-chiF})
for the character $\chi_f(1,1)$ and the orbit ${\rm orb}_{{\cal O}_K}$
\begin{align}
   f_{1\Omega'}^{\rm II}(\tau_{ws}; [1]; \chi_f(1,1)) & \; = 
  \frac{1}{4} \left( 
     f_{1\Omega'}^{\rm II}(1) + i f_{1\Omega'}^{\rm II}(3i)
   - f_{1\Omega'}^{\rm II}(3) - i f_{1\Omega'}^{\rm II}(i)   \right) \nonumber \\
&  -\frac{1}{4} \left( 
     f_{1\Omega'}^{\rm II}(3+2i) + i f_{1\Omega'}^{\rm II}(2+i) 
   - f_{1\Omega'}^{\rm II}(1+2i) - i f_{1\Omega'}^{\rm II}(2+3i) \right) \nonumber \\
  & \; = f_{1\Omega'}^{\rm II}(1) - f_{1\Omega'}^{\rm II}(3+2i), 
\end{align}
is already the newform in (\ref{eq:nf-64-00-byCCF}); the level 
$M_f = 64$ is also reproduced by $|D_K|{\rm Nm}(\mathfrak{c}_f)=4 \cdot 16$
because $\mathfrak{c}_f=(4)$ for this $\chi_f$. 
 
Let us now work on $F([z],f_\rho)^{\chi_f(0,1)^{-1}}$. 
The CM-group-diagonalizing combination (\ref{eq:lin-comb-Fzfrho-chiF})
for this $\chi_f$ and the orbits ${\rm orb}_{{\cal O}_K}$ and ${\rm orb}_{(1+i)}$
are 
\begin{align}
 f_{1\Omega'}^{\rm II}(\tau_{ws}; [1]; \chi_f(0,1)) & \; = 
  \frac{1}{4} \left( 
     f_{1\Omega'}^{\rm II}(1) + i f_{1\Omega'}^{\rm II}(3i)
   - f_{1\Omega'}^{\rm II}(3) - i f_{1\Omega'}^{\rm II}(i)  \right) \nonumber \\
&  +\frac{1}{4} \left( 
     f_{1\Omega'}^{\rm II}(3+2i) + i f_{1\Omega'}^{\rm II}(2+i)
   - f_{1\Omega'}^{\rm II}(1+2i) - i f_{1\Omega'}^{\rm II}(2+3i) \right) \nonumber \\
  & \; = f_{1\Omega'}^{\rm II}(1) + f_{1\Omega'}^{\rm II}(3+2i), \\
  f_{1\Omega'}^{\rm II}(\tau_{ws}; [1+i]; \chi_f(0,1)) & \; = 
  \frac{1}{4} \left( 
     f_{1\Omega'}^{\rm II}(1+i)  + i f_{1\Omega'}^{\rm II}(1+3i)
   - f_{1\Omega'}^{\rm II}(3+3i) - i f_{1\Omega'}^{\rm II}(3+i)  \right) \nonumber \\
  & \; =  f_{1\Omega'}^{\rm II}(1+i). 
    \label{eq:of-64-00-byCCF-2ndApp}
\end{align}
Those two combinations agree with the level $M_f=32$ newform and oldform 
in (\ref{eq:nf-64-00-byCCF}, \ref{eq:of-64-00-byCCF}).

The example above might have given an impression that there is one-to-one 
correspondence between ``a choice of a character and a CM-group orbit'' 
and ``a newform or an oldform of certain type''. This is a little 
too naive, however. Theorems \ref{thm:newform-contrib-orbits} and 
\ref{thm:oldform-byCCF} describe which orbits of the CM group 
contribute to which newform/oldform.\footnote{In the language of 
Thm. \ref{thm:oldform-byCCF}, the newform/oldform of level $M_f=32$ above
is for $\mathfrak{c}_f=(2+2i)$, where $\underline{\mathfrak{c}}/\mathfrak{c}_f
=(1+i)$, $\mathfrak{q}_{p.\mathfrak{c}_f}={\cal O}_K$. For the $r=2$ oldform, 
we can choose $\mathfrak{q}_r = (1+i) = \mathfrak{q}_{r.f}$; 
$\mathfrak{q}_{r.p.\mathfrak{c}_f}={\cal O}_K$. Contributing to the 
$r=2$ oldform is the orbit ${\rm orb}_{\mathfrak{q}_0}$ with 
$\mathfrak{q}_0=\mathfrak{q}_{r.f}$.} 

Presentation on the example for $([z],f_\rho)=([i],2)$ so far 
illustrates how discussions in section \ref{ssec:CM-grp-def} and 
Prop. \ref{props:nebentypus-formula-CCF} work in practice. The following 
materials on this example, on the other hand, correspond to the 
general theory reviewed in section \ref{ssec:map2shimuraEC}. 
Notations and reasonings to be introduced there are used also here 
without an extra explanation. 

In the example above with $\chi_f(0,1)$ and $M_f=32$, the Hecke character 
$\varphi^{(10)}_K$ is such that (cf Lemma \ref{lemma:HeckeC} and 
Def. \ref{def:Hecke-theta-4K})
\begin{align}
  \varphi^{(10)}_K:&\;
    (3)_{{\cal O}_K} \mapsto -3, \qquad (2+i)_{{\cal O}_K} \mapsto -1+2i, \qquad
    (2-i)_{{\cal O}_K} \mapsto -1-2i, \\
  &\; 
    (11)_{{\cal O}_K} \mapsto -11, \qquad (4+3i)_{{\cal O}_K} \mapsto -3+4i, \qquad
    (4-3i)_{{\cal O}_K} \mapsto -3-4i.
\end{align}
So, for this newform, $T_{\varphi^{(10)}_{K}} = K(\varphi^{(10)}_K(\mathfrak{a}))$ 
is $\Q(i)$, which agrees with the CM field $K$. So, $[T_{\varphi^{(10)}_K}:K] = 1$. 
This Hecke character satisfies $\varphi^{(01)}_K = cc \cdot \varphi^{(10)}_K 
= \varphi^{(10)}_K \cdot cc$, and $K_f$ for the Hecke theta function 
$f=f_{\varphi^{(10)}_K}$ is $\Q$; see Lemma \ref{lemma:SchappPoH-2-1} and 
Example \ref{exmpl:Hecke-i-real}.
This $M_f=32$ newform alone forms a $\Q$-simple CM-type abelian variety 
$A_f/\Q$, which is 1-dimensional.  

To the Hecke character $\varphi^{(10)}_K : 
K^\times \backslash \mathbb{A}^\times_K \rightarrow \C$ of type [-1/2;1,0], 
there is one corresponding arithmetic model $E/K$ of $[E_{z=i}]_\C$ of 
Shimura type (see section \ref{sssec:shimuraEC} for a definition of 
elliptic curves of Shimura type; see section 4.2 of \cite{prev.paper} for 
the correspondence); the field of definition $k$ can be $K$ in this example.
So, $B/K = E/K$ is also 1-dimensional (Lemma \ref{lemma:def-B-as-WeilR}),
which is consistent with $[T_{\varphi^{(1,0)}_K}:K]=1$ in this example.  
There is an isogeny $A_f \times_\Q K \sim B/K = E/K$ 
(Prop. \ref{props:BK-as-quotient-of-Jac1}).

The arithmetic model $E/K$ in this example descends to a model $E^+/\Q$
(see section 4.2 of \cite{prev.paper}), so we also have an abelian variety 
$B^+/\Q$ (see Statement \ref{statmnt:wortmann}). $B^+/\Q=E^+/\Q$, and 
$[T^+:\Q]=1$. The isogenies $A_f \times_\Q K \sim B/K$ descends to 
isogenies $A_f/\Q \sim B^+/\Q$ as in \cite[Thm. 3.1]{Wortmann}. 

In the example above with $M_f=64$, precisely the same story can be repeated, 
although the Hecke character $\varphi^{(10)}_K$ is different from the one 
for $M_f=32$. 

The two Hecke theta functions (\ref{eq:nf-64-00})---therefore the ``chiral 
correlation functions'' (\ref{eq:nf-64-00-byCCF})---are related to 
the $L$-functions for (appropriate arithmetic models of) 
the corresponding elliptic curve $[E_{z}]_\C$ with complex multiplication. 
The choice $\chi_f(0,1)$ (the conductor $(2+2i)_{{\cal O}_K}$ and $M_f=32$) is 
known to be for the arithmetic model $E_{z=i}^+/\Q$ given by $y^2=x^3-x$ and 
its $\Q$-isogenous class.\footnote{
According to Table 1 of \cite{Cremona}, the $\Q$-isogenous class 
of the model $y^2=x^3-x$ over $\Q$, where 
$N_{E/\Q}=(32) \in {\rm Div}({\rm Spec}(\Z))$, consists of  
four models over $\Q$ not mutually isomorphic over $\Q$;
two have $j=1728$, and the other two have $j(2i) = 66^3$. 
All of them have the same ($M_f=32$) newform.
The newform for $M_f=64$ is for four $\Q$-isomorphism classes 
including $y^2=x^3-4x$; 
two are for $j=1728$ and the remaining two for $j(2i)=66^3$, 
$N_{E/\Q} = (64)$; the four forms one $\Q$-isogenous class, as before.
}  See \cite[Chap.II]{Koblitz}. $N_{E/\Q} =(32) \in {\rm Div}({\rm Spec}(\Z))$.
The choice $\chi_f(1,1)$ (the conductor $(4)_{{\cal O}_K}$ and level $M_f=64$)
is known to be for the arithmetic model $E^+_{z=i}/\Q$ given by 
$y^2 = x^3 - 4x$ and its $\Q$-isogenous class.
$N_{E/\Q}=(64) \in {\rm Div}({\rm Spec}(\Z))$. 
$\bullet$
\end{exmpl}

\begin{exmpl}
\label{exmpl:Dk8-fz1-N64-followup}
For $(j(z), f_\rho)=(j(\sqrt{2}i),1)=(20^3,1)$, 
$K=\Q(\sqrt{-2})$, and $\underline{\mathfrak{c}}([E_{\sqrt{2}i}]_\C, 1) = 
(2\sqrt{2}i)_{{\cal O}_K}$, which is an example of the general formula 
(Props. \ref{props:cUnderline-formula}) for $\underline{\mathfrak{c}}$ 
for the cases with complex multiplication by a maximal order ${\cal O}_K$, 
$f_z = 1$. 

The group of CM operations is 
$[{\cal O}_K/\underline{\mathfrak{c}}]^\times \cong 
\Z/(4)$ generated by $[(1+\sqrt{2}i)\times]$, which contains 
${\rm Aut}([E_z]_\C) = \{ [(\pm 1)\times] \}$ as the $\Z/(2) \subset \Z/(4)$ 
subgroup. Characters of the CM group are labeled by $a \in \Z/(4)$, where 
$\chi_f(a): [(1+\sqrt{2}i)\times] \mapsto i^a$. For the restriction of 
the character $\chi_f^{-1}$ of $[{\cal O}_K/\underline{\mathfrak{c}}]^\times$ 
to the subgroup ${\rm Aut}([E_z]_\C) = \{ \pm 1\}$ to be $(\rho^1)^{-1}$, 
$a \in \Z/(4)$ needs to be either one of $\{1,3\} \subset \Z/(4)$. 
For both of the two characters $\chi_f(a)$ with $a \in \{1,3\}$, 
the conductor $\mathfrak{c}_f$ is equal to $\underline{\mathfrak{c}}$. 

Prop. \ref{props:nebentypus-formula-CCF} determines the nebentypus 
of the cuspforms in $F([z],f_\rho)^{\chi_f(a)^{-1}}$; that should reproduce 
the result $\chi_8(0,1)$, which we found in Example \ref{exmpl:Dk8-fz1-N64} 
by comparing Fourier coefficients with database.
First, $\epsilon_{D\Lambda}^{-1} = (D_K/-) = \chi_8(1,1)$ in this case. 
Secondly, $\underline{\chi}_f = \chi_f \cdot i$ is 
$\chi_8(1,0)$ for any one of $\chi_f = \chi_f(a'')$ with $a'' \in \{1,3\}$.
So, the nebentypus $\chi_{N_{D\Lambda}}[\chi_f(a'')]$ of the linear 
combination of the ``chiral correlation functions'' is determined by 
Prop. \ref{props:nebentypus-formula-CCF} to be 
$\epsilon_{D\Lambda}^{-1} \cdot \underline{\chi}_f^{-1} = 
\chi_8(1,1) \cdot \chi_8(1,0)^{-1} = \chi_8(0,1)$ indeed. 

The set $G_\Lambda^* = G_\Lambda \backslash G_\Lambda[2]$ consists of single 
orbit under the CM group action, ${\rm orb}_{\mathfrak{q}={\cal O}_K}= 
\{ 1, 1+\sqrt{2}i, 3, 3+\sqrt{2}i \}$ in the image $\Omega(iReps)$. 
The ``chiral correlation functions'' summed over this orbit as 
in (\ref{eq:lin-comb-Fzfrho-chiF}), 
\begin{align}
 f_{1\Omega'}^{\rm II}(\tau_{ws};[1];\chi_f(a'')) & \; = 
  \frac{1}{2}\left[
         f_{1\Omega'}^{\rm II}(1)
 + i^{a''}f_{1\Omega'}^{\rm II}(3+\sqrt{2}i)
 - f_{1\Omega'}^{\rm II}([3]) 
 -i^{a''}f_{1\Omega'}^{\rm II}(1+\sqrt{2}i) \right), \nonumber \\
  & \; = \left(  f_{1\Omega'}^{\rm II}(1)
 + i^{a''}f_{1\Omega'}^{\rm II}(3+\sqrt{2}i) \right),
\end{align}
are precisely the linear combinations identified with the newforms 
in (\ref{eq:nf-64-01-byCCF}).
$F([\sqrt{2}i], 1) \cong S_2(\Gamma_0^0(8), \chi_8(0,1))$.

For the two characters $\chi_f(a'')$, $a'' \in \{1,3\} = \{ \pm 1\}$, we have 
two Hecke characters of $K^\times \backslash \mathbb{A}_K^\times$ of type 
[-1/2;1,0], which we denote by $\varphi^{(10)}_{K,\pm}$: 
\begin{align}
  \varphi^{(10)}_{K,\pm}:
 &\;  (1+\sqrt{2}i)_{{\cal O}_K} \mapsto \pm(-\sqrt{2}+i), \qquad
      (1-\sqrt{2}i)_{{\cal O}_K} \mapsto \pm(+\sqrt{2}+i). \\
  \varphi^{(01)}_{K,\pm}: 
 &\;   (1+\sqrt{2}i)_{{\cal O}_K} \mapsto \pm(-\sqrt{2}-i), \qquad
      (1-\sqrt{2}i)_{{\cal O}_K} \mapsto \pm(+\sqrt{2}-i). 
\end{align}
Two Hecke characters of type [-1/2;-1,0] are also presented here. 
$T_{\varphi^{(1,0)}_K} = K(\sqrt{2})  = K(i)$, and $[T_{\varphi^{(1,0)}_K}:K]=2$, 
in particular. 
The Hecke theta functions satisfy $f_{\varphi^{(10)}_{K,\pm}} = f_{\varphi^{(01)}_{K,\mp}}$, 
so there are two Hecke theta functions, not four, which are the 
newforms with level $M_f=64$ and nebentypus $\chi_8(0,1)$ we have seen
already. $K_f = \Q(i)$ for both. So, $[T_{\varphi^{(1,0)}_K}:K_f]=2$ 
(see Lemma \ref{lemma:SchappPoH-2-1} and Example \ref{exmpl:Hecke-i-CM}). 
This pair of newforms have one common $\Q$-simple abelian variety 
$A_f/\Q$ of CM type. 

To the two Hecke characters $\varphi^{(10)}_{K,\pm}$, there is just one 
isogenous class of elliptic curves of Shimura type, $E/k$. 
This statement follows from the fact that the character $\chi_{f}(\pm 1)$ 
takes value in ${\cal O}_K^\times = \{ \pm 1\}$ 
only in the $\Z/2\Z$ subgroup $\{ 1, 3\} \subset 
[{\cal O}_K/\mathfrak{c}_f]^\times \cong \Z/4\Z$; 
see section 4.2 of \cite{prev.paper} for why.  
$B/K$ is 2-dimensional, and is $K$-simple, because 
the pair $\varphi^{(10)}_{K,\pm}$ for this model $E/k$ forms 
just one Galois$_K$-orbit (see Lemma \ref{lemma:SchappPoH-2-1}).
There is an isogeny $A_f \times_\Q K \sim B/K$ defined over $K$  
(Props. \ref{props:BK-as-quotient-of-Jac1}).

The arithmetic model $E/k$ also descends to $E^+/F$, where $F$ must be 
a degree-2 subfield of $k$, and must be a degree-2 extension of $\Q$. 
We can see this by applying the condition of Thm. 4.2.25 of \cite{prev.paper}. 
Now, $\dim (B^+/\Q) = 2$. $B^+$ consists of just one $\Q$-simple factor. 
There is an isogeny $A_f \rightarrow B^+$ defined over $\Q$, 
as in \cite[Thm. 3.1]{Wortmann}
 $\bullet$
\end{exmpl}

\subsection{Hecke Theta Functions and Type II String 
Chiral Correlation Functions}
\label{ssec:HTheta-TypeII-CCF}

In Theorems \ref{thm:newform-byCCF}, \ref{thm:oldform-byCCF}, and 
\ref{thm:inj-surj-CMform}
in this section \ref{ssec:HTheta-TypeII-CCF}, we find that Hecke theta 
functions of an imaginary quadratic field $K$ 
(reviewed below by just writing down mathematical facts) are realized by 
the ``chiral correlation functions'' of $T^2$-target models of rational CFT 
associated 
with elliptic curves with complex multiplication by ${\cal O}_K$. 
A similar version of them has been presented in \cite{prev.paper}, but 
in the following, we present a version much more refined from mathematical 
perspectives. 
The relevance of those Hecke theta functions in arithmetic geometry 
of CM elliptic curves is reviewed in section \ref{ssec:map2shimuraEC}. 

First, we refer to some facts about Hecke characters of an imaginary 
quadratic field $K$, and then define Hecke theta functions associated 
with a Hecke character. 
\begin{lemma}
\label{lemma:HeckeC}
Let $K$ be an imaginary quadratic field, and $\mathfrak{c}_f$ an integral ideal 
of ${\cal O}_K$ that satisfies the following compatibility condition. 
For any Hecke character $\varphi: K^\times \backslash \mathbb{A}_K^\times 
\rightarrow \C^\times$ of the idele class group 
$K^\times \backslash \mathbb{A}_K^\times$ of type [-1/2; 1,0] with the 
conductor $\mathfrak{c}_f$ (where we assume that the ideal $\mathfrak{c}_f$ 
is compatible with the type [1,0]), one character 
$\chi'_f: [{\cal O}_K/\mathfrak{c}_f]^\times \rightarrow S^1$ is assigned. 
Conversely, suppose that $\mathfrak{c}_f$ is an integral ideal of ${\cal O}_K$
compatible with the type [1,0], and $\chi'_f: [{\cal O}_K/\mathfrak{c}_f]^\times 
\rightarrow S^1$ is a character whose restriction on the subgroup 
${\cal O}_K^\times \subset [{\cal O}_K/\mathfrak{c}_f]^\times$ is 
$(\rho^1)^{-1}$. Then there are $h({\cal O}_K)$ distinct 
Hecke characters $\{ \varphi_{\hat{a}} \; |
 \; \hat{a} = 1,\cdots, h({\cal O}_K) \}$ of 
the idele class group of $K$. The relation between $\chi'_f$ and $\varphi$ 
is characterized by the condition that 
\begin{align}
\varphi((\alpha)_{{\cal O}_K}) = \rho^1(\alpha) \chi'_f(\alpha), \qquad \quad 
 {}^\forall \alpha \in {\cal O}_K \; {\rm s.t.~}(\alpha) + \mathfrak{c}_f={\cal O}_K.
\end{align}
See section 4.2 of \cite{prev.paper} for a review. 
\end{lemma}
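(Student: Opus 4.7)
The plan is to prove this as the standard bijection between Hecke ``Gr\"{o}ssencharacters'' on fractional ideals (modulo the finite ray) and Hecke idele class characters, spelled out just enough to keep track of the class-number ambiguity.

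First I would handle the forward direction. Given $\varphi: K^\times\backslash \mathbb{A}_K^\times \to \C^\times$ of type $[-1/2;1,0]$ and conductor $\mathfrak{c}_f$, I define
\begin{align*}
  \chi'_f(\alpha) := \varphi((\alpha)_{\mathcal{O}_K}) \cdot \rho^1(\alpha)^{-1},
     \qquad \alpha \in \mathcal{O}_K,\ (\alpha)+\mathfrak{c}_f = \mathcal{O}_K.
\end{align*}
The conductor condition guarantees that the value depends only on the class of $\alpha$ in $[\mathcal{O}_K/\mathfrak{c}_f]^\times$ (translating a unit that is $\equiv 1 \bmod \mathfrak{c}_f$ against the prescribed archimedean factor $\rho^1$ has no effect), and multiplicativity is automatic, giving the assigned character $\chi'_f$. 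That $\chi'_f|_{\mathcal{O}_K^\times} = (\rho^1)^{-1}$ is forced by the fact that $\varphi$ is trivial on the diagonal $K^\times$: for a unit $u$ the principal ideal $(u)=\mathcal{O}_K$, so $\varphi((u))=1$ and hence $\chi'_f(u)=\rho^1(u)^{-1}$.

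For the reverse direction I would invert the above formula. Given a character $\chi'_f: [\mathcal{O}_K/\mathfrak{c}_f]^\times \to S^1$ with $\chi'_f|_{\mathcal{O}_K^\times}=(\rho^1)^{-1}$, I set $\varphi((\alpha)_{\mathcal{O}_K}) := \rho^1(\alpha)\chi'_f(\alpha)$ on the group $P_{\mathfrak{c}_f}$ of principal fractional ideals coprime to $\mathfrak{c}_f$; the compatibility condition is exactly what is needed for this to be well-defined, since changing the generator $\alpha$ by $u \in \mathcal{O}_K^\times$ multiplies the right-hand side by $\rho^1(u)\chi'_f(u)=1$. I then extend $\varphi$ from $P_{\mathfrak{c}_f}$ to the group $I_{\mathfrak{c}_f}$ of all fractional ideals coprime to $\mathfrak{c}_f$. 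The quotient $I_{\mathfrak{c}_f}/P_{\mathfrak{c}_f}$ is the ideal class group $\mathrm{Cl}(\mathcal{O}_K)$ (every class has a representative coprime to any fixed ideal), so the set of extensions is a torsor over $\mathrm{Hom}(\mathrm{Cl}(\mathcal{O}_K),\C^\times)$, which has exactly $h(\mathcal{O}_K)$ elements. This yields the collection $\{\varphi_{\hat a}\}_{\hat a=1,\dots,h(\mathcal{O}_K)}$.

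Finally I would translate back from the ideal-theoretic side to the idelic side. One defines the idele class character $\varphi$ by prescribing its components: the archimedean component is $z\mapsto z$ (the type $[1,0]$ piece, dictated by the factor $\rho^1(\alpha)$), the component at a finite prime $\mathfrak{p}\nmid \mathfrak{c}_f$ is unramified with $\varphi_\mathfrak{p}(\pi_\mathfrak{p})$ determined by the value of $\varphi$ on $\mathfrak{p}$, and the components at primes $\mathfrak{p}\mid\mathfrak{c}_f$ are determined by $\chi'_f$ through the local factors of the isomorphism $(\mathcal{O}_K/\mathfrak{c}_f)^\times \cong \prod_{\mathfrak{p}\mid \mathfrak{c}_f}(\mathcal{O}_{K,\mathfrak{p}}/\mathfrak{c}_f \mathcal{O}_{K,\mathfrak{p}})^\times$. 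The compatibility $\chi'_f|_{\mathcal{O}_K^\times}=(\rho^1)^{-1}$ is precisely what makes the product character trivial on the diagonally embedded $K^\times$, so $\varphi$ descends to the idele class group.

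The main obstacle is bookkeeping: keeping the archimedean type, the conductor condition, and the class-group ambiguity cleanly separated so that the count $h(\mathcal{O}_K)$ (rather than the full ray class number $h_{\mathfrak{c}_f}$) appears correctly. The key is that fixing $\chi'_f$ already kills all ambiguity on the image of $[\mathcal{O}_K/\mathfrak{c}_f]^\times$ in $\mathrm{Cl}_{\mathfrak{c}_f}(\mathcal{O}_K)$, leaving exactly the dual of $\mathrm{Cl}(\mathcal{O}_K)$ as freedom, and I would make this explicit by citing the exact sequence $\mathcal{O}_K^\times \to [\mathcal{O}_K/\mathfrak{c}_f]^\times \to \mathrm{Cl}_{\mathfrak{c}_f}(\mathcal{O}_K) \to \mathrm{Cl}(\mathcal{O}_K) \to 1$ at the decisive step.
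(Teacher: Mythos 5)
Your argument is correct and is the standard dictionary between idele class characters of type $[-1/2;1,0]$ and ideal-theoretic Gr\"{o}ssencharacters mod $\mathfrak{c}_f$, with the $h({\cal O}_K)$ count coming from the torsor of extensions from $P_{\mathfrak{c}_f}$ to $I_{\mathfrak{c}_f}$ over $\widehat{{\rm Cl}({\cal O}_K)}$; the paper gives no proof of its own and simply defers to a review in \cite{prev.paper}, so there is nothing to compare against beyond noting that your route is the expected one. The only loose end worth a remark is primitivity: for every extension to have conductor exactly $\mathfrak{c}_f$ (rather than a proper divisor) one should assume $\chi'_f$ is primitive, a point the Lemma itself also leaves implicit.
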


\begin{defn}
\label{def:Hecke-theta-4K}
Let $K$ be an imaginary quadratic field, $\varphi$ a Hecke character 
of the idele class group $K^\times \backslash \mathbb{A}_K^\times$, and 
$\mathfrak{c}_f$ its conductor. Then its Hecke theta function 
$\vartheta(\tau;\varphi)$ over $\tau \in {\cal H}$ is defined by 
\begin{align}
  \vartheta(\tau; \varphi) := \sum_{I {\rm integral}} \varphi(I) q^{{\rm Nm}(I)}, 
\end{align}
where the sum runs over all the integral ideals $I$ of ${\cal O}_K$, 
and $\varphi(I)$ for $I$ not mutually prime to $\mathfrak{c}_f$ is defined 
to be zero. 
\end{defn}

Here is a property of Hecke theta functions that we need. 
\begin{lemma} 
(\cite{Hecke-Werke}; Lemma 3 of \cite{Shimura-ellCM})
\label{lemma:Shimura-ellCM-Lemma3}
Let $\varphi: \mathbb{A}_K^\times/K^\times \rightarrow \C^1$ be a Hecke 
character of the idele class group of an imaginary quadratic field $K$ 
of type $[-1/2;1,0]$, and $\mathfrak{c}_f$ its conductor. Then its Hecke 
theta function $\vartheta(\tau; \varphi)$ is a Hecke newform in 
the vector space $S_2(\Gamma_0(N_\varphi), \chi_N)$ with 
\begin{align}
 N_\varphi & \; = |D_K| \; {\rm Nm}_{K/\Q}(\mathfrak{c}_f), \\
 \chi_{N_\varphi}(d) & \; = 
    \left(\frac{D_K}{d}\right) \chi'_f(d) 
\end{align}
for $d \in [\Z/(N_\varphi)]^\times$.
\end{lemma}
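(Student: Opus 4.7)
The plan is to establish modular invariance by decomposing $\vartheta(\tau;\varphi)$ into sums of classical theta series attached to fractional-ideal lattices, and then deduce the eigenform/newform property from the Euler product of $L(s,\varphi)$. First I would break the sum defining $\vartheta(\tau;\varphi)$ according to ideal classes: fix representatives $\mathfrak{a}_1,\dots,\mathfrak{a}_h$ of the ideal class group of $\mathcal{O}_K$; every integral ideal $I$ in the inverse class of $\mathfrak{a}_j$ with $(I,\mathfrak{c}_f)=\mathcal{O}_K$ is of the form $I=(\alpha)\mathfrak{a}_j^{-1}$ for $\alpha\in\mathfrak{a}_j$ with $(\alpha,\mathfrak{c}_f)=\mathcal{O}_K$, determined up to $\mathcal{O}_K^\times$. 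Using $\varphi((\alpha))=\alpha\,\chi'_f(\alpha)$ from Lemma \ref{lemma:HeckeC} together with $\mathrm{Nm}(I)=\mathrm{Nm}(\alpha)/\mathrm{Nm}(\mathfrak{a}_j)$, this yields
\begin{align}
\vartheta(\tau;\varphi)=\sum_{j=1}^{h}\frac{\varphi(\mathfrak{a}_j)^{-1}}{|\mathcal{O}_K^\times|}\sum_{\substack{\alpha\in\mathfrak{a}_j\\(\alpha,\mathfrak{c}_f)=\mathcal{O}_K}}\alpha\,\chi'_f(\alpha)\,q^{\mathrm{Nm}(\alpha)/\mathrm{Nm}(\mathfrak{a}_j)}.
\end{align}
Each inner sum is a weight-$2$ type-$1$ theta series of the rank-$2$ positive definite lattice $\mathfrak{a}_j$ equipped with the quadratic form $\mathrm{Nm}(\cdot)/\mathrm{Nm}(\mathfrak{a}_j)$, decorated by the finite-order character $\chi'_f$ on $\mathfrak{a}_j/\mathfrak{c}_f\mathfrak{a}_j$; the factor $\alpha$ is a $\mathbb{C}$-linear functional on $\mathfrak{a}_j$, exactly the $\Omega'$ of section \ref{ssec:string} in this guise.

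Second, I would expand $\chi'_f$ as a $\mathbb{C}$-linear combination of additive characters $\alpha\mapsto \mathbb{E}[(\alpha,\beta)]$ indexed by $\beta\in\mathfrak{a}_j^\vee/\mathfrak{c}_f\mathfrak{a}_j$, so that each inner series becomes a finite sum of the type-$1$ theta functions $\vartheta_L^{1\Omega}(\tau;\beta)$ of Lemma \ref{lemma:theta-transf}. Applying the transformation law of Lemma \ref{lemma:theta-transf} for the generators $T^{-1}$ and $S^{-1}$ of $\mathrm{SL}(2;\mathbb{Z})$, plus the identification of the character $\epsilon_{DL}^{-1}=(D_K/\cdot)$ of Lemma \ref{lemma:epsiln=DK}, one checks that the linear combination is invariant under $\Gamma_0(N_\varphi)$ with multiplier $(D_K/d)\,\chi'_f(d)$; this is precisely the calculation performed in the proof of Proposition \ref{props:nebentypus-formula-CCF}, transposed from the CFT setting to the classical one. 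Holomorphy on $\mathcal{H}$ is immediate, and cuspidality follows because each $\alpha$-weighted inner series vanishes as $\mathrm{Im}\,\tau\to\infty$ (no constant term, by $\alpha$-linearity and the absence of $\alpha=0$), and the same holds at every other cusp after translating by $\mathrm{SL}(2;\mathbb{Z})$. This shows $\vartheta(\tau;\varphi)\in S_2(\Gamma_0(N_\varphi),\chi_{N_\varphi})$.

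Third, to identify $\vartheta(\tau;\varphi)$ as a simultaneous Hecke eigenform I would match its Mellin transform to the Hecke $L$-series
\begin{align}
L(s,\varphi)=\sum_{I}\varphi(I)\,\mathrm{Nm}(I)^{-s}=\prod_{\mathfrak{p}\nmid\mathfrak{c}_f}\bigl(1-\varphi(\mathfrak{p})\mathrm{Nm}(\mathfrak{p})^{-s}\bigr)^{-1},
\end{align}
and group the Euler factors over rational primes $p$: for $p$ split, inert, or ramified in $K$, the product of the two (or one) local factors at primes above $p$ has exactly the shape $(1-a_p p^{-s}+\chi_{N_\varphi}(p)p^{1-2s})^{-1}$ of a degree-$2$ local factor for a weight-$2$ form of nebentypus $\chi_{N_\varphi}$. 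This Euler product forces $\vartheta(\tau;\varphi)$ to be a Hecke eigenform. The main obstacle, and the step I expect to be hardest to pin down cleanly, is proving that the level is exactly $N_\varphi$ rather than a proper divisor, i.e.\ that $\vartheta(\tau;\varphi)$ is a newform. I would do this by a local analysis at each prime $\ell\mid\mathfrak{c}_f$: the conductor of the local component $\varphi_\ell$ of the idele-class character equals the $\ell$-part of $\mathfrak{c}_f$ by hypothesis, so the local $L$-factor has conductor exponent matching the $\ell$-part of $N_\varphi$; Atkin--Lehner theory combined with the multiplicity-one statement of Theorem \ref{thm:newform-SkGamma0Chi} then rules out any lower level, identifying $\vartheta(\tau;\varphi)$ with a newform in $\mathrm{NewForms}(N_\varphi,\chi_{N_\varphi})$.
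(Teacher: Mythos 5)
First, a point of comparison: the paper does not prove this lemma at all — it is quoted as a known result of Hecke and of Lemma~3 of Shimura's paper — so there is no internal proof to measure you against. Your first step (decomposition over ideal classes, writing $\varphi((\alpha))=\rho^1(\alpha)\chi'_f(\alpha)$ and recognizing each piece as a weight-one-shifted theta series of the rank-2 ideal lattice $\mathfrak{a}_j$) is exactly what the paper records separately as Lemma~\ref{lemma:Hecke-theta-each-K}, and your overall architecture (modularity by theta transformation, eigenform property from the Euler product of $L(s,\varphi)$, newness from a local conductor computation) is the standard classical route.

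The genuine gap is in your second step, specifically the level claim. Expanding $\chi'_f$ into additive characters on $\mathfrak{a}_j/\mathfrak{c}_f\mathfrak{a}_j$ and feeding the result into Lemmas~\ref{lemma:theta-transf} and \ref{lemma:Weil-repr-kernel} — which is indeed what the proof of Prop.~\ref{props:nebentypus-formula-CCF} does — establishes invariance only under $\Gamma_0^0$ of the level of the relevant discriminant form, i.e.\ an effective level of the shape $|D_K|\,{\rm Nm}_{K/\Q}(\underline{\mathfrak{c}})$ (cf.\ Rmk.~\ref{rmk:OK-CMgrp-ideal}), which is in general a \emph{proper multiple} of $N_\varphi=|D_K|\,{\rm Nm}_{K/\Q}(\mathfrak{c}_f)$. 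Since $N_\varphi\mid M$ implies $\Gamma_0(M)\subset\Gamma_0(N_\varphi)$, invariance at the larger numerical level does not give invariance under the larger group $\Gamma_0(N_\varphi)$; descending to the exact level is where the multiplicativity of the ray class character $\chi'_f$ and its Gauss sum must be used, and this is precisely the content of Hecke's theorem that your sketch does not supply — it is not "precisely the calculation performed in the proof of Proposition~\ref{props:nebentypus-formula-CCF}." Your third step correctly flags a level issue, but it addresses the complementary problem (ruling out strictly \emph{smaller} levels via the conductor of $\varphi_\ell$ and multiplicity one), not the missing descent from the lattice-theoretic level down to $N_\varphi$. The eigenform argument via grouping Euler factors over rational primes, and the cuspidality argument via the odd linear functional $\rho^1(\alpha)$ killing the $\alpha=0$ term and its $\mathrm{SL}(2;\Z)$-translates, are both fine.
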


\subsubsection{Newforms Realized by Chiral Correlation Functions}

For a given imaginary quadratic field $K$ and its Hecke character $\varphi$
of type [-1/2; 1,0], 
its Hecke theta function $\vartheta(\tau; \varphi)$ is a weight-2 
cuspform with the level and nebentypus given above. It is in the 
space of newforms $[S_2(\Gamma_0(N_\varphi);\chi_{N_\varphi})]^{\rm new}_{N_\varphi}$, 
in particular. In the following, we will see that all of those Hecke 
theta functions are realized by the ``chiral correlation functions'' of 
rational $T^2$-target models of Type II string for $([z],f_\rho)$
with $[z] \in {\cal E}ll({\cal O}_K)$ and an appropriately chosen K\"{a}hler 
parameter $f_\rho \in \N$.

\begin{thm}
\label{thm:newform-byCCF}
Fix ${\cal O}_{f_z}$ with $f_z = 1$ (so ${\cal O}_{f_z}={\cal O}_K$) 
and $f_\rho$. Think of rational $T^2$-target models of ${\cal N}=(2,2)$ 
SCFT associated with data $([z],f_\rho)$, $[z] \in {\cal E}ll({\cal O}_K)$.
Then the CM group $[{\cal O}_K / \underline{\mathfrak{c}}]^\times$
is common to all of those $h({\cal O}_K)$ models 
(Prop. \ref{props:cUnderline-formula}). 
For a character $\chi_f$ of this common CM group whose restriction 
to the subgroup ${\rm Aut}([E_{z_a}]_\C) = {\cal O}_K^\times$ is $\rho^1$, 
the vector space 
\begin{align}
   F({\cal E}ll({\cal O}_K),f_\rho)^{\chi_f^{-1}} := 
       \oplus_{a=1}^{h({\cal O}_K)} F([z_a], f_\rho)^{\chi_f^{-1}}.
  \label{eq:Fzfrho-chif-sumoverClK}
\end{align}
of ``$(g,n)=(1,2)$ chiral correlation functions'' of those models 
is of our interest. 

Let $\mathfrak{c}_f$ be an integral ideal of ${\cal O}_K$ compatible 
with the type [1,0], and suppose that $\mathfrak{c}_f | 
\underline{\mathfrak{c}}$. Let $\varphi$ be a Hecke character of 
the idele class group $K^\times \backslash \mathbb{A}_K^\times$ of type 
[-1/2; 1, 0] with the conductor $\mathfrak{c}_f$.  Then {\it its Hecke theta 
function $\vartheta(\tau; \varphi)$ is found within the vector 
space (\ref{eq:Fzfrho-chif-sumoverClK}), if the character $\chi'_f$ 
associated with $\varphi$} (cf Lemma \ref{lemma:HeckeC}) 
{\it is equal to $\chi_f^{-1}$}. Here, 
the argument $\tau$ of the Hecke theta functions is related to 
the complex structure moduli parameter $\tau_{ws}$ of $g=1$ worldsheet 
through (\ref{eq:the-argument-rescaling}).  
Moreover, the Hecke theta functions $\vartheta(\tau; \varphi)$ are 
linear combinations of 
$f^{\rm II}_{1\Omega'}(\tau_{ws}; \beta)_{([z_a],f_\rho)}$'s in the vector 
space (\ref{eq:Fzfrho-chif-sumoverClK}) with all the coefficients 
having the same absolute value (modulus) unless the coefficient is zero 
(Thm. \ref{thm:newform-contrib-orbits} describes for which 
$\beta \in iReps_a$ the coefficient vanishes).
\end{thm}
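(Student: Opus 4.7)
The plan is to exhibit each Hecke theta function $\vartheta(\tau;\varphi)$ explicitly as a sum of the symmetrized ``chiral correlation functions'' $f^{\rm II}_{1\Omega'}(\tau_{ws};[\beta_0];\chi_f)$ defined in (\ref{eq:lin-comb-Fzfrho-chiF}), one contribution from each ideal class of $\mathrm{Cl}({\cal O}_K)$. The main bridge is the identification, already set up in Statement \ref{statmnt:iReps-as-torsion-pts}, that for $[E_{z_a}]_\C \in {\cal E}ll({\cal O}_K)$ the lattice $\Omega'_-(\Lambda_-) \subset \C$ is homothetic to a proper fractional ideal $\mathfrak{b}_{z_a}$, while $\Omega'_-(\Lambda_-^\vee)$ is homothetic to $(f_\rho\sqrt{D_z}\,i)^{-1}\mathfrak{b}_{z_a}$. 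Under this identification a lattice vector $w\in\Lambda_-^\vee$ determines an element $\alpha(w)\in\mathfrak{b}_{z_a}$ (namely $\alpha(w) = (f_\rho\sqrt{D_z}\,i)\cdot(\Omega'_-(w)/\sqrt{2a_zf_\rho})$), and the norm $(w,w)_{\Lambda_-}/2$ in the exponent of $q_{ws}$ matches $\mathrm{Nm}_{K/\Q}(\alpha(w))/(f_\rho D_z\cdot\mathrm{Nm}(\mathfrak{b}_{z_a}))$, so that in the rescaled variable $q=e^{2\pi i\tau}=q_{ws}^{1/N_{D\Lambda}}$ of (\ref{eq:the-argument-rescaling}) the exponent becomes exactly the norm of the integral ideal $(\alpha(w))\mathfrak{b}_{z_a}^{-1}$.

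First I would unfold the symmetrization (\ref{eq:lin-comb-Fzfrho-chiF}) for a single model $([z_a],f_\rho)$ and rewrite the expansion of $f^{\rm II}_{1\Omega'}(\tau_{ws};[\beta_0];\chi_f)$ as a sum over $w\in(f_\rho\sqrt{D_z}i)^{-1}\mathfrak{b}_{z_a}$ weighted by $\Omega'_-(w)\cdot\chi_f(\,[w]\,)$, where $[w]\in{\cal O}_K/\underline{\mathfrak{c}}$ is the class extracted via the identification of $iReps$ with $(f_\rho\sqrt{D_z}i)^{-1}\mathfrak{b}_{z_a}/\mathfrak{b}_{z_a}$ (this needs Prop.\ \ref{props:cUnderline-formula} so that $\chi_f$ is well-defined on such classes, and uses that $\mathfrak{c}_f\,|\,\underline{\mathfrak{c}}$ to drop $\chi_f^{-1}$ to $\chi'_f$ at the conductor level). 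Next I would reparametrize: up to the action of ${\cal O}_K^\times$, vectors $w$ correspond bijectively to elements $\alpha\in\mathfrak{b}_{z_a}$ prime to $\mathfrak{c}_f$, and these in turn parametrize integral ideals $I=(\alpha)\mathfrak{b}_{z_a}^{-1}$ in the ideal class $[\mathfrak{b}_{z_a}^{-1}]$. The stabilizer factors in the denominator of (\ref{eq:lin-comb-Fzfrho-chiF}) cancel against the overcounting from ${\cal O}_K^\times$, using that the restriction of $\chi_f^{-1}$ to ${\rm Aut}([E_{z_a}]_\C)={\cal O}_K^\times$ equals $(\rho^1)^{-1}$ so that each ${\cal O}_K^\times$-orbit contributes coherently.

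With this bookkeeping done, Lemma \ref{lemma:HeckeC} rewrites $\rho^1(\alpha)\,\chi'_f(\alpha)=\varphi((\alpha)_{{\cal O}_K})$, and the multiplicativity $\varphi((\alpha)\mathfrak{b}_{z_a}^{-1})\cdot\varphi(\mathfrak{b}_{z_a})=\varphi((\alpha))$ extracts the correct value of $\varphi$ on the integral ideal $I$. Thus for fixed $a$ the symmetrized ``chiral correlation function'', suitably normalized, equals $\varphi(\mathfrak{b}_{z_a})^{-1}\sum_{I\in[\mathfrak{b}_{z_a}^{-1}]}\varphi(I)\,q^{\mathrm{Nm}(I)}$ (up to an overall $a$-independent constant absorbed in the normalization of $\rho^1$ relative to $\Omega'_-$). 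Summing over $a=1,\dots,h({\cal O}_K)$ with the weights $\varphi(\mathfrak{b}_{z_a})$ then reconstructs $\sum_{I\text{ integral}}\varphi(I)q^{\mathrm{Nm}(I)}=\vartheta(\tau;\varphi)$, which is precisely a linear combination inside $F({\cal E}ll({\cal O}_K),f_\rho)^{\chi_f^{-1}}$ as defined in (\ref{eq:Fzfrho-chif-sumoverClK}). The equal-modulus statement is automatic because the Hecke character $\varphi$ of type $[-1/2;1,0]$ satisfies $|\varphi(\mathfrak{b}_{z_a})|=\mathrm{Nm}(\mathfrak{b}_{z_a})^{1/2}$, and once one rescales the ``chiral correlation functions'' by $\mathrm{Nm}(\mathfrak{b}_{z_a})^{1/2}$ the combining weights $\varphi(\mathfrak{b}_{z_a})/|\varphi(\mathfrak{b}_{z_a})|$ all lie on $S^1$; the identification of the set of $\beta$'s that contribute with nonzero coefficient is exactly the CM-orbit description reserved for Thm.\ \ref{thm:newform-contrib-orbits}.

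The main obstacle will be the normalization gymnastics: keeping track of $\sqrt{2a_zf_\rho}$, $f_\rho\sqrt{D_z}i$, $\mathrm{Nm}(\mathfrak{b}_{z_a})$, and $|G_\Lambda|$ so that the scalar prefactors are genuinely $a$-independent and so that the resulting $q$-series is normalized to begin with $q^{1}$ as required of a Hecke newform. A secondary subtlety is verifying that the map $(f_\rho\sqrt{D_z}i)^{-1}\mathfrak{b}_{z_a}/\mathfrak{b}_{z_a}\to{\cal O}_K/\underline{\mathfrak{c}}$ used to evaluate $\chi_f$ is both well-defined and compatible across the different ideal classes $a$, which needs the uniformity in Cor.\ \ref{cor:discr-grp-isom} together with Prop.\ \ref{props:cUnderline-formula}. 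Once these are in place, the identity of $q$-expansions is coefficient-by-coefficient clear and finishes the proof.
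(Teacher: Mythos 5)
Your proposal is correct and follows essentially the same route as the paper's proof: decompose $\vartheta(\tau;\varphi)$ into ideal-class pieces (Lemma \ref{lemma:Hecke-theta-each-K}), identify $iReps_a$ with $\mathfrak{b}_{z_a}/\underline{\mathfrak{c}}\mathfrak{b}_{z_a}$ via $\Omega$, convert $\rho^1(\alpha)\chi'_f(\alpha)$ into $\varphi((\alpha))$ by Lemma \ref{lemma:HeckeC}, and sum over the $h({\cal O}_K)$ models. The only detail you elide is that when $\mathfrak{b}_{z_a}$ is not prime to $\mathfrak{c}_f$ one must first replace it by a representative $c'_a a_{z_a}\mathfrak{b}_{z_a}$ of the same ideal class that is prime to $\mathfrak{c}_f$ before evaluating $\chi'_f$ on generators — precisely the normalization point you flag at the end.
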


\begin{rmk}
\label{rmk:sanity-check-nebentypus}
Before getting into a proof, here we run a consistency check. 
The level $N_{D\Lambda_a}$ and the character $\epsilon_{D\Lambda_a}^{-1}$ 
are common to all the $h({\cal O}_K)$ models associated with $([z_a],f_\rho)$, 
$a = 1,\cdots, h({\cal O}_K)$ (see and Lemmas \ref{lemma:NDLambda} 
and \ref{lemma:epsiln=DK}). The vector spaces $F([z_a],f_\rho)^{\chi_f^{-1}}$
for $a=1,\cdots, h({\cal O}_K)$ with a common character $\chi_f$
are therefore identified with a part of a common vector space 
$S_2(\Gamma_0^0(N_{D\Lambda}), \chi_{N_{D\Lambda}}[\chi_f])$ because of 
Prop. \ref{props:nebentypus-formula-CCF}. 

The rescaling of the argument (\ref{eq:the-argument-rescaling}) 
brings the chiral correlation functions in the vector space 
$S_2(\Gamma_0^0(N_{D\Lambda}),\chi_{N_{D\Lambda}}[\chi_f])$ into 
$S_2(\Gamma_0(N_{D\Lambda}^2), \chi_{N_{D\Lambda}}[\chi_f])$.

The vector space $S_2(\Gamma_0(N),\chi_N)$ contains 
${\rm NewForms}(M_f,\chi_N)|[\diag(1,1)]_2$ with $M_f | N$, and 
the case 
$N = N_{D\Lambda}^2 = |D_K|{\rm Nm}_{K/\Q}(\underline{\mathfrak{c}})$
is of our interest (Rmk. \ref{rmk:OK-CMgrp-ideal}). 
So, if a Hecke theta function $\vartheta(\tau; \varphi)$ 
is to be found in the vector space $S_2(\Gamma_0(N_{D\Lambda}^2),\chi_N[\chi_f])$, 
then $N_\varphi | N_{D\Lambda}^2$ should be satisfied, or equivalently 
${\rm Nm}_{K/\Q}(\mathfrak{c}_f) | {\rm Nm}_{K/\Q}(\underline{\mathfrak{c}})$ 
(see Lemma \ref{lemma:Shimura-ellCM-Lemma3}).  $\bullet$
\end{rmk}
 
In proving Thm \ref{thm:newform-byCCF}, 
the following two well-known results are used. 
\begin{lemma}
\label{lemma:Hecke-theta-each-K}
Notation is the same as in Def. \ref{def:Hecke-theta-4K}. The Hecke 
theta function $\vartheta(\tau; \varphi)$ has the following expression:
\begin{align}
  \vartheta(\tau; \varphi) & \; = \sum_{\mathfrak{K} \in {\rm Cl}_K}
     \vartheta(\tau; \varphi; \mathfrak{K}), 
     \label{eq:Hecke-theta-div-into-idealClasses} \\
  \vartheta(\tau; \varphi; \mathfrak{K}) & \; = \frac{1}{|{\cal O}_K^\times|}
    \frac{1}{\varphi(\mathfrak{a}(\mathfrak{K}))} 
     \sum_{\alpha \in \mathfrak{a}(\mathfrak{K})} \chi'_f(\alpha) \rho^1(\alpha)
      q^{ |\rho^1(\alpha)|_\C^2 / {\rm Nm}(\mathfrak{a}(\mathfrak{K}))},
     \label{eq:Hecke-theta-4-idealClass}
\end{align}
where $\mathfrak{a}(\mathfrak{K})$ is any integral\footnote{
Any fractional ${\cal O}_K$ ideal $\mathfrak{a}$ that is 
prime to $\mathfrak{c}_f$ and $[\mathfrak{a}] = \mathfrak{K}^{-1}$
can play the role of $\mathfrak{a}(\mathfrak{K})$, by extending 
the definition of $\chi'_f$ properly. But we stick 
to an integral $\mathfrak{a}(\mathfrak{K})$ in the main text to 
keep the presentation simple. } ideal of ${\cal O}_K$
that belongs to the ideal class $\mathfrak{K}^{-1}$ prime to $\mathfrak{c}_f$.  

The value $\varphi(\mathfrak{a}(\mathfrak{K})) \in \C$ should, 
when the ideal class $\mathfrak{a}(\mathfrak{K})$ is an order $r$ 
element in ${\rm Cl}_K$, satisfy $(\varphi(\mathfrak{a}(\mathfrak{K})))^r = 
\varphi((\alpha)_{{\cal O}_K}) = \chi'_f(\alpha) \rho^1(\alpha)$, where 
$\alpha \in K^\times$ is a generator of the principal ideal 
$(\mathfrak{a}(\mathfrak{K}))^r$. This is 
why the character $\chi'_f$ of $[{\cal O}_K/\mathfrak{c}_f]^\times$ 
almost determines the Hecke character $\varphi$ but not quite (finite 
choices $\{ \varphi_{\hat{a}} \; | \; \hat{a} =1,\cdots, h({\cal O}_K) \}$ 
remain for a given $\chi_f$, as stated in Lemma \ref{lemma:HeckeC}).  $\bullet$
\end{lemma}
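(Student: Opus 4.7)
The plan is to split the defining sum $\vartheta(\tau;\varphi) = \sum_{I} \varphi(I) q^{{\rm Nm}(I)}$ over integral ideals $I$ of ${\cal O}_K$ according to the ideal class $\mathfrak{K} := [I] \in {\rm Cl}_K$ of $I$, and then, within each class, to parametrize those integral ideals by nonzero elements of a fixed representative ideal modulo units.

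More precisely, fix $\mathfrak{K} \in {\rm Cl}_K$ and choose $\mathfrak{a}(\mathfrak{K})$ an integral ideal in $\mathfrak{K}^{-1}$ prime to $\mathfrak{c}_f$; such a choice exists by weak approximation. First I would establish the bijection
\[
\{\, I \subseteq {\cal O}_K \text{ integral ideal with } [I] = \mathfrak{K}\,\} \;\longleftrightarrow\; (\mathfrak{a}(\mathfrak{K}) \setminus \{0\})/{\cal O}_K^\times,
\]
sending $I$ to the generator $\alpha$ (modulo units) of the principal ideal $I\cdot\mathfrak{a}(\mathfrak{K}) = (\alpha)_{{\cal O}_K}$, and sending $\alpha \in \mathfrak{a}(\mathfrak{K})$ to $I = (\alpha)\cdot\mathfrak{a}(\mathfrak{K})^{-1}$, which is integral because $(\alpha) \subseteq \mathfrak{a}(\mathfrak{K})$. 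Under this bijection $I$ is coprime to $\mathfrak{c}_f$ precisely when $\alpha$ is, because $\mathfrak{a}(\mathfrak{K})$ was chosen prime to $\mathfrak{c}_f$.

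Next I would translate the summand. By multiplicativity of $\varphi$ on ideals prime to $\mathfrak{c}_f$, $\varphi(I) = \varphi((\alpha))/\varphi(\mathfrak{a}(\mathfrak{K}))$, and by the defining relation between $\varphi$ and its associated finite-order character (Lemma \ref{lemma:HeckeC}), $\varphi((\alpha)) = \chi'_f(\alpha)\rho^1(\alpha)$ for $\alpha$ prime to $\mathfrak{c}_f$. Likewise ${\rm Nm}(I) = {\rm Nm}((\alpha))/{\rm Nm}(\mathfrak{a}(\mathfrak{K})) = |\rho^1(\alpha)|_\C^{\,2}/{\rm Nm}(\mathfrak{a}(\mathfrak{K}))$. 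For $\alpha$ not coprime to $\mathfrak{c}_f$ the convention $\chi'_f(\alpha)=0$ matches the convention $\varphi(I)=0$, so those terms consistently contribute zero to both sides. Replacing the sum over equivalence classes $[\alpha] \in (\mathfrak{a}(\mathfrak{K})\setminus\{0\})/{\cal O}_K^\times$ by a sum over all nonzero $\alpha \in \mathfrak{a}(\mathfrak{K})$ introduces the overcounting factor $|{\cal O}_K^\times|$; here I use that $\chi'_f(\alpha)\rho^1(\alpha)$ depends only on $\alpha$ modulo ${\cal O}_K^\times$ in the combination $\chi'_f(u\alpha)\rho^1(u\alpha) = \chi'_f(u)\rho^1(u)\chi'_f(\alpha)\rho^1(\alpha) = \chi'_f(\alpha)\rho^1(\alpha)$, because $\chi'_f|_{{\cal O}_K^\times} = (\rho^1)^{-1}$ by Lemma \ref{lemma:HeckeC}. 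The $\alpha = 0$ term can be added harmlessly since $\rho^1(0)=0$. This yields exactly the formula \eqref{eq:Hecke-theta-4-idealClass}, and summing over $\mathfrak{K} \in {\rm Cl}_K$ gives \eqref{eq:Hecke-theta-div-into-idealClasses}.

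The last sentence of the lemma—the compatibility condition $\varphi(\mathfrak{a}(\mathfrak{K}))^r = \chi'_f(\alpha)\rho^1(\alpha)$ when $\mathfrak{K}$ has order $r$ and $\mathfrak{a}(\mathfrak{K})^r = (\alpha)$—is simply the statement that $\varphi$ is a homomorphism from the group of fractional ideals prime to $\mathfrak{c}_f$ to $\C^\times$, restricted to the cyclic subgroup generated by $\mathfrak{a}(\mathfrak{K})$, combined once more with Lemma \ref{lemma:HeckeC}. The only mildly subtle step is the bijection in the first paragraph and verifying that the parametrization is independent of the choice of representative $\mathfrak{a}(\mathfrak{K})$: a different choice $\mathfrak{a}'(\mathfrak{K}) = (\beta)\mathfrak{a}(\mathfrak{K})$ rescales the inner sum by a factor that is absorbed into $\varphi(\mathfrak{a}(\mathfrak{K}))^{-1}$, so the two expressions agree. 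I expect no serious obstacle beyond careful bookkeeping of the unit-orbit count and the coprimality conditions.
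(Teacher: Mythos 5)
Your argument is correct and is the standard one: the paper itself quotes this lemma as a well-known fact (with the terminal $\bullet$ and no proof), so there is no in-paper argument to compare against, but your decomposition by ideal class, the bijection $I \leftrightarrow \alpha$ with $I\,\mathfrak{a}(\mathfrak{K}) = (\alpha)_{{\cal O}_K}$, the use of $\varphi((\alpha)) = \chi'_f(\alpha)\rho^1(\alpha)$, and the unit-orbit count justified by $\chi'_f|_{{\cal O}_K^\times} = (\rho^1)^{-1}$ are exactly the steps a written-out proof would contain. All the bookkeeping (coprimality to $\mathfrak{c}_f$, the vanishing of the $\alpha=0$ term, and the norm identity ${\rm Nm}(I) = |\rho^1(\alpha)|_\C^2/{\rm Nm}(\mathfrak{a}(\mathfrak{K}))$) checks out.
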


\begin{lemma}
For $[E_{z_a}]_\C \in 
{\cal E}ll({\cal O}_K)$, the rank-2 lattices $\mathfrak{b}_{z_a}$ 
for $[E_{z_a}]_\C$ are all fractional ideals of ${\cal O}_K$, and 
$a_{z_a} \mathfrak{b}_{z_a}$ are integral ideals of ${\cal O}_K$. 
${\rm Nm}_{K/\Q} \mathfrak{b}_{z_a} = 1/a_{z_a}$, and 
${\rm Nm}_{K/\Q}(a_{z_a}\mathfrak{b}_{z_a}) = a_{z_a}$. $\bullet$
\end{lemma}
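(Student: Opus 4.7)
The plan is to handle the four claims in the order listed, relying on the setup from Section~2.1 so that nothing beyond elementary manipulations with the defining relation $a_{z_a}z_a^2+b_{z_a}z_a+c_{z_a}=0$ is needed.

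First, I would dispose of the fractional-ideal claim by invoking a fact the paper has already stated in Section~2.1: the set of $\alpha\in K$ with $\alpha\mathfrak{b}_z\subset\mathfrak{b}_z$ is exactly ${\cal O}_{f_z}$. For $[E_{z_a}]_\C\in{\cal E}ll({\cal O}_K)$ we have $f_{z_a}=1$, so ${\cal O}_K\cdot\mathfrak{b}_{z_a}\subset\mathfrak{b}_{z_a}$; together with the observation that $\mathfrak{b}_{z_a}=\Z+z_a\Z$ is finitely generated over $\Z$ and lies in $K$, this gives the fractional-ideal property.

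Second, for integrality of $a_{z_a}\mathfrak{b}_{z_a}$, I would use $z_a=(-b_{z_a}+\sqrt{D_K})/(2a_{z_a})$ (since $f_{z_a}=1$), which yields $a_{z_a}z_a=(-b_{z_a}+\sqrt{D_K})/2$. The defining relation forces $b_{z_a}^2\equiv D_K\pmod 4$, so $b_{z_a}$ is even when $D_K\equiv 0\,(4)$ and odd when $D_K\equiv 1\,(4)$. In the first case, writing $w_K=\sqrt{D_K}/2$, one has $a_{z_a}z_a=-b_{z_a}/2+w_K\in{\cal O}_K$; in the second case, writing $w_K=(1+\sqrt{D_K})/2$, one has $a_{z_a}z_a=(-b_{z_a}-1)/2+w_K\in{\cal O}_K$. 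Since $a_{z_a}\in\Z\subset{\cal O}_K$, the lattice $a_{z_a}\mathfrak{b}_{z_a}=a_{z_a}\Z+(a_{z_a}z_a)\Z$ sits inside ${\cal O}_K$, so it is an integral ideal by the previous paragraph.

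Third, for the norm of the integral ideal $a_{z_a}\mathfrak{b}_{z_a}$, I would compute the index $[{\cal O}_K:a_{z_a}\mathfrak{b}_{z_a}]$ by expressing the $\Z$-basis $(a_{z_a},\,a_{z_a}z_a)$ of $a_{z_a}\mathfrak{b}_{z_a}$ in the $\Z$-basis $(1,w_K)$ of ${\cal O}_K$. In both parity cases computed above the change-of-basis matrix is upper triangular with diagonal $(a_{z_a},1)$, so its determinant has absolute value $a_{z_a}$, giving ${\rm Nm}_{K/\Q}(a_{z_a}\mathfrak{b}_{z_a})=a_{z_a}$. Multiplicativity of the ideal norm and ${\rm Nm}_{K/\Q}((a_{z_a}){\cal O}_K)=a_{z_a}^2$ then force ${\rm Nm}_{K/\Q}(\mathfrak{b}_{z_a})=a_{z_a}/a_{z_a}^2=1/a_{z_a}$.

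There is no serious obstacle here; the only mild subtlety is keeping the two parity cases for $D_K$ straight so that the off-diagonal entry of the change-of-basis matrix is verified to be an integer, which is exactly what makes the determinant computation clean and the norm come out to $a_{z_a}$.
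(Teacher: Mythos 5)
Your proof is correct. Note that the paper itself supplies no proof of this lemma: it is introduced with the phrase ``the following two well-known results are used'' and terminated with the $\bullet$ marker, so there is no in-paper argument to compare against. Your verification is the standard one and every step checks out: the fractional-ideal property follows from the fact (stated in section 2.1) that the multiplier ring of $\mathfrak{b}_{z_a}$ is ${\cal O}_{f_{z_a}}={\cal O}_K$; the congruence $b_{z_a}^2\equiv D_K\pmod 4$ (from $4a_{z_a}c_{z_a}-b_{z_a}^2=-D_K$ with $f_{z_a}=1$) correctly splits into the two parity cases and puts $a_{z_a}z_a$ in ${\cal O}_K$; the index computation via the triangular change-of-basis matrix gives ${\rm Nm}_{K/\Q}(a_{z_a}\mathfrak{b}_{z_a})=a_{z_a}$ (using, implicitly, that $a_{z_a}>0$ because $z_a\in{\cal H}$, which is worth a half-sentence); and multiplicativity of the norm then yields ${\rm Nm}_{K/\Q}(\mathfrak{b}_{z_a})=1/a_{z_a}$.
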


\begin{proof} of Thm. \ref{thm:newform-byCCF}: 
For each $[z_a] \in {\cal E}ll({\cal O}_K)$, let 
$[\mathfrak{b}_{z_a}] = \mathfrak{K}_a^{-1} \in {\rm Cl}_K$, so 
$\mathfrak{K}_a = [\mathfrak{b}_{z_a}^{-1}]$.
Let us first prove for each $\mathfrak{K}_a \in {\rm Cl}_K$ that 
$\vartheta(\tau; \varphi, \mathfrak{K}_a)$ is in the vector space 
$F([z_a], f_\rho)$.

The integral ideal $a_{z_a}\mathfrak{b}_{z_a}$ can be used as 
$\mathfrak{a}(\mathfrak{K_a})$ in Lemma \ref{lemma:Hecke-theta-each-K}, 
if $(a_{z_a}\mathfrak{b}_{z_a})$ is prime to $\mathfrak{c}_f$. 
If it is not, then choose $c'_a \in K$ so that 
$c'_a a_{z_a} \mathfrak{b}_{z_a}$ is integral and also prime to $\mathfrak{c}_f$. 
Such $\mathfrak{b}'_{z_a} := c'_a a_{z_a} \mathfrak{b}_{z_a}$ is used for 
$\mathfrak{a}(\mathfrak{K})$. 
The sum in (\ref{eq:Hecke-theta-4-idealClass}) is over the lattice 
$\mathfrak{b}'_{z_a}$ then. 
${\rm Nm}_{K/\Q}(\mathfrak{b}'_{z_a}) = |c'_a|_\C^2 a_{z_a}$. 
When the sum over $\alpha \in \mathfrak{b}'_{z_a}$ is grouped into  
$\amalg_{[\alpha_0]\in \mathfrak{b}'_{z_a}/\mathfrak{b}'_{z_a}\underline{\mathfrak{c}}} (\alpha_0 + \mathfrak{b}'_{z_a}\underline{\mathfrak{c}})$, 
the coefficient $\chi'_f(\alpha)$ in the sum remains the same in individual 
groups; $\chi'_f(\alpha_1) = \chi'_f(\alpha_2)$ if $\alpha_1, \alpha_2 \in \alpha_0 + \underline{c} \mathfrak{b}'_{z_a}$ because the conductor 
$\mathfrak{c}_f$ divides $\underline{\mathfrak{c}}$, and hence it also divides 
$\underline{\mathfrak{c}} \mathfrak{b}'_{z_a}$.

This observation helps in finding a relation between the Hecke theta 
functions and the ``chiral correlation functions''. The linear map $\Omega$ 
introduced in (\ref{eq:def-Omega}) maps the lattices $\Lambda_a$ and 
$\Lambda_a^\vee$ of the rational $T^2$-target model of ${\cal N}=(2,2)$ SCFT 
to 
\begin{align}
 & \Omega(\Lambda_a) = \underline{\mathfrak{c}} \mathfrak{b}_{z_a}, \qquad 
  \Omega(\Lambda_a^\vee) = \mathfrak{b}_{z_a},  \label{eq:image-LLdual-by-Omega} 
\end{align}
so the groups of the sum labeled by $\mathfrak{b}_{z_a}/
\underline{\mathfrak{c}}\mathfrak{b}_{z_a}$ are in one-to-one with 
the irreducible representations of the vertex operator algebra: 
\begin{align}
iReps_a \leftrightarrow \Omega(iReps_a) = 
\Omega(\Lambda_a^\vee)/\Omega(\Lambda_a) \leftrightarrow 
  \mathfrak{b}'_{z_a}/\underline{\mathfrak{c}}\mathfrak{b}'_{z_a}. 
\end{align}

For each group, 
\begin{align}
 i \frac{\sqrt{|G_{\Lambda_a}|}}{\sqrt{2a_{z_a}f_\rho}} 
   f_{1\Omega'}^{\rm II}(\tau_{ws}; \beta_0)_{([z_a],f_\rho)} & \; = 
   \sum_{w \in}^{w_0 + \Omega(\Lambda_a)} \; \rho^1(w) \; q^{a_{z_a} |w|_\C^2}
  = \sum_{\alpha \in}^{\alpha_0 + \mathfrak{b}'_{z_a}\underline{c}} 
     \frac{\rho^1(\alpha)}{c'_a a_{z_a}}  \; q^{\frac{|\alpha|_\C^2}{|c'_a|_\C^2a_{z_a}}}
   \label{eq:CCF-as-theta-in-Thm1}
\end{align}
indeed, where $\Omega(\beta_0) = [w_0] \in \mathfrak{b}_{z_a}/
\underline{\mathfrak{c}}\mathfrak{b}_{z_a}$, and 
the label $w$ in the sum in the middle is related to $\alpha$ in the 
sum (\ref{eq:Hecke-theta-4-idealClass}) are in the relation 
$c'_a a_{z_a} w = \alpha$. So, we can just take a sum of them to arrive at 
\begin{align}
 \vartheta(\tau; \varphi, \mathfrak{K}_a) & \; = 
    \frac{1}{|{\cal O}_K^\times|}
    \frac{i\sqrt{|G_\Lambda|}}{\sqrt{2f_\rho}}
    \frac{c'_a \sqrt{a_{z_a}}}{\varphi(\mathfrak{b}'_{z_a})}
  \sum_{\beta \in iReps_a} \chi'_f(c'_a a_{z_a}\Omega(\beta))       
      f_{1\Omega'}^{\rm II}(\tau_{ws}; \beta)_{([z_a],f_\rho)}.
     \label{eq:CCF-as-theta-in-Thm1-b}
\end{align}

Obviously, the value of $\chi'_f$ is either zero or a root of unity. 
The factor $c'_a \sqrt{a_{z_a}}/\varphi(\mathfrak{b}'_{z_a})$ is also 
a root of unity. 
So, all the $f^{\rm II}_{1\Omega'}(\tau_{ws};\beta)_{([z_a],f_\rho)}$'s
in the expression above have either the vanishing coefficient 
(when $\chi'_f=0$) or a $\C$-valued coefficient that is 
$|{\cal O}_K^\times|^{-1} \times \sqrt{f_\rho |D_K|/2}$ times a root of unity. 

Next, let us confirm that $\vartheta(\tau;\mathfrak{K}_a)$ is in the 
vector space $F([z_a],f_\rho)^{\chi_f^{-1}}$. The set $iReps_a$ can be grouped 
into orbits under the action of the CM group 
$[{\cal O}_K/\underline{\mathfrak{c}}]^\times$, and the sum over 
$\beta \in iReps_a$ in (\ref{eq:CCF-as-theta-in-Thm1-b}) can also be 
grouped accordingly. The sum in each group is 
(for some $\beta_* \in iReps_a$ representing an orbit)
\begin{align}
 \sum_{g \in [{\cal O}_K/\underline{\mathfrak{c}}]^\times} 
    \chi'_f(c'_a a_{z_a} \Omega(g \cdot \beta_*)) &
    f^{\rm II}_{1\Omega'}(g \cdot \beta_*)_{([z_a],f_\rho)}   \nonumber \\
& =  \chi'_f(c'_a a_{z_a} \Omega(\beta_*)) \times 
   \sum_{g \in [{\cal O}_K/\underline{\mathfrak{c}}]^\times}  \chi'_f(g)
    f^{\rm II}_{1\Omega'}(g \cdot \beta_*)_{([z_a],f_\rho)}, 
  \nonumber 
\end{align}
precisely the same form of sum in (\ref{eq:lin-comb-Fzfrho-chiF}), 
if and only if $\chi'_f = \chi_f^{-1}$.

Finally, because $\vartheta(\tau;\varphi, \mathfrak{K}_a)$'s are 
in the vector space $F([z_a],f_\rho)^{\chi_f^{-1}}$ for all 
$[z_a] \in {\cal E}ll({\cal O}_K)$ 
(if $\chi_f^{-1}=\chi'_f$ and $\mathfrak{c}_f | \underline{\mathfrak{c}}$), 
the Hecke theta functions $\vartheta(\tau;\varphi)$ 
in (\ref{eq:Hecke-theta-div-into-idealClasses}) must be in the 
vector space $F({\cal E}ll({\cal O}_K),f_\rho)^{\chi_f^{-1}}$ 
in (\ref{eq:Fzfrho-chif-sumoverClK}). 
For a given character $\chi_f^{-1} = \chi'_f \in 
{\rm Char}([{\cal O}_K/\mathfrak{c}_f]^\times)$, there are 
$h({\cal O}_K)$ distinct Hecke characters associated with the same 
$\chi'_f$ (Lemma \ref{lemma:HeckeC}), all of their Hecke theta functions 
are in the same vector space 
$F({\cal E}ll({\cal O}_K),f_\rho)^{\chi_f^{-1}}$. 
\end{proof}

Theorem \ref{thm:newform-byCCF} has not specified for 
which $\beta \in iReps_a$, $a = 1,\cdots, h({\cal O}_K)$ the 
linear combination coefficient of 
$f^{\rm II}_{1\Omega'}(\tau_{ws}; \beta)_{([z_a],f_\rho)}$ is non-zero. 
As a preparation for stating this information, we write down 
a decomposition of $iReps_a$ that is compatible with the CM group action. 
This decomposition (for any rational $T^2$-target model with $f_z = 1$)
is a generalization of what we already worked out in 
Examples \ref{exmpl:Dk4-fz1-N64-followup} and \ref{exmpl:Dk8-fz1-N64-followup}. 
\begin{lemma}
\label{lemma:CM-grp-orbit-dcmp}
Let $K$ be an imaginary quadratic field, 
$\underline{\mathfrak{c}}$ an integral ideal of ${\cal O}_K$, and 
$\mathfrak{b}$ a fractional ideal. Let 
$\underline{\mathfrak{c}} = \prod_i \mathfrak{p}_i^{n_i}$ be the prime 
ideal decomposition in ${\cal O}_K$. The group 
$[{\cal O}_K/\underline{\mathfrak{c}}]^\times$ within the ring 
${\cal O}_K/\underline{\mathfrak{c}}$ acts on the 
${\cal O}_K/\underline{\mathfrak{c}}$-module 
$\mathfrak{b}/\mathfrak{b}\underline{\mathfrak{c}}$, and the set 
$\mathfrak{b}/\mathfrak{b}\underline{\mathfrak{c}}$ is decomposed into 
orbits under the action of the group 
$[{\cal O}_K/\underline{\mathfrak{c}}]^\times$.

The orbit decomposition is given by 
\begin{align}
  \mathfrak{b}/\mathfrak{b}\underline{\mathfrak{c}} &\; = 
   \amalg_{\mathfrak{q} | \underline{\mathfrak{c}}} \; {\rm orb}_{\mathfrak{q}}, 
         \label{eq:iReps-MECE-dcmp} \\
 {\rm orb}_{\mathfrak{q}} & \; := \left( \mathfrak{b} \mathfrak{q} \; 
     \backslash \; \cup_{\mathfrak{q} \; | \; \mathfrak{q}' \; | \; 
     \underline{\mathfrak{c}} } \mathfrak{b} \mathfrak{q}' \right)
    / \mathfrak{b} \underline{\mathfrak{c}}, 
\end{align}
and the action of the group $[{\cal O}_K/\underline{\mathfrak{c}}]^\times$ 
on the orbit ${\rm orb}_{\mathfrak{q}_0}$ factors through 
$[{\cal O}_K/\mathfrak{c}_0]^\times$ where $\mathfrak{c}_0 := 
\underline{\mathfrak{c}}/\mathfrak{q}_0$. 
An element $x \in \mathfrak{b}/\mathfrak{b}\underline{\mathfrak{c}}$ 
is in ${\rm orb}_{\mathfrak{q}_0}$ iff $\mathfrak{c}_0 := {\rm Ann}(x)$. $\bullet$
\end{lemma}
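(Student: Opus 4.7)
The plan is to use the Chinese Remainder Theorem to reduce to a single prime power, invoke the DVR structure to linearize the action, read off the orbits by $\mathfrak{p}$-adic valuation, and then recombine.

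First I would factor $\underline{\mathfrak{c}} = \prod_i \mathfrak{p}_i^{n_i}$. Since the $\mathfrak{p}_i^{n_i}$ are pairwise coprime, CRT gives $[{\cal O}_K/\underline{\mathfrak{c}}]^\times$-equivariant isomorphisms
$$
{\cal O}_K/\underline{\mathfrak{c}} \;\cong\; \prod_i {\cal O}_K/\mathfrak{p}_i^{n_i},
\qquad
\mathfrak{b}/\mathfrak{b}\underline{\mathfrak{c}} \;\cong\; \prod_i \mathfrak{b}/\mathfrak{b}\mathfrak{p}_i^{n_i},
$$
with the unit group splitting accordingly. Thus orbits in $\mathfrak{b}/\mathfrak{b}\underline{\mathfrak{c}}$ factor as products of orbits in $\mathfrak{b}/\mathfrak{b}\mathfrak{p}_i^{n_i}$, reducing the problem to a single prime power.

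Next, for a fixed prime $\mathfrak{p}$ of ${\cal O}_K$, the fractional ideal $\mathfrak{b}$ is locally principal (its localization at $\mathfrak{p}$ is free of rank one over the DVR ${\cal O}_{K,\mathfrak{p}}$), so $\mathfrak{b}/\mathfrak{b}\mathfrak{p}^n$ is free of rank one over ${\cal O}_K/\mathfrak{p}^n$. Fixing a generator identifies this module with ${\cal O}_K/\mathfrak{p}^n$ acted on by left multiplication by $[{\cal O}_K/\mathfrak{p}^n]^\times$; elementary DVR arithmetic then shows that the orbits are parametrized by the $\mathfrak{p}$-adic valuation $k \in \{0,1,\ldots,n\}$, and the orbit indexed by $k$ corresponds to $(\mathfrak{b}\mathfrak{p}^k \setminus \mathfrak{b}\mathfrak{p}^{k+1})/\mathfrak{b}\mathfrak{p}^n$ (with the convention that $\mathfrak{b}\mathfrak{p}^{n+1}$ is omitted, so $k=n$ yields the zero orbit). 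Assembling the local data via Step~1, each global orbit is then labeled by a tuple $(k_i)_i$ with $0 \le k_i \le n_i$, equivalently by a divisor $\mathfrak{q} = \prod_i \mathfrak{p}_i^{k_i}$ of $\underline{\mathfrak{c}}$, and consists of those $x$ lying in $\mathfrak{b}\mathfrak{q}$ but not in $\mathfrak{b}\mathfrak{q}'$ for any properly larger divisor $\mathfrak{q}'$ of $\underline{\mathfrak{c}}$; this is exactly the formula stated for ${\rm orb}_{\mathfrak{q}}$.

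For the annihilator characterization, I would note that for a lift $\tilde{x} \in \mathfrak{b}$ of $x$ and an ideal $\mathfrak{a} \mid \underline{\mathfrak{c}}$, one has $\mathfrak{a}\tilde{x} \subseteq \mathfrak{b}\underline{\mathfrak{c}}$ iff $\tilde{x} \in (\mathfrak{b}\underline{\mathfrak{c}} : \mathfrak{a}) = \mathfrak{b}\underline{\mathfrak{c}}\mathfrak{a}^{-1}$, using that ${\cal O}_K$ is Dedekind. Applying this with $\mathfrak{a} = \mathfrak{c}_0 = \underline{\mathfrak{c}}/\mathfrak{q}_0$ gives ${\rm Ann}(x) \supseteq \mathfrak{c}_0$ iff $\tilde{x} \in \mathfrak{b}\mathfrak{q}_0$, and equality ${\rm Ann}(x) = \mathfrak{c}_0$ iff $x \in {\rm orb}_{\mathfrak{q}_0}$. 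The isotropy subgroup $\{u : (u-1)x = 0\}$ is then $1 + \mathfrak{c}_0/\underline{\mathfrak{c}}$, which is exactly the kernel of the natural surjection $[{\cal O}_K/\underline{\mathfrak{c}}]^\times \twoheadrightarrow [{\cal O}_K/\mathfrak{c}_0]^\times$; hence the action on ${\rm orb}_{\mathfrak{q}_0}$ factors through the latter group, as claimed. The hard part, to the extent there is any, is purely bookkeeping: verifying the colon-ideal identity $(\mathfrak{b}\underline{\mathfrak{c}} : \mathfrak{c}_0) = \mathfrak{b}\mathfrak{q}_0$ (immediate from $\mathfrak{c}_0 \mathfrak{q}_0 = \underline{\mathfrak{c}}$ and unique factorization) and keeping the opposite conventions for ideal divisibility versus inclusion straight throughout; no substantive input beyond the Dedekind structure of ${\cal O}_K$ is expected.
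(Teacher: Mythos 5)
Your proof is correct. The paper itself supplies no proof of this lemma — it is stated with a terminal bullet and explicitly called "elementary" — so there is no argument to compare against; your route (CRT to reduce to one prime power, local principality of $\mathfrak{b}$ to identify $\mathfrak{b}/\mathfrak{b}\mathfrak{p}^n$ with ${\cal O}_K/\mathfrak{p}^n$, valuation to read off the orbits, and the colon-ideal computation for the annihilator) is the standard one and covers every claim in the statement, including transitivity on each ${\rm orb}_{\mathfrak{q}}$, which is the only point actually requiring an argument. The sole nitpick is that the isotropy subgroup is $(1+\mathfrak{c}_0/\underline{\mathfrak{c}})\cap[{\cal O}_K/\underline{\mathfrak{c}}]^\times$ rather than all of $1+\mathfrak{c}_0/\underline{\mathfrak{c}}$ (not every such coset representative is a unit when $\mathfrak{c}_0$ has prime factors outside those of $\underline{\mathfrak{c}}/\mathfrak{c}_0$... more precisely, when $\underline{\mathfrak{c}}$ has several prime factors), but this set is still exactly the kernel of $[{\cal O}_K/\underline{\mathfrak{c}}]^\times \twoheadrightarrow [{\cal O}_K/\mathfrak{c}_0]^\times$, so your conclusion stands unchanged.
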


The elementary Lemma above in algebraic number theory is translated 
to the following statement in rational $T^2$-target models of CFT. 
\begin{props}
For a fixed K\"{a}hler parameter $f_\rho \in \N_{>0}$, think 
of $h({\cal O}_K)$ models of $T^2$-target CFT associated 
with the data $([z_a],f_\rho)$, where $[z_a] \in {\cal E}ll({\cal O}_K)$, 
$a=1, \cdots, h({\cal O}_K)$. The set of irreducible representations $iReps_a$
is decomposed into the orbits of the CM group 
$[{\cal O}_K/\underline{\mathfrak{c}}]^\times$ where the orbits are 
labeled by integral ideals $\mathfrak{q}_0$ that divides 
$\underline{\mathfrak{c}}$. The orbit in $iReps_a$ that is mapped by 
$\Omega$ to ${\rm orb}_{\mathfrak{q}_0} \subset \mathfrak{b}_{z_a}/\mathfrak{b}_{z_a}\underline{\mathfrak{c}}$ is denoted by ${\rm orb}^a_{\mathfrak{q}_0}$.
The number of irreducible representations in ${\rm orb}^a_{\mathfrak{q}_0}$
depends only on $\mathfrak{q}_0$, but not on 
$a \in \{1,\cdots, h({\cal O}_K)\}$.

In Statement \ref{statmnt:Fzfrho-is-in-S2GammaN}, we have seen that 
$f^{\rm II}_{1\Omega'}(\tau_{ws};\beta)=0$ for $\beta$'s in 
$G_{\Lambda}=\Lambda^\vee/\Lambda$ where the subgroup $G_0 = {\rm Aut}([E_z]_\C)$
of the CM group does not act freely. 
One of such $\beta$'s are the 2-torsion 
points $G_\Lambda[2] \subset G_\Lambda$, which is non-empty iff 
$2|f_\rho^2 D_K$ ($f_z=1$ now). The CM group action preserves $G_\Lambda[2]$, 
and the orbits within $G_\Lambda[2]$ are ${\rm orb}_{\mathfrak{q}_0}$'s
with $\mathfrak{q}_0$ that divisible by $(f_\rho \sqrt{D_K}/2)_{{\cal O}_K}$. 
The other group of $\beta$'s are the fixed points of order-3 automorphisms 
available when $D_K = -3$. They correspond to ${\rm orb}_{\mathfrak{q}_0}$ 
with $\mathfrak{q}_0 = \underline{\mathfrak{c}}/(\sqrt{-3})_{{\cal O}_K}$. 
$\bullet$
\end{props}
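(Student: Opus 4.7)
The plan is to reduce the first two assertions to the purely algebraic Lemma \ref{lemma:CM-grp-orbit-dcmp} through the isomorphism $\Omega$ in (\ref{eq:image-LLdual-by-Omega}), and then to settle the last two claims by identifying 2-torsion points and order-3 fixed points with specific annihilator conditions in ${\cal O}_K/\underline{\mathfrak{c}}$. First I would observe that under $\Omega$ the action of the CM group $[{\cal O}_K/\underline{\mathfrak{c}}]^\times$ on $iReps_a$ is intertwined with multiplication on $\mathfrak{b}_{z_a}/\mathfrak{b}_{z_a}\underline{\mathfrak{c}}$, so Lemma \ref{lemma:CM-grp-orbit-dcmp} applied to $\mathfrak{b}=\mathfrak{b}_{z_a}$ yields directly the orbit decomposition indexed by divisors $\mathfrak{q}_0 \mid \underline{\mathfrak{c}}$, with ${\rm orb}^a_{\mathfrak{q}_0}$ consisting of those $\beta$ whose $\Omega(\beta)$ has annihilator precisely $\mathfrak{c}_0 := \underline{\mathfrak{c}}/\mathfrak{q}_0$ in ${\cal O}_K/\underline{\mathfrak{c}}$.

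For the $a$-independence of $|{\rm orb}^a_{\mathfrak{q}_0}|$, the key input is that $f_z = 1$ forces ${\cal O}_K$ to be Dedekind and each $\mathfrak{b}_{z_a}$ to be an invertible ${\cal O}_K$-module. Tensoring with ${\cal O}_K/\underline{\mathfrak{c}}$ gives
\begin{align}
   \mathfrak{b}_{z_a}/\mathfrak{b}_{z_a}\underline{\mathfrak{c}} \; \cong \;
        \mathfrak{b}_{z_a} \otimes_{{\cal O}_K} ({\cal O}_K/\underline{\mathfrak{c}}),
   \nonumber
\end{align}
which localizes at each prime $\mathfrak{p} \mid \underline{\mathfrak{c}}$ to a free rank-one module over the Artin local ring $({\cal O}_K/\underline{\mathfrak{c}})_\mathfrak{p}$. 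In particular, the number of elements with a prescribed annihilator depends only on $\mathfrak{q}_0$ and not on the choice of invertible module $\mathfrak{b}_{z_a}$, giving the claimed invariance.

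For the 2-torsion claim, $\beta \in G_\Lambda[2]$ is equivalent to $2\Omega(\beta) \in \mathfrak{b}_{z_a}\underline{\mathfrak{c}}$, which is $(2) \subset \mathfrak{c}_0$, or equivalently $\mathfrak{c}_0 \mid (2)$. In terms of $\mathfrak{q}_0 = \underline{\mathfrak{c}}/\mathfrak{c}_0$ this reads $\mathfrak{q}_0 \supset \underline{\mathfrak{c}}/(2) = (f_\rho\sqrt{D_K}/2)_{{\cal O}_K}$, provided the fractional ideal $(f_\rho\sqrt{D_K}/2)$ is integral; a direct case check on $D_K \bmod 4$ shows that integrality is equivalent to $2 \mid f_\rho^2 D_K$. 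Analogously, for $D_K = -3$, a fixed point of multiplication by $\zeta_3 = e^{2\pi i/3}$ is characterized by $(\zeta_3 - 1)\Omega(\beta) \in \mathfrak{b}_{z_a}\underline{\mathfrak{c}}$; since $(\zeta_3 - 1)_{{\cal O}_K} = (\sqrt{-3})_{{\cal O}_K}$ (both generating the unique ramified prime above $3$), this translates to $\mathfrak{q}_0 \supset \underline{\mathfrak{c}}/(\sqrt{-3})_{{\cal O}_K}$.

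The main technical obstacle is in the 2-torsion case: one must verify the divisibility bookkeeping in each situation where $(2)$ is inert, split, or ramified in ${\cal O}_K$, and be precise about the distinction between orbits entirely contained in $G_\Lambda[2]$ versus orbits merely intersecting it. The order-3 argument is cleaner because $3$ is necessarily ramified when $D_K = -3$ and $(\sqrt{-3})_{{\cal O}_K}$ is a single explicit prime. Modulo these case checks—routine but notationally heavy—the proposition is a direct transcription of the module-theoretic statement via $\Omega$.
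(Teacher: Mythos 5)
Your proposal is correct and follows the same route the paper takes: the paper in fact prints no proof of this Proposition at all, presenting it as an immediate "translation" of Lemma \ref{lemma:CM-grp-orbit-dcmp} through the identification $\Omega(iReps_a)\cong\mathfrak{b}_{z_a}/\mathfrak{b}_{z_a}\underline{\mathfrak{c}}$, so you are supplying details the authors omit. Your localization argument for the $a$-independence is a clean way to see what is already implicit in the Lemma (the orbit cardinalities are computed by inclusion--exclusion from ${\rm Nm}(\underline{\mathfrak{c}}\mathfrak{q}'^{-1})$, which never sees $\mathfrak{b}_{z_a}$), and the annihilator characterizations of $G_\Lambda[2]$ and of the $\zeta_3$-fixed points, including $(\zeta_3-1)_{{\cal O}_K}=(\sqrt{-3})_{{\cal O}_K}$, are right.

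One slip to fix: from $\mathfrak{c}_0\mid(2)$, i.e.\ $(2)\subset\mathfrak{c}_0$, inverting and multiplying by $\underline{\mathfrak{c}}$ gives $\mathfrak{q}_0=\underline{\mathfrak{c}}\,\mathfrak{c}_0^{-1}\subset\underline{\mathfrak{c}}\,(2)^{-1}=(f_\rho\sqrt{D_K}/2)$, not "$\supset$" as you wrote; in the contain-is-to-divide dictionary this is exactly $(f_\rho\sqrt{D_K}/2)\mid\mathfrak{q}_0$, matching the Proposition's wording. The same reversal occurs in your order-3 case, where moreover primality of $(\sqrt{-3})$ forces the equality $\mathfrak{q}_0=\underline{\mathfrak{c}}/(\sqrt{-3})$ rather than mere divisibility. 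Finally, the "contained in versus merely intersecting $G_\Lambda[2]$" worry you raise is vacuous: membership in $G_\Lambda[2]$ is a condition on ${\rm Ann}(\Omega(\beta))$, which is constant on each CM-group orbit, so every orbit is either wholly inside or wholly outside $G_\Lambda[2]$, and no case analysis on the splitting of $2$ is needed beyond the integrality check you already describe.
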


\begin{thm}
\label{thm:newform-contrib-orbits}
When a Hecke theta function $\vartheta(\tau; \varphi)$ is written 
as a linear combination of $f^{\rm II}_{1\Omega'}(\tau_{ws}; \beta)_{([z_a],f_\rho)}$'s, 
the coefficients are non-zero only for $\beta$'s in the orbits 
\begin{align}
  \amalg_{a=1}^{h({\cal O}_K)} \; 
    \amalg_{\mathfrak{q} | \underline{\mathfrak{c}}, {\rm ~and~}
             \mathfrak{q} + \mathfrak{c}_f = {\cal O}_K}
      {\rm orb}_{\mathfrak{q}}^a. 
\end{align}
The condition on $\mathfrak{q}$ here is equivalent to 
$\mathfrak{q} | \mathfrak{q}_{p.\mathfrak{c}_f}$, where 
$\mathfrak{q}_{p.\mathfrak{c}_f} := 
  \prod_{i \nin {\rm Supp}(\mathfrak{c}_f)} \mathfrak{p}_i^{n_i}$ when 
$\underline{\mathfrak{c}} = \prod_i \mathfrak{p}_i^{n_i}$ is the prime ideal 
decomposition. 
\end{thm}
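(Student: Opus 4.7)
The plan is to extract the result directly from the formula already established in the proof of Thm.~\ref{thm:newform-byCCF}, and then identify the vanishing pattern of the coefficients by means of the orbit description in Lemma~\ref{lemma:CM-grp-orbit-dcmp}.

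First, from equation~(\ref{eq:CCF-as-theta-in-Thm1-b}) in the proof of Thm.~\ref{thm:newform-byCCF}, the coefficient of $f^{\rm II}_{1\Omega'}(\tau_{ws};\beta)_{([z_a],f_\rho)}$ in $\vartheta(\tau;\varphi,\mathfrak{K}_a)$ is a non-zero scalar times $\chi'_f(c'_a a_{z_a}\Omega(\beta))$. Because $\vartheta(\tau;\varphi,\mathfrak{K}_a)$ lives entirely inside $F([z_a],f_\rho)$ (by (\ref{eq:CCF-as-theta-in-Thm1})), there is no cross-contamination from other $\mathfrak{K}_{a'}$, so this is also the coefficient in the full Hecke theta function $\vartheta(\tau;\varphi)=\sum_{\mathfrak{K}_a}\vartheta(\tau;\varphi,\mathfrak{K}_a)$. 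The character $\chi'_f$, extended from $[{\cal O}_K/\mathfrak{c}_f]^\times$ by zero, satisfies $\chi'_f(\alpha)=0$ precisely when $(\alpha)_{{\cal O}_K}+\mathfrak{c}_f\neq{\cal O}_K$. So the theorem reduces to characterizing for which orbits $\mathrm{orb}^a_{\mathfrak{q}}$ the ideal $(c'_a a_{z_a}\Omega(\beta))$ is coprime to $\mathfrak{c}_f$.

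Next I would translate the orbit description. By Lemma~\ref{lemma:CM-grp-orbit-dcmp}, for $\beta\in\mathrm{orb}^a_{\mathfrak{q}_0}$ the image $\Omega(\beta)$ lies in $\mathfrak{b}_{z_a}\mathfrak{q}_0$ but not in $\mathfrak{b}_{z_a}\mathfrak{q}'$ for any proper divisor $\mathfrak{q}_0\mid \mathfrak{q}'\mid\underline{\mathfrak{c}}$. Multiplying by $c'_a a_{z_a}$ (which by the construction in the proof of Thm.~\ref{thm:newform-byCCF} yields $\alpha\in\mathfrak{b}'_{z_a}$ with $\mathfrak{b}'_{z_a}=c'_a a_{z_a}\mathfrak{b}_{z_a}$ chosen coprime to $\mathfrak{c}_f$), we obtain $\alpha\in\mathfrak{b}'_{z_a}\mathfrak{q}_0$, so the ideal $(\alpha)$ is divisible by $\mathfrak{b}'_{z_a}\mathfrak{q}_0$. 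Since $\mathfrak{b}'_{z_a}+\mathfrak{c}_f={\cal O}_K$, coprimality of $(\alpha)$ with $\mathfrak{c}_f$ is equivalent to coprimality of $\mathfrak{q}_0$ with $\mathfrak{c}_f$: if $\mathfrak{q}_0+\mathfrak{c}_f\neq{\cal O}_K$, then $(\alpha)$ shares a prime factor with $\mathfrak{c}_f$ and $\chi'_f(\alpha)=0$, killing the coefficient.

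Finally I would record the trivial equivalence that, for $\mathfrak{q}\mid\underline{\mathfrak{c}}=\prod_i\mathfrak{p}_i^{n_i}$, the condition $\mathfrak{q}+\mathfrak{c}_f={\cal O}_K$ says precisely that $\mathfrak{q}$ uses only the primes $\mathfrak{p}_i$ with $\mathfrak{p}_i\nmid\mathfrak{c}_f$, i.e.\ $\mathfrak{q}\mid\mathfrak{q}_{p.\mathfrak{c}_f}$. Combining with the previous paragraph, the coefficients are forced to vanish outside $\coprod_a\coprod_{\mathfrak{q}\mid\mathfrak{q}_{p.\mathfrak{c}_f}}\mathrm{orb}^a_{\mathfrak{q}}$, which is the asserted statement. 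The only subtle point — and the main thing to check carefully — is the direction of the ideal-divisibility bookkeeping: one must keep track that the containment $\Omega(\beta)\in\mathfrak{b}_{z_a}\mathfrak{q}_0$ translates into $\mathfrak{q}_0\mid(\alpha)/\mathfrak{b}'_{z_a}$ with the correct orientation so that a prime of $\mathfrak{c}_f$ dividing $\mathfrak{q}_0$ indeed appears in $(\alpha)$. Everything else is a direct reading of the identity (\ref{eq:CCF-as-theta-in-Thm1-b}).
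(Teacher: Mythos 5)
Your proposal is correct and follows essentially the same route as the paper's proof: both read the coefficient off as $\chi'_f(\alpha)$ from the formula in the proof of Thm.~\ref{thm:newform-byCCF}, use the coprimality of $\mathfrak{b}'_{z_a}$ with $\mathfrak{c}_f$ to reduce to the coprimality of $\mathfrak{q}_0$ with $\mathfrak{c}_f$, and then invoke the orbit characterization of Lemma~\ref{lemma:CM-grp-orbit-dcmp}. The only cosmetic difference is that you argue via the containment $\alpha\in\mathfrak{b}'_{z_a}\mathfrak{q}_0$ while the paper phrases $\mathfrak{q}_0$ as the gcd $(\alpha)(\mathfrak{b}'_{z_a})^{-1}+\underline{\mathfrak{c}}$; since the theorem only asserts vanishing outside the listed orbits, the direction you spell out explicitly is the one that is needed.
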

\begin{proof}
Suppose that 
$\alpha \in c'_a a_{z_a} \mathfrak{b}_{z_a} = \mathfrak{b}'_{z_a}$ 
in the proof of Thm. \ref{thm:newform-byCCF} is in ${\rm orb}^a_{\mathfrak{q}_0}$.
This means that $\mathfrak{q}_0$ is the largest ideal dividing 
$(\alpha)_{{\cal O}_K} (\mathfrak{b}'_{z_a})^{-1} + \underline{\mathfrak{c}}$. 
The coefficient $\chi'_f(\alpha)$ is non-zero if and only if 
$(\alpha)_{{\cal O}_K}$ is prime to $\mathfrak{c}_f$; this condition 
is the same as $(\alpha)_{{\cal O}_K} (\mathfrak{b}'_{z_a})^{-1}$ being 
prime to $\mathfrak{c}_f$, because we chose $\mathfrak{b}'_{z_a}$ to be 
prime to $\mathfrak{c}_f$ in the proof. So, $\chi'_f(\alpha)$ is 
non-zero if and only if 
$\mathfrak{q}_0 | \mathfrak{q}_{p.\mathfrak{c}_f}$.
\end{proof}

\begin{rmk}
\label{statmnt:CM-form} 
Reference \cite[\S3]{Ribet-Neben} introduces a notion of CM modular form. 
It is a Hecke newform $f \in S_k(\Gamma_0(N),\chi)$, where $\chi$ is the 
nebentypus, with certain properties described in \cite[\S3]{Ribet-Neben}.
Ref. \cite[\S3 and Thm 4.5]{Ribet-Neben} states further that those CM modular 
forms are in one-to-one with Hecke theta functions of some imaginary 
quadratic field $K$ and some Hecke character of 
$\varphi: K^\times \backslash \mathbb{A}_K^\times \rightarrow \C$ of 
type [$-(k-1)/2$; $(k-1)$, 0]. 

Therefore, Thm. \ref{thm:newform-byCCF} can be rephrased as follows. 
{\it All the CM modular forms in $S_2(\Gamma_1(N_{D\Lambda}^2))$, including 
the $|[\diag(1,1)]_2$ image of those of $S_2(\Gamma_1(M_f))$ with the 
level dividing $N_{D\Lambda}^2$, are found within the subspace 
$\oplus_{\chi_f} F({\cal E}ll({\cal O}_K),f_\rho)^{\chi_f^{-1}}$ of 
$S_2(\Gamma_1(N_{D\Lambda}^2))$}.  $\bullet$
\end{rmk}

\subsubsection{Oldforms Realized by Chiral Correlation Functions}
\label{sssec:oldform}

The vector space of ``$(g,n)=(1,2)$ chiral correlation functions'' 
often allows more independent 
linear combinations than those exploited in Theorem \ref{thm:newform-byCCF}; 
(\ref{eq:of-64-00-byCCF}) and (\ref{eq:of-64-00-byCCF-2ndApp}) are  
examples. We also know that the vector space $S_2(\Gamma_0(N),\chi_N[\chi_f])$ 
with 
$N = N_{D\Lambda}^2$ not only contains Hecke theta functions for Hecke 
characters with a conductor $\mathfrak{c}_f$ dividing 
$\underline{\mathfrak{c}}$ (as in Theorem \ref{thm:newform-byCCF}), but also 
their images under the linear operators $|[\diag(r,1)]_2$ for 
$r | {\rm Nm}_{K/\Q}(\underline{\mathfrak{c}}) / {\rm Nm}_{K/\Q}(\mathfrak{c}_f)$.
In the following, we study which of those oldforms are realized by 
chiral correlation functions, and for which of them multiple 
realizations exist (i.e., the homomorphism $\C_{G_{\Lambda}^*/{\rm Aut}([E_z]_\C)}
 \rightarrow F([z],f_\rho)$ is not an isomorphism but a projection). 

\begin{thm}
\label{thm:oldform-byCCF}
Notations being the same as in Theorem \ref{thm:newform-byCCF}. 
Suppose that $\mathfrak{c}_f | \underline{\mathfrak{c}}$, and 
$\chi'_f = \chi_f^{-1}$.
Then {\it the oldforms $\vartheta(\tau; \varphi)|[\diag(r,1)]_2$ associated 
with the Hecke theta function $\vartheta(\tau; \varphi)$ are found within 
the vector space (\ref{eq:Fzfrho-chif-sumoverClK}), 
if there exists an integral ideal $\mathfrak{q}_r$ satisfying} 
\begin{align}
 r = {\rm Nm}_{K/\Q}(\mathfrak{q}_r), \qquad 
    \mathfrak{q}_r \; | \; \underline{\mathfrak{c}}/\mathfrak{c}_f .
\end{align}
\end{thm}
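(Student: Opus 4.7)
The plan is to adapt the proof of Thm.~\ref{thm:newform-byCCF} through a strategic shift of ideal-class representative: for each class $\mathfrak{K}_a\in{\rm Cl}_K$ I introduce the unique $[z_b]\in{\cal E}ll({\cal O}_K)$ satisfying $[\mathfrak{q}_r\mathfrak{b}_{z_b}]=[\mathfrak{b}_{z_a}]$ and use $\mathfrak{a}(\mathfrak{K}_a):=\mathfrak{q}_r\mathfrak{b}'_{z_b}$, where $\mathfrak{b}'_{z_b}$ is an integral representative of $[\mathfrak{b}_{z_b}]$ coprime to both $\mathfrak{c}_f$ and $\mathfrak{q}_r$ (available by weak approximation). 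The arithmetic identity $N(\mathfrak{q}_r\mathfrak{b}'_{z_b})=r\,N(\mathfrak{b}'_{z_b})$ then absorbs the factor of $r$ produced by $\tau\mapsto r\tau$, so that the $q$-exponent in the theta sum matches that of (\ref{eq:CCF-as-theta-in-Thm1}) for the model $([z_b],f_\rho)$ rather than $([z_a],f_\rho)$.

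Parametrizing integral ideals $J$ in the class of $\mathfrak{q}_r\mathfrak{b}_{z_b}$ by $J=(\beta)(\mathfrak{b}'_{z_b})^{-1}$ with $\beta\in\mathfrak{b}'_{z_b}$, the conditions $\mathfrak{q}_r\mid J$ and $(J/\mathfrak{q}_r)+\mathfrak{c}_f={\cal O}_K$ translate to $\beta\in\mathfrak{q}_r\mathfrak{b}'_{z_b}$ with $v_\mathfrak{p}((\beta))=v_\mathfrak{p}(\mathfrak{q}_r)$ for every $\mathfrak{p}\mid\mathfrak{c}_f$; in the language of Lemma~\ref{lemma:CM-grp-orbit-dcmp} these are exactly the CM-group orbits ${\rm orb}^b_{\mathfrak{q}_0}$ with $\mathfrak{q}_r\mid\mathfrak{q}_0$ and $v_\mathfrak{p}(\mathfrak{q}_0)=v_\mathfrak{p}(\mathfrak{q}_r)$ at $\mathfrak{p}\mid\mathfrak{c}_f$. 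Writing $\varphi(J/\mathfrak{q}_r)=\rho^1(\beta)\rho^1(\lambda)\chi'_f(\beta\lambda)/\varphi(\mathfrak{b}'_{z_a})$, where $\lambda\in K^\times$ is any generator of the principal ideal $\mathfrak{b}'_{z_a}(\mathfrak{b}'_{z_b})^{-1}\mathfrak{q}_r^{-1}$, one obtains
\begin{align*}
 r\,\vartheta(r\tau;\varphi;\mathfrak{K}_a)
 = \frac{r\,\rho^1(\lambda)}{|{\cal O}_K^\times|\,\varphi(\mathfrak{b}'_{z_a})}
   \sum_{\beta\in\mathfrak{q}_r\mathfrak{b}'_{z_b}}\tilde\chi_f(\beta)\,\rho^1(\beta)\, q^{|\beta|^2/N(\mathfrak{b}'_{z_b})},
\end{align*}
where $\tilde\chi_f(\beta):=\chi'_f(\beta\lambda)$ is well-defined (the product $\beta\lambda$ is coprime to $\mathfrak{c}_f$ thanks to the $v$-conditions on $\beta$ balanced against $v_\mathfrak{p}(\lambda)=-v_\mathfrak{p}(\mathfrak{q}_r)$ at $\mathfrak{p}\mid\mathfrak{c}_f$) and is constant on cosets of $\underline{\mathfrak{c}}\mathfrak{b}'_{z_b}$ by virtue of $\mathfrak{q}_r\mathfrak{c}_f\mid\underline{\mathfrak{c}}$.

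Partitioning the sum into cosets of $\underline{\mathfrak{c}}\mathfrak{b}'_{z_b}$, the inner sums reproduce --- up to overall constants --- the ``$(g,n)=(1,2)$ chiral correlation functions'' $f^{\rm II}_{1\Omega'}(\tau_{ws};\tilde\beta)_{([z_b],f_\rho)}$ for $\tilde\beta\in iReps_b$ in the selected orbits. Summing over all $\mathfrak{K}_a\in{\rm Cl}_K$ (equivalently, over all $[z_b]\in{\cal E}ll({\cal O}_K)$ via the bijection $\mathfrak{K}_a\leftrightarrow[z_b]$) will then express $\vartheta(\tau;\varphi)|[\diag(r,1)]_2$ as a $\C$-linear combination of CCFs sitting inside (\ref{eq:Fzfrho-chif-sumoverClK}), and regrouping by the CM-group action exactly as at the end of the proof of Thm.~\ref{thm:newform-byCCF} will place it in the $\chi_f^{-1}$-isotypic component.

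The principal technical obstacle is the non-coprimality of $\mathfrak{q}_r$ and $\mathfrak{c}_f$, already observed in Example~\ref{exmpl:Dk4-fz1-N64-followup} with $\mathfrak{q}_r=(1+i)$ and $\mathfrak{c}_f=(1+i)^3$: a direct use of Lemma~\ref{lemma:Hecke-theta-each-K} with $\mathfrak{a}=\mathfrak{q}_r\mathfrak{b}'_{z_b}$ fails because this ideal is not coprime to $\mathfrak{c}_f$ and $\varphi$ is not defined on it. The auxiliary element $\lambda$ above performs exactly the principal-ideal correction that restores the Hecke-character formula, and the hard part of the argument will be verifying that $\lambda$ can be chosen consistently across all ideal classes so that the resulting global combination lands in a single isotypic component --- different choices of $\lambda$ differing only by units amount to an overall scalar absorbed into the normalization of $\vartheta(\tau;\varphi)$, but making this explicit and confirming the $\mathfrak{K}_a$-independent phase is the main calculational check the proof will have to discharge.
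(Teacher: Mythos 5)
Your proposal is correct and follows essentially the same route as the paper's proof: the class shift $[\mathfrak{b}_{z_b}]=[\mathfrak{b}_{z_a}\mathfrak{q}_r^{-1}]$, the generator $\lambda$ of $\mathfrak{b}'_{z_a}(\mathfrak{b}'_{z_b})^{-1}\mathfrak{q}_r^{-1}$ (the paper's $\xi_{a,\mathfrak{q}_r}$) used to rewrite $\varphi$ on non-coprime ideals, and the periodicity of $\chi'_f$ modulo $\underline{\mathfrak{c}}\mathfrak{q}_r^{-1}\mathfrak{b}'_{z_b}$ coming from $\mathfrak{q}_r\mathfrak{c}_f\mid\underline{\mathfrak{c}}$ are all exactly the paper's steps. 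Your final worry about a $\mathfrak{K}_a$-consistent choice of $\lambda$ is moot for the statement as given, which only asserts membership in a vector space, so unit ambiguities in $\lambda$ are harmless.
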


\begin{proof}
The task now is to find a relation between the functions 
$\vartheta(r \tau; \varphi; \mathfrak{K}_a)$ with 
$a = 1,\cdots, h({\cal O}_K)$ and 
the ``chiral correlation functions,'' because
\begin{align}
  \vartheta(\tau;\varphi)|[\diag(r,1)]_2 = 
     r \sum_{\mathfrak{K} \in {\rm Cl}_K} \vartheta(r \tau; \varphi, \mathfrak{K}).
\end{align}

Let $a' \in \{ 1,\cdots, h({\cal O}_K)\}$ be such that 
$[\mathfrak{b}_{z_{a'}}] = [\mathfrak{b}_{z_a} \mathfrak{q}_r^{-1}]$ 
in the ideal class group ${\rm Cl}_K$. Then there exists  
$\xi_{a,\mathfrak{q}_r} \in K$ such that 
\begin{align}
\mathfrak{b}'_{z_a} \mathfrak{q}_r^{-1} = (\xi_{a,\mathfrak{q}_r})_{{\cal O}_K} 
    \mathfrak{b}'_{z_{a'}}. 
\end{align}
Using this, a Fourier series of $q^{r \times {\rm Nm}(I)}$'s with 
$[I] = \mathfrak{K}_a$ can be regarded as a Fourier series of 
$q^{{\rm Nm}(I')}$'s with $[I'] = \mathfrak{K}_{a'}$:
\begin{align}
  \vartheta(r \tau; \varphi; \mathfrak{K}_a) & \; = 
     \frac{1}{|{\cal O}_K^\times|} \frac{1}{\varphi(\mathfrak{b}'_{z_a})}
      \sum_{\alpha \in \mathfrak{b}'_{z_a}} \varphi((\alpha)) 
         q^{r \frac{|\alpha|_\C^2}{ {\rm Nm}(\mathfrak{b}'_{z_a})  }}, \\
  & \; =      \frac{1}{|{\cal O}_K^\times|} \frac{1}{\varphi(\mathfrak{b}'_{z_a})}
      \sum_{\alpha \in \mathfrak{b}'_{z_a}} \varphi((\alpha)) 
         q^{\frac{|\alpha|_\C^2}{ {\rm Nm}(\mathfrak{b}'_{z_a} \mathfrak{q}_r^{-1})  }}, \\
  & \; =      \frac{1}{|{\cal O}_K^\times|} \frac{1}{\varphi(\mathfrak{b}'_{z_a})}
      \sum_{\alpha \in \mathfrak{b}'_{z_a}}
        \chi'_f(\alpha) \rho^1(\alpha)
         q^{\frac{|\alpha/\xi_{a,\mathfrak{q}_r}|_\C^2}{ {\rm Nm}(\mathfrak{b}'_{z_{a'}} )  }}.
\end{align}
Note that the sum over $\alpha \in \mathfrak{b}'_{z_a}$ is equivalent to 
the sum over 
$\alpha/\xi_{a,\mathfrak{q}_r} \in \mathfrak{b}'_{z_{a'}} \mathfrak{q}_r$, and 
is also equivalent to the sum over integral ideals of the form 
$(\alpha/\xi_{a,\mathfrak{q}_r}) (\mathfrak{b}'_{z_{a'}})^{-1} \mathfrak{q}_r^{-1}$.

The character $\chi'_f$ has periodicity, $\chi'_f(\alpha_1) = \chi'_f(\alpha_2)$
when $\alpha_1-\alpha_2 \in \mathfrak{b}'_{z_a} \underline{\mathfrak{c}} 
\mathfrak{q}_r^{-1}$, because we assume in this Theorem that 
$\mathfrak{c}_f | \underline{\mathfrak{c}} \mathfrak{q}_r^{-1}$. This 
periodicity translates to $(\alpha_1-\alpha_2)/\xi_{a,\mathfrak{q}_r} \in 
\mathfrak{b}'_{z_{a'}} \underline{\mathfrak{c}}$. So, 
\begin{align}
  \vartheta(r \tau; \varphi, \mathfrak{K}_a) = \frac{1}{|{\cal O}_K^\times|}
    \frac{\xi_{a,\mathfrak{q}_r}}{\varphi(\mathfrak{b}'_{z_a})}
    \sum_{y \in \mathfrak{q}_r \mathfrak{b}'_{z_{a'}} / \mathfrak{b}'_{z_{a'}} \underline{\mathfrak{c}}}
      \chi'_f(y) \sum_{\gamma \in y} \rho^1(\gamma)
        q^{\frac{|\gamma|_\C^2}{{\rm Nm}(\mathfrak{b}'_{z_{a'}})}},
\end{align}
where $\gamma = \alpha/\xi_{a,\mathfrak{q}_r}$, and 
$\chi'_f(y) := \chi'_f((\xi_{a,\mathfrak{q}_r} \gamma)_{{\cal O}_K})$ for any 
$\gamma$ in the mod $\mathfrak{b}'_{z_{a'}} \underline{\mathfrak{c}}$ class, 
$y$. It is then easy to see that 
\begin{align}
 \vartheta(r \tau; \varphi, \mathfrak{K}_a) = 
    \frac{1}{|{\cal O}_K^\times|}
    \frac{i \sqrt{|G_\Lambda|}}{\sqrt{2f_\rho}}   
    \frac{c'_{a'} \sqrt{a_{z_{a'}}}\xi_{a,\mathfrak{q}_r}}{\varphi(\mathfrak{b}'_{z_a})}
    \sum_{y \in \mathfrak{q}_r \mathfrak{b}'_{z_{a'}} / \mathfrak{b}'_{z_{a'}} \underline{\mathfrak{c}}}
      \chi'_f(y)
      f^{\rm II}_{1\Omega'}(\tau_{ws}; \beta)_{([z_{a'}],f_\rho)}; 
  \label{eq:theta-K-rTau-byCCF}
\end{align}
$c'_{a'}a_{z_{a'}} \Omega(\beta) = y$.
The function $\vartheta(r \tau; \varphi; \mathfrak{K}_a)$ 
is 
in $F([z_{a'}],f_\rho)^{\chi_f^{-1}}$ with $\chi_f^{-1} = \chi'_f$, and 
the oldform $\vartheta(\tau,\varphi)|[\diag(r,1)]_2$ is in 
$F({\cal E}ll({\cal O}_K), f_\rho)^{\chi_f^{-1}}$. 

The overall factor in front of the sum in $y$ has the absolute value 
independent of $a \in 1,\cdots, h({\cal O}_K)$. 
The sum in (\ref{eq:theta-K-rTau-byCCF}) has non-zero contributions 
only from $c'_{a'} a_{z_{a'}} \Omega(\beta) = y$'s in the orbits 
${\rm orb}^{a'}_{\mathfrak{q}_0}$ in $iReps_{a'}$ with $\mathfrak{q}_0$ 
of the form 
\begin{align}
  \mathfrak{q}_0 = \mathfrak{q}_{r.f} \mathfrak{q}_{0.p.\mathfrak{c}_f}, 
   \qquad {}^\forall \mathfrak{q}_{0.p.\mathfrak{c}_f} \quad {\rm s.t.~}
     \mathfrak{q}_{r.p.\mathfrak{c}_f} | \mathfrak{q}_{0.p.\mathfrak{c}_f}
    \quad {\rm and} \quad 
     \mathfrak{q}_{0.p.\mathfrak{c}_f} | \mathfrak{q}_{p.\mathfrak{c}_f}, 
     \label{eq:necs-CM-orbits}
\end{align}
where $\mathfrak{q}_r =: \mathfrak{q}_{r.f} 
\mathfrak{q}_{r.p.\mathfrak{c}_f}$ is the decomposition into 
${\rm Supp}(\mathfrak{c}_f)$ and 
${\rm Supp}(\underline{\mathfrak{c}}) \backslash {\rm Supp}(\mathfrak{c}_f)$.
One can easily see that this condition on $\mathfrak{q}_0$ for the 
$r=1$ case (newform)---the $\mathfrak{q}_r = {\cal O}_K$ case---becomes 
the condition $\mathfrak{q}_0 | \mathfrak{q}_{p.\mathfrak{c}_f}$ in 
Thm. \ref{thm:newform-contrib-orbits}.  
\end{proof}

\begin{thm}
\label{thm:inj-surj-CMform}
The image of the homomorphism 
\begin{align}
\oplus_{a=1}^{h({\cal O}_K)} \C_{G_{\Lambda_a}^*/{\rm Aut}([E_{z_a}]_\C)}
  \longrightarrow S_2(\Gamma(N_{D\Lambda})) \rightarrow S_2(\Gamma_1(N_{D\Lambda}^2)),
\end{align}
denoted by 
\begin{align}
  F({\cal E}ll({\cal O}_K),f_\rho) := \oplus_{\chi_f} 
    F({\cal E}ll({\cal O}_K),f_\rho)^{\chi_f^{-1}} = \oplus_{a=1}^{h({\cal O}_K)}
     F([z_a],f_\rho),
   \label{eq:notatn-FEllOK-frho}
\end{align}
is within the vector subspace spanned by Hecke theta functions of 
the imaginary quadratic fields $K$ such that $\mathfrak{c}_f | 
\underline{\mathfrak{c}}$, and their oldforms.  

The homomorphism above is not necessarily surjective to this 
subspace. This is because there is not always an ideal $\mathfrak{q}_r$ 
for $r = {\rm Nm}_{K/\Q}(\mathfrak{q}_r)$; at a rational prime $(p)$ 
that is inert in the extension $K/\Q$, ${\rm Nm}_{K/\Q}(\mathfrak{q})$ 
cannot have an odd power of $p$. 

The homomorphism above is not necessarily injective. This is because 
there are multiple choices of $\mathfrak{q}_r$ for a given $r$ if it is 
divisible by a rational prime $p$ above which ${\cal O}_K$ is unramified, 
and which splits completely in ${\cal O}_K$.
\end{thm}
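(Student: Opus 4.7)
The plan is to handle each of the three assertions in turn, leaning on Theorems \ref{thm:newform-byCCF} and \ref{thm:oldform-byCCF} together with the orbit dictionary of Lemma \ref{lemma:CM-grp-orbit-dcmp}. First, to establish containment of $F({\cal E}ll({\cal O}_K), f_\rho)$ in the span of Hecke theta functions and their $|[{\rm diag}(r,1)]_2$-images, I would apply the character decomposition (\ref{eq:decomp-Fzfrho-FzfrhoChiF}), reducing to the treatment of each eigenspace $F([z_a], f_\rho)^{\chi_f^{-1}}$ individually. Such an eigenspace is spanned by the CM-orbit-diagonalized combinations (\ref{eq:lin-comb-Fzfrho-chiF}) indexed by orbit representatives $[\beta_0]$ lying in some ${\rm orb}^a_{\mathfrak{q}_0}$ with $\mathfrak{q}_0 \mid \underline{\mathfrak{c}}$. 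The key computation is to identify each such combination with a scalar multiple of a partial Hecke theta $\vartheta(r\tau; \varphi; \mathfrak{K}_a)$, reading off $r = {\rm Nm}_{K/\Q}(\mathfrak{q}_r)$ from $\mathfrak{q}_0$ via the decomposition $\mathfrak{q}_0 = \mathfrak{q}_{r.f} \cdot \mathfrak{q}_{0.p.\mathfrak{c}_f}$ appearing in (\ref{eq:necs-CM-orbits}). Summing these partial pieces over $a = 1, \ldots, h({\cal O}_K)$, which run through the distinct ideal classes $\mathfrak{K}_a$, via (\ref{eq:Hecke-theta-div-into-idealClasses}) assembles them into the full oldform $\vartheta(\tau; \varphi) \, | [{\rm diag}(r, 1)]_2$, so every basis vector of $F({\cal E}ll({\cal O}_K), f_\rho)$ is either a Hecke theta newform or an oldform of the allowed type.

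For non-surjectivity, I would exhibit an explicit obstruction. Choose $K$, $f_\rho$, and $\mathfrak{c}_f \mid \underline{\mathfrak{c}}$ so that $\underline{\mathfrak{c}}/\mathfrak{c}_f$ is divisible by a rational prime $p$ inert in $K$. Any integral divisor of $\underline{\mathfrak{c}}/\mathfrak{c}_f$ then has $K/\Q$-norm in which $p$ appears only to even powers, so no $\mathfrak{q}_r$ of norm $r=p$ can be found. By Theorem \ref{thm:oldform-byCCF} the oldform $\vartheta(\tau; \varphi) \, | [{\rm diag}(p,1)]_2$ is therefore absent from the image, even though it sits inside $S_2(\Gamma_1(N_{D\Lambda}^2))$ as soon as $p^2 N_\varphi \mid N_{D\Lambda}^2$. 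For non-injectivity, I would use the opposite scenario: take $p$ split as $(p) = \mathfrak{p} \bar{\mathfrak{p}}$ with both primes dividing $\underline{\mathfrak{c}}/\mathfrak{c}_f$. Then $\mathfrak{q}_r = \mathfrak{p}$ and $\mathfrak{q}_r = \bar{\mathfrak{p}}$ both satisfy the hypotheses of Theorem \ref{thm:oldform-byCCF} for the same $r=p$ and produce the same oldform $\vartheta(\tau; \varphi)|[{\rm diag}(p,1)]_2$; on the other hand, formula (\ref{eq:theta-K-rTau-byCCF}) realizes these two copies as supported on genuinely distinct sets of CM orbits (those with $\mathfrak{q}_0$ divisible by $\mathfrak{p}$ versus by $\bar{\mathfrak{p}}$). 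Subtracting the two realizations furnishes a non-trivial kernel element.

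The main obstacle I expect is the bookkeeping in the containment step: the normalization factors of (\ref{eq:CCF-as-theta-in-Thm1-b}) and (\ref{eq:theta-K-rTau-byCCF}) involve $c'_a$, $a_{z_a}$, $\xi_{a,\mathfrak{q}_r}$, and the Hecke values $\varphi(\mathfrak{b}'_{z_a})$, all of which depend on representative choices, and one must verify that they collate coherently under the sum over $\mathfrak{K}_a$ to reproduce the canonically normalized oldform, with no additional cancellations beyond those flagged in the non-injectivity argument. Once this bookkeeping is arranged, the rest follows mechanically from the orbit decomposition of Lemma \ref{lemma:CM-grp-orbit-dcmp} together with the explicit constructions of Hecke theta functions out of ``chiral correlation functions'' already performed in Theorems \ref{thm:newform-byCCF} and \ref{thm:oldform-byCCF}.
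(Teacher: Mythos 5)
Your proposal follows essentially the same route as the paper's proof: containment is obtained by reducing to the CM-character eigenspaces and invoking the identifications of the orbit-projected combinations with the partial theta functions $\vartheta(r\tau;\varphi;\mathfrak{K}_a)$ already established in Theorems \ref{thm:newform-byCCF} and \ref{thm:oldform-byCCF}, non-surjectivity from the non-existence of $\mathfrak{q}_r$ with odd $p$-adic valuation of the norm at an inert prime, and non-injectivity from the two choices $\mathfrak{p},\bar{\mathfrak{p}}$ of $\mathfrak{q}_r$ at a split prime producing identical cuspforms from distinct linear combinations. The only piece the paper makes explicit that you leave implicit is the injectivity of the single-orbit map $\C_{{\rm orb}^a_{\mathfrak{q}_0}/{\rm Aut}([E_{z_a}]_\C)}\to F([z_a],f_\rho)$ (via a dimension count over characters of $[{\cal O}_K/\mathfrak{c}_0]^\times$), which pins down that the character projections exhaust the image; this is a minor omission rather than a gap.
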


\begin{proof}
Consider the homomorphism 
\begin{align}
  \C_{{\rm orb}^a_{\mathfrak{q}_0}/{\rm Aut}([E_{z_a}]_\C)} \longrightarrow 
    F([z_a],f_\rho) = \oplus_{\chi_f} F([z_a],f_\rho)^{\chi_f^{-1}}
    \subset S_2(\Gamma(N_{D\Lambda}))
\end{align}
first. This homomorphism is injective. This is because 
the dimension of the vector space 
$\C_{{\rm orb}^a_{\mathfrak{q}_0}/{\rm Aut}([E_{z_a}]_\C)}$
is the same as the number of characters of $[{\cal O}_K/\mathfrak{c}_0]^\times$
(where $\mathfrak{c}_0 = \underline{\mathfrak{c}}\mathfrak{q}_0^{-1}$) 
whose restriction on $G_0 = {\rm Aut}([E_{z_a}]_\C)$ (use $G_0 \subset 
[{\cal O}_K/\underline{\mathfrak{c}}]^\times \rightarrow 
[{\cal O}_K/\mathfrak{c}_0]^\times$) is the same as $\rho^1$. This means that 
each one of those characters $\chi_f$ gives rise to a linear combination 
of the ``chiral correlation functions'' in 
$\C_{{\rm orb}^a_{\mathfrak{q}_0}}/{\rm Aut}([E_{z_a}]_\C)$, which is in the subspace 
$F([z_a],f_\rho)^{\chi_f^{-1}}$. So, all the images are mutually linearly 
independent, and hence the homomorphism above is injective. We have 
also learned that the linear combinations using the characters of the CM 
group---(\ref{eq:lin-comb-Fzfrho-chiF})---already realize all the possible 
linearly independent combinations of the ``chiral correlation functions'' 
associated with the orbit ${\rm orb}^a_{\mathfrak{q}_0}$. 

Let us now focus on the homomorphism projected onto the component 
$F([z_a],f_\rho)^{\chi_f^{-1}}$, and consider 
\begin{align}
  \oplus_{\mathfrak{q}_0}^{(\mathfrak{c}_f | \underline{\mathfrak{c}} \mathfrak{q}_0^{-1})}
     \C_{{\rm orb}^a_{\mathfrak{q}_0}/{\rm Aut}([E_{z_a}]_\C)}
         \longrightarrow F([z_a],f_\rho)^{\chi_f^{-1}}.
  \label{eq:maybe-inj-orbts-2-CCF}
\end{align}
This is not necessarily injective for the reason stated already 
in the statement of this Theorem; the proof of Thm. \ref{thm:oldform-byCCF}
(the argument around (\ref{eq:theta-K-rTau-byCCF})) indicates that 
two different choices of $\mathfrak{q}_r$ give rise to two different 
linear combinations of the ``chiral correlation functions'' that 
are identical. 

The arguments so far have shown that all that we obtain as 
$\oplus_a F([z_a],f_\rho)^{\chi_f^{-1}}$ are the 
$\vartheta(r \tau, \varphi,\mathfrak{K}_{a'})$'s with $r$ for which 
$\mathfrak{q}_r$ exists. Therefore, the image 
$F({\cal E}ll({\cal O}_K),f_\rho)$ remains within the Hecke theta functions 
of the imaginary quadratic field $K$ with the conductor $\mathfrak{c}_f$ 
dividing $\underline{\mathfrak{c}}$, and their oldforms. Moreover, 
the set of all possible values of $r$ is also determined by the 
availability of an ideal $\mathfrak{q}_r$. So, the oldform  
$\vartheta(\tau,\varphi)|[\diag(r,1)]_2$ is not obtained within 
the vector space $F({\cal E}ll({\cal O}_K),f_\rho)$ for $r$'s 
stated already in this Theorem. 
\end{proof}

\begin{rmk}
\label{rmk:discussion}
Thm. \ref{thm:newform-byCCF} states that Hecke theta functions, newforms 
of level $N_\varphi$, are realized within the vector space 
$F({\cal E}ll({\cal O}_K),f_\rho) \subset S_2(\Gamma_1(N_{D\Lambda}^2))$
only when we include the ``chiral correlation functions'' from 
all the components, $F([z_a],f_\rho)$, with $a=1,\cdots, h({\cal O}_K)$. 
The vector spaces $F([z_a],f_\rho)$ with different $[z_a] \in 
{\cal E}ll({\cal O}_K)$ are ``chiral correlation functions'' of SCFT's 
with different target-spaces.  When Hecke operators of 
$\Gamma_0(N_{D\Lambda}^2)$ act on ``chiral correlation functions'' 
of a model for $([z_a],f_\rho)$, the results are no longer 
within $(\partial_u X^\C)$--$J_-$ ``$(g,n)=(1,2)$ chiral correlation 
functions'' of the same model (cf \cite{Harvey}), 
but are linear combinations 
of those of models for $([z_b],f_\rho)$ with $[z_b] \in 
{\cal E}ll({\cal O}_K)$, including\footnote{There are 
infinitely many imaginary quadratic fields $K$, but 
the set ${\cal E}ll({\cal O}_K)$ consists of just one element 
(i.e., $h({\cal O}_K)=1$) for only nine $K$'s.} $[z_b] \neq [z_a]$. 

When we bring the ``$(g,n)=(1,2)$ chiral correlation functions'' 
$F^{(\Omega, \int J)}_{\{0,0\}}([z_a],f_\rho)$ together from the set 
of models with $([z_a], f_\rho)$ for all $[z_a] \in {\cal E}ll({\cal O}_K)$, 
the action of Hecke operators is closed within this vector space. 
Although it sounds a little odd to take a linear combination 
of correlation functions of multiple different field theory models, 
it turns out to be quite reasonable thing to do from the perspective 
of arithmetic geometry. See Statement \ref{statmnt:fromStrPerspectives}. 
$\bullet$
\end{rmk}

\subsection{An Example: $S_2(\Gamma(20))$}
\label{ssec:exmpl-S2-Gamma20}

We have seen that the vector space 
\begin{align}
  S_2(\Gamma(8)) = S_2(\Gamma_0^0(8), \chi_8(0,0)) \oplus 
  S_2(\Gamma_0^0(8), \chi_8(0,1))
\end{align}
is obtained entirely by the vector spaces of ``$(g,n)=(1,2)$ chiral 
correlation functions'' $F([i],2) \oplus F([\sqrt{2}i], 1)$ of the two 
rational models of ${\cal N}=(2,2)$ SCFT, those for $([z],f_\rho) = ([i],2)$ 
and $([\sqrt{2}i],1)$.  
Let us present here a less trivial example of the discussions 
in section \ref{ssec:HTheta-TypeII-CCF}. 

As such an example, we choose $S_2(\Gamma(20))$. For $N_{D\Lambda} = f_\rho D_z
=f_\rho f_z^2 (-D_K)$ to divide 20, the only possibilities are 
$(D_z, f_\rho)=(4,5)$ and $(20,1)$. Since $f_z=1$ for both of those two 
models, we can use all the results obtained in 
section \ref{ssec:HTheta-TypeII-CCF}.

\subsubsection{The Vector Space $S_2(\Gamma(20))$}
\label{sssec:G20-list-newforms}

First, we have a brief look at the number of newforms that belong to 
$S_2(\Gamma(20))$. All the concrete results in this 
section \ref{sssec:G20-list-newforms} are obtained by using SAGE. 
The vector space $S_2(\Gamma(20))$ of weight-2 cuspforms for $\Gamma(20)$
can be decomposed into $\oplus_{\chi_{20}} S_2(\Gamma_0^0(20), \chi_{20})$, where 
the nebentypus\footnote{
Here, a character of the group $[\Z/(20)]^\times \cong \Z/(2) \times \Z/(4)$ 
is denoted by $\chi_{20}(a,b)$, with $a \in \Z/(2)$ and $b \in \Z/(4)$. 
The factor $\Z/(2)$ of $[\Z/(20)]^\times$ is generated by $11$ mod 20, and 
$\Z/(4)$ of $[\Z/(20)]^\times$ by $17$ mod 20. $\chi_{20}(a,b): [11] \mapsto (-1)^a$ 
and $\chi_{20}(a,b): [17]\mapsto i^b$.}
 $\chi_{20}: [\Z/(20)]^\times \rightarrow S^1$ can be either one 
of $\chi_{20}(0,0)$, $\chi_{20}(1,1)$, $\chi_{20}(0,2)$, and $\chi_{20}(1,3)$; 
there are four other characters of the group $[\Z/(20)]^\times$, but there is 
no weight-2 cuspform of such a nebentypus. The number of newforms of level 
$M_f|20^2$ in $S_2(\Gamma_0^0(20),\chi_{20})$ for each one of the four $\chi_{20}$'s
are shown below. First, for the nebentypus $\chi_{20}(0,0)$, 
\begin{align}
 |{\rm NewForm}(400, \chi_{20}(0,0))|= 8, & \quad |{\rm NewForm}(80, \chi_{20}(0,0))| = 2, \\
|{\rm NewForm}(200, \chi_{20}(0,0))| = 5, & \quad |{\rm NewForm}(40, \chi_{20}(0,0))| = 1, \\
|{\rm NewForm}(100, \chi_{20}(0,0))| = 1, & \quad |{\rm NewForm}(20, \chi_{20}(0,0))| = 1, \\
|{\rm NewForm}(50, \chi_{20}(0,0))| = 2, & 
\end{align}
next for the nebentypus $\chi_{20}(0,2)$, 
\begin{align}
 |{\rm NewForm}(400, \chi_{20}(0,2))|= 8, & \quad |{\rm NewForm}(80, \chi_{20}(0,0))| = 2, \\
|{\rm NewForm}(200, \chi_{20}(0,2))| = 4, & \quad |{\rm NewForm}(40, \chi_{20}(0,0))| = 2, \\
|{\rm NewForm}(100, \chi_{20}(0,2))| = 2, & \quad |{\rm NewForm}(20, \chi_{20}(0,0))| = 0, \\
|{\rm NewForm}(50, \chi_{20}(0,2))| = 2, & 
\end{align}
and finally for $\chi_{20}(1,1)$ and $\chi_{20}(1,3)$, 
\begin{align}
 |{\rm NewForm}(400, \chi_{20}(1,1))|= 9, & \quad |{\rm NewForm}(80, \chi_{20}(0,0))| = 3, \\
|{\rm NewForm}(200, \chi_{20}(1,1))| = 0, & \quad |{\rm NewForm}(40, \chi_{20}(0,0))| = 0, \\
|{\rm NewForm}(100, \chi_{20}(1,1))| = 7, & \quad |{\rm NewForm}(20, \chi_{20}(0,0))| = 1.
\end{align}
The dimension of the entire vector space $S_2(\Gamma_0^0(20), \chi_{20}(1,1))$, 
for example, is 
\begin{align}
 1 \times 9 + 2 \times 0 + 3 \times 7 + 2 \times 3 + 4 \times 0 + 6 \times 1;
\end{align}
the factors multiplied to the number of newforms of level $M_f$ take account of 
the number of oldforms associated with the newforms, and are the number of 
divisors of $400/M_f$.

Now, we have a detailed look at the cuspforms for $\Gamma_0^0(20)$ with 
the nebentypus $\chi_{20}(1,1)$. At level $M_f = 400$, there are nine newforms,
\begin{align}
& \; q + \frac{10}{7}iq^{29} + \frac{16(i-1)}{7}q^{33} - \frac{16(1+i)}{7}q^{37} - \frac{3}{7}q^{41} 
 \nonumber \\
& + a_3 \left( q^3 - 6q^{23} + 7i q^{27} 
  \right) \nonumber \\
& + a_7 \left( q^7 + 3i q^{23} + 3q^{27} - iq^{43} 
  \right) \nonumber  \\
& + a_9 \left( q^9 - \frac{4}{7}q^{29} + \frac{2(1+i)}{7} q^{33} + \frac{2(i-1)}{7} q^{37} 
   + \frac{3}{7}i q^{41} 
   \right) \nonumber \\
& + a_{11} \left( q^{11} + (i - 1)q^{23} + (i + 1)q^{27} + q^{31} - iq^{39} 
 \right) \nonumber \\
& + a_{13} \left( q^{13} + \frac{3(1-i)}{7}q^{29} - \frac{3}{7}q^{33} - \frac{3}{7}iq^{37} 
   + \frac{3(1+i)}{7}q^{41} 
 \right) \nonumber \\
& + a_{17} \left( q^{17} - \frac{2(1+i)}{7}q^{29} - \frac{5}{7}iq^{33} - \frac{2}{7}q^{37} 
   + \frac{2(1-i)}{7} q^{41} 
  \right) \nonumber \\
& + a_{19} \left( q^{19} + 3(1+i) q^{23} + 3(1-i)q^{27} - iq^{31} - q^{39} 
  \right) \nonumber \\
& + a_{21} \left( q^{21} + \frac{6}{7}iq^{29} + \frac{3(1-i)}{7}q^{33} + \frac{3(1+i)}{7}q^{37} 
  - \frac{6}{7}q^{41} 
  \right),
\end{align}
with the nine choices of the Hecke eigenvalues 
$\{ a_3, a_7, a_9, a_{11}, a_{13}, a_{17}, a_{19}, a_{21}\}$:
\begin{align}
 \left[ \begin{array}{cccccccc} 
    0 & 0 & (-3i) & 0 & 5(1-i) & 5(1+i) & 0 & 0 \\
    \pm \sqrt{(10i)} & \pm \sqrt{(-10i)} & (7i) & 0 & 0 & 0 & 0 & 10 \\
    \pm \sqrt{(6i)} & \mp \sqrt{(-6i)} & (3i) & \pm (2\sqrt{3})i & (-1+i) & -(1+i) & \pm 4\sqrt{3} & -6 \\
   \sqrt{i} & 4\sqrt{(-i)} & (-2i) & \pm (3\sqrt{3})i & \pm 2\sqrt{(-3i)} & \mp 3\sqrt{(3i)} & \pm \sqrt{3} & 4 \\
   -\sqrt{i} & -4\sqrt{(-i)} & (-2i) & \pm (3\sqrt{3})i & \mp 2\sqrt{(-3i)} & \pm 3\sqrt{(3i)} & \pm \sqrt{3} & 4 
 \end{array} \right].
  \label{eq:nf-coeff-400-11}
\end{align}
Here, $\sqrt{i}$ and $\sqrt{(-i)}$ are short-hand notations of $e^{\pi i/4}$ and $e^{-\pi i/4}$, 
respectively.

There is no newform at level $M_f = 200$. There are seven newforms at level $M_f = 100$,
which are 
\begin{align}
&  q + i q^9 - 2q^{11} + 2i q^{14} + 2q^{16} - 2iq^{19} 
  + \cdots \nonumber \\
& + a_2 \left( q^2 -\frac{3(1+i)}{2}q^9 -\frac{3}{2}(1-i)q^{11} - 2q^{12} + iq^{13} + 3(i-1)q^{16} - q^{17} - iq^{18} + \frac{3}{2}(i+1)q^{19} 
   \right) \nonumber \\
& + a_3 \left( q^3 + (i- 1)q^9 -(1+i)q^{11} + 2(1+i)q^{16} + iq^{17} - 2q^{18} + (1-i)q^{19} 
 + \cdots \right) \nonumber \\
& + a_4\left( q^4 + \frac{1}{2}q^9 + \frac{3}{2}iq^{11} - q^{14} - 2iq^{16} - \frac{3}{2}q^{19} 
 + \cdots \right) \nonumber \\
& + a_6 \left( q^6 - \frac{1}{2}iq^9 - \frac{1}{2}q^{11} + iq^{14} - q^{16} - \frac{1}{2}iq^{19} 
  + \cdots \right)   \\
& + a_7 \left( q^7 -\frac{(1+i)}{2} q^9 + \frac{(1-i)}{2}q^{11} - 2q^{12} + iq^{13} + (i - 1)q^{16} + \frac{(i+1)}{2}q^{19} 
 + \cdots \right) \nonumber \\
& + a_8 \left(q^8 + \frac{(i-1)}{2}q^9 -\frac{(1+i)}{2} q^{11} -iq^{12} - q^{13} + (i + 1)q^{16} + iq^{17} - q^{18} + \frac{(1-i)}{2}q^{19}  + \cdots \right) \nonumber 
\end{align}
with the coefficients given by 
\begin{align}
 \left\{ a_2, a_3, a_4, a_6, a_7, a_8 \right\} \! = \!\left[ \begin{array}{cccccc}
   (1+i) & 0 & 2i & 0 & 0 & (-2+2i) \\
   \pm (1-i) & \pm (1+i) & -2i & 2 & \mp3(1-i) & \mp (2+2i) \\
   \pm \!\frac{\sqrt{3i}+\sqrt{-5i}}{2} & \mp \sqrt{5i} &
        \frac{-i + \sqrt{15}}{2} & \frac{-5-i\sqrt{15}}{2} & 0 & 
       \pm \!\! \sqrt{\frac{-11i+3\sqrt{5}}{2}} \\
   \pm \!\frac{\sqrt{3i}-\sqrt{-5i}}{2} & \pm \sqrt{5i} &
        \frac{-i - \sqrt{15}}{2} & \frac{-5+i\sqrt{15}}{2} & 0 &
       \pm \!\!\sqrt{\frac{-11i-3\sqrt{5}}{2}} 
 \end{array} \right]. 
    \label{eq:nf-coeff-100-11}
\end{align}

At level $M_f = 80$, there are three newforms, 
\begin{align}
& q + q^9 + (i - 3)q^{13} + (-3i - 1)q^{17} + (i - 3)q^{21} + 5iq^{25} - 4iq^{29} 
  + \cdots  \nonumber \\
& + a_3 \left( q^3 + iq^7 + (-i - 1)q^{11} + (-2i - 1)q^{15} + (2i - 2)q^{19} + q^{23} 
   \right) 
  \nonumber \\
& + a_5 \left( q^5 + (-i - 1)q^9 + 2iq^{13} - 2q^{17} + (-i + 1)q^{21} + (-i + 1)q^{25} 
   \right),
\end{align}
with the coefficients given by 
\begin{align}
 \{ a_3,a_5\} = \left[ \begin{array}{cc} 0 & (2+i) \\
   \pm \sqrt{(6i)} & -(1+2i) \end{array} \right].
  \label{eq:nf-coeff-80-11}
\end{align}

There is no newform at level $M_f = 40$, and there is just one newform 
at level $M_f = 20$, which is 
\begin{align}
q + (-i - 1)q^2 + 2iq^4 + (i - 2)q^5 + (-2i + 2)q^8 - 3iq^9 + (i + 3)q^{10}
   + (i - 1)q^{13} - 4q^{16} + \cdots .
 \label{eq:nf-Mf20-chi20[11]}
\end{align}

We will see that the ``$(g,n(=(1,2)$ chiral correlation functions'' of 
rational models of ${\cal N}=(2,2)$ SCFT end up having nebentypus 
$\chi_{20}(1,1)$ or $\chi_{20}(1,3)$, but not $\chi_{20}(0,0)$ or $\chi_{20}(0,2)$.
Although we have had detailed look at the newforms of the 
nebentypus $\chi_{20}(0,0)$ and those of the nebentypus $\chi_{20}(0,2)$, 
we are not including them here.
In light of Rmk. \ref{statmnt:CM-form}, we see that there is no CM modular 
form in $S_2(\Gamma_0(400), \chi)$ with $\chi = \chi_{20}(0,0)$ 
or $\chi_{20}(0,2)$.

\subsubsection{The Models for $([E_{z_a}]_\C, f_\rho)$ with $f_\rho = 1$, 
and $[E_{z_a}]_\C \in {\cal E}ll({\cal O}_K)$, $K = \Q(\sqrt{-5})$} 
\label{sssec:exmpl-Dk20-fz1-fr1}

For an imaginary quadratic field $K=\Q(\sqrt{-5})$, the ideal class 
group is ${\rm Cl}_K=\Z/(2)$, and hence ${\cal E}ll({\cal O}_K) = \Z/(2)$. 
There is a pair of $\C$-isomorphism classes of elliptic curves with 
complex multiplication by ${\cal O}_K$. 
One is $\C/(\Z+w_K\Z) = [E_{w_K}]_\C$ and the other 
$\C/(\Z+w'_K\Z) = [E_{w'_K}]_\C$, where 
$w_K = \sqrt{-5}$ and $w'_K = (1+w_K)/2$ in this 
section \ref{sssec:exmpl-Dk20-fz1-fr1}; 
they are labeled by $a \in \{ 0,1 \} \cong \Z/(2)$, respectively.
The two lattices $\mathfrak{b}_{z_0} := (\Z + w_K\Z) = {\cal O}_K$
and $\mathfrak{b}_{z_1} := (\Z + w'_K \Z) =: 2^{-1}\mathfrak{p}_2$
are both fractional ${\cal O}_K$-ideals. Their inverse, ${\cal O}_K$
and $\mathfrak{p}_2 := (1+w_K, 1-w_K)_{{\cal O}_K}$ respectively, 
represent the two ideal classes of ${\rm Cl}_K = \Z/(2)$.

Think of the pair of rational models of ${\cal N}=(2,2)$ SCFT corresponding 
to the pair of data $([z],f_\rho)=([z_a], 1)$ with $z_0 = w_K$ and $z_1 = w'_K$.
For both models, $N_{D\Lambda}=D_zf_\rho = 20 \cdot 1=20$. 
So, the lift of $(g,n)=(1,2)$ ``chiral correlation functions'' should be 
found within $S_2(\Gamma(20))$. Let us see how they fit into the decomposition 
of $S_2(\Gamma(20))$ in terms of the nebentypuses and level $M_f$. 
The CM group is common to both models, and is 
\begin{align}
  [{\cal O}_K/\underline{\mathfrak{c}}]^\times \cong \Z/(2) \times \Z/(4)
\end{align}
where the first and the second factor are generated by the CM operation 
$[(4+w_K)\times]$ and $[(3)\times]$, respectively. Their characters $\chi_f$ 
are labeled by $\chi_f(a,b)$ with $a \in \Z/(2)$ and $b \in \Z/(4)$, if 
$\chi_f: [(4+w_K)\times] \mapsto (-1)^a$ and $\chi_f: [(3)\times] \mapsto i^b$.
We will be interested in the characters with ${}^\forall a \in \Z/(2)$ and 
$b \in \{ 1,3\} \subset \Z/(4)$, so that the restriction of the representation 
$\chi_f^{-1}$ of the CM group is restricted to ${\rm Aut}([E_z]_\C) \cong 
\{ (\pm 1) \} \hookrightarrow [{\cal O}_K/\underline{\mathfrak{c}}]^\times$
to become $(\rho^1)^{-1}$. The conductor of those characters are 
$\mathfrak{c}_f = \underline{\mathfrak{c}}=(2w_K)_{{\cal O}_K}$ 
for $\chi_f(0,b'')$ with ${}^\forall b'' \in \{ 1,3\}$, and 
$\mathfrak{c}_f = (w_K)_{{\cal O}_K}$ for $\chi_f(1,b'')$ with 
${}^\forall b'' \in \{1,3\}$. 
Note also that $\chi_f(a,b'') \cdot cc \neq cc \cdot \chi_f(a,b'')$ for 
any choice of $a \in \Z/(2)$ and $b'' \in \{ 1,3 \} \subset \Z/(4)$.

The character $\epsilon_{DL}^{-1} = (D_K/-)$ on $[\Z/(20)]^\times$ is $
\chi_{20}(1,2)$. 
The character $\underline{\chi}_f$ of $[\Z/(20)]^\times$ is determined 
by using the fact that $i: [\Z/(N_{D\Lambda})]^\times \rightarrow 
[{\cal O}_K/\underline{\mathfrak{c}}]^\times$ is given by 
$[11] \mapsto [(1)\times]$ and $[17] \mapsto [(3)\times]^3$.
For $\chi_f = \chi_f(a,b'')$ with $a \in \Z/(2)$ and $b'' \in \{1,3\} 
\subset /\Z/(4)$, $\underline{\chi}_f = \chi_{20}(0,-b'')$.
So, the nebentypus $\chi_{N_{D\Lambda}}[\chi_f]$ is 
$\chi_{20}(1,2) \cdot \chi_{20}(0,b'') = \chi_{20}(1,-b'')$.
\begin{align}
   F({\cal E}ll({\cal O}_K),1)^{\chi_f(\underline{a},b'')^{-1}} 
 \hookrightarrow 
    \left[ S_2(\Gamma_0^0(20), \chi_{20}(1,b'')) \right]_{M_f(\chi_f(\underline{a},b''))}
  \label{eq:exmpl-Dk20-fz1-fr1-CCF-cusp-injmap}
\end{align}
for $\underline{a} \in \Z/(2)$ and $b'' \in \{ 1,3\} \subset \Z/(4)$. 
Remember that $M_f(\chi_f(\underline{a},b'')) = |D_K|{\rm Nm}_{K/\Q}(\mathfrak{c}_f)$.

In the following, we write down the linear combinations of the 
``$(g,n)=(1,2)$ chiral correlation functions'' explicitly, and confirm 
that they reproduce the newforms and oldforms in the way described 
in section \ref{ssec:HTheta-TypeII-CCF}. As a preparation, let us have 
a look at how 
$iReps$ of the two models decomposes under the action of the CM group. 
In the model for $([z_{a=0}],f_\rho)=([w_K], 1)$, the set $iReps_{a=0}$ is 
$G_{\Lambda_{a=0}} = {\cal O}_K/(2\sqrt{5}i)_{{\cal O}_K}$, and is decomposed into 
\begin{align}
  {\rm orb}^{a=0}_{\mathfrak{q}={\cal O}_K} & \; = 
      \{ 1,3,9,7,4+w_K, 2+w_K, 6+w_K, 8+w_K\} \subset iReps_{a=0}, \\
  {\rm orb}^{a=0}_{\mathfrak{q}=\mathfrak{p}_2} & \; = 
      \{ 1+w_K, 3+w_K, 9+w_K, 7+w_K \} , \\
  {\rm orb}^{a=0}_{\mathfrak{q}=(2)_{{\cal O}_K}} & \; = 
     \{ 2, 6, 8, 4 \} ,
\end{align}
and the orbits $G_{\Lambda_{a=0}}[2] = 
(\sqrt{5}i)_{{\cal O}_K}/(2\sqrt{5}i)_{{\cal O}_K}$. In the other model, 
where $([z_{a=1}], f_\rho) = ([w'_K], 1)$, the set  
$iReps_{a=1} = G_{\Lambda_{a=1}} = 2^{-1}\mathfrak{p}_2/\sqrt{5}i \mathfrak{p}_2$ 
is decomposed into 
\begin{align}
  {\rm orb}^{a=1}_{{\cal O}_K} & \; = \left\{ 
     \frac{7+w_K}{2}, \frac{1-w_K}{2}, \frac{3+w_K}{2}, \frac{9-w_K}{2}, \right.   \nonumber \\
  & \; \qquad \qquad \left. \frac{3-w_K}{2}, \frac{9+w_K}{2}, 
         \frac{7-w_K}{2}, \frac{1+w_K}{2} \right\} \subset iReps_{a=1}, \\
 {\rm orb}^{a=1}_{\mathfrak{q}=\mathfrak{p}_2} & \; = \{ 1,3,9,7 \}, \\
 {\rm orb}^{a=1}_{\mathfrak{q}=(2)_{{\cal O}_K}} & \; = \{ 2,6,8,4 \},
\end{align}
and the orbits $G_{\Lambda_{a=1}}[2] = 2^{-1}\sqrt{5}i\mathfrak{p}_2 /
\sqrt{5}i\mathfrak{p}_2$.

Let us first work out (\ref{eq:exmpl-Dk20-fz1-fr1-CCF-cusp-injmap}) 
explicitly for $(a,b'') = (0,3)$; the conductor is 
$\mathfrak{c}_f = (2w_K)_{{\cal O}_K} = \underline{\mathfrak{c}}$,
so the level $M_f$ is expected to be $20 \cdot {\rm Nm}_{K/\Q}((2w_K)_{{\cal O}_K}) 
= 20 \cdot 20 = 400$; we also expect that ``chiral correlation functions'' 
of the orbits ${\rm orb}^a_{\mathfrak{q}_0={\cal O}_K}$ should be enough 
in reproducing the Hecke theta functions associated with $\chi_f=\chi_f(0,3)$, 
because $\underline{\mathfrak{c}}/\mathfrak{c}_f = {\cal O}_K$ 
in (\ref{eq:necs-CM-orbits}). 
Indeed, straightforward calculation 
using (\ref{eq:CCF-as-theta-in-Thm1}) is enough to see that 
\begin{align}
 i \frac{\sqrt{|G_{\Lambda}|}}{\sqrt{2f_\rho}} 
    f_{1\Omega'}^{\rm II}([1]; \chi_f(0,3))_{([z_0],1)} & \; = 
  \left(q+3iq^9+2q^{21} + 6i q^{29} + \cdots \right) \nonumber \\
 & \qquad   + \left(4iq^9+8q^{21} -12q^{41} + \cdots \right).  \\
 i \frac{\sqrt{|G_{\Lambda}|}}{\sqrt{2f_\rho}} 
    f_{1\Omega'}^{\rm II}([(7+w_K)/2]; \chi_f(0,3))_{([z_1],1)} & \; = 
  \sqrt{10}(q^3 - iq^7 -3q^{23}+iq^{27} + \cdots). 
\end{align}
The two linear combinations of those ``chiral correlation functions'' 
from the two rational models of ${\cal N}=(2,2)$ SCFT, 
\begin{align}
  i \frac{\sqrt{|G_{\Lambda}|}}{\sqrt{2f_\rho}} \left( 
    f_{1\Omega'}^{\rm II}([1]; \chi_f(0,3))_{([z_0],1)} 
  \pm \sqrt{i} f_{1\Omega'}^{\rm II}([(7+w_K)/2]; \chi_f(0,3))_{([z_1],1)} \right), 
\end{align}
agree with two---those in the second row of (\ref{eq:nf-coeff-400-11})---of 
the nine newforms of the group $\Gamma_0^0(20)$, weight 2, level $M_f=400$, 
and nebentypus $\chi_{20}(1,1)$, as expected.  

Next, we work out (\ref{eq:exmpl-Dk20-fz1-fr1-CCF-cusp-injmap}) for the 
other character $\chi_f$---$\chi_f=\chi_f(1,3)$---of the CM group 
$[{\cal O}_K/\underline{\mathfrak{c}}]^\times$, which we expect to contribute 
to the cuspforms of nebentypus $\chi_{20}(1,1)$. The conductor is 
$\mathfrak{c}_f = (w_K)_{{\cal O}_K}$, and 
$\underline{\mathfrak{c}}/\mathfrak{c}_f = (2)_{{\cal O}_K}$.
So we expect that the level is $M_f = 20 \cdot {\rm Nm}_{K/\Q}((w_K)_{{\cal O}_K}) 
= 20 \cdot 5=100$. The CM group orbits of the form ${\rm orb}^a_{\mathfrak{q}_0}$
with $\mathfrak{q}_0 = {\cal O}_K$, $\mathfrak{p}_2$, and $(2)_{{\cal O}_K}$, 
of both of the models ($a=0,1$) are expected to contribute to the 
Hecke theta functions and their oldforms of this level $M_f = 100$. 
To verify this expectation, we do the straightforward computation 
using (\ref{eq:CCF-as-theta-in-Thm1}) once again.
\begin{align}
 i \frac{\sqrt{|G_\Lambda|}}{\sqrt{2f_\rho}} 
     f_{1\Omega'}^{\rm II}([1]; \chi_f(1,3))_{([z_0],1)} & \; = 
  \left(q+3iq^9+2q^{21} + 6i q^{29} + \cdots \right)  \nonumber \\ 
  &\; \qquad \qquad    - \left(4iq^9+8q^{21} -12q^{41} + \cdots \right). \\
 i \frac{\sqrt{|G_\Lambda|}}{\sqrt{2f_\rho}} 
     f_{1\Omega'}^{\rm II}([1+w_K]; \chi_f(1,3))_{([z_0],1)} & \; = 
    2q^6 + 6iq^{14} + 2q^{46} -8i q^{54} + \cdots, \\
 i \frac{\sqrt{|G_\Lambda|}}{\sqrt{2f_\rho}} 
     f_{1\Omega'}^{\rm II}([2]; \chi_f(1,3))_{([z_0],1)} & \; = 
    2q^4 -4i q^{16}+ 4q^{24} -2i q^{36} -8q^{64} + \cdots, 
\end{align}
are ``chiral correlation functions'' of the rational model of 
${\cal N}=(2,2)$ SCFT for $([E_z]_\C, f_\rho) = (E_{w_K},1)$, 
and 
\begin{align}
   i \frac{\sqrt{|G_\Lambda|}}{\sqrt{2f_\rho}} 
     f_{1\Omega'}^{\rm II}([(7+w_K)/2]; \chi_f(1,3))_{([z_1],1)} & \; = 
    \sqrt{2} \left( iq^3 - 3q^7 +iq^{23} + 7q^{27} + \cdots \right),   \\
   i \frac{\sqrt{|G_\Lambda|}}{\sqrt{2f_\rho}} 
     f_{1\Omega'}^{\rm II}([1]; \chi_f(1,3))_{([z_1],1)} & \; =
     \sqrt{2} \left( q^2 -iq^{18} -8q^{42} + \cdots \right),    \\
   i \frac{\sqrt{|G_\Lambda|}}{\sqrt{2f_\rho}} 
     f_{1\Omega'}^{\rm II}([2]; \chi_f(1,3))_{([z_1],1)} & \; = 
     \sqrt{2} \left( 2q^8 + 2iq^{12}-6q^{28} + 6i q^{72} + \cdots \right), 
\end{align}
are those for $([E_z]_\C, f_\rho)=([E_{w'_K}]_\C, 1)$.

Two out of the seven newforms of the group $\Gamma_0^0(20)$, weight 2, 
level $M_f=100$, and nebentypus $\chi_{20}(1,1)$ are reproduced by the 
following two linear combinations of the six ``chiral correlation functions''  
above (over ${\rm orb}^a_{\mathfrak{q}_0}$'s with $\mathfrak{q}_0$ 
that divides $\mathfrak{q}_{p.\mathfrak{c}_f} = \mathfrak{p}_2^2$):
\begin{align}
  & f_{1\Omega'}^{\rm II}([1];\chi_f)_{([z_0],1)}
  + f_{1\Omega'}^{\rm II}([1+w_K];\chi_f)_{([z_0],1)}
  -i f_{1\Omega'}^{\rm II}([2];\chi_f)_{([z_0],1)}   \nonumber \\
  & \pm \sqrt{(-i)} \left( 
     f_{1\Omega'}^{\rm II}\left([(7+w_K)/2];\chi_f\right)_{([z_1],1)}
   + f_{1\Omega'}^{\rm II}([1];\chi_f)_{([z_1],1)}
   -i f_{1\Omega'}^{\rm II}([2];\chi_f)_{([z_1],1)} \right) \nonumber 
\end{align}
[all $\chi_f$'s here are $\chi_f(1,3)$, just to save space], multiplied
by $i \sqrt{|G_\Lambda|/(2f_\rho)}$, is equal to 
\begin{align}
&  (q -2iq^4 + 2q^6 -iq^9+6iq^{14}-4q^{16} + \cdots ) \nonumber \\
& \qquad \quad   \pm (1-i)(q^2 + iq^3 -3q^7-2iq^8 -2q^{12} -iq^{18} + \cdots), 
\end{align}
which correspond to the second row of (\ref{eq:nf-coeff-100-11}) indeed. 

For each one of these two Hecke newforms of level $M_f=100$, there are 
two oldforms in $[S_2(\Gamma_0^0(20), \chi_{20}(1,1))]_{M_f=100}$; they are 
obtained from the newforms by $|[\diag(r,1)]_2$ with $r=2, 4$. We expect 
that all those oldforms are also obtained within 
$\oplus_{[E_z] \in {\cal E}ll({\cal O}_K)} F([z_a],1)^{\chi_f(1,3)^{-1}}$; this is 
because $\underline{\mathfrak{c}}/\mathfrak{c}_f = (2)_{{\cal O}_K} 
= \mathfrak{p}_2^2$, and hence 
we can choose $\mathfrak{q}_r = \mathfrak{p}_2$ for $r=2$, and 
$\mathfrak{q}_r=(2)_{{\cal O}_K}$ for $r=4$.  Indeed, the linear combinations 
(over ${\rm orb}^a_{\mathfrak{q}_0}$'s with $\mathfrak{q}_0=\mathfrak{q}_{0.p.\mathfrak{c}_f} = \mathfrak{p}_2$ and $\mathfrak{p}_2^2$)
\begin{align}
    f_{1\Omega'}^{\rm II}([1];\chi_f)_{([z_1],1)}
 -i f_{1\Omega'}^{\rm II}([2];\chi_f)_{([z_1],1)}
 \pm \sqrt{i} (f_{1\Omega'}^{\rm II}([1+w_K];\chi_f)_{([z_0],1)} 
      -i f_{1\Omega'}^{\rm II}([2];\chi_f)_{([z_0],1)}  )  \nonumber 
\end{align}
multiplied by $i \sqrt{|G_\Lambda|/(2f_\rho)}$ 
is equal to 
\begin{align}
&  \sqrt{2} (q^2 -2i q^8-2q^{12}-iq^{18}+6iq^{28} + \cdots)  \nonumber \\
&   \qquad \quad \pm \sqrt{2}(1-i) (q^4 + iq^6 -3q^{14}  -2iq^{16} + \cdots),
\end{align}
and the linear combination (over ${\rm orb}^a_{\mathfrak{q}_0}$'s with 
$\mathfrak{q}_0=\mathfrak{q}_{0.p.\mathfrak{c}_f} = \mathfrak{p}_2^2$)
\begin{align}
 i\frac{\sqrt{|G_\Lambda|}}{\sqrt{2f_\rho}} \left(
       f_{1\Omega'}^{\rm II}([2];\chi_f)_{([z_0],1)} 
  \pm \sqrt{(-i)} f_{1\Omega'}^{\rm II}([2];\chi_f)_{([z_1],1)} \right)
\end{align}
is equal to 
\begin{align}
   2(q^4 -2i q^{16} + 2q^{12}-iq^{18} + \cdots ) \pm 2(1-i)(q^8+iq^{12}-3q^{28}
    +\cdots )
\end{align}
indeed. The ``chiral correlation functions'' are summed in order to reproduce 
the oldforms only over the CM group orbits of the irreducible representations 
of the chiral algebras specified by (\ref{eq:necs-CM-orbits}).

To summarize, 
\begin{exmpl}
The ``$(g,n)=(1,2)$ chiral correlation functions'' (\ref{eq:chi-corrl-fcn-4-L1}, \ref{eq:f1Omega-as-12confBl}) 
of the rational model of ${\cal N}=(2,2)$ SCFT for two sets of 
data, $([E_{z_a}]_\C, f_\rho=1)$ with $[E_{z_a}]_\C \in {\cal E}ll({\cal O}_K)$, 
become single-valued when lifted to the modular curve $X(20)$, the 
compactification of ${\cal H}/\Gamma(20)$. They form a vector space 
$\oplus_{[E_z] \in {\cal E}ll({\cal O}_K)} F([z_a],1)$ over $\C$ of dimension 16, 
and occupies an 8-dimensional subspace of the vector space of cuspforms 
$S_2(\Gamma_0^0(20), \chi_{20}(1,1))$ and an 8-dimensional subspace of 
$S_2(\Gamma_0^0(20),\chi_{20}(1,3))$. 

The subspace of $S_2(\Gamma_0^0(20), \chi_{20}(1,1))$ obtained in this way 
is generated by two newforms of level $M_f=400$ (among the nine newforms 
of level $M_f=400$), two newforms of level $M_f=100$ (among the seven newforms 
of level $M_f=100$), and all possible oldforms of the two level $M_f=100$ 
newforms. Schematically, we might describe this situation in this way:
\begin{align}
 2_{400} + 2_{100}^{(1,2,4)} \subset 
 9_{400} + 7_{100}^{(1,2,4)} + 3_{80}^{(1,5)} + 1_{20}^{(1,2,4,5,10,20)}. 
    \qquad \qquad \bullet
\end{align}
\end{exmpl}

\subsubsection{The Model for $([E_z], f_\rho)$ with $f_\rho = 5$ and 
$[z] = i$}
\label{sssec:exmpl-Dk4-fz1-fr5}

For an imaginary quadratic field $K=\Q(\sqrt{-1})$, the ideal class 
group is trivial, and there is just one $\C$-isomorphism class 
of elliptic curves with complex multiplication by ${\cal O}_K$; $z = i$.
Now, we consider the rational model of ${\cal N}=(2,2)$ SCFT for 
the data $([z],f_\rho)=([i],5)$. Then $N_{D\Lambda}=D_zf_\rho=4 \cdot 5 = 20$, 
so the lift of the ``$(g,n)=(1,2)$ chiral correlation functions'' 
of this model should also contribute to $S_2(\Gamma(20))$.
$\mathfrak{b}_z = {\cal O}_K$ and 
$\mathfrak{b}_z\underline{\mathfrak{c}} = (10)_{{\cal O}_K}$ in this example.

In this model, the CM group is 
\begin{align}
  [{\cal O}_K/\underline{\mathfrak{c}}]^\times \cong \Z/(2) \times \Z/(4) 
     \times \Z/(4),
\end{align}
where we choose the generators of those three factors to be the CM operation 
$[(3i)\times]$, $[(1+4i) \times]$, and $[(i)\times]$. The characters of 
this group are denoted by $\chi_f(a,b,c)$ with $a \in \Z/(2)$, $b \in \Z/(4)$, 
and $c \in \Z/(4)$, when $\chi_f: [(3i)\times] \mapsto (-1)^a$, 
$\chi_f:[(1+4i)\times] \mapsto i^b$, and $\chi_f: [(i)\times] \mapsto i^c$.
Due to the condition that the restriction of the representation $(\chi_f)^{-1}$ 
to the subgroup ${\rm Aut}([E_z]_\C) \cong \{ \pm 1, \pm i\} 
\hookrightarrow [{\cal O}_K /\underline{\mathfrak{c}}]^\times$ should be 
$(\rho^1)^{-1}$, only the characters $\chi_f(a,b,c)$ with $c = 1 \in \Z/(4)$
are relevant. 
\begin{table}[tbp]
\begin{center}
\begin{tabular}{c|cccc|cccc}
$\chi_f(a,b,1)$ & (0,1) & (0,0) & (0,3) & (0,2) 
    & (1,0) & (1,3) & (1,2) & (1,1) \\
\hline
$\mathfrak{c}_f$ & $(10)$ & $(5)$ & $(2-4i)$ & $(1+2i)$ 
    & (10) & (5) & $(2+4i)$ & $(1-2i)$  \\
$M_f$ & 400 & 100 & 80 & 20 & 400 & 100 & 80 & 20 \\
\hline
$\chi_{20}[\chi_f]$ & $\chi_{20}(1,3)$ & & & & $\chi_{20}(1,1)$ & & &
\end{tabular}
\caption{\label{tab:exmpl-Dk4-fz1-fr5}This table summarizes properties 
of the eight characters of the CM group labeled by $(a,b) \in \Z/(2) \times \Z/(4)$ shown in the first row. Conductors $\mathfrak{c}_f$, the level $M_f$, and 
nebentypus $\chi_{20}[\chi_f]$ of the corresponding Hecke theta function 
are shown for each column. }
\end{center}
\end{table}
The conductor of those eight character $\chi_f(a,b,1)$, $a \in \Z/(2)$, 
$b \in \Z/(4)$ are summarized in Table~\ref{tab:exmpl-Dk4-fz1-fr5}.
Their corresponding nebentypus $\chi_{N_{D\Lambda}}[\chi_f]$ is also 
computed\footnote{
First, $\epsilon_{DL}^{-1} = (D_K/-) = (-4/-) = \chi_{20}(1,0)$.
The other character $\underline{\chi}_f$ of $[\Z/(20)]^\times$ 
is $\chi_{20}(0,2a+1)$ for $\chi_f(a,b,1)$, because $i: [\Z/(20)]^\times 
\rightarrow [{\cal O}_K/(10)]^\times$ is given by $i: [11] \mapsto 
[(1)\times]$ and $i:[17] \mapsto [(3i)\times]^3 [(i)\times]$. So, 
$\chi_{N_{D\Lambda}}[\chi_f] = \chi_{20}(1,0) \cdot \chi_{20}(0,1+2a)^{-1} 
= \chi_{20}(1,2a-1)$.
}
by using Prop. \ref{props:nebentypus-formula-CCF}, and are also written 
in the Table.

The CM group $[{\cal O}_K/\underline{\mathfrak{c}}]^\times$ acts 
on the set of irreducible representations of the chiral algebra 
$G_\Lambda$ and decomposes into the orbits. The 96 elements of $G_\Lambda$ 
are grouped into 
\begin{align}
  {\rm orb}_{{\cal O}_K} = \{ 1, \cdots \}_{32}, & \quad 
  {\rm orb}_{(1\pm 2i)} = \{ (1\pm 2i), (-2\pm i), \cdots, 
      (4\pm 3i), (-3\pm 4i),\cdots \}_8, \nonumber \\
  {\rm orb}_{(1+i)} = \{ 1+i, \cdots, \}_{16}, & \quad 
  {\rm orb}_{(1\mp 3i)} = \{ (1\mp 3i), (3\pm i), (-3\mp i), (-1\pm 3i)\}_{4}
      \nonumber  \\
  {\rm orb}_{(2)} = \{ 2, \cdots, \}_{16}, & \quad 
  {\rm orb}_{(2\pm 4i)} = \{ (2\pm 4i), (4\mp 2i), (8\mp 4i), (6\pm 2i) \}_4, 
    \nonumber
\end{align}
and $G_\Lambda[2] = (5)/(10)$; the number of $\beta$'s in the orbits are 
indicated by the subscripts $\{ \cdots \}_n$. 

Let us first work out $F([i], 5)^{\chi_f(1,0,1)^{-1}}$. We expect 
that the ``chiral correlation functions'' only of the orbit 
${\rm orb}_{{\cal O}_K}$ are necessary in reproducing one Hecke newform 
of level $M_f=400$ and nebentypus $\chi_{20}(1,1)$. By using 
(\ref{eq:CCF-as-theta-in-Thm1}), we find that 
\begin{align}
& i \frac{\sqrt{|G_\Lambda|}}{\sqrt{2f_\rho}} f_{1\Omega'}^{\rm II}([1]; \chi_f(1,0,1))
  = q - 3i q^9 + (5-5i)q^{13} +(5+5i)q^{17} -4iq^{29} + \cdots, \\
&  - (5 + 5i) q^{37} + 8 q^{41} + 7i q^{49} 
  - (5 - 5i) q^{53} - 12 q^{61} 
  - (5 - 5i)q^{73} - 9q^{81} + 16i q^{89} + \cdots \nonumber 
\end{align}
which agrees with one of the nine newforms of weight 2, level $M_f=400$, 
and nebentypus $\chi_{20}(1,1)$; the one corresponding to the first row of 
(\ref{eq:nf-coeff-400-11}).

Next, we confirm that $F([i],5)^{\chi_f(1,3,1)^{-1}}$ is indeed 
contained in $[S_2(\Gamma_0^0(20),\chi_{20}(1,1))]_{M_f=100}$. 
As a preparation, we work out $f_{1\Omega'}^{\rm II}([\beta]; \chi_f(1,3,1))$ 
of this model for the orbits ${\rm orb}_{\mathfrak{q}_0}$ with 
$\mathfrak{q}_0 = {\cal O}_K$, $(1+i)_{{\cal O}_K}$, and $(2)_{{\cal O}_K}$; 
this is because $\mathfrak{c}_f = (5)_{{\cal O}_K}$, and hence 
$\underline{\mathfrak{c}}/\mathfrak{c}_f = \mathfrak{q}_{p.\mathfrak{c}_f} 
= (2)_{{\cal O}_K}$. Brute force computation results in 
\begin{align}
 i \frac{\sqrt{|G_\Lambda|}}{\sqrt{2f_\rho}}f_{1\Omega'}^{\rm II}([1];\chi_f)
 & \; = q - 3iq^9 + (1-i)q^{13} - (3+3i)q^{17} + 4i q^{29} + \cdots, 
  \\
  i \frac{\sqrt{|G_\Lambda|}}{\sqrt{2f_\rho}}f_{1\Omega'}^{\rm II}([1+i];\chi_f)
   & \; = (1+i)q^2+(3-3i)q^{18} +2q^{26} + \cdots, \\
  i \frac{\sqrt{|G_\Lambda|}}{\sqrt{2f_\rho}}f_{1\Omega'}^{\rm II}([2];\chi_f)
   & \; = 2q^4 +(2+2i)q^8 + 4i q^{16} + \cdots.
\end{align}
Their linear combination 
\begin{align}
&   i\frac{\sqrt{|G_\Lambda|}}{\sqrt{2f_\rho}} \left( 
    f_{1\Omega'}^{\rm II}([1]; \chi_f)
 + f_{1\Omega'}^{\rm II}([1+i]; \chi_f)
 +i f_{1\Omega'}^{\rm II}([2]; \chi_f) \right) \nonumber \\
  = & \; q + (1+i)q^2 +2iq^4 +2(-1+i)q^8-3iq^9+(1-i)q^{13} -4q^{16} -3(1+i)q^{17} 
   + \cdots 
\end{align}
agrees with one of the seven newforms of level $M_f=100$ and 
nebentypus $\chi_{20}(1,1)$, the one in the first row 
of (\ref{eq:nf-coeff-100-11}). Two other linear combinations 
\begin{align}
   i\frac{\sqrt{|G_\Lambda|}}{\sqrt{2f_\rho}} & \left( 
    f_{1\Omega'}^{\rm II}([1+i]; \chi_f)
 +i f_{1\Omega'}^{\rm II}([2]; \chi_f) \right) \nonumber \\
& \; = (1+i) \left( q^2 + (1+i)q^4 + 2i q^8 +2(-1+i)q^{16} + \cdots \right), \\
  i\frac{\sqrt{|G_\Lambda|}}{\sqrt{2f_\rho}} & f_{1\Omega'}^{\rm II}([2]; \chi_f)
 = 2(q^4+(1+i)q^8+2iq^{16} + \cdots )
\end{align}
are its oldforms obtained by $|[\diag(2,1)]_2$ and $|[\diag(4,1)]_2$, 
respectively. The CM group orbits contributing to the two oldforms 
above follow the pattern of (\ref{eq:necs-CM-orbits}) 
with $\underline{\mathfrak{c}}/\mathfrak{c}_f = \mathfrak{q}_{p.\mathfrak{c}_f}
=(2)$, and $\mathfrak{q}_r= \mathfrak{q}_{r.p.\mathfrak{c}_f} = (1+i)$ for 
$r=2$ and $=(2)$ for $r=4$, respectively. 

The vector space $F([i],5)^{\chi_f(1,2,1)^{-1}}$ is contained in 
$[S_2(\Gamma_0^0(20), \chi_{20}(1,1))]_{M_f=80}$; to see this, 
we use 
\begin{align}
i\frac{\sqrt{|G_\Lambda|}}{\sqrt{2f_\rho}} f_{1\Omega'}^{\rm II}([1]; \chi_f) 
   & \; = q - 3i q^9 - (5 - 5i) q^{13} - (5 + 5i) q^{17} - 4i q^{29}
    + (5 + 5i) q^{37} + 8 q^{41} + , \nonumber \\
i \frac{\sqrt{|G_\Lambda|}}{\sqrt{2f_\rho}} f_{1\Omega'}^{\rm II}([2+i]; \chi_f)
  & \; = (2+i)q^5 + (3+4i)q^{25} + (3-6i)q^{45} + \cdots, 
\end{align}
the ``chiral correlation functions'' summed over $\beta$'s in 
${\rm orb}_{{\cal O}_K}$ and ${\rm orb}_{(1-2i)}$, respectively; 
because $\mathfrak{c}_f=(2+4i)$ implies $\underline{\mathfrak{c}}/\mathfrak{c}_f = \mathfrak{q}_{p.\mathfrak{c}_f} = (1-2i)$, they 
are all that we need. Their linear combinations 
\begin{align}
 i \frac{\sqrt{|G_\Lambda|}}{\sqrt{2f_\rho}} & \left( 
      f_{1\Omega'}^{\rm II}([1]; \chi_f) + f_{1\Omega'}^{\rm II}([2+i]; \chi_f) \right)
    \nonumber \\
  & \; = 
       q + (2+i)q^5 - 3iq^9 -5(1-i)q^{13}-5(1+i)q^{17} + (3+4i)q^{25} + \cdots, \\
 i \frac{\sqrt{|G_\Lambda|}}{\sqrt{2f_\rho}} & 
      f_{1\Omega'}^{\rm II}([2+i]; \chi_f) = 
    (2+i)\left( q^5 + (2+i)q^{25}-3i q^{45} + \cdots \right)
\end{align}
are one of two newforms at the level $M_f=80$ and nebentypus 
$\chi_{20}(1,1)$---the one corresponding to the first row 
of (\ref{eq:nf-coeff-80-11})---and its oldform obtained by 
$|[\diag(r,1)]_2$ with $r=5$; $\mathfrak{q}_r=\mathfrak{q}_{r.p.\mathfrak{c}_f} 
= (1-2i)$ can be used.

There is one more character of the CM group, $\chi_f(1,1,1)$, whose 
$F([i],5)^{\chi_f^{-1}}$ contribute to the cuspforms of nebentypus 
$\chi_{20}(1,1)$. The conductor $\mathfrak{c}_f=(1-2i)$ implies 
$\underline{\mathfrak{c}}/\mathfrak{c}_f = \mathfrak{q}_{p.\mathfrak{c}_f} = 
(1+i)^2_{{\cal O}_K}(1+2i)_{{\cal O}_K}$, so there are 
six orbits ${\rm orb}_{\mathfrak{q}_0}$ that contribute to 
$[S_2(\Gamma_0^0(20), \chi_{20}(1,1))]_{M_f=20}$; they are for 
$\mathfrak{q}_0 = (1+i)_{{\cal O}_K}^p (1+2i)_{{\cal O}_K}^q$ with 
$0 \leq p \leq 2$ and $0 \leq q \leq 1$. Because there is just one 
Hecke newform of level $M_f=20$ and nebentypus $\chi_{20}(1,1)$, 
as we have seen in Thm. \ref{thm:newform-contrib-orbits}, a sum over 
all the six orbits 
will yield the newform (\ref{eq:nf-Mf20-chi20[11]}). We expect that 
all the possible oldforms of this newform are also realized by 
the ``chiral correlation functions'' of this model; we can choose 
$\mathfrak{q}_r = (1+i)^p (1+2i)^q$ with $0 \leq p \leq 2$ and 
$0 \leq q \leq 1$. Because the choice of $\mathfrak{q}_r$ is 
unique for any one of $r \in \{ 2,4,5,10,20\}$, the homomorphism 
(\ref{eq:maybe-inj-orbts-2-CCF}) is expected to be injective in 
this example. 
We have not verified those expectations by explicit Fourier series 
expansion for this case $\chi_f=\chi_f(1,1,1)$, however. 

To summarize, 
\begin{exmpl}
The ``$(g,n)=(1,2)$ chiral correlation functions'' (\ref{eq:chi-corrl-fcn-4-L1}, \ref{eq:f1Omega-as-12confBl}) 
of the rational model of ${\cal N}=(2,2)$ SCFT for the set of data 
$([E_{z}]_\C, f_\rho)$ with $[z] = [i]$ and $f_\rho = 5$ become single-valued 
when lifted to the modular curve $X(20)$. They form a vector space 
$F([i],5)$ over $\C$ of dimension 24, and occupies a 12-dimensional 
subspace of the vector space of cuspforms 
$S_2(\Gamma_0^0(20), \chi_{20}(1,1))$ and a 12-dimensional subspace of 
$S_2(\Gamma_0^0(20),\chi_{20}(1,3))$. 

The subspace of $S_2(\Gamma_0^0(20), \chi_{20}(1,1))$ obtained in this way 
is generated by one newform of level $M_f=400$ (among the nine newforms 
of level $M_f=400$), one newform of level $M_f=100$ (among the seven newforms 
of level $M_f=100$), one newform of level $M_f=80$ (among the two newforms), 
one (unique) newform of level $M_f=20$, and all possible oldforms of those newforms. Schematically, we might describe this situation in this way:
\begin{align}
 1_{400} + 1_{100}^{(1,2,4)} + 1_{80}^{(1,5)} + 1_{20}^{(1,2,4,5,10,20)} \subset 
 9_{400} + 7_{100}^{(1,2,4)} + 3_{80}^{(1,5)} + 1_{20}^{(1,2,4,5,10,20)}.
\end{align}

There is no overlap between the Hecke newforms obtained from the 
models of $[E_z]_\C \in {\cal E}ll({\cal O}_K)$ of $K=\Q(\sqrt{-5})$,
and the newforms obtained from the model of $[E_z]_\C$ with $z=i$.
There are $(6+4+2)$ newforms of level $M_f=400, 100$ and $80$ in the 
vector space of cuspforms $S_2(\Gamma_0^0(20), \chi_{20}(1,1))$ that are 
NOT realized by the ``$(g,n)=(1,2)$ chiral correlation functions'' 
(\ref{eq:chi-corrl-fcn-4-L1}, \ref{eq:f1Omega-as-12confBl}) of 
rational models of ${\cal N}=(2,2)$ SCFT. 
None of the weight-2 cuspforms for $\Gamma_0^0(20)$ of nebentypus 
$\chi_{20}(0,0)$ or $\chi_{20}(0,2)$ is realized by the ``chiral correlation 
functions'' either.
\end{exmpl}

\subsection{Chiral Correlation Functions of Models $([E_z]_\C, f_\rho)$ 
with Different $f_\rho$'s}
\label{ssec:diff-frho}

In section \ref{ssec:HTheta-TypeII-CCF}, we studied how the vector space 
$F^{(\Omega, \int J)}_{\{0,0\}}([z],f_\rho)$ of 
``$(g,n)=(1,2)$ chiral correlation functions'' of the rational 
model of ${\cal N}=(2,2)$ SCFT for a set of data $([E_z]_\C, f_\rho)$
fills a part of the vector space $H^{1,0}(X(N_{D\Lambda});\C)$, or 
a part of the vector space $H^{1,0}(X_1(N_{D\Lambda}^2);\C)$. 
Here, the level $N$ of the principal congruence 
subgroup $\Gamma(N) \subset {\rm SL}(2;\Z)_{ws}$ acting on 
$\tau_{ws} \in {\cal H}$ is chosen to be $N_{D\Lambda}=D_zf_\rho$, 
because $\Gamma(D_zf_\rho)$ is contained 
in the kernel of the monodromy representation of ${\rm SL}(2;\Z)_{ws} \cong 
\Gamma_{1,1}$ of this model of SCFT. It then makes sense also 
to pose a question which subspace of $H^{1,0}(X(N);\C)$ (and also of 
$H^{1,0}(X_1(N^2);\C)$ is filled by $F^{(\Omega, \int J)}_{\{0,0\}}([z],f_\rho)$ for 
any one of $N$'s that are divisible by $N_{D\Lambda}$, because such 
$\Gamma(N)$'s are also contained in the kernel of the monodromy representation. 
 
Take $N=N_{D\Lambda} d$, where $d \in \N_{>1}$. Then there are embeddings 
\begin{align}
  \imath^{(K,f_\rho; d^2,d'')}: 
   & \; F^{(\Omega, \int J)}_{\{0,0\}}({\cal E}ll({\cal O}_K),f_\rho) 
       \hookrightarrow H^{1,0}(X(N_{D\Lambda} d);\C),
      \label{eq:ccf-embed2-H10ofXNs} \\
  \imath^{(K,f_\rho; d^2,d'')}_1: 
   & \; F^{(\Omega, \int J)}_{\{0,0\}}({\cal E}ll({\cal O}_K),f_\rho) 
    \hookrightarrow H^{1,0}(X_1(N_{D\Lambda}^2 d^2);\C),  
      \label{eq:ccf-embed2-H10ofX1N2s}
\end{align}
labeled by integers $d''$ that divide $d^2$. Here, the embedding
$\imath^{(K,f_\rho;d^2,d'')}_1$ is the simple lift of the 
``chiral correlation functions'' to $X(N_{D\Lambda})$ followed by 
$|[\diag(N_{D\Lambda},1)]_2$ to $X_1(N_{D\Lambda}^2)$ 
(as in sections \ref{sec:lift} and and \ref{ssec:HTheta-TypeII-CCF} so far), 
and further by $|[\diag(d'',1)]_2$ to $X_1(N_{D\Lambda}^2 d^2)$; 
the embedding $\imath^{(K,f_\rho;d^2,d'')}$ in (\ref{eq:ccf-embed2-H10ofXNs}) 
brings the image of $\imath^{(K,f_\rho;d^2,d'')}_1$ to $X(N_{D\Lambda}d)$ by 
$|[\diag(1/(dN_{D\Lambda}),1)]_2$. 
In the discussion that follows, 
we are primarily concerned about the embeddings $\imath^{(K,f_\rho;d^2,d'')}_1$'s
(than $\imath^{(K,f_\rho;d^2,d'')}$'s).

The images of $F^{(\Omega, \int J)}_{\{0,0\}}({\cal E}ll({\cal O}_K),f_\rho)$
under $\imath_1^{(K,f_\rho;d^2,d'')}$ with multiple different $d''$'s mutually 
overlap within $H^{1,0}(X_1(N_{D\Lambda}^2d^2);\C)$, in general. Let  
$f$ be the Hecke theta function associated with a Hecke character of $K$ 
whose infinity type is [-1/2;1,0] and conductor $\mathfrak{c}_f$. 
This newform is in the image of $\imath_1^{(K,f_\rho;1,1)}$ in 
$H^{1,0}(X_1(N_{D\Lambda}^2);\C)$, 
so long as $\mathfrak{c}_f | \underline{\mathfrak{c}}(K,f_z=1,f_\rho)$
(Thm. \ref{thm:newform-byCCF}).
When the image of $\imath_1^{(K,f_\rho;1,1)}$ contains the $|[\diag(d',1)]_2$ 
image of the level $M_f = |D_K|{\rm Nm}_{K/\Q}(\mathfrak{c}_f)$ newform 
(cf Thm. \ref{thm:oldform-byCCF}) 
the image of $\imath_1^{(K,f_\rho;d^2,d'')}$ 
contains the $|[\diag(d'd'',1)]_2$ image of the level $M_f$ newform $f$. 
When one can find another pair of $d'$ and $d''$ whose product $d'd''$ 
remains the same as before, then the image of $\imath_1^{(K,f_\rho;d^2,d'')}$
overlaps with that of $\imath_1^{(K,f_\rho;d^2,d'')}$ with $d''$ that of the 
other $d'$--$d''$ pair.\footnote{
Such phenomena can be understood also explicitly in terms of theta functions.
For example, for $\delta \in \N_{>0}$, 
\begin{align}
  \vartheta_\Lambda (\tau_{ws}; \delta \beta) = 
    \sum_{\beta' \in [\beta]} \vartheta_\Lambda(\delta^2 \tau_{ws}; \beta'),
\end{align}
where $[\beta]$ is the subset of $iReps \cong \mathfrak{b}
/(f_\rho D_z i)\mathfrak{b}$ that all fall into the same image 
as $\beta \in iReps$ does in the projection 
$\mathfrak{b}/(f_\rho D_zi)\mathfrak{b} \rightarrow 
\mathfrak{b}/(\delta^{-1} f_\rho D_z i)\mathfrak{b}$. Similar relations for 
the theta functions of infinity type [-1/2;1,0] describe the overlap
of the images for $(d',d'')=(d'_* \delta^2 ,d''_*)$ and 
$(d',d'') = (d'_* , d''_* \delta^2)$. The relation above may hold also 
for $\delta \in {\cal O}_{f_z=1} = {\cal O}_K$, not just for $\delta \in \N$;
the expression $\delta^2$ should be generalized to ${\rm Nm}_{K/\Q}(\delta)$ 
then.} 

The largest value of $d'$ is ${\rm Nm}_{K/\Q}(\underline{\mathfrak{c}}
/\mathfrak{c}_f)$, and the largest 
possible value of $d''$ is $d^2$. The combination of both yields 
the $|[d'd'',1]_2$ image of the newform in $H^{1,0}(X_1(N_{D\Lambda}^2d^2);\C)$
with the largest possible 
\begin{align}
   d'd'' \leq {\rm Nm}_{K/\Q}(f_\rho \sqrt{D_K})/\mathfrak{c}_f) d^2
   = {\rm Nm}_{K/\Q}((df_\rho \sqrt{D_K})_{{\cal O}_K}/\mathfrak{c}_f).
\end{align}
To summarize the discussion so far, 
\begin{thm}
Let $K$ be an imaginary quadratic field, and $f_\rho \in \N_{>0}$.
Let $N$ be $(D_Kf_\rho)^2 d^2$ for some $d \in \N_{\geq 1}$. Then the 
images of the vector space of ``chiral correlation functions''  
$F^{(\Omega,\int J)}_{\{0,0\}}({\cal E}ll({\cal O}_K),f_\rho)$ by 
$\imath_1^{(K,f_\rho; d^2,1)}$ in $H^{1,0}(X_1(N);\C)$'s with all the 
$d''$'s (s.t. $d'' | d^2$) combined contain all the Hecke newforms 
associated with Hecke characters of $K$ of type [-1/2;1,0] and conductor 
$\mathfrak{c}_f$'s that divide $\underline{\mathfrak{c}} = 
(f_\rho \sqrt{D_K})_{{\cal O}_K}$, and their $|[\diag[d'd'',1]_2$ images
with $d''|d^2$ and $d' = {\rm Nm}_{K/Q}(\mathfrak{q_r})$ with 
${}^\exists \mathfrak{q}_r$ characterized in Thm. \ref{thm:oldform-byCCF}. 
\end{thm}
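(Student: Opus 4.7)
The plan is to stitch together the two structural results already established for the base case $d=1$, namely Thm \ref{thm:newform-byCCF} and Thm \ref{thm:oldform-byCCF}, with the bookkeeping of the auxiliary $|[\diag(d'',1)]_2$ operators implicit in the definition of the embeddings $\imath_1^{(K,f_\rho;d^2,d'')}$. Since both halves of the claim concern cuspforms in the image of $F^{(\Omega,\int J)}_{\{0,0\}}({\cal E}ll({\cal O}_K),f_\rho)$ under a stroke operator applied to $\imath_1^{(K,f_\rho;1,1)}$, essentially no new CFT input is needed; the task is purely to translate the two existing theorems along the chain
\begin{align*}
 H^{1,0}(X_1(N_{D\Lambda}^2);\C) \; \xrightarrow{\;|[\diag(d'',1)]_2\;}\; H^{1,0}(X_1(N_{D\Lambda}^2 d^2);\C) = H^{1,0}(X_1(N);\C).
\end{align*}

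First I would recall Thm \ref{thm:newform-byCCF}: for every Hecke character $\varphi$ of $K$ of type $[-1/2;1,0]$ whose conductor $\mathfrak{c}_f$ divides $\underline{\mathfrak{c}} = (f_\rho\sqrt{D_K})_{{\cal O}_K}$, the Hecke theta function $\vartheta(\tau;\varphi)$ lies in $\imath_1^{(K,f_\rho;1,1)}\bigl(F^{(\Omega,\int J)}_{\{0,0\}}({\cal E}ll({\cal O}_K),f_\rho)\bigr) \subset H^{1,0}(X_1(N_{D\Lambda}^2);\C)$. Next I would invoke Thm \ref{thm:oldform-byCCF}, which enlarges this list: the same vector space in fact contains every oldform $\vartheta(\tau;\varphi)|[\diag(d',1)]_2$ for $d' = {\rm Nm}_{K/\Q}(\mathfrak{q}_r)$ with $\mathfrak{q}_r$ an integral ideal dividing $\underline{\mathfrak{c}}/\mathfrak{c}_f$. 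By definition, $\imath_1^{(K,f_\rho;d^2,d'')} = |[\diag(d'',1)]_2 \circ \imath_1^{(K,f_\rho;1,1)}$, so applying $|[\diag(d'',1)]_2$ to such an oldform yields $\vartheta(\tau;\varphi)|[\diag(d'd'',1)]_2 \in H^{1,0}(X_1(N);\C)$, since $d''|d^2$ guarantees the target $\Gamma_1(N_{D\Lambda}^2 d^2) = \Gamma_1(N)$ contains the image of the stroke operator.

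Finally I would observe that letting $d''$ vary over all divisors of $d^2$, while $d'$ ranges over the norms ${\rm Nm}_{K/\Q}(\mathfrak{q}_r)$ characterized in Thm \ref{thm:oldform-byCCF}, exhausts exactly the set of $|[\diag(d'd'',1)]_2$ images described in the statement. There is really no conceptual obstacle; the only step that deserves a one-line verification is that the stroke operator $|[\diag(d'',1)]_2$ indeed maps $S_2(\Gamma_1(N_{D\Lambda}^2))$ into $S_2(\Gamma_1(N_{D\Lambda}^2 d^2))$ when $d''|d^2$, which is the standard compatibility reviewed in Lemma \ref{lemma:isom-strokeOp-cuspformSpace} applied with $(H,N,M)$ chosen so that $NM$ is multiplied by $d''$. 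The mildly subtle point, which I would flag but not dwell on, is that the parametrization $(d',d'')$ of a given $|[\diag(d'd'',1)]_2$ image is not unique, so the various $\imath_1^{(K,f_\rho;d^2,d'')}$ overlap inside $H^{1,0}(X_1(N);\C)$; this overlap is responsible for the fact that the theorem is phrased as a containment statement about the \emph{combined} images rather than as a direct sum decomposition.
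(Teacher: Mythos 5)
Your proposal is correct and follows essentially the same route as the paper: the theorem there is stated as a summary of the immediately preceding discussion in section \ref{ssec:diff-frho}, which does exactly what you do — feed Thm.~\ref{thm:newform-byCCF} and Thm.~\ref{thm:oldform-byCCF} through the composition $\imath_1^{(K,f_\rho;d^2,d'')} = |[\diag(d'',1)]_2 \circ \imath_1^{(K,f_\rho;1,1)}$ and let $d''$ range over the divisors of $d^2$, noting the resulting overlaps among the images for different $(d',d'')$ pairs with equal product.
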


Let us now turn the question around. For $N=(D_Kf)^2$, where $K$ is 
an imaginary quadratic field and $f \in \N_{>0}$, think of the set of 
embeddings 
\begin{align}
 \{\imath_1^{(K,f_\rho; (f/f_\rho)^2,d'')} \; | \; d''={\rm Nm}_{K/\Q}(\mathfrak{q}''), \; \mathfrak{q}''|(f/f_\rho)_{{\cal O}_K}\}
   \label{eq:embeddings-2-larger-frho}
\end{align}
from $F^{(\Omega, \int J)}_{\{0,0\}}({\cal E}ll({\cal O}_K),f_\rho)$ to 
$H^{1,0}(X_1(N);\C)$, for any $f_\rho \in \N_{>0}$ with $f_\rho|f$.
The images of $F^{(\Omega,\int J)}_{\{0,0\}}({\cal E}ll({\cal O}_K),f_\rho)$ by 
$\{ \imath_1^{(K,f_\rho;(f/f_\rho)^2,{\rm Nm}(\mathfrak{q}''))} \}$
with one $f_\rho | f$ overlaps a lot 
with those with another $f_\rho|f$. When $f_\rho | f'_\rho | f$, 
in particular, the images of 
$F^{(\Omega, \int J)}_{\{0,0\}}({\cal E}ll({\cal O}_K),f'_\rho)^{\chi_f^{-1}}$ 
by (\ref{eq:embeddings-2-larger-frho}) with $f'_\rho$ instead of $f_\rho$
completely overlap with the images of 
$F^{(\Omega, \int J)}_{\{0,0\}}({\cal E}ll({\cal O}_K),f_\rho)^{\chi_f^{-1}}$ by 
(\ref{eq:embeddings-2-larger-frho}), if and only if 
the conductor $\mathfrak{c}_f$ of $\chi_f$ divides 
$\underline{\mathfrak{c}}(K,f_\rho)$, not just 
$\underline{\mathfrak{c}}(K,f'_\rho)$.
So, some of the linear combinations (those that fall within 
$F([z_a],f_\rho)^{\chi_f^{-1}}$ with such a conductor) of ``chiral correlation 
functions'' of the model for $([E_{z_a}],f'_\rho)$ are not {\it new} 
but {\it old}, in the sense that they can be realized as ``chiral correlation 
functions'' of some of the models for $(E_{z_a}]_\C, f_\rho)$ with 
$f_\rho | f'_\rho$. This observation is implemented by using a direct limit 
in section \ref{ssec:KahlerVsShimuraC}. 

\newpage

\section{Modular Parametrizations and RCFT Chiral Correlation Functions 
of an Elliptic Curve of Shimura-type}
\label{sec:parametrizatn}

We have introduced the idea of assigning a modular curve for 
a $T^2$-target rational model of ${\cal N}=(2,2)$ SCFT in 
section \ref{sec:lift}, 
and worked out in detail where the vector space of ``$(g,n)=(1,2)$ 
chiral correlation functions'' of those models fit into in the vector 
space of holomorphic 1-forms on the modular curves in 
section \ref{sec:CCF-eigenform}. Having looked at the upper-left edge 
of the triangle in (\ref{eq:schematic-triangle}), we will now describe 
the upper-right edge 
in the triangle (\ref{eq:schematic-triangle})---modular parametrization
for elliptic curves of Shimura type---in section \ref{ssec:map2shimuraEC}.
Most of the task in section \ref{ssec:map2shimuraEC} is to bring 
known math together. This sets the stage for discussion in 
section \ref{ssec:KahlerVsShimuraC}, where we see that the computation 
of chiral correlation functions in string theory plays a role quite 
analogous to the modular parametrizations (non-constant maps from 
modular curves). 

\subsection{Modular Parametrization of Elliptic Curves of Shimura-type}
\label{ssec:map2shimuraEC}

\subsubsection{The $K$-simple Abelian Varieties of a Hecke
Theta Function}
\label{sssec:Ksmpl-abelV-CMcupsF}

Let us start off by recalling well-known facts about a Hecke newform $f$
of weight 2, level $M_f$, and nebentypus $\epsilon$.  
\begin{props}
\label{props:GL2-abel}
Let $f=\sum_{n \geq 1} a_n q^n$ be its Fourier series expansion, and 
$K_f = \Q(a_n| n \geq 1)$ be the number field generated by the eigenvalues 
of the Hecke operators acting on $f$. Then all of 
$\{ f^\sigma  = \sum_{n \geq 1} a_n^\sigma q^n \; | \; \sigma \in 
{\rm Gal}(\overline{\Q}/\Q)\}$ are also weight-2 and level-$M_f$ Hecke 
newforms with nebentypus $\epsilon^\sigma$, not necessarily the 
same as $\epsilon$. $[K_f:\Q]$ distinct newforms are obtained in this way. 
We will call this set of newforms {\it the Fourier-coefficient-Galois orbit}
(${\rm Gal}^{\rm FC}$ orbit) {\it of a newform $f$}. 

Those $[K_f:\Q]$ weight-2 modular forms $f^\sigma$ can be regarded as 
holomorphic 1-forms on the modular curve $X_1(M_f)$; 
$f^\sigma \mapsto f^\sigma(\tau)d\tau$. They span a vector subspace of 
$H^{1,0}(X_1(M_f);\C)$ of dimension $[K_f:\Q]$. This subspace specifies 
an abelian 
variety $A_f$ that is a quotient of ${\mathrm{Jac}}(X_1(M_f))$. 
The abelian variety $A_f$ is of $[K_f:\Q]$ dimensions. 
It is further known that we can choose the field of definition of $A_f$ 
to be $\Q$, and it is assumed implicitly that we choose so unless 
otherwise stated. It is also known that $A_f$ is $\Q$-simple and 
is of ${\rm GL}_2$-type for any Hecke newform $f$ 
\cite[\S 14 and Prop. 15.1.5]{Ribet-Stein}.

It is also known that the number field $K_f$ is either totally real 
or a CM field, and moreover, $K_f$ is CM only when $\epsilon \neq 1$
\cite[ Props. 3.2 and 3.3]{Ribet-Neben}. 

For a positive integer $N$ divisible by $M_f$, the newform $f$
has its images $f|[\diag(r,1)]_2 \in S_2(\Gamma_1(N))$ with $r|(N/M_f)$. 
One can define an abelian variety $A_{f,N}$ associated with $f$ itself ($r=1$)
as a quotient of ${\rm Jac}(X_1(N))$, where now $\dim A_{f,N}$ is $[K_f:\Q]$ 
times the number of divisors of $(N/M_f)$ (i.e, the number of images 
of the form $f|[\diag(r,1)]_2$). One can also take the field of definition 
of this $A_{f,N}$ to be $\Q$. 

For a positive integer $N$, there is an isogeny over $\Q$ between 
${\rm Jac}(X_1(N))$ and $\prod_f A_{f,N}$ where the product runs over 
all the newforms $f$ of level $M_f$ that divides $N$ 
\cite[\S 14.2]{Ribet-Stein}. $\bullet$
\end{props}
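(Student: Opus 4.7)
The plan is to follow the classical Eichler--Shimura construction, reducing each assertion to a structural property of the Hecke algebra $\mathbb{T}\subset \mathrm{End}_\C(S_2(\Gamma_1(M_f)))$ generated over $\Z$ by the operators $T_n$ and the diamond operators $\langle d\rangle$. First I would exploit that $\mathbb{T}$ is commutative and has a $\Z$-rational action on $q$-expansions; the newform $f$ then corresponds to a ring homomorphism $\lambda_f\colon \mathbb{T}\to \C$, $T_n\mapsto a_n$, whose image is $\mathcal{O}_{K_f}$. For $\sigma \in \mathrm{Gal}(\overline{\Q}/\Q)$, the composition $\sigma\circ \lambda_f$ is again a $\mathbb{T}$-character, so by the multiplicity-one part of Thm. \ref{thm:newform-SkGamma0Chi} it is realized by a normalized newform $f^\sigma$ of the same level $M_f$. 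Since $\langle d\rangle f = \epsilon(d) f$ and $\sigma$ fixes $\mathbb{T}$, the nebentypus of $f^\sigma$ is $\epsilon^\sigma$, and the orbit-stabilizer count produces exactly $[K_f:\Q]$ distinct newforms.

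Next I would construct $A_f$ as follows. The subspace $V_f := \mathrm{Span}_\C\{f^\sigma(\tau)d\tau\}\subset H^{1,0}(X_1(M_f);\C)$ is $\mathbb{T}$-stable and Galois-stable, hence cut out by the annihilator ideal $I_f := \{T\in \mathbb{T} \mid T f^\sigma = 0 \text{ for all } \sigma\}$. Set $A_f := \mathrm{Jac}(X_1(M_f))/(I_f\cdot \mathrm{Jac}(X_1(M_f)))$; because $X_1(M_f)$ and the Hecke correspondences admit canonical models over $\Q$ through the moduli interpretation with $\Gamma_1(M_f)$-structure, this is an abelian variety over $\Q$ whose cotangent space at the origin is $V_f$, so $\dim A_f = [K_f:\Q]$. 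The quotient $\mathbb{T}/I_f\cong \mathcal{O}_{K_f}$ embeds into $\mathrm{End}_\Q(A_f)\otimes_\Z \Q$, making $A_f$ of $\mathrm{GL}_2$-type. The $\Q$-simplicity follows because a proper nonzero abelian subvariety defined over $\Q$ would produce a proper nonzero $\mathbb{T}$-stable Galois-stable subspace of $V_f$, contradicting transitivity of Galois on $\{f^\sigma\}$. For the totally-real-versus-CM dichotomy, I would cite the identity $\overline{a}_n = \epsilon(n)^{-1} a_n$, which forces complex conjugation on $K_f$ to be trivial when $\epsilon$ is trivial; the precise statement is \cite[Props. 3.2--3.3]{Ribet-Neben}.

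For the variant $A_{f,N}$ with $M_f\mid N$, I would repeat the construction at level $N$: the span of the images $\{f^\sigma | [\diag(r,1)]_2\}$ indexed by pairs $(\sigma,r)$ with $r\mid (N/M_f)$ is stable under the larger Hecke algebra $\mathbb{T}_N$, with dimension $[K_f:\Q]$ times the number of divisors of $N/M_f$ by Thm. \ref{thm:str-SkGamma0Chi}; the analogous quotient of $\mathrm{Jac}(X_1(N))$ gives $A_{f,N}/\Q$. The global isogeny $\mathrm{Jac}(X_1(N))\sim \prod_f A_{f,N}$ then follows from the direct-sum decomposition of $S_2(\Gamma_1(N))$ in Thm. \ref{thm:str-SkGamma0Chi}, applied to cotangent spaces and promoted to an isogeny via Poincar\'e reducibility over $\Q$ together with the fact that the $\mathbb{T}_N$-module structure on the rational cotangent space determines the isogeny class. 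The principal obstacle is not any single character-theoretic step but the descent to $\Q$: one must know that $X_1(N)$, every Hecke correspondence, and the ideal $I_f$ are defined over $\Q$, which ultimately rests on the $\Z$-rationality of $T_n$ on $q$-expansions together with the moduli-theoretic model of $X_1(N)$. Once that descent is in hand, the remaining assertions are essentially formal consequences of the character theory of the commutative algebra $\mathbb{T}$.
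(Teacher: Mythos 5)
The paper does not actually prove this Proposition: it is presented as a compilation of standard facts with pointers to the literature (Ribet--Stein \S14 and Prop.~15.1.5, and Ribet's nebentypus paper), so there is no in-paper argument to compare against. Your sketch reconstructs the standard Shimura construction from those references --- the commutative Hecke algebra $\mathbb{T}$ with its $\Z$-rational action on $q$-expansions, the character $\lambda_f$, the Galois permutation of systems of eigenvalues combined with multiplicity one, the ideal $I_f$ and the quotient $A_f = \mathrm{Jac}(X_1(M_f))/I_f\,\mathrm{Jac}(X_1(M_f))$, and the tangent-space computation of the isogeny $\mathrm{Jac}(X_1(N))\sim\prod_f A_{f,N}$. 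Most of this is correct and is indeed how the cited sources proceed. Two small imprecisions: the image of $\lambda_f$ is the order $\Z[a_n]$, not necessarily $\mathcal{O}_{K_f}$ (harmless after tensoring with $\Q$), and for $f^\sigma$ to again be \emph{new of the same level} you should note that the new subspace is itself defined over $\Q$ (being cut out by $\Q$-rational degeneracy maps), so that Galois cannot move a level-$M_f$ system of eigenvalues to one of lower conductor.

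The genuine gap is your proof of $\Q$-simplicity. You argue that a proper nonzero abelian subvariety $B\subset A_f$ defined over $\Q$ ``would produce a proper nonzero $\mathbb{T}$-stable Galois-stable subspace of $V_f$,'' but an arbitrary $\Q$-rational abelian subvariety of $A_f$ has no reason to be stable under the Hecke action, so the first step already fails. Even if one forces Hecke-stability by replacing $H_1(B;\Q)$ with the $K_f$-submodule it generates, $H_1(A_f;\Q)$ is free of rank $2$ over $K_f$ and therefore has many proper $K_f$-stable subspaces; transitivity of Galois on the embeddings $\{\sigma\colon K_f\hookrightarrow\C\}$ does not rule these out. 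What actually excludes them is the irreducibility of the two-dimensional $\lambda$-adic representations $\rho_{f,\lambda}\colon\mathrm{Gal}(\overline{\Q}/\Q)\to\mathrm{GL}_2(K_{f,\lambda})$ attached to $f$ (Ribet's theorem, which rests on the Eichler--Shimura relation and the Weil pairing, not on formal character theory of $\mathbb{T}$): a nonzero $B/\Q$ gives a Galois-stable subspace of $V_\ell(A_f)$ whose $K_f\otimes\Q_\ell$-span must then be everything, forcing $\dim B = \dim A_f$. So the simplicity claim cannot be dispatched by the orbit argument you give; it is exactly the point at which the Proposition genuinely leans on the cited Prop.~15.1.5 of Ribet--Stein rather than on formal properties of the Hecke algebra.
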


From now on, we focus on the ${\rm GL}_2$-type abelian varieties $A_f$ 
associated with a special class of newforms. Let $K$ be an imaginary 
quadratic field. For a Hecke character 
$\varphi^{(10)}_K: K^\times \backslash \mathbb{A}_K^\times \rightarrow \C$ 
of the idele class group of $K$ of infinity type [-1/2; 1,0], let 
$f_{\varphi^{(10)}_K}$ be its Hecke theta function, 
which is a Hecke newform of weight 2; see 
Lemma \ref{lemma:Shimura-ellCM-Lemma3} for the level $M_f$ and 
nebentypus $\epsilon$. In the rest of section \ref{sssec:Ksmpl-abelV-CMcupsF}, 
we quote known results on whether and how the $\Q$-simple abelian variety 
$A_f$ for such a newform $f=f_{\varphi^{(10)}_K}$ is decomposed over $K$ 
into $K$-simple 
varieties.\footnote{Even if an abelian variety $A/F$ is $F$-simple, 
$A \times_F E$ may not be $E$-simple, where $E/F$ is a field extension. 
If $A \times_F E$ is $E$-simple, however, $A/F$ is $F$-simple. 
Similarly, even if $A/F$ is $F$-simple, $A \times_F \overline{\Q}$ may not 
be simple over $\overline{\Q}$.}

\begin{lemma}
\label{lemma:SchappPoH-2-1}
For a Hecke character $\varphi^{(10)}_K$ of an imaginary quadratic field, 
let $T_{\varphi^{(10)}_K}$ be the number field generated over $\Q$ 
\cite[\S1.5 p.26]{SchPoH} by the values 
$\varphi^{(10)}_K(\mathfrak{a}) \in \overline{\Q}$ of 
integral ideals $\mathfrak{a}$ of ${\cal O}_K$. Then 
$T_{\varphi^{(10)}_K}$ is always a CM field \cite[p.4]{SchPoH}, and 
$T_{\varphi^{(10)}_K} = K K_{f_{\varphi^{(10)}_K}}$, so 
$[T_{\varphi^{(10)}_K}: K_{f_{\varphi^{(10)}_K}}]$ is either 
2 or 1 \cite[Ch.5 Lemma 2.1.1 (p.141)]{SchPoH}. 
Now we see that one and only one of the following three must 
hold true \cite[Ch. 5 \S2.1.2--2.1.4]{SchPoH}.

{\bf Case (i-real)}: $[T_{\varphi^{(10)}_K}: K_{f_{\varphi^{(10)}_K}}] = 2$, and 
$K_{f_{\varphi^{(10)}_K}}$ is totally real. In this case, 
the Galois${}_K$ orbit\footnote{
For a Hecke character $\varphi^{(10)}: K^\times \backslash \mathbb{A}_K^\times 
\rightarrow \C$ of infinity type [-1/2; 1, 0] where the conductor is 
an integral ${\cal O}_K$ ideal $\mathfrak{c}_f$, let 
$T_{\varphi^{(10)}} \subset \overline{\Q} \subset \C$ be the number field 
generated by the images of $I_K(\mathfrak{c}_f) \subset \mathbb{A}_K^\times$
under $\varphi^{(10)}$; $I_K(\mathfrak{c}_f)$ is the group 
of ideals of ${\cal O}_K$ prime to $\mathfrak{c}_f$.
This $T_{\varphi^{(10)}}$ contains (an embedded image of)
$K$ in $\overline{\Q} \subset \C$. 
Then there is a unique algebraic Hecke character $\chi: I_K(\mathfrak{c}_f) 
\rightarrow T'$ of infinity type $n=1$ (weight 1) 
where $T'$ an abstract algebraic extension field of $K$ isomorphic to $T$
(cf \cite[\S 0--1]{SchPoH}). 
The Hecke characters in the Galois${}_K$-orbit
are all of those that give rise to a common algebraic Hecke character $\chi$
over $K$ with value in $T'$; the $[T:K]$ distinct Hecke characters 
in the Galois${}_K$-orbit correspond to the $[T':K]$ distinct 
embeddings ${\rm Hom}_K(T', \overline{\Q})$ with a fixed embedding  
$\rho^1: K \hookrightarrow \overline{\Q}$. This algebraic Hecke character 
$\chi$ also 
determines an object called the motive of $\chi$ over $K$ with value in $T'$, 
denoted by $M(\chi)$. So, the Galois${}_K$ orbit of Hecke characters of 
the idele class group of $K$ in the main text can be regarded as a synonym of 
the motive over $K$. We decided, however, not to use the word ``motive'' 
in the main text because it is so a powerful jargon that it would fend 
off large fraction of readers in string-theory community 
(including one of the authors).} 
$\{ \sigma \cdot \varphi^{(10)}_K \; | \; \sigma \in 
{\rm Gal}(\overline{\Q}/K)\}$
of the Hecke character $\varphi^{(10)}_K$ 
consists of $[T_{\varphi^{(10)}_K}:K]$ distinct 
Hecke characters, and each one of them, $\sigma \cdot \varphi^{(10)}_K$, 
satisfies the relation $(\sigma \cdot \varphi^{(10)}_K)^{\wedge}  = 
(\sigma \cdot \varphi^{(10)}_K)$; here, the $^{\wedge}$-operation converts 
one Hecke character $\varphi: K^\times \backslash \mathbb{A}_K^\times 
\rightarrow \C$ of infinity type [-1/2; 1,0] into another 
Hecke character $\varphi^{\wedge} := cc_\C \cdot \varphi \cdot cc_K: 
K^\times \backslash \mathbb{A}_K^\times \rightarrow \C$ of the same infinity 
type. Because $[K_{f_{\varphi^{(10)}_K}}:\Q] = [T_{\varphi^{(10)}_K}:K]$ in this case, 
the ${\rm Gal}^{\rm FC}$ orbit of the newform $f_{\varphi^{(10)}_K}$ 
agrees with the set of Hecke theta functions corresponding to the 
Galois${}_K$-orbit of the Hecke character $\varphi^{(10)}_K$. 

{\bf Case (i-CM):} $[T_{\varphi^{(10)}_K}: K_{f_{\varphi^{(10)}_K}}]=2$, and
$K_{f_{\varphi^{(10)}_K}}$ is also a CM field.  The Galois${}_K$ orbit 
of $\varphi^{(10)}_K$ as a whole remains invariant under 
the ${}^{\wedge}$-operation, 
but the individual Hecke characters in the Galois${}_K$ orbit are not;
they are exchanged within the orbit.
The ${\rm Gal}^{\rm FC}$ orbit of $f_{\varphi^{(10)}_K}$
agrees with the set of Hecke theta functions corresponding to the 
Hecke characters in the Galois${}_K$ orbit of $\varphi^{(10)}_K$ 
also in this case. 

{\bf Case (ii):} $K_{f_{\varphi^{(10)}_K}} = T_{\varphi^{(10)}_K}$. 
The ${\rm Gal}^{\rm FC}$ orbit of $f_{\varphi^{(10)}_K}$ consists 
of the Hecke theta functions corresponding to two Galois${}_K$ orbits 
of Hecke characters of the same infinity type. One is the Galois${}_K$ orbit 
of $\varphi^{(10)}_K$ itself, and the other that of $(\varphi^{(10)}_K)^{\wedge}$.
$\bullet$
\end{lemma}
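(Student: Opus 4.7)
The plan is to follow the treatment in \cite[Ch. 5]{SchPoH} but unpack the pieces so the logic is visible. I would split the argument into three stages: (a) $T_{\varphi^{(10)}_K}$ is CM; (b) $T_{\varphi^{(10)}_K} = K \cdot K_{f_{\varphi^{(10)}_K}}$; and (c) the trichotomy of cases distinguished by the behaviour of the $\wedge$-operation.

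For (a), the key input is that on principal ideals prime to the conductor one has $\varphi^{(10)}_K((\alpha)) = \rho^1(\alpha)\chi'_f(\alpha)$ with $\chi'_f$ taking values in roots of unity (cf.\ Lemma \ref{lemma:HeckeC}). Representatives for the finite ideal class group contribute only finitely many algebraic values whose relative extensions over $K$ are again cyclotomic. Hence $T_{\varphi^{(10)}_K}/K$ is abelian and is generated over $K$ by roots of unity, so it is abelian over $\Q$; together with the fact that $K$ is imaginary quadratic this forces $T_{\varphi^{(10)}_K}$ to be a CM field.

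For (b), the inclusion $K\cdot K_f \subset T_{\varphi^{(10)}_K}$ is immediate since the Fourier coefficients $a_n = \sum_{\mathrm{Nm}(I)=n}\varphi^{(10)}_K(I)$ lie in $T_{\varphi^{(10)}_K}$ and $K$ embeds by the infinity type. The reverse direction I would prove prime-by-prime: for a rational prime $p$ split in $K$ as $(p) = \mathfrak{p}\bar{\mathfrak{p}}$, the formulas $a_p = \varphi^{(10)}_K(\mathfrak{p}) + \varphi^{(10)}_K(\bar{\mathfrak{p}})$ and $\varphi^{(10)}_K(\mathfrak{p})\varphi^{(10)}_K(\bar{\mathfrak{p}}) = \epsilon(p)\, p$ (from Lemma \ref{lemma:Shimura-ellCM-Lemma3}) show that $\varphi^{(10)}_K(\mathfrak{p})$ is a root of a quadratic polynomial with coefficients in $K_f$, whose discriminant lives in $K\cdot K_f$; so $\varphi^{(10)}_K(\mathfrak{p}) \in K\cdot K_f$. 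Inert and ramified primes contribute only $\pm p\,\epsilon(p)$ and finitely many roots of unity, all in $K_f$. By Chebotarev the split primes together with these boundary contributions generate all of $T_{\varphi^{(10)}_K}$ over $K$. Since $[T_{\varphi^{(10)}_K}:K_f]$ divides $[KK_f:K_f]\le 2$, this degree is $1$ or $2$.

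For (c), the trichotomy follows by comparing $\varphi^{(10)}_K$ with $\varphi^{(10)\,\wedge}_K = cc_\C\cdot \varphi^{(10)}_K\cdot cc_K$ and counting orbits. If $\varphi^{(10)\,\wedge}_K$ is \emph{not} in the Galois${}_K$-orbit of $\varphi^{(10)}_K$, then $f_{\varphi^{(10)}_K}$ and the theta functions attached to $\varphi^{(10)\,\wedge}_K$'s orbit together fill one ${\rm Gal}^{\rm FC}$-orbit, yielding $[K_f:\Q] = 2[T_{\varphi^{(10)}_K}:K]$ and hence $T_{\varphi^{(10)}_K} = K_f$: this is case (ii). Otherwise the two orbits coincide, $[K_f:\Q] = [T_{\varphi^{(10)}_K}:K]$, and $[T_{\varphi^{(10)}_K}:K_f]=2$; then the $\wedge$-operation descends to a non-trivial involution of $T_{\varphi^{(10)}_K}/K_f$. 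If this involution fixes each Hecke character in the Galois${}_K$-orbit individually, then $K_f$ is contained in the fixed field of complex conjugation inside $T_{\varphi^{(10)}_K}$, so $K_f$ is totally real (case i-real); if instead it acts non-trivially on the orbit, it restricts to a complex conjugation on $K_f$, making $K_f$ CM (case i-CM).

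The hard part will be stage (b), specifically handling the ramified primes dividing the conductor cleanly so that no spurious generators of $T_{\varphi^{(10)}_K}$ escape $K\cdot K_f$, and verifying in stage (c) that cases (i-real), (i-CM), (ii) are mutually exclusive and exhaustive. Both technical points are settled in \cite[Ch.~5, Lemma 2.1.1 and \S\S2.1.2--2.1.4]{SchPoH}, and I would import that analysis, specializing to our setting using the explicit decomposition of $\vartheta(\tau;\varphi)$ in Lemma \ref{lemma:Hecke-theta-each-K}.
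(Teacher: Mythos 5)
First, note that the paper does not actually prove this Lemma: it is presented as a compilation of results quoted from Schappacher, with pinpoint citations to \cite[p.4]{SchPoH}, \cite[Ch.5 Lemma 2.1.1]{SchPoH} and \cite[Ch.5 \S2.1.2--2.1.4]{SchPoH}, and no proof environment follows the statement. So your overall plan --- reconstruct the skeleton and import the technical core from the same source --- is consistent in spirit with what the paper does, and your stage (c) is a reasonable unpacking of the trichotomy (the identification of ``$(\sigma\cdot\varphi)^\wedge=\sigma\cdot\varphi$ for all $\sigma$'' with ``all Fourier coefficients of all conjugate forms are real, i.e.\ $K_f$ totally real'' is the right mechanism).

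The two stages you argue yourself, however, contain genuine gaps. In stage (a), the values of $\varphi^{(10)}_K$ on a \emph{non-principal} ideal class are not roots of unity times elements of $K$: if $\mathfrak{a}^r=(\alpha)$ is the smallest principal power, then $\varphi^{(10)}_K(\mathfrak{a})^r=\rho^1(\alpha)\chi'_f(\alpha)$ (Lemma \ref{lemma:Hecke-theta-each-K}), so $\varphi^{(10)}_K(\mathfrak{a})$ is an $r$-th root of such an element, and the resulting Kummer layers are neither cyclotomic over $K$ nor abelian over $\Q$ in general. The paper's own Example \ref{exmpl:Hecke-ii} already refutes the literal claim: for $K=\Q(\sqrt{-5})$ the value $\sqrt{10i}$ has absolute value $\sqrt{10}$, and no element of $\Q(\sqrt{-5})$ has norm $10$, so $\sqrt{10i}$ is not a root of unity times an element of $K$. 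The correct route to ``$T$ is CM'' is the relation $\varphi^{(10)}_K(\mathfrak{a})\cdot\overline{\varphi^{(10)}_K(\mathfrak{a})}={\rm Nm}_{K/\Q}(\mathfrak{a})$, valid under \emph{every} complex embedding of $T$; this makes $x\mapsto{\rm Nm}(\mathfrak{a})/x$ a well-defined automorphism of $T$ that realizes complex conjugation in every embedding, so $T$ is totally real or CM, and $K\subset T$ excludes the former. In stage (b), the crux of $T\subseteq KK_f$ is precisely the claim that $\sqrt{a_p^2-4\epsilon(p)p}=\varphi^{(10)}_K(\mathfrak{p})-\varphi^{(10)}_K(\bar{\mathfrak{p}})$ lies in $KK_f$; observing that the discriminant itself lies in $K_f$ is automatic and gives nothing, so as written the step assumes what it must prove. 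The standard argument avoids the discriminant entirely: any $\sigma\in{\rm Gal}(\overline{\Q}/KK_f)$ fixes $K$ pointwise, hence sends $\varphi^{(10)}_K$ to a Hecke character of the same infinity type $[-1/2;1,0]$, and fixes all $a_n$, hence fixes the newform; since a character of type $[-1/2;1,0]$ is determined by its theta function (the only other character inducing the same two-dimensional representation is $\varphi^{(10)}_K\circ cc_K$, which has the wrong infinity type), $\sigma$ fixes every value $\varphi^{(10)}_K(\mathfrak{a})$, i.e.\ $T\subseteq KK_f$. I would replace your stages (a) and (b) with these arguments --- which is essentially what \cite[Ch.5 Lemma 2.1.1]{SchPoH} does --- rather than attempt the prime-by-prime discriminant analysis.
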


\begin{props}
(\cite[Ch.5 Prop. 2.2 (p.142)]{SchPoH}; \cite[p.41]{Ribet-Neben})
\label{props:SchappPoH-2-2}
Let $K$ be an imaginary quadratic field, $\varphi^{(10)}_K$ a Hecke character 
of infinity type [-1/2;1,0], and $f_{\varphi^{(10)}_K}$ its Hecke theta function. 
Then the base change $A_f \times_\Q K$ of the $\Q$-simple abelian variety $A_f$
remains $K$-simple in both of case (i-real) and case (i-CM). 
In the case (ii), 
$A_f \times_\Q K$ is isogenous over $K$ to a product of the pair of $K$-simple
abelian varieties defined over $K$, 
$B_{\varphi^{(10)}_K} \times B_{(\varphi^{(10)}_K)^{\wedge}}$,
where $B_{(\varphi^{(10)}_K)^{\wedge}}$ is a $K$-simple abelian variety
whose associated\footnote{
$A_f \times_\Q K \cong M(\chi)$ as motives over $K$ in both of the 
cases (i-real) and (i-CM), where $\chi$ is the algebraic Hecke character 
corresponding to the Galois${}_K$-orbit of $\varphi^{(10)}$. 
$A_f \times_\Q K \cong M(\chi) \oplus M(\chi^{\wedge})$ in the case (ii), where 
$\chi$ and $\chi^{\wedge}$ are the algebraic Hecke characters 
corresponding to the Galois${}_K$-orbit of $\varphi^{(10)}$ and 
$(\varphi^{(10)})^{\wedge}$, respectively.
 } 
Hecke character is $(\varphi^{(10)}_K)^{\wedge}$. 
\end{props}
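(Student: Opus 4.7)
The plan is to compare the $\ell$-adic Galois representations on the two sides of the claimed isogeny and then invoke Faltings' isogeny theorem, with a case split driven by Lemma \ref{lemma:SchappPoH-2-1}.

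First I would recall the $\ell$-adic realization of $A_f$ supplied by the Eichler--Shimura--Deligne construction already used implicitly in Prop.~\ref{props:GL2-abel}: the rational Tate module $V_\ell(A_f)$ is free of rank $2$ over $K_f \otimes_\Q \Q_\ell$, and after extending scalars to $\overline{\Q}_\ell$ it decomposes as
\begin{align}
V_\ell(A_f) \otimes_{\Q_\ell} \overline{\Q}_\ell \;\cong\; \bigoplus_{\sigma \in {\rm Hom}_\Q(K_f,\overline{\Q})} V_\sigma,
\end{align}
where $V_\sigma$ is the two-dimensional $\overline{\Q}_\ell$-representation of ${\rm Gal}(\overline{\Q}/\Q)$ attached to the Galois-conjugate newform $f^\sigma = f_{\varphi^\sigma}$. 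By the Hecke construction of theta series, each $V_\sigma$ is induced from $K$:
\begin{align}
V_\sigma \;\cong\; {\rm Ind}_{{\rm Gal}(\overline{K}/K)}^{{\rm Gal}(\overline{\Q}/\Q)}\, \rho_{\varphi^\sigma},
\end{align}
where $\rho_{\varphi^\sigma}$ is the one-dimensional $\ell$-adic character of ${\rm Gal}(\overline{K}/K)$ attached to $\varphi^\sigma$ by class field theory for $K$.

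Next I would restrict to ${\rm Gal}(\overline{K}/K)$ via the Mackey decomposition
\begin{align}
\bigl({\rm Ind}_{{\rm Gal}(\overline{K}/K)}^{{\rm Gal}(\overline{\Q}/\Q)} \rho_{\varphi^\sigma}\bigr)\Big|_{{\rm Gal}(\overline{K}/K)} \;\cong\; \rho_{\varphi^\sigma} \oplus \rho_{(\varphi^\sigma)^{\wedge}},
\end{align}
so that $V_\ell(A_f\times_\Q K)\otimes\overline{\Q}_\ell$ becomes the direct sum over $\sigma$ of these pairs. The crucial combinatorial input, drawn from Lemma \ref{lemma:SchappPoH-2-1}, is the identification of the resulting multiset of characters: in cases (i-real) and (i-CM) it is the Galois${}_K$-orbit of $\varphi^{(10)}_K$ covered with multiplicity two, while in case (ii) it is the disjoint union of the Galois${}_K$-orbits of $\varphi^{(10)}_K$ and $(\varphi^{(10)}_K)^{\wedge}$, each covered with multiplicity one. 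On the other side, for each Galois${}_K$-orbit $\mathcal{O}$ the Shimura construction (recalled in section \ref{sssec:shimuraEC}) produces a $K$-simple CM abelian variety $B_\mathcal{O}/K$ whose Tate module realizes exactly $\bigoplus_{\varphi'\in\mathcal{O}}\rho_{\varphi'}$ with the multiplicity dictated by its CM type. Matching $G_K$-representations term by term and invoking Faltings' isogeny theorem over $K$ then upgrades the Galois-module isomorphism to an isogeny of abelian varieties defined over $K$: one gets $A_f\times_\Q K\sim B_{\varphi^{(10)}_K}$ in cases (i), and $A_f\times_\Q K \sim B_{\varphi^{(10)}_K}\times B_{(\varphi^{(10)}_K)^{\wedge}}$ in case (ii). $K$-simplicity of each $B$ follows because its $\ell$-adic realization is the sum over a single Galois${}_K$-orbit, so any proper $K$-sub-abelian variety would yield a ${\rm Gal}(\overline{K}/K)$-stable subrepresentation contradicting that single-orbit structure.

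The main obstacle I foresee is the multiplicity bookkeeping: one must reconcile the factor-of-two gap between $[K_f:\Q]$ and $[T_{\varphi^{(10)}_K}:K]$ that distinguishes the three subcases, and then check that the CM-type multiplicities on the $B$-side match the Mackey-induced multiplicities on the $A_f$-side without slippage. A secondary technical point is that in case (i-CM), where ${}^{\wedge}$ permutes the Galois${}_K$-orbit nontrivially, one must verify that the identifications of individual characters $\rho_{\varphi^\sigma}$ with $\rho_{(\varphi^\sigma)^{\wedge}}$ assemble into a globally $G_K$-equivariant isomorphism of semisimple representations, so that Faltings produces an isogeny defined over $K$ rather than over some larger field.
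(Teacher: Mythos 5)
The paper does not prove this Proposition at all: it is quoted verbatim from the literature (Schappacher, \emph{Periods of Hecke characters}, Ch.~5 Prop.~2.2, and Ribet's nebentypus paper), with only the footnote recording the motivic reformulation $A_f\times_\Q K\cong M(\chi)$ resp.\ $M(\chi)\oplus M(\chi^\wedge)$. So your task was to reconstruct a proof from scratch, and your route --- compare semisimple $\ell$-adic Tate modules over $G_K$ via Mackey restriction of the induced representations, then invoke Faltings to upgrade the isomorphism of Galois modules to a $K$-isogeny --- is a legitimate modern argument, genuinely different from the cited sources (Ribet works with endomorphism algebras and twisting, pre-Faltings; Schappacher works with motives for absolute Hodge cycles). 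Faltings also cleanly handles your second worry: the isogeny is automatically defined over $K$ because the Tate-module isomorphism is $G_K$-equivariant by construction, and $K$-simplicity of each $B_e$ follows from $\mathrm{End}_K(B_e)\otimes\Q_\ell\cong\mathrm{End}_{\Q_\ell[G_K]}(V_\ell(B_e))$ together with the fact that the characters in a single Galois${}_K$-orbit are pairwise distinct (multiplicity one forces the endomorphism algebra to be commutative of maximal dimension, hence the field $T_e$).

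The one place where your bookkeeping is actually off, rather than merely delicate, is the identification of the second Mackey summand. The restriction of $\mathrm{Ind}_{G_K}^{G_\Q}\rho_{\varphi^\sigma}$ to $G_K$ is $\rho_{\varphi^\sigma}\oplus\rho_{\varphi^\sigma}^{c}$, and the conjugate character $\rho_{\varphi^\sigma}^{c}$ corresponds to $\varphi^\sigma\circ cc_K$, which has infinity type $[0,1]$; it is \emph{not} $(\varphi^\sigma)^{\wedge}=cc_\C\cdot\varphi^\sigma\cdot cc_K$, which has infinity type $[1,0]$. Consequently, in cases (i) the multiset is not ``the Galois${}_K$-orbit of $\varphi^{(10)}_K$ with multiplicity two'' --- that would be impossible, since the two constituents of an irreducible induced representation restricted to an index-two subgroup are never equal --- but rather the $[1,0]$-orbit and the $[0,1]$-orbit, each with multiplicity one. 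This is exactly what the Tate module of a single $K$-simple CM abelian variety $B_e$ with $\mathrm{End}_K\otimes\Q=T_e$ provides (one character for each of the $[T_e:\Q]$ embeddings of $T_e$, half restricting to $\rho^1$ on $K$ and half to $\rho^1\circ cc$), so your conclusion survives; but as written the matching would fail on Hodge--Tate weights. The same correction propagates to case (ii), where the restriction yields the $[1,0]$- and $[0,1]$-orbits of both $\varphi^{(10)}_K$ and $(\varphi^{(10)}_K)^{\wedge}$, matching $B_{\varphi^{(10)}_K}\times B_{(\varphi^{(10)}_K)^{\wedge}}$. With that repair the argument closes.
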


We have already encountered examples of Hecke characters of imaginary quadratic 
fields (and their Hecke theta functions) of all the three cases above. 

\begin{exmpl}
\label{exmpl:Hecke-i-real}
{\bf Case (i-real)}: 
For the imaginary quadratic field $K=\Q(\sqrt{-1})$, there is one Hecke 
character of infinity type [-1/2;1,0] with the conductor 
$\mathfrak{c}_f=(2+2i)$, and there is another with the conductor 
$\mathfrak{c}_f=(4)$. The Hecke theta functions of those two 
Hecke characters appeared in Example \ref{exmpl:Dk4-fz1-N64};  
a follow-up discussion is in Example \ref{exmpl:Dk4-fz1-N64-followup}.
$M_f=32$ and $64$ there. For both of those Hecke characters, $T=K$ and 
$K_f=\Q$; $\dim_\Q A_f = 1$, and $[T:K]=1=[K_f:\Q]$, the case (i-real).  
\end{exmpl}

\begin{exmpl}
\label{exmpl:Hecke-i-CM}
{\bf Case (i-CM)}
For the imaginary quadratic field $K=\Q(\sqrt{-2})$, there is a pair 
of Hecke characters with the infinity type [-1/2;1,0] and the conductor 
$\mathfrak{c}_f = (2\sqrt{2}i)$; we have already encountered their Hecke 
theta functions in Example \ref{exmpl:Dk8-fz1-N64}, where $M_f=64$ 
(additional discussion is in Example \ref{exmpl:Dk8-fz1-N64-followup}). 
For this pair, $T = K(i)=\Q(\sqrt{2}i,i)$ and $K_f=\Q(i)$; there is just 
one $\Q$-simple and CM-type $A_f/\Q$, with $\dim_\Q A_f=2$. 
$[T:K]=2=[K_f:\Q]$. This is a case (i-CM).

For the imaginary quadratic field $K=\Q(\sqrt{-5})$, there are two pairs 
of Hecke characters all of infinity type [-1/2;1,0] and conductor 
$\mathfrak{c}_f=(\sqrt{5}i)$. Their Hecke theta functions appeared 
in section \ref{sssec:exmpl-Dk20-fz1-fr1}; $M_f=100$ for all the four;  
two of them have nebentypus $\chi_{20}(1,1)$, while the other two 
$\chi_{20}(1,3)$. The corresponding Hecke characters are denoted by 
$\varphi^{(10)}_{\epsilon;2\pm}$ and $\varphi^{(10)}_{\epsilon';2\pm}$, respectively. 
In this example, $T=H_K$ (the Hilbert class field of $K=\Q(\sqrt{-5})$) 
and $K_f=\Q(i)$, so $[T:K]=2=[K_f:\Q]$. This is another case (i-CM).
The pair $\{ \varphi^{(10)}_{\epsilon;2+},\varphi^{(10)}_{\epsilon';2+}\}$ 
forms one Galois${}_K$ orbit, and their corresponding 
newforms also forms one ${\rm Gal}^{FC}$ orbit, so there is one 
$\Q$-simple CM-type $A_f/\Q$ of $\dim_\Q A_f=2$. The pair 
$\{ \varphi^{(10)}_{\epsilon;2-},\varphi^{(10)}_{\epsilon';2-}\}$ also has 
one $\Q$-simple CM-type abelian variety of dimension 2. 
\end{exmpl}

\begin{exmpl}
\label{exmpl:Hecke-ii}
{\bf Case (ii)}: 
For the imaginary quadratic field $K=\Q(\sqrt{-1})$, there is one 
pair of Hecke characters with the conductor $\mathfrak{c}_f=(10)$, 
another pair with $\mathfrak{c}_f=(5)$, another pair with 
$\mathfrak{c}_f=(2\pm 4i)$, and another pair with $\mathfrak{c}_f=(1\mp 2i)$, 
all with infinity type [-1/2;1,0]. Their Hecke theta functions appeared 
in section \ref{sssec:exmpl-Dk4-fz1-fr5}, where $M_f=400$, $100$, $80$, 
and $20$, respectively. For all of $M_f$'s, $T=K_f=K$ for the pair of 
Hecke characters, and they remain to form one-element Galois${}_K$ orbits
separately that are mutually ${}^\wedge$-operation pair.
The corresponding pair of newforms forms one ${\rm Gal}^{\rm FC}$ orbit, 
and hence one $\Q$-simple CM-type abelian variety $A_f/\Q$ of 
$\dim_\Q A_f = 2$.

For the imaginary quadratic field $K=\Q(\sqrt{-5})$, there is 
a quartet of Hecke characters with the infinity type[-1/2;1,0] and 
conductor $\mathfrak{c}_f=(2\sqrt{5}i)$. Their Hecke theta functions 
appeared in section \ref{sssec:exmpl-Dk20-fz1-fr1}, where $M_f=400$; 
two of them---denoted by $\varphi^{(10)}_{\epsilon,2\pm}$---have the nebentypus 
$\chi_{20}(1,1)$, and the nebentypus is $\chi_{20}(1,3)$ for the other two, 
which are denoted by $\varphi^{(10)}_{\epsilon';2\pm}$. In this example,   
$T=K_f=\Q(\sqrt{(10i)})=H_K$, and the Hecke theta functions of those four 
Hecke characters of type [-1/2;1,0] form one ${\rm Gal}^{FC}$ orbit. 
There is just one $\Q$-simple CM-type $A_f/\Q$.  
The Galois${}_K$ orbit decomposition of the four Hecke characters is 
$\{\varphi^{(10)}_{\epsilon;2+}, \varphi^{(10)}_{\epsilon';2-}\} \amalg 
 \{ \varphi^{(10)}_{\epsilon';2+}, \varphi^{(10)}_{\epsilon;2-}\}$. 
The ${}^\wedge$-operation exchanges the two orbits.  $\bullet$
\end{exmpl}

\subsubsection{Elliptic Curves of Shimura-type}
\label{sssec:shimuraEC}
We collect some more math, concerning elliptic curves of Shimura type.
This subsubsection is independent of the previous section 
\ref{sssec:Ksmpl-abelV-CMcupsF},
and will be used in the following subsubsections. 
\begin{defn}
\label{defn:Shimura type}
Let $E/k$ be an elliptic curve with complex multiplication by an order 
${\cal O}_{f_z}$ of an imaginary quadratic field $K$, and let 
$k$ be the field of definition of the elliptic curve $E$. We assume 
that $k$ is a finite abelian extension of $K$.
This elliptic curve $E/k$ is said to be of {\it Shimura type}, 
if $k(E_{\rm tor})/K$ is abelian.
\end{defn}

\begin{lemma}
\label{lemma:def-B-as-WeilR}
Let $E/k$ be an elliptic curve with complex multiplication by some 
order ${\cal O}_{f}$ of an imaginary quadratic field $K$, and suppose 
that the field of definition $k$ is an abelian extension of $K$ 
(and contains the ring class field $L_f$).
Then there is an isomorphism
\begin{align}
  B := {\rm Res}_{k/K}(E) \cong_{/k} \prod_{\sigma \in {\rm Gal}(k/K)} ({}^\sigma E)
\end{align}
over $k$, where $\mathrm{Res}_{k/K} E$ is 
an abelian variety over $K$ of dimension $[k:K]$, called the Weil 
restriction of $E$.
This statement (existence of the isomorphism over $k$) is found at the 
beginning 
of \S 4 of \cite{GSchappacher} (and also in \cite[\S 15]{Gross-LNM}).\footnote{ 
For this statement to be true, an elliptic curve $E/k$ does not have 
to have complex multiplications, or the extension $k/K$ Galois; 
when $k/K$ is not Galois, the product on the right hand side is 
over $\sigma \in {\rm Hom}_K(k, \bar{k})$ 
\cite[just before Lemma 13.3]{Milne-pedestrian}. 
}
\end{lemma}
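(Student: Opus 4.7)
The plan is to verify the claimed isomorphism by checking that both sides represent the same functor on the category of $k$-schemes, using the universal property defining the Weil restriction. First, I would recall that $B = \mathrm{Res}_{k/K}(E)$ is characterized by the natural bijection $B(T) = E(T \times_K k)$ for any $K$-scheme $T$. Base-changing to $k$, for any $k$-scheme $S$ (regarded also as a $K$-scheme via the structure map $K \hookrightarrow k$) this gives
\begin{align*}
B_k(S) := (B \times_K k)(S) = B(S) = E(S \times_K k).
\end{align*}

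Next, I would exploit the hypothesis that $k/K$ is Galois (abelian extensions are Galois) via the standard canonical isomorphism of $k$-algebras
\begin{align*}
k \otimes_K k \;\xrightarrow{\sim}\; \prod_{\sigma \in {\rm Gal}(k/K)} k, \qquad a \otimes b \;\longmapsto\; (\sigma(a)\,b)_\sigma,
\end{align*}
where on the $\sigma$-th factor the first $k$ acts through $\sigma$ and the second acts through the identity. Applying $S \times_k (-)$ converts this into a decomposition
\begin{align*}
S \times_K k \;\cong\; \coprod_{\sigma \in {\rm Gal}(k/K)} S_\sigma,
\end{align*}
where $S_\sigma$ denotes $S$ with its $k$-structure precomposed by $\sigma$. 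Substituting back, I would get
\begin{align*}
B_k(S) = E\Bigl(\coprod_\sigma S_\sigma\Bigr) = \prod_{\sigma \in {\rm Gal}(k/K)} E(S_\sigma) = \prod_{\sigma \in {\rm Gal}(k/K)} ({}^\sigma E)(S),
\end{align*}
where the last equality is the tautological identification of the Galois conjugate ${}^\sigma E$ with the $k$-scheme obtained from $E$ by twisting its structure map via $\sigma$. Naturality in $S$ is immediate from the construction, so Yoneda promotes this to the desired isomorphism $B_k \cong \prod_\sigma ({}^\sigma E)$ of $k$-schemes (indeed of abelian $k$-varieties, since every step respects the group-scheme structure).

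The hard part will be essentially none — the result is an instance of Galois descent for Weil restrictions, applied to our situation; the CM and $L_f \subset k$ hypotheses of the ambient setup play no role in the isomorphism itself and are only included because they are part of the running assumptions on $E/k$. The only thing that requires care is notational bookkeeping: fixing once and for all which copy of $k$ in $k \otimes_K k$ is twisted by $\sigma$ versus the identity, so that the identification $E(S_\sigma) = ({}^\sigma E)(S)$ uses the same twisting convention as the one adopted for ${}^\sigma E$ elsewhere in the paper.
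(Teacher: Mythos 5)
Your argument is correct: the functor-of-points computation via $k \otimes_K k \cong \prod_{\sigma \in \mathrm{Gal}(k/K)} k$ is the standard proof of this fact, and your remark that the CM and $L_f \subset k$ hypotheses are irrelevant to the isomorphism matches the paper's own footnote. Note that the paper does not actually prove this lemma — it simply cites Goldstein–Schappacher, Gross, and Milne — so your write-up supplies exactly the argument those references contain; the only caveat, which you already flag, is the $\sigma$ versus $\sigma^{-1}$ bookkeeping in identifying $E(S_\sigma)$ with $({}^\sigma E)(S)$, which is harmless here since the index runs over the whole group.
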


\begin{lemma}
\label{lemma:Lfcn-abelV-restrictScalar}
Let $E$ be an elliptic curve defined over a number field $k$, and 
$k$ be a finite extension over a number field $K$; $k/K$ is not 
necessarily Galois, nor $E/k$ is of CM type. Then 
\begin{align}
  L(E/k,s) = L(B/K,s).
\end{align}
See \cite[Lemma 13.3]{Milne-pedestrian} for a sketch of proof. 
\end{lemma}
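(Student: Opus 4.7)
The plan is to reduce the identity of two Hasse--Weil $L$-functions to a standard Galois-theoretic identity, namely the inductivity (Artin formalism) of $L$-functions under induction of representations. So I would proceed in two steps: first identify the $\ell$-adic Galois representation attached to $B/K$ as an induced representation, and then invoke inductivity.

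For the first step, I would use the defining functorial property of Weil restriction: for any $K$-algebra $R$, $B(R) = E(R \otimes_K k)$. Applied to geometric points and then to $\ell$-adic Tate modules, this yields an identification $T_\ell(B) \cong T_\ell(E)$ as $\mathbb{Z}_\ell$-modules, but where the action of $G_K := \mathrm{Gal}(\overline{K}/K)$ on the left is obtained from the action of $G_k := \mathrm{Gal}(\overline{K}/k)$ on the right by induction. Equivalently, on cohomology,
\begin{align}
H^1_{et}(B \times_K \overline{K},\mathbb{Q}_\ell) \;\cong\; \mathrm{Ind}_{G_k}^{G_K}\, H^1_{et}(E \times_k \overline{k},\mathbb{Q}_\ell),
\end{align}
as $G_K$-representations. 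A clean way to see this is via the isomorphism of Lemma \ref{lemma:def-B-as-WeilR}: base-changing $B$ to $\overline{K}$, one gets $\prod_{\sigma \in \mathrm{Hom}_K(k,\overline{K})} {}^\sigma E_{\overline{K}}$, whose $H^1$ is $\bigoplus_\sigma H^1_{et}({}^\sigma E_{\overline{k}},\mathbb{Q}_\ell)$, and the $G_K$-action permutes the $\sigma$-components through its quotient action on $\mathrm{Hom}_K(k,\overline{K})$; this is exactly the induced representation.

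For the second step, I would apply the Artin inductivity formula. The Hasse--Weil $L$-function of an abelian variety over a number field is defined, prime by prime, as the characteristic polynomial (evaluated at $\mathrm{Nm}(\mathfrak{p})^{-s}$) of Frobenius acting on the inertia-invariants of the $\ell$-adic $H^1$. For any continuous finite-dimensional $\ell$-adic representation $\rho$ of $G_k$, one has the standard identity
\begin{align}
L(\mathrm{Ind}_{G_k}^{G_K}\rho,\,s) \;=\; L(\rho,\,s),
\end{align}
where the left side is an Euler product over primes of $K$ and the right over primes of $k$; matching is by Mackey-type decomposition of $\mathrm{Ind}$ restricted to each decomposition group at $\mathfrak{p} \subset {\cal O}_K$, which regroups Euler factors at the primes $\mathfrak{P} \subset {\cal O}_k$ lying above $\mathfrak{p}$. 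Combining these two steps gives $L(B/K,s)=L(E/k,s)$.

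The only nontrivial point is checking that the inductivity identity holds not merely for the ``good'' Euler factors but for every prime, including ramified ones and primes of bad reduction: one must verify that taking inertia-invariants commutes with the local decomposition of the induced representation at each $\mathfrak{p}$. This is the main (though well-documented) obstacle; since it is purely local Galois-representation theory and is standard, I would simply cite \cite[Lemma 13.3]{Milne-pedestrian}, which treats exactly this lemma. No input specific to CM or Shimura type is needed — the statement and proof are valid for any elliptic curve (indeed any abelian variety) over $k$ and any finite extension $k/K$.
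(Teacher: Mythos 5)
Your proposal is correct and follows exactly the argument that the paper defers to: identifying $H^1_{et}$ of the Weil restriction as $\mathrm{Ind}_{G_k}^{G_K}$ of $H^1_{et}(E)$ and then invoking Artin inductivity of $L$-functions (including the check at ramified and bad primes), which is precisely the content of the cited Lemma 13.3 of \cite{Milne-pedestrian}. Nothing further is needed.
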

\begin{props}\cite[Thm 4.1 and Lem. 4.8(i)]{GSchappacher}
\label{props:GS}
Let $E/k$ be an elliptic curve with complex multiplication by an order 
of $K$, and suppose that the field of definition $k$ is an abelian extension 
of $K$. Then the following four conditions are equivalent. 
\begin{itemize}
\item (i) $E/k$ is of Shimura type. 
\item (v) For the Hecke character $\psi^{(1,0)}_{E/k}: 
k^\times \backslash \mathbb{A}_k^\times \rightarrow \C^\times$ 
[resp. Serre--Tate character $\psi_{E/k}: k^\times \backslash \mathbb{A}_k^\times 
\rightarrow K^\times$]
associated with $E/k$, there exists a Hecke character 
$\varphi^{(1,0)}: K^\times \backslash \mathbb{A}_K^\times \rightarrow \C^\times$ 
such that $\psi^{(1,0)}_{E/k} = \varphi^{(1,0)} \cdot {\rm Nm}_{k/K}$ 
[resp. an algebraic Hecke character $\chi$ over $K$ such that 
$\psi_{E/k} = \chi \circ {\rm Nm}_{k/K}$].
\item (iv) Let $T := {\rm End}_K(B) \otimes_\Z \Q$ be the ring of 
endomorphisms tensored by $\Q$. There exists a finite set $I$ such that 
$T \cong \oplus_{e \in I} T_e$ with each $T_e$ a CM field (commutative in 
particular); the set $I$ consists of primitive idempotents of the algebra 
$T$ ($e \in T$). Each one of those CM fields ($T_e$'s) contains $K$, and 
$[k:K] = \dim (B) = \sum_{e\in I} [T_e:K]$. 
\item (ii)--(iii) The representation 
${\rm Ind}^{{\rm Gal}(\overline{\Q}/K)}_{{\rm Gal}(\overline{\Q}/k)} (\rho_\ell (E/k))
 = \rho_\ell(B/K)$ is abelian. 
\end{itemize}
Here, $\rho_\ell(B/K)$ and $\rho_\ell(E/K)$ denote the Galois representations
associated with the dual of the respective $\ell$-adic Tate modules tensored 
with $\Q_\ell$.  $\bullet$
\end{props}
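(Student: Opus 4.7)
The plan is to run the chain of equivalences (i) $\Leftrightarrow$ (ii)--(iii) $\Leftrightarrow$ (iv) together with (i) $\Leftrightarrow$ (v), the last of these being where the main theorem of complex multiplication enters. First I would establish (i) $\Leftrightarrow$ (ii)--(iii). The identification $\rho_\ell(B/K) \cong \mathrm{Ind}^{\mathrm{Gal}(\overline{\Q}/K)}_{\mathrm{Gal}(\overline{\Q}/k)} \rho_\ell(E/k)$ is immediate from the $k$-isomorphism $B \cong \prod_\sigma {}^\sigma E$ of Lemma~\ref{lemma:def-B-as-WeilR} combined with Galois descent, so the content of (ii)--(iii) is abelianness of the image. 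That image in $\mathrm{GL}(V_\ell B)$ factors through $\mathrm{Gal}(k(B[\ell^\infty])/K)$, and taking the compositum of the fields $k(B[\ell^\infty])$ over all rational primes $\ell$ yields exactly $k(E_{\mathrm{tor}})$. Thus the induced representation is abelian iff $\mathrm{Gal}(k(E_{\mathrm{tor}})/K)$ is abelian, which is the Shimura-type condition (i).

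Second, I would prove (ii)--(iii) $\Leftrightarrow$ (iv). Faltings' theorem identifies $T \otimes_\Q \Q_\ell$ with the commutant $\mathrm{End}_{\mathrm{Gal}(\overline{\Q}/K)}(V_\ell B)$. If $\rho_\ell(B/K)$ is abelian semisimple, the decomposition of $V_\ell B \otimes \overline{\Q}_\ell$ into Galois eigenspaces presents this commutant as a product of number fields; because $E/k$ has complex multiplication by an order in $K$, each factor contains $K$, and positivity of the Rosati involution on $\mathrm{End}^0(B)$ forces each factor $T_e$ to be a CM field. The dimension identity $[k:K] = \dim B = \sum_e [T_e:K]$ falls out of $\dim_{\Q_\ell} V_\ell B = 2[k:K] = 2\sum_e [T_e:K]$. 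Conversely, if $T \cong \bigoplus_e T_e$ as described, each $T_e$ acts on the corresponding isotypic piece of $V_\ell B$ by a character, so the $\mathrm{Gal}(\overline{\Q}/K)$-action commutes with a maximal commutative subalgebra and is therefore abelian.

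Finally, for (i) $\Leftrightarrow$ (v), the Serre--Tate character $\psi_{E/k}: \mathbb{A}_k^\times/k^\times \to K^\times$ determines the reciprocity map $\mathbb{A}_k^\times/k^\times \to \mathrm{Gal}(k(E_{\mathrm{tor}})/k)$ that controls the Galois action on torsion, and $k(E_{\mathrm{tor}})/K$ is abelian precisely when $\mathrm{Gal}(k/K)$ acts trivially by conjugation on $\mathrm{Gal}(k(E_{\mathrm{tor}})/k)$; translating through the reciprocity map, this becomes $\mathrm{Gal}(k/K)$-invariance of $\psi_{E/k}$. That invariance is in turn equivalent to the existence of an algebraic Hecke character $\chi$ over $K$ with $\psi_{E/k} = \chi \circ \mathrm{Nm}_{k/K}$, the analytic case $\psi^{(1,0)}_{E/k} = \varphi^{(1,0)} \cdot \mathrm{Nm}_{k/K}$ being parallel. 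The main obstacle will be precisely this last descent step: $\mathrm{Gal}(k/K)$-invariance does not automatically yield a factorization through the norm because $\mathrm{Nm}_{k/K}$ is not surjective on idele-class groups, and one must either argue via vanishing of a Galois cocycle or produce $\chi$ directly on the image of the norm and extend, using that $\psi_{E/k}$ takes values in $K^\times$ rather than in a larger field. This is the content of Lemma~4.8(i) of Goldstein--Schappacher which I would import rather than reprove; the remaining implications are then formal consequences of Faltings' theorem, the main theorem of CM, and the structure of abelian semisimple $\ell$-adic representations.
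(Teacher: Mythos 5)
The paper does not prove this Proposition at all: it is imported verbatim from Goldstein--Schappacher (their Thm.~4.1 together with Lem.~4.8(i)), and the surrounding Remarks \ref{rmk:shimura-7-44}--\ref{rmk:gross-LNM-s15} only unpack the statements. So there is no in-paper argument to compare against; your sketch is a reconstruction of the cited proof, and you correctly locate the genuinely hard step (factoring the Serre--Tate character through ${\rm Nm}_{k/K}$) in Lemma~4.8(i) of the same reference, which you import rather than reprove --- exactly what the paper itself does for the entire Proposition.

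Two points in your sketch deserve tightening. First, in (ii)--(iii) $\Rightarrow$ (iv) you assert that abelianness of $\rho_\ell(B/K)$ makes the commutant $T\otimes\Q_\ell$ a product of fields; this is false for a general abelian semisimple representation (a character occurring with multiplicity $>1$ contributes a matrix algebra to the commutant). You need multiplicity one of the $2[k:K]$ characters on $V_\ell B\otimes\overline{\Q}_\ell$, which here follows because the $[k:K]$ characters of each infinity type are pairwise distinct twists by the distinct characters of ${\rm Gal}(k/K)$, and the two infinity types $[-1/2;1,0]$ and $[-1/2;0,1]$ cannot coincide. Second, ``$k(E_{\rm tor})/K$ abelian iff ${\rm Gal}(k/K)$ acts trivially by conjugation on ${\rm Gal}(k(E_{\rm tor})/k)$'' is not literally an equivalence for abstract group extensions (trivial outer action with abelian kernel and quotient gives only class-$2$ nilpotence); the actual argument, as in Shimura's Thm.~7.44, works with the reciprocity map and the invariance $\psi_{E/k}^\sigma=\psi_{E/k}$ directly. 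Neither issue is fatal --- both are repaired by the specific structure of the CM situation --- but a self-contained write-up would have to address them. The appeal to Faltings is also anachronistic relative to the 1981 source, which gets the commutant statement from Shimura--Taniyama CM theory, though it is of course mathematically legitimate.
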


Some remarks are in order here so we can digest all that are crammed 
into the Proposition above. We begin with additional information 
related to the condition (v).
\begin{rmk}\cite[Thm. 7.42 and 7.44]{Shimura-AA} 
\label{rmk:shimura-7-44}
When those conditions are satisfied, (v) in particular, there are 
$|{\rm Gal}(k/K)| = [k:K]$ distinct Hecke characters $\varphi^{(1,0)}$ 
satisfying the condition $\psi^{(1,0)}_{E/k} = \varphi^{(1,0)} \cdot {\rm Nm}_{k/K}$. 
They are denoted by $\varphi^{(10)}_{K,i}$ labeled by $i=1,\cdots, [k:K]$.
Similarly, $\varphi^{(01)}_{K,i}$ with $i=1,\cdots, [k:K]$ are the Hecke 
characters $K^\times \backslash \mathbb{A}_K^\times \rightarrow \C^\times$ for 
$\psi^{(01)}_{E/k}$.
\begin{align}
  L(E/k,s) = L(\psi^{(1,0)}_{E/k},s) L(\psi^{(0,1)}_{E/k},s) = 
    \prod_{i=1}^{[k:K]} L(\varphi_{K,i}^{(1,0)}, s) \; 
    \prod_{i=1}^{[k:K]} L(\varphi_{K,i}^{(0,1)}, s). 
  \label{eq:Lfcn-GalE-Autok}
\end{align}
\end{rmk}

The conditions (ii)--(iii) in the Proposition and the Hecke characters 
that appeared in the Remark above are related as follows. 
\begin{rmk}
\label{rmk:Shimura-type-GalChartzn}
Suppose $k \supset K$.
On the two dimensional 
$\Q_\ell$-vector space ${\rm Tate}_\ell(E) \otimes_{\Z_\ell} \Q_\ell$, 
we have a representation of the Galois group 
${\rm Gal}(\overline{\Q}/k)$ dual to $\rho_\ell(E/k)$; the image 
of the group ${\rm Gal}(\overline{\Q}/k)$ under this representation 
$\rho_\ell$ is in an abelian 
subgroup $M \subset {\rm GL}_2(\Q_\ell)$, because $E$ has complex 
multiplication (defined over $k$). 
The two-dimensional representation 
$\rho_\ell(E/k)$ on the dual of
${\rm Tate}_\ell(E)\otimes_{\Z_\ell} \Q_\ell$ splits into a pair 
of one-dimensional representation over the dual vector space of 
${\rm Tate}_\ell(E) \otimes_{\Z_\ell} \overline{\Q}_\ell$,
so the two representations are denoted by 
$\rho^{(10)}_\ell(E/k)$ and $\rho^{(01)}_\ell(E/k)$. 
The Hecke characters 
$\psi^{(10)}_{E/k}$ and $\psi^{(01)}_{E/k}$
are those that correspond to the Galois representations
$\rho^{(10)}_\ell(E/k)$ and $\rho^{(01)}_\ell(E/k)$
via class field theory (or the Langlands correspondence).

Consider the induced representation of $\rho_\ell(E/k)$, and 
also of $\rho^{(10)}_\ell(E/k) \oplus \rho^{(01)}_\ell(E/k)$, for the 
inclusion of the groups ${\rm Gal}(\overline{\Q}/k) \subset 
{\rm Gal}(\overline{\Q}/K)$, namely, 
\begin{align}
  {\rm Ind}^{{\rm Gal}(\overline{\Q}/K)}_{{\rm Gal}(\overline{\Q}/k)} ( \rho_\ell(E/k)), 
   \qquad 
  {\rm Ind}^{{\rm Gal}(\overline{\Q}/K)}_{{\rm Gal}(\overline{\Q}/k)} \left( 
     \rho^{(10)}_\ell(E/k) \oplus \rho^{(01)}_\ell(E/k) \right).
\end{align}
The subgroup $[G_K, G_K]$ of $G_K := {\rm Gal}(\overline{\Q}/K)$ may not 
be in the kernel. 
The conditions (ii)--(iii) on an arithmetic model $E/k$ of an elliptic 
curve $[E]_\C$ with complex multiplication is that $[G_K,G_K]$ is in the 
kernel, so the representation of $G_K$ on the dual space 
of ${\rm Tate}_\ell(B) \otimes_{\Z_\ell} \overline{\Q}_\ell$ splits into 
$[k:K] + [k:K]$ separate one-dimensional representations. 
The Hecke characters corresponding via class field theory 
to one-dimensional representations of $G_K = {\rm Gal}(\overline{\Q}/K)$ 
are the Hecke characters $\varphi^{(10)}_{K,i=1,\cdots, [k:K]}$ and 
$\varphi^{(01)}_{K,i=1,\cdots,[k:K]}$.   $\bullet$
\end{rmk}

Now, the following remark elaborates more on the condition (iv) 
in the Proposition.
\begin{rmk}
\label{rmk:gross-LNM-s15}
The direct sum structure of the algebra $T \cong \oplus_{e \in I} T_e$
corresponds to $K$-simple decomposition of the abelian variety $B/K$. 
To be more explicit, let 
\begin{align}
 B \sim_{/K} \prod_{e \in I} B_e 
  \label{eq:BfromE-split-by-K}
\end{align}
be an isogeny defined over $K$, where each factor $B_e$ is a CM-type  
abelian variety defined over $K$ and is $K$-simple; we have already 
used the set $I_{{\rm for}T}$ of primitive idempotents of $T$ also in labeling 
the $K$-simple factors of $B$, $I_{{\rm for}B}$. 
To see that $I_{{\rm for}B}=I_{{\rm for}T}$, 
note that the endomorphism algebra ${\rm End}_{K}(B_e/K) \otimes_\Z \Q$ of 
a $K$-simple CM-type abelian variety is a CM field $T_e$, $e \in I_{{\rm for}B}$;
if there were an isogeny over $K$ between a pair of the $K$-simple factors 
$B_e$ and $B_{e'}$ with $e,e' \in I_{{\rm for}B}$, the total endomorphism 
algebra $T = {\rm End}_K(B/K) \otimes_\Z \Q$ would have a structure 
of $\oplus_{{\rm isogeny~classes}} {\rm Matrix}({\rm CM~field})$. The structure 
$T \cong \oplus_{e \in I} T_e$ with all the $T_e$'s commutative implies that 
there must be no isogeny over $K$ between a pair of $K$-simple factors 
in (\ref{eq:BfromE-split-by-K}), and that $I_{{\rm for}B}$ is equal 
to $I_{{\rm for}T}$ (as announced already). 

Due to \cite[Thm. 5]{Shimura-zetaA}, the absence of an isogeny between 
different $K$-simple factors $B_e$ and $B_{e'}$ 
in (\ref{eq:BfromE-split-by-K}) indicates that 
their Serre--Tate characters $\psi_{e/K}: K^\times \backslash \mathbb{A}_K^\times 
\rightarrow T_e^\times$ 
and $\psi_{e'/K}: K^\times \backslash \mathbb{A}_K^\times 
\rightarrow T_{e'}^\times$ should be different.\footnote{That does not rule 
out a possibility that their CM fields $T_e$ and $T_{e'}$ happen to be 
isomorphic, however.} Now, 
for each factor $B_e/K$, and its Serre--Tate character $\psi_{e/K}$, 
there are $[T_e:K] + [T_e:K]$ Hecke characters $\chi^{(10)}_{e;K,\alpha}$
and $\chi^{(01)}_{e;K,\alpha}$, where $\alpha = 1,\cdots, [T_e:K]$, 
corresponding to the same number of embeddings of $T_e$ to $\overline{\Q}$; 
the superscripts ${}^{(10)}$ and ${}^{(01)}$ refer to how the subfield 
$K \subset T_e$ is embedded into $\overline{\Q}$. 
Applying general results on CM-type abelian varieties 
\cite[18.1.6]{Gross-LNM}, 
\begin{align}
  L(B/K,s) =  \prod_{e \in I} \left( \prod_{\alpha =1}^{[T_e:K]} 
  L(\chi_{e; K,\alpha}^{(10)},s) \; L(\chi^{(01)}_{e;K,\alpha}, s) \right). 
  \label{eq:Lfcn-GalB-AutoK}
\end{align}

The decomposition (\ref{eq:BfromE-split-by-K}) into $K$-simple factors 
is determined modulo isogenies defined over $K$. Neither we require 
that $B_e$'s are subvarieties of $B$ nor are obtained as a quotient 
of $B$. That is fine, because the $L$-function of an abelian variety 
defined over $K$ remains the same for all that belong to the same class 
of isogeny over $K$. $\bullet$
\end{rmk}

\begin{rmk}
\label{rmk:compare-ShimuraType-varphi-chi}
The set of Hecke characters $\{\varphi^{(10)}_{K,i} \; | \; i=1,\cdots, [k:K]\}$
in Remark \ref{rmk:shimura-7-44} agree with 
$\cup_{e\in I} \; \{ \chi^{(10)}_{e;K,\alpha} \; | \; \alpha = 1,\cdots, [T_e:K]\}$ 
in Remark \ref{rmk:gross-LNM-s15}, and so do the Hecke characters of 
the infinity type [-1/2;-1,0]. 
cf. \cite[Lemma 4.8 (i)]{GSchappacher}. 
\end{rmk}

There are 13 elliptic curves $[E]_\C$ with complex multiplication 
that have arithmetic models $E^+/\Q$; 
even for such an $[E]_\C$, not all its models $E/K$ defined over the 
imaginary quadratic field $K$ can be regarded as the base change of a 
model over $\Q$, $E = E^+ \times_\Q K$. A model $E^+/\Q$ is found for some 
models $E/K$, but there is none for other models $E/K$. 
A similar story holds for all the CM elliptic curves $[E_z]_\C$ not 
necessarily with $h({\cal O}_{f_z}) = 1$. 

\begin{anythng}\cite{Wortmann}
\label{statmnt:wortmann}
Let $E/k$ be an elliptic curve of Shimura type. It is said to be 
an {\it elliptic curve of Shimura type with the (S2) condition}, if it 
satisfies the following:
\begin{itemize}
\item[(S2)] there exist a subfield $F$ of $k$ which does not contain $K$, 
$k = FK$, and an elliptic curve $E^+$ defined over $F$, such that the 
base change reproduces $E/k$. That is, $E^+ \times_F k = E$. 
\end{itemize}
Let $B^+ := {\rm Res}_{F/\Q}(E^+)$ be the Weil restriction, an abelian variety 
over $\Q$ of dimension $[F:\Q] = [k:K]$. Then $B = B^+ \times_\Q K$.
The algebra of endomorphisms $T^+ := {\rm End}_\Q (B^+) \otimes_\Z \Q$ 
is an index 2 subalgebra of $T = {\rm End}_K(B) \otimes_\Z \Q$ fixed 
by a non-trivial involution on $T$. Furthermore, the condition (S2) 
is equivalent to $j([E]_\C)^{cc} = j([E]_\C)$ and 
$\{ (\varphi^{(10)}_{e,K,\alpha})^\wedge \} = \{ (\varphi^{(10)}_{e,K,\alpha}) \}$.
So, it follows that $e^\wedge = e$ for $e \in I$ of case (i-real) and (i-CM),
while $e \in I$ of case (ii) is accompanied by $e^\wedge \in I$ different 
from $e$ itself. Let $I^+ := \{ [e] \; | e\in I \}_{e: {\rm case~(i)}} \amalg 
\{ [e] \; | \; (e,e^\wedge) \subset I \}_{e,e^\wedge: {\rm case~(ii)}}$.
Then $T^+ \cong \oplus_{d \in I^+} T^+_d$, where $[T_e:T^+_{d=[e]}]=2$
is the invariant part of the CM field $T_e$ under an involution on $T$ 
for $e \in I$ of case (i), and 
$T^+_{d=[e]} \cong T_e$ is the invariant part of $(T_e \oplus T_{e^\wedge})$
for $(e,e^\wedge) \subset I$ of case (ii). 
The $\Q$-simple decomposition of $B^+$ is $B^+ \cong_{/\Q} \prod_{d \in I^+}B^+_d$, 
and $T^+_d = {\rm End}_\Q(B^+_d) \otimes_\Z \Q$. The base change of those 
$\Q$-simple abelian varieties $B^+_d$ are 
$B_e = B^+_{d=[e]} \times_\Q K$ for $e \in I$ of case (i), and 
$B_e \times B_{e^\wedge} \cong B^+_d \times_\Q K$ for $(e,e^\wedge) \subset I$ 
of case (ii).

The representation $\rho_\ell(B^+) = 
{\rm Ind}^{{\rm Gal}(\overline{\Q}/\Q)}_{{\rm Gal}(\overline{\Q}/F)}(\rho_\ell(E^+))$ 
of ${\rm Gal}(\overline{\Q}/\Q)$ on the dual vector space of 
${\rm Tate}_\ell(B^+) \otimes_{\Z_\ell} \overline{\Q}_\ell$ splits into 
$[F:\Q] = [k:K]$ distinct two-dimensional representations. 
This is because the restriction of $\rho_\ell(B^+)$ on the index two subgroup
${\rm Gal}(\overline{\Q}/K)$ should be the same as $\rho_\ell(B) = 
{\rm Ind}^{{\rm Gal}(\overline{\Q}/K)}_{{\rm Gal}(\overline{\Q}/k)}(\rho_\ell(E))$, 
which is the sum of $2[k:K]$ distinct one-dimensional representations.
Their induced representations can only be two-dimensional. $\bullet$
\end{anythng}

\subsubsection{A Non-constant Map from the Jacobian 
of Modular Curves to a Elliptic Curve of Shimura-type}
\label{sssec:surjective-morph-2BK}

The aim of this section is to construct a morphism over $K$
from the Jacobian of modular curves to the Weil restriction 
$B={\rm Res}_{k/K}(E)$ of an elliptic curve $E/k$ of Shimura type, 
so that we have a good description of the pullback
map induced on $H^{1,0}$ of those abelian varieties. 

\begin{props}
\label{props:BK-as-quotient-of-Jac1}
Let $E/k$ be an elliptic curve of Shimura type. The notation being 
the same as in sections \ref{sssec:Ksmpl-abelV-CMcupsF} and 
\ref{sssec:shimuraEC}. Then there exists a surjective morphism 
\begin{align}
  \nu_N: {\rm Jac}(X_1(N)) \times_\Q K \rightarrow B
   \label{eq:surj-fromJac-toBofE}
\end{align}
defined over $K$, when a positive integer $N$ is divisible by 
a certain positive integer $N_E$ that is specified (eq. (\ref{eq:def-NE})) 
in the proof below.\footnote{Because $N=N_{D\Lambda}^2 = |D_K| {\rm Nm}_{K/\Q}(\mathfrak{c}_f) = 
D_K^2 f_\rho^2 \geq 9$ cannot be very small,
$X_1(N)$ is an algebraic variety, not an orbifold (defined over $\Q$). } 
\end{props}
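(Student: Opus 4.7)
The plan is to build $\nu_N$ factor by factor, indexed by the $K$-simple decomposition of $B$, by routing through the abelian varieties $A_f$ associated with the Hecke theta functions in Prop.~\ref{props:GL2-abel}, and then invoking Prop.~\ref{props:SchappPoH-2-2} to relate each $A_f\times_\Q K$ to the corresponding $K$-simple factor $B_e$ of $B$.

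First, I would fix all the data coming from $E/k$. Since $E/k$ is of Shimura type, Prop.~\ref{props:GS} gives an isogeny $B \sim_{/K} \prod_{e\in I} B_e$ with each $B_e$ a $K$-simple CM abelian variety, and an associated family of Hecke characters $\{\varphi^{(10)}_{e;K,\alpha}\}_{e\in I,\,\alpha=1,\dots,[T_e:K]}$ of $K$ (Rmk.~\ref{rmk:gross-LNM-s15}, Rmk.~\ref{rmk:compare-ShimuraType-varphi-chi}). For each $e\in I$ pick one $\alpha$ and form the Hecke theta function $f_e := f_{\varphi^{(10)}_{e;K,\alpha}}$, which by Lem.~\ref{lemma:Shimura-ellCM-Lemma3} is a weight-$2$ newform of level $M_{f_e}=|D_K|\,\mathrm{Nm}_{K/\Q}(\mathfrak{c}_{f_e})$. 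Define
\[
N_E := \mathrm{lcm}_{e\in I}\, M_{f_e}.
\tag{$\ast$}\label{eq:def-NE}
\]
By Lem.~\ref{lemma:SchappPoH-2-1}, the choice of $\alpha$ is immaterial (different $\alpha$'s lie in the same ${\rm Gal}^{\rm FC}$-orbit, hence share the same level); in case (ii) the pair $\{e,e^\wedge\}$ determines a single ${\rm Gal}^{\rm FC}$-orbit and hence a single $f_e=f_{e^\wedge}$.

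Next, for $N$ divisible by $N_E$, I would use the standard degeneracy maps on modular curves. Since each $M_{f_e}\mid N$, Prop.~\ref{props:GL2-abel} provides a surjection of abelian varieties over $\Q$
\[
\mathrm{Jac}(X_1(N)) \;\twoheadrightarrow\; A_{f_e,N},
\]
where $A_{f_e,N}$ is isogenous over $\Q$ to the product of copies of $A_{f_e}$ indexed by divisors of $N/M_{f_e}$. Base-changing to $K$ and applying Prop.~\ref{props:SchappPoH-2-2}: in cases (i-real) and (i-CM) one has $A_{f_e}\times_\Q K \sim_{/K} B_e$, while in case (ii) one has $A_{f_e}\times_\Q K \sim_{/K} B_e\times B_{e^\wedge}$. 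Composing the degeneracy projection with these $K$-isogenies and (in case (ii)) with the two projections onto $B_e$ and $B_{e^\wedge}$ yields, over $K$, a surjective morphism
\[
\mathrm{Jac}(X_1(N))\times_\Q K \;\twoheadrightarrow\; B_e
\qquad \text{for every } e\in I.
\]
Taking the direct sum of these morphisms over $e\in I$ lands in $\prod_{e\in I} B_e$; composing with a $K$-isogeny $\prod_{e\in I} B_e \sim_{/K} B$ furnished by Prop.~\ref{props:GS}(iv) produces the desired $\nu_N$. Surjectivity is automatic: each $B_e$ is hit surjectively, so the product map is surjective onto $\prod_e B_e$, and the final isogeny is surjective by definition.

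The main obstacle to make rigorous is ensuring that every step above really takes place over $K$, not merely over $\overline{K}$. The abelian variety $A_{f_e}$ is constructed over $\Q$ (Prop.~\ref{props:GL2-abel}), but the isogeny $A_{f_e}\times_\Q K \sim B_e$ (or $\sim B_e\times B_{e^\wedge}$) is a statement over $K$, and one must cite Prop.~\ref{props:SchappPoH-2-2} (and behind it the fact that the Serre–Tate/Hecke characters of $B_e$ are those predicted by the $L$-function identity \eqref{eq:Lfcn-GalB-AutoK}) to guarantee the field of definition. A small bookkeeping step is also needed to verify that, running $e$ over \emph{all} of $I$, the Hecke theta functions $\{f_e\}$ together with their $^\wedge$-partners exhaust all the Hecke characters in $\bigcup_{e}\{\varphi^{(10)}_{e;K,\alpha},\varphi^{(01)}_{e;K,\alpha}\}$ (Rmk.~\ref{rmk:compare-ShimuraType-varphi-chi}), so that every $K$-simple factor of $B$ indeed appears in the image. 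Once these points are checked, the morphism $\nu_N$ is determined on $H^{1,0}$ by pulling back the holomorphic $1$-forms of $B$ to linear combinations of $d\tau_{ws}\,f^\sigma(\tau_{ws})$ and their $|[\diag(r,1)]_2$-translates, which is exactly the form relevant for closing the triangle \eqref{eq:schematic-triangle}.
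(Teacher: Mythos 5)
Your proposal follows essentially the same route as the paper's proof: define $N_E$ as the LCM of the levels $M_{f_e}$, use the degeneracy maps $\mathrm{Jac}(X_1(N))\rightarrow \mathrm{Jac}(X_1(M_{f_e}))\rightarrow A_{f_e}$ over $\Q$ (with the case-(ii) pairs $\{e,e^{\wedge}\}$ collapsed to a single newform, exactly the paper's index set $I'$), and then compose with the $K$-isogenies $A_{f_e}\times_\Q K\sim B_e$ (resp.\ $B_e\times B_{e^{\wedge}}$) from Prop.~\ref{props:SchappPoH-2-2} and the isogeny $\prod_e B_e\sim B$. The one point where you are too quick is declaring surjectivity of the product map ``automatic'' from surjectivity of each component --- that implication is false for general abelian varieties (consider the diagonal $X\to X\times X$); the paper closes this by observing that no two of the targets are isogenous (the $A_{f[e]}$ over $\Q$, equivalently the $K$-simple factors $B_e$ over $K$ by Remark~\ref{rmk:gross-LNM-s15}), so the image, being an abelian subvariety surjecting onto each pairwise non-isogenous simple factor, must be the whole product.
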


\begin{proof}
For each $K$-simple factor $B_e/K$ of $\prod_e B_e$, the set of 
$[T_e:K]$ Hecke characters $\{ \chi^{(10)}_{e;K,\alpha}\}$ share the 
same conductor $\mathfrak{c}_{f_e}$, and the set of $[T_e:K]$ Hecke 
theta functions $\{ f_{\chi^{(10)}_{e;K,\alpha}} \}$ have the same level 
$M_{f_e} = |D_K| {\rm Nm}_{K/\Q}(\mathfrak{c}_{f_e})$. Now, to those 
Hecke theta functions, a $\Q$-simple CM-type abelian variety is assigned 
by the argument in Prop. \ref{props:GL2-abel} 
(also \cite[Thm.7.14 p.183]{Shimura-AA}), which we denote $A_{f[e]}/\Q$.
One such $A_{f[e]}$ is for two $B_e$'s, if and only if $A_{f[e]}$ in question 
is the case (ii) in Lemma \ref{lemma:SchappPoH-2-1} and Prop. \ref{props:SchappPoH-2-2}, and 
both $B_{\varphi}$ and $B_{\varphi^\wedge}$ appear in the product $\prod_e B_e$.
Otherwise, one $A_{f[e]}/\Q$ is for just one factor $B_e$. 
Let $I' := \{ [e] \; | \; e \in I\}$ be the label of 
such abelian varieties $A_{f[e]}$'s without duplication
(this set $I'$ is in one-to-one with the ${\rm Gal}^{\rm FC}$ orbits 
of such Hecke theta functions); $[e] = [e'] \in I'$ when 
$A_{f[e]} = A_{f[e']}$ for $e, e' \in I$. 

Because there is a surjective morphism $A_{f[e]}\times_\Q K \rightarrow B_e$ 
defined over $K$ (Prop. \ref{props:SchappPoH-2-2}), 
there is a morphism 
\begin{align}
  \prod_{[e] \in I'} \left( A_{f[e]} \times_\Q K \right) 
     \rightarrow \prod_{e \in I} B_e \cong B
  \label{eq:fromAf-toB}
\end{align}
defined over $K$. The morphism above is surjective. 

Now, we set\footnote{At this moment, we do not know whether $M_{f[e]}$ can 
be different for different $[e]$'s, or whether $N_E \in \N$ is somehow 
related to the conductor describing the degeneration of $E/k$ over 
the curve ${\rm Spec}({\cal O}_k)$.}
\begin{align}
  N_{E} := {\rm LCM} \left\{ M_{f[e]} \; | \; [e] \in I' \right\}. 
    \label{eq:def-NE}
\end{align}
Then for any positive integer $N$ divisible by $N_E$, there is a 
natural surjective morphism $\pi_{[e]}: X_1(N) \rightarrow X_1(M_{f[e]})$
defined over $\Q$ for each $[e]$ appearing in the product $\prod_{[e]}A_{f[e]}$, 
and there is also a surjective morphism $(\pi_{[e]})_*: {\rm Jac}(X_1(N)) 
\rightarrow {\rm Jac}(X_1(M_{f[e]}))$. By combining this set of morphisms 
$(\pi_{[e]})_*$ over $\Q$ with the surjective morphisms over $\Q$ 
from ${\rm Jac}(X_1(M_{f[e]}))$ to $A_{f[e]}/\Q$, a morphism over $\Q$ from 
${\rm Jac}(X_1(N))$ to $\prod_{[e]}A_{f[e]}$ is obtained; this morphism is 
surjective, because all the Hecke characters $\{ \varphi^{(10)}_{K,i}\}$ 
of infinity type [-1/2;1,0] are distinct from one another, and no two 
$A_{f[e]}$'s are isogenous with each other over $\Q$ 
(cf Prop. \ref{props:GL2-abel}).

Combining the two surjective morphisms, ${\rm Jac}(X_1(N)) \rightarrow 
\prod_{[e]} A_{f[e]}$ over $\Q$, and the morphism (\ref{eq:fromAf-toB}) over 
$K$, we obtain a surjective 
morphism (\ref{eq:surj-fromJac-toBofE}) defined over $K$.
\end{proof}

Composing $\nu$ above with the morphism 
$\mu: X_1(N) \to \mathrm{Jac}(X_1(N))$,
we obtain $X_1(N)\times_\Q K \to B$.
We know that $H^{1,0}(X_1(N); \C)$
is isomorphic to the space of cusp forms of level $N$.
We also know that 
$H^{1,0}(B;\C)$ is spanned by eigenstates 
of the action of the algebra $T \cong \oplus_{e\in I} T_e$ 
that are in one to one with the $[k:K]$ Hecke characters
$\{\chi_{e;K,\alpha}\}_{e,\alpha}$. We have the following corollary.
\begin{cor}
The image of the pullback map $H^{1,0}(B; \C) \to H^{1,0}(X_1(N);\C)$
is the span of the Hecke theta function $f_{\chi_{e;K,\alpha}}(\tau)$ of 
$\chi_{e;K,\alpha}$.
\end{cor}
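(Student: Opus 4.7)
The plan is to factor the pullback map through the intermediate abelian varieties $A_{f[e]}$ built in the proof of Prop.~\ref{props:BK-as-quotient-of-Jac1}. Concretely, the composition $(\nu_N\circ\mu)^{*}\colon H^{1,0}(B;\C)\to H^{1,0}(X_1(N);\C)$ factors as
\begin{align*}
H^{1,0}(B;\C)=\bigoplus_{e\in I}H^{1,0}(B_e;\C)\;\longrightarrow\;\bigoplus_{[e]\in I'}H^{1,0}(A_{f[e]}\times_{\Q}K;\C)\;\longrightarrow\;H^{1,0}(X_1(N);\C),
\end{align*}
where the first arrow uses the $K$-morphism $A_{f[e]}\times_{\Q}K\to B_e$ of Prop.~\ref{props:SchappPoH-2-2}, and the second uses $(\pi_{[e]})_{*}\circ \mu$ together with the tautological surjection ${\rm Jac}(X_1(M_{f[e]}))\to A_{f[e]}$. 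So the task reduces to analysing each of the two arrows separately on holomorphic $1$-forms.

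For the second arrow, I would invoke the standard Eichler--Shimura / Shimura construction of $A_{f}$ recalled in Prop.~\ref{props:GL2-abel}: by its very definition, the pullback of $H^{1,0}(A_{f[e]};\C)$ to $H^{1,0}(X_1(M_{f[e]});\C)\cong S_2(\Gamma_1(M_{f[e]}))$ is precisely the $[K_{f[e]}\!:\!\Q]$-dimensional span of the ${\rm Gal}^{\rm FC}$-orbit $\{f[e]^{\sigma}\;|\;\sigma\in {\rm Gal}(\overline{\Q}/\Q)\}$. Pulling back further along $\pi_{[e]}:X_1(N)\to X_1(M_{f[e]})$ acts on cusp forms as the inclusion $|[\diag(1,1)]_2$, so in $S_2(\Gamma_1(N))$ we still see exactly the span of this orbit. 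By Lemma~\ref{lemma:SchappPoH-2-1}, that orbit coincides with the full set of Hecke theta functions $\{f_{\chi_{e;K,\alpha}}\}_{\alpha=1}^{[T_e:K]}$ in cases (i-real)/(i-CM), and with the union $\{f_{\chi_{e;K,\alpha}}\}_{\alpha}\cup\{f_{\chi_{e^{\wedge};K,\alpha}}\}_{\alpha}$ in case (ii).

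For the first arrow, I would distinguish the two cases. In cases (i-real) and (i-CM) the map $A_{f[e]}\times_{\Q}K\to B_e$ is an isogeny over $K$ (Prop.~\ref{props:SchappPoH-2-2}), hence induces an isomorphism on $H^{1,0}$, and the contribution from $B_e$ is the full Galois orbit, i.e.\ $\mathrm{Span}_{\C}\{f_{\chi_{e;K,\alpha}}\}_{\alpha=1}^{[T_e:K]}$. In case (ii), $A_{f[e]}\times_{\Q}K$ is $K$-isogenous to $B_e\times B_{e^{\wedge}}$, so $H^{1,0}(A_{f[e]}\times K;\C)$ splits as $H^{1,0}(B_e;\C)\oplus H^{1,0}(B_{e^{\wedge}};\C)$; pulling back $H^{1,0}(B_e;\C)$ alone therefore projects onto exactly the $[T_e:K]$-dimensional summand of the Galois orbit indexed by $\{\chi_{e;K,\alpha}\}_\alpha$, i.e.\ the span of $\{f_{\chi_{e;K,\alpha}}\}_{\alpha}$. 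Summing over $e\in I$ gives the image asserted in the corollary.

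The main obstacle is the bookkeeping of case (ii): one must argue that the pullback along $A_{f[e]}\times_{\Q}K\to B_e$ really singles out the summand indexed by the Galois${}_K$-orbit of $\chi_{e;K,\alpha}$ rather than that of $\chi_{e^{\wedge};K,\alpha}$. This is handled by tracking how the $[K_{f[e]}:\Q]$ embeddings of $K_{f[e]}$ into $\overline{\Q}$ split, after base change to $K$, into the two packets of $[T_e:K]$ embeddings of $T_e$ described in Remark~\ref{rmk:compare-ShimuraType-varphi-chi}; the eigenspace decomposition of $H^{1,0}(A_{f[e]}\times K;\C)$ under $T=T_{e}\oplus T_{e^{\wedge}}$ matches exactly the idempotent decomposition $B_e\times B_{e^{\wedge}}$, so the two factors are distinguished canonically. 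Once this identification is in place, the corollary follows by taking direct sums over $e\in I$.
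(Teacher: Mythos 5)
Your proposal is correct and follows essentially the same route the paper intends: the corollary is stated as an immediate consequence of the construction of $\nu_N$ in Prop.~\ref{props:BK-as-quotient-of-Jac1} (factoring through $(\pi_{[e]})_*$, the quotients ${\rm Jac}(X_1(M_{f[e]}))\to A_{f[e]}$, and the $K$-morphisms $A_{f[e]}\times_\Q K\to B_e$ of Prop.~\ref{props:SchappPoH-2-2}), combined with the defining property of $A_f$ from Prop.~\ref{props:GL2-abel} that $H^{1,0}(A_{f[e]})$ pulls back to the span of the ${\rm Gal}^{\rm FC}$ orbit. The paper leaves the case-(ii) bookkeeping implicit, and your tracking of the $T_e\oplus T_{e^\wedge}$ idempotent decomposition fills that in correctly.
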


\begin{anythng}
For a given positive integer $M$, the set of integers 
$\N[M] := \{ N \in \N \; | M|N \}$ forms a directed partially ordered set
where the partial order is set by the divisibility. 
The set of modular curves $\{ X_1(N) \; | \; N \in \N[M]\}$ labeled by this 
directed partially ordered set forms an inverse system, where we use 
the ordinary projection $\pi_{NN'}: X_1(N') \rightarrow X_1(N)$, that is, 
the projection ${\cal H}/\Gamma_1(N') \rightarrow {\cal H}/\Gamma_1(N)$ 
extended to 
$X_1(N')$ and $X_1(N)$, as the morphism of the inverse system for any 
pair $N, N' \in \N[M]$ with $N|N'$. Therefore, for any positive integer $M$, 
we can think of the projective limit of the curves, 
\begin{align}
   \mathop{\varprojlim}\limits_{N \in \N[M] } X_1(N). 
  \label{eq:ShimuraC-M}
\end{align}
We note that all the morphisms $\pi_{NN'}$ are defined over $\Q$ 
(see \cite[\S7]{DS} and \cite[\S11--12]{Ribet-Stein}). 

For an elliptic curve $E/k$ of Shimura type, there is a 
surjective morphism $\nu_N$ in (\ref{eq:surj-fromJac-toBofE})
defined over the field $K$ for any $N \in \N[N_E]$;  
those morphisms satisfy 
\begin{align}
{\rm End}_K(B) \circ \nu_N = {\rm End}_K(B) \circ \nu_{N'} \circ (\pi_{NN'})_*
 \label{eq:comm-as-awhole}
\end{align}
(Props. \ref{props:BK-as-quotient-of-Jac1}). 
Thus, there is a left-$({\rm End}_K(B) \circ)$-coset of surjective morphisms 
\begin{align}
  \nu:  \mathop{\varprojlim}\limits_{N \in \N[N_E]} \left( {\rm Jac}(X_1(N))
  \times_\Q K \right)
      \longrightarrow B  
   \label{eq:map-fromShimuraC-toBKandEk}
\end{align}
defined over $K$.  $\bullet$
\end{anythng}

\subsubsection{
Galois Representations Associated 
with Elliptic Curves of Shimura Type, and the Corresponding Cuspforms}
\label{sssec:Langlands}

In Remark \ref{rmk:triangle-shcematic}, 
it has been stated that there exists 
a morphism $\nu: {\rm Jac}(X_0(N)) \rightarrow E/\Q$ defined 
over $\Q$ for any elliptic curve $E$ defined over $\Q$; 
in the Langlands correspondence, the Galois representation 
$\rho_\ell(E/\Q)$ of ${\rm Gal}(\overline{\Q}/\Q)$ corresponds 
to the automorphic representation $\pi_f$ of ${\rm GL}_2(\mathbb{A}_\Q)$
associated with $f(\tau)d\tau= \mu^* \circ \nu^*(\omega_E)$, where 
$f(\tau)$ can also be regarded as a weight-2 cuspform for $\Gamma_0(N)$.
This correspondence is expressed in the form of 
\begin{align}
  \pi(\rho_\ell(E/\Q)) = \pi_f, \qquad 
  \rho(\pi_f) = \rho_\ell(E/\Q)
\end{align}
using $\pi()$ and $\rho()$.
The $L$-functions of the Galois representation $\rho_\ell(E/\Q)$ and 
the automorphic representation $\pi_{f=(\nu \circ \mu)^*(\omega_E)}$ agree, 
\begin{align}
  L(\rho_\ell(E/\Q),s) := 
  L(H^1_{et}(E\times_\Q \overline{\Q}; \overline{\Q}_\ell),s)  
   = L(\mu^* \circ \nu^*(\omega_E),s) 
  =: L(\pi_f , s) .
\end{align}

There have been attempts of generalizing the statements above in 
multiple different directions.
Section \ref{sssec:Langlands} leaves 
a statement available for elliptic curves of Shimura type $E$ defined 
over a number field $k$ that is an abelian extension of the imaginary 
quadratic field $K$ of complex multiplication. Such $E/k$'s are more 
general than $E/\Q$'s above in that the field of definition $k$ is 
allowed not to be $\Q$, while we have paid the price by restricting 
ourselves to $E$'s that have complex multiplication, and furthermore 
are of Shimura type.  

\begin{anythng}
For an elliptic curve $E/k$ of Shimura type, the representation 
${\rm Ind}^{{\rm Gal}(\overline{\Q}/K)}_{{\rm Gal}(\overline{\Q}/k)}(\rho_\ell(E/k))
$ of the group ${\rm Gal}(\overline{\Q}/K)$ splits into 
the direct sum of $2[k:K]$ distinct 1-dimensional representations 
(Prop.\ \ref{props:GS}). 
They are denoted by $\rho^{(10)}_{e;\alpha=1,\cdots, [T_e:K]}$
and $\rho^{(01)}_{e;\alpha=1,\cdots, [T_e:K]}$. 
If we are to use the notation above, then 
$\pi_{\varphi^{(10)}_{e;\alpha}} = \pi(\rho^{(10)}_{e;\alpha})$ and 
$\pi_{\varphi^{(01)}_{e;\alpha}} = \pi(\rho^{(01)}_{e;\alpha})$, where 
the Galois representations of ${\rm Gal}(\overline{\Q}/K)$ are 
${\rm GL}_1(\overline{\Q}_\ell)$-valued, and the automorphic representations 
$\pi_{\varphi}$'s are those of the group ${\rm GL}_1(\mathbb{A}_K)$.
For any one of $(e;\alpha)$, one can further think of the representation 
of ${\rm Gal}(\overline{\Q}/\Q)$, 
\begin{align}
  \rho_{e;\alpha} :=  
 {\rm Ind}^{{\rm Gal}(\overline{\Q}/\Q)}_{{\rm Gal}(\overline{\Q}/K)} 
        (\rho^{(10)}_{e;\alpha}) \cong  
  {\rm Ind}^{{\rm Gal}(\overline{\Q}/\Q)}_{{\rm Gal}(\overline{\Q}/K)} 
        (\rho^{(01)}_{e;\alpha})  .
  \label{eq:indcd-repr-GalQbarQ-fromGeometric}
\end{align}
This representation of ${\rm Gal}(\overline{\Q}/\Q)$ is 
${\rm GL}_2(\overline{\Q}_\ell)$ valued, and is irreducible 
\cite[p.29 Thm 2.3]{Ribet-Neben}.
The automorphic representation 
$\pi(\rho_{e;\alpha})$ of 
${\rm GL}_2(\mathbb{A}_\Q)$ that corresponds to this Galois representation 
must be associated with some weight-2 cuspform $f$ in general, and 
in this special case, the cuspform in question 
is the Hecke theta function $f_{\varphi^{(10)}_{e;\alpha}}$ associated with 
the Hecke character $\varphi^{(10)}_{e;\alpha}$. 
\begin{align}
   \pi(\rho_{e;\alpha}) = \pi_{f_{\varphi^{(10)}_{e;\alpha}}}  , \qquad 
      {}^\forall e, \alpha = 1,\cdots, [T_e:K].
\end{align}

Just like the cuspforms $f$ with $\pi_f = \pi(\rho_\ell(E/\Q))$ 
are obtained through $\mu^* \circ \nu^*(\omega_E)$ for elliptic curves 
$E$ defined over $\Q$, the Hecke theta functions 
$\{ f_{\varphi^{(10)}_{e;\alpha}} \; | \; (e,\alpha) \}$ of an elliptic 
curve $E/k$ of Shimura type are characterized in terms of the surjective 
morphism $\nu: {\rm Jac}(X_1(N))\times_\Q K \longrightarrow B$ and a (1,0)-form 
$\omega_E$ of $E/k$, as follows. 

First, the representation 
${\rm Ind}^{{\rm Gal}(\overline{\Q}/K)}_{{\rm Gal}(\overline{\Q}/k)}
(\rho_\ell(E/k))$ of the group ${\rm Gal}(\overline{\Q}/K)$ is on 
the dual of the vector space 
${\rm Tate}_\ell(B/K) \otimes_{\Z_\ell} \overline{\Q}_\ell$; 
the representation space is 
\begin{align}
  H^1_{et}(B \times_K \overline{\Q}; \overline{\Q}_\ell) \cong 
  {\rm Hom}_{\Z_\ell} ({\rm Tate}_\ell(B/K), \overline{\Q}_\ell) .
  \label{eq:H1et-as-dual-Tate}
\end{align}
For this representation space 
\begin{align}
  H^1_{et}(B\times_K \overline{\Q}; \overline{\Q}_\ell) \cong H^1(B(\C); \C) 
\end{align}
over the field $\overline{\Q}_\ell \cong \C$, we may first choose a set 
of representatives 
\begin{align}
 \left\{ g_\sigma \in {\rm Gal}(\overline{\Q}/K)  \; | \; 
  [g_\sigma] = \sigma \in {\rm Gal}(k/K) \right\}
\end{align}
of one element $g_\sigma$ for each element $\sigma \in {\rm Gal}(k/K)$, 
and then set 
\begin{align}
  \left\{ \omega_{E^{g_\sigma}} \; | \; \sigma \in {\rm Gal}(k/K) \right\}
\end{align}
as a basis.  Here, we fix one (1,0)-form $\omega_E$ for the original 
elliptic curve $E/k$ of Shimura type, and then determine 
$\omega_{E^{g_\sigma}}$ on $E^\sigma/k$ as the `Galois conjugate'\footnote{
For $g \in {\rm Gal}(\overline{\Q}/K)$, there is an isogeny  
$\phi_g: E \rightarrow E^g$ so that the Galois action of $g$ 
on the torsion points of $E$ is reproduced by $\phi_g$. 
The 1-form $\omega_{E^g}$ in the text is characterized by 
$\phi_g^*(\omega_{E^g}) = \omega_E$.
} 
by $g_\sigma$. 
The basis elements 
$\omega_{E^{g_\sigma}}$'s in (\ref{eq:pullback-omegaEs-bynu}) generate
$H^{1,0}(E^\sigma(\C); \C)$ of all $\sigma \in {\rm Gal}(k/K)$, 
and hence the entire $H^{1,0}(B(\C); \C)$. 

Secondly, due to Prop.~\ref{props:GS} and 
Remark~\ref{rmk:Shimura-type-GalChartzn}, 
the representation 
of ${\rm Gal}(\overline{\Q}/K)$ on the vector 
space (\ref{eq:H1et-as-dual-Tate}) splits into 1-dimensional representations,
$[k:K]$ of which are generated by the linear combinations 
\begin{align}
  \sum_{\sigma \in {\rm Gal}(k/K) } \omega_{E^{g_\sigma}} 
    \left( \rho^{(10)}_{e;\alpha}(g_\sigma) \right)^{-1} \in 
  H^1_{et} (B \times_K \overline{\Q}; \overline{\Q}_\ell) 
  \label{eq:basis-omegaE-diagonalize-Gal}
\end{align}
labeled by $\{ (e,\alpha) \; | \; \alpha = 1,\cdots, [T_e:K] \}$.

Finally, the surjective morphism in Props.~\ref{props:BK-as-quotient-of-Jac1}
identifies 
the vector space (\ref{eq:H1et-as-dual-Tate}) as a subspace of 
$H^1_{et}({\rm Jac}(X_1(N)) \times_\Q \overline{\Q}; \overline{\Q}_\ell)$
for any $N$ divisible by $N_E$ in (\ref{eq:def-NE}).\footnote{
Although the choice of a surjective 
morphism $\nu$ is not unique in that it may well be followed by any 
endomorphism of $B/K$ defined over $K$, the subspace of 
$H^1_{et}({\rm Jac}(X_1(N))\times_\Q \overline{\Q}; \overline{\Q}_\ell)$
identified in this way does not depend on the choice of the morphism $\nu$.
}
The $[k:K]$ 1-forms in (\ref{eq:basis-omegaE-diagonalize-Gal}) are mapped to 
\begin{align}
  \left\{  \left. 
     \nu^* \left(  \sum_{\sigma \in {\rm Gal}(k/K) } \omega_{E^{g_\sigma}} 
      \left( \rho^{(10)}_{e;\alpha}(g_\sigma) \right)^{-1} \right)  \; 
    \right| \; (e, \alpha) \; \alpha = 1,\cdots, [T_e:K] \right\}.
\end{align}
Those $[k:K]$ weight-2 cuspforms of $\Gamma_1(N)$ are the newforms 
$f_{\varphi^{(10)}_{e;\alpha}}$ (except that they may not be properly normalized):
\begin{align}
   f_{\varphi^{(10)}_{e;\alpha}} \propto  
     \mu^* \circ \nu^* \left(  \sum_{\sigma \in {\rm Gal}(k/K) } \omega_{E^{g_\sigma}} 
      \left( \rho ^{(10)}_{e;\alpha}(g_\sigma) \right)^{-1} \right), 
  \qquad {}^\forall e, \alpha = 1,\cdots, [T_e:K] .    
         \label{eq:pullback-omegaEs-bynu}
\end{align}
Those representation spaces can also be regarded as a subspace of 
the direct limit, 
\begin{align}
  \nu^*: H^1_{et}(B\times_K \overline{\Q};\overline{\Q}_\ell) \hookrightarrow
    \mathop{\varinjlim}\limits_{N \in \N[N_E]} \left(   
 H^1_{et}({\rm Jac}(X_1(N))\times_\Q \overline{\Q}; \overline{\Q}_\ell) \right).
  \label{eq:H1-ResE-in-directLim-Jac}
\end{align}
The arrow on the upper right edge of (\ref{eq:schematic-triangle}) meant this. 
$\bullet$
\end{anythng}

\vspace{5mm}

In the argument above, we have defined representations $\rho_{e;\alpha}$ 
of the group ${\rm Gal}(\overline{\Q}/\Q)$ 
through (\ref{eq:indcd-repr-GalQbarQ-fromGeometric}).
While the representations $\rho^{(10)}_{e;\alpha}$ and $\rho^{(01)}_{e;\alpha}$
of ${\rm Gal}(\overline{\Q}/K)$ have construction associated with the 
arithmetic geometry of $B$ defined over $K$, the representations 
$\rho_{e;\alpha}$ of the group ${\rm Gal}(\overline{\Q}/\Q)$ are not 
directly related to a geometric object defined over $\Q$. 
The latter representations are relevant to geometry only through 
an abstract procedure in group representation theory, induced representation.

\begin{anythng}
\label{statmnt:wortmann2}
The representations $\rho_{e;\alpha}$ of ${\rm Gal}(\overline{\Q}/\Q)$, however, 
can be constructed directly from geometry when the elliptic curve of Shimura 
type, $E/k$, satisfies the (S2) condition in \ref{statmnt:wortmann}
(see \cite{Wortmann}). The $[F:\Q]$ distinct 2-dimensional representations 
of ${\rm Gal}(\overline{\Q}/\Q)$ in \ref{statmnt:wortmann} are $\rho_{e;\alpha}$'s.
\end{anythng}

\subsection{Chiral Correlation Functions on 
the Tower of Modular Curves}
\label{ssec:KahlerVsShimuraC}

In \cite{prev.paper}, we have observed that the 
$L$-functions of elliptic curves $E/k$ of Shimura type can be 
reproduced by using as building blocks the ``$(g,n)=(1,2)$ chiral 
correlation functions'' $f_{1\Omega'}^{\rm II}(\tau_{ws};\beta)$'s of a 
rational model of ${\cal N}=(2,2)$ SCFT for $([E]_\C, f_\rho)$, 
when $f_\rho \in \N_{>0}$ is chosen appropriately.  
Having done preparation in sections \ref{sec:lift} 
and \ref{sec:CCF-eigenform}, we are now ready to present a polished-up 
version of that statement in \cite{prev.paper} and discuss implications 
of the observation. 

\begin{anythng}
Let $K$ be an imaginary quadratic field, and $f_\rho \in \N_{>0}$. 
Then we have seen that the vector space of ``$(g,n)=(1,2)$ chiral correlation 
functions'' $F^{(\Omega, \int J)}_{\{0,0\}} ({\cal E}ll({\cal O}_K), f_\rho)$
can be identified with the subspace of $H^{1,0}(X_1(N);\C)$ so long as 
\begin{align}
   (D_K^2 f_\rho^2) = N_{D\Lambda}^2 \; | \; N.
\end{align}
Moreover, the embeddings 
\begin{align}
  \imath_1^{(K,f_\rho;d^2,1)} : 
   F^{(\Omega, \int J)}_{\{0,0\}}({\cal E}ll({\cal O}_K),f_\rho) 
   \hookrightarrow
   H^{1,0}(X_1(N_{D\Lambda}^2 d^2);\C), \qquad \quad d \in \N_{>0}
  \label{eq:embed-ccf4OK-fixKahler-limHX1N-1}
\end{align}
are compatible with the natural 
pull-backs $H^1(X_1(N');\C) \rightarrow H^1(X_1(N' d^2);\C)$, 
so there is a well-defined embedding
\begin{align}
  \imath_1^{(K,f_\rho;*,1)} : 
   F^{(\Omega, \int J)}_{\{0,0\}}({\cal E}ll({\cal O}_K),f_\rho) 
   \hookrightarrow
   \mathop{\varinjlim}\limits_{N \in \N^2[(D_Kf_\rho)^2]} 
         \left( H^{1,0}(X_1(N);\C) \right);
  \label{eq:embed-ccf4OK-fixKahler-limHX1N-2}
\end{align}
here, 
\begin{align}
\N^2[M] := \left\{ N \in \N_{>0} \; | M | N, \; 
   {}^\exists N_0 \in \N_{>0} {\rm ~s.t.~} N_0^2 = N \right\} 
\end{align}
for a positive integer $M$ is a directed partially ordered set (by 
the divisibility). 

As a side remark, we just note one thing. 
Although we have introduced the projective limits (\ref{eq:ShimuraC-M})
labeled by a positive integer $M$ (and the ${\rm Jac}$ version 
in (\ref{eq:map-fromShimuraC-toBKandEk})), and also 
the direct limits (\ref{eq:embed-ccf4OK-fixKahler-limHX1N-2}) 
labeled by $(D_Kf_\rho)^2$ and (\ref{eq:H1-ResE-in-directLim-Jac}) by $N_E$, 
those limits are independent of those $M$'s, $(D_K f_\rho)^2$'s, and $N_E$'s.
For example, take the limits over the directed partially ordered set $\N[M]$. 
Then the limit over $\N[M]$ and the limit over $\N_{}$ are isomorphic; 
for any $N \in \N_{}$, one can find $N' \in \N[M]$ so that $N|N'$, and that 
is enough to prove the isomorphism. 

Now, the set of positive integers $\{ f_\rho \in \N_{>0}\}$ also forms 
a directed partially ordered set (by the divisibility). The vector spaces 
$F^{(\Omega, \int J)}_{\{0,0\}}({\cal E}ll({\cal O}_K),f_\rho)$ forms a direct system 
along with the homomorphism $\imath_1^{(K,f_\rho;d^2,1)}: F^{(\Omega, \int J)}_{\{0,0\}}
     ({\cal E}ll({\cal O}_K),f_\rho) \rightarrow 
  F^{(\Omega, \int J)}_{\{0,0\}}({\cal E}ll({\cal O}_K),f_\rho d)$ over 
$f_\rho \in \N_{}$. So, using the vector space 
\begin{align}
  F^{(\Omega, \int J)}_{\{0,0\}}({\cal E}ll({\cal O}_K)) :=   
 \mathop{\varinjlim}\limits_{f_\rho \in \N} F^{(\Omega,\int J)}_{\{0,0\}}
      ({\cal E}ll({\cal O}_K),f_\rho), 
     \label{eq:def-all-CCF-K}
\end{align}
we can describe the embeddings (\ref{eq:embed-ccf4OK-fixKahler-limHX1N-2}) for 
various $f_\rho$'s as one injective map
\begin{align}
   F^{(\Omega, \int J)}_{\{0,0\}}({\cal E}ll({\cal O}_K)) \hookrightarrow 
       \mathop{\varinjlim}\limits_{N \in \N} \left( 
         H^{1,0}(X_1(N);\C) \right) .
    \label{eq:embed-ccf4OK-allKahler-limHX1N}
\end{align}
The vector subspace introduced here depends on the choice of an imaginary 
quadratic field $K$, and $f_z=1$ implicitly, but not on $f_\rho$. 
Although there is no choice but to specify a K\"{a}hler parameter $f_\rho$ 
in formulating a geometry-target model of rational CFT, such notion as 
``the vector space of chiral correlation functions with $[E_{z}]_\C$ 
as a target space'' (without specifying the metric on $[E_{z}]_\C$) 
can be formulated by taking the direct limit\footnote{
To think of such things as a vector space of (chiral) correlation functions 
for a target space with a given complex structure but without a specific 
choice of K\"{a}hler parameter, we need to introduce some rule of how to 
identify correlation functions of models with different K\"{a}hler parameters. 
The direct limit is one way to do the identification. Remember that, 
for a directed partially ordered set $A$, 
a direct system $\{ \{F_a\}_{a\in A}, \{\phi_{ba} | a \leq b\}\}$
consists of objects $F_a$ for $a\in A$ and one morphism 
$\phi_{ba}: F_a \hookrightarrow F_b$ for $a,b \in A$, $a \leq b$. 
The direct limit $\mathop{\varinjlim}\limits_{a \in A}F_a$ is constructed 
by taking a quotient of $\oplus_{a\in A}F_a$ by relations determined 
by $\{\phi_{ba}\}$. \\
Here is also an alternative idea for identification. 
Think of a system 
$\{\{F_a\}_{a\in A}, \{ M_{ba} | a \leq b\} \}$ instead where 
$M_{ba} \subset {\rm Hom}(F_a,F_b)$; this is a generalization of a direct 
system, 
in that $M_{ba}$ may consist of more than one element of ${\rm Hom}(F_a,F_b)$.
One might require $M_{cb} \circ M_{ba} = M_{ca}$ for $a \leq b \leq c$, for example. 
A limit may be defined by taking a quotient of $\oplus_{a \in A} F_a$ by 
all the relations determined by $\phi_{ba} \in M_{ba}$.  \\
When this idea is applied to the system of 
$\{F_a\} = \{ H^{1,0}(X_1(N);\C) \; | \; N \in \N\}$ and 
$M_{N'N} = \{ |[\diag(r,1)]_2 \; | \; r | (N'/N) \}$, for example, 
the limit will retain only the newforms, not oldforms. 
Similarly, one can think of a limit for the system 
of $\{ F_a \} = \{ F({\cal E}ll({\cal O}_K, f_\rho) \}$ and 
$M_{f'f} = \{ \imath^{(K,f;(f'/f)^2, d'')} \; | \; d''|(f'/f)^2 \}$. 
} 
 with respect 
to the choice of the K\"{a}hler parameter. $\bullet$
\end{anythng}

\begin{thm}
\label{thm:reproduce-all-HeckeTheta-by-CCF}
Let $K$ be an imaginary quadratic field. 
The vector space $F^{(\Omega, \int J)}_{\{0,0\}}({\cal E}ll({\cal O}_K))$
contains all the Hecke theta functions of all the Hecke characters of 
$K^\times \backslash \mathbb{A}^\times_K$ of 
infinity type [-1/2;1,0]. At least some of the $|[\diag(r,1)]_2$ images 
for $r \in \N_{>1}$ may also be contained in this vector space. 
\end{thm}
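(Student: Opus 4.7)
The plan is to deduce the statement directly from Theorem \ref{thm:newform-byCCF} and the direct-limit construction (\ref{eq:def-all-CCF-K}) of $F^{(\Omega, \int J)}_{\{0,0\}}({\cal E}ll({\cal O}_K))$, by choosing the K\"ahler parameter $f_\rho$ large enough that the conductor hypothesis of Theorem \ref{thm:newform-byCCF} is automatically satisfied. In broad strokes, the target vector space on the left side is the union over all $f_\rho$ of the vector spaces studied in section \ref{ssec:HTheta-TypeII-CCF}, and Theorem \ref{thm:newform-byCCF} already realizes every Hecke theta function whose conductor divides $\underline{\mathfrak{c}}(K,1,f_\rho)$; the remaining content is therefore the elementary observation that \emph{every} integral ideal of $\mathcal{O}_K$ divides $(f_\rho\sqrt{D_K})_{\mathcal{O}_K}$ for suitable $f_\rho$.

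Given a Hecke character $\varphi: K^\times \backslash \mathbb{A}_K^\times \rightarrow \C^\times$ of infinity type $[-1/2;1,0]$ with conductor $\mathfrak{c}_f$, I would first choose $f_\rho := \mathrm{Nm}_{K/\Q}(\mathfrak{c}_f)$ (or any positive multiple thereof). For this choice, the identity $\mathfrak{c}_f \bar{\mathfrak{c}}_f = (f_\rho)_{\mathcal{O}_K}$ gives the divisibility
\begin{align*}
\mathfrak{c}_f \;\Big|\; (f_\rho)_{\mathcal{O}_K} \;\Big|\; (f_\rho\sqrt{D_K})_{\mathcal{O}_K} = \underline{\mathfrak{c}}(K, 1, f_\rho),
\end{align*}
which is precisely the hypothesis in Theorem \ref{thm:newform-byCCF}. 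Next, Lemma \ref{lemma:HeckeC} attaches to $\varphi$ a character $\chi'_f$ of $[\mathcal{O}_K/\mathfrak{c}_f]^\times$ whose restriction to $\mathcal{O}_K^\times$ equals $(\rho^1)^{-1}$. Pulling $\chi'_f$ back through the natural surjection $[\mathcal{O}_K/\underline{\mathfrak{c}}]^\times \twoheadrightarrow [\mathcal{O}_K/\mathfrak{c}_f]^\times$ produces a character $\chi_f^{-1}$ of the CM group whose restriction to $\mathrm{Aut}([E_{z}]_\C) = \mathcal{O}_K^\times$ is still $(\rho^1)^{-1}$, so the character-matching requirement of Theorem \ref{thm:newform-byCCF} is met.

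Applying that theorem then places $\vartheta(\tau;\varphi)$ inside $F({\cal E}ll({\cal O}_K),f_\rho)^{\chi_f^{-1}} \subset F({\cal E}ll({\cal O}_K),f_\rho)$, and the direct-limit embedding (\ref{eq:embed-ccf4OK-allKahler-limHX1N}) carries it into $F^{(\Omega, \int J)}_{\{0,0\}}({\cal E}ll({\cal O}_K))$. Since the Hecke character $\varphi$ was arbitrary, every Hecke theta function of type $[-1/2;1,0]$ is captured. The secondary claim about oldforms follows identically from Theorem \ref{thm:oldform-byCCF}: whenever an integral ideal $\mathfrak{q}_r \subset \mathcal{O}_K$ with $\mathrm{Nm}_{K/\Q}(\mathfrak{q}_r) = r$ and $\mathfrak{q}_r \mid \underline{\mathfrak{c}}/\mathfrak{c}_f$ exists for some admissible $f_\rho$, the $r$-dilate $\vartheta(\tau;\varphi)\mid[\diag(r,1)]_2$ lies in $F({\cal E}ll({\cal O}_K),f_\rho)$, and again passes to the direct limit.

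The only genuine obstruction—already flagged in Theorem \ref{thm:inj-surj-CMform}—is the arithmetic constraint that $r$ must be realizable as a norm from $\mathcal{O}_K$, which fails when $r$ is divisible by an odd power of a rational prime inert in $K/\Q$. This is exactly why the statement is cautiously worded as ``at least some'' of the $|[\diag(r,1)]_2$ images are contained. There is no analytic obstacle anywhere in the argument; the whole proof reduces to the elementary divisibility $\mathfrak{c}_f \mid (\mathrm{Nm}_{K/\Q}(\mathfrak{c}_f))_{\mathcal{O}_K}$ combined with the already established Theorems \ref{thm:newform-byCCF} and \ref{thm:oldform-byCCF}, so what might look like the hard part—producing the requisite CFT model for an \emph{arbitrary} conductor—dissolves once one is willing to scan over all K\"ahler parameters through the direct limit.
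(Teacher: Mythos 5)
Your proof is correct and follows essentially the same route as the paper: invoke Theorem \ref{thm:newform-byCCF} for a K\"ahler parameter $f_\rho$ making $\mathfrak{c}_f \mid \underline{\mathfrak{c}}(K,1,f_\rho)$, then pass to the direct limit over $f_\rho$, with the oldform caveat handled by Theorem \ref{thm:oldform-byCCF}. Your explicit choice $f_\rho = \mathrm{Nm}_{K/\Q}(\mathfrak{c}_f)$ together with $\mathfrak{c}_f\bar{\mathfrak{c}}_f=(f_\rho)_{\mathcal{O}_K}$ is in fact slightly more careful than the paper's one-line "there are such $f_\rho$'s," since it verifies the ideal divisibility (not merely the norm divisibility) that Theorem \ref{thm:newform-byCCF} actually requires.
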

\begin{proof}
For a given Hecke character $\varphi^{(10)}_K$ of 
$K^\times \backslash \mathbb{A}_K^\times$ of infinity type [-1/2;1,0], 
its Hecke theta function $f_{\varphi^{(10)}_{K}}$ is contained in 
$F^{(\Omega, \int J)}_{\{0,0\}}({\cal E}ll({\cal O}_K),f_\rho)$ if and 
only if $|D_K| {\rm Nm}_{K/\Q}(\mathfrak{c}_f)$ divides $(D_Kf_\rho)^2$;
here $\mathfrak{c}_f$ is the conductor of $\varphi^{(10)}_{K}$. 
Because there are such $f_\rho$'s in the directed partially ordered set 
$\N = \{ f_\rho \in \N \}$, the Hecke theta function $f_{\varphi^{(10)}_K}$ 
is contained in the vector space (\ref{eq:def-all-CCF-K}).
\end{proof}

\begin{thm}
\label{thm:reproduce-LLdual-by-CCF}
Let $E/k$ be an elliptic curve of Shimura type, and $K$ be its 
imaginary quadratic field of complex multiplication. 
The modular forms $f_{{\varphi^{(1,0)}_{e;\alpha}}}$ for 
$\pi \left( \rho_{e;\alpha} \right)$'s
given by (\ref{eq:pullback-omegaEs-bynu}) are always found within 
the vector space of ``the $(g,n)=(1,2)$ chiral correlation functions 
of ${\cal E}ll({\cal O}_K)$''. Put differently, 
\begin{align}
   F^{(\Omega, \int J)}_{\{0,0\}}({\cal E}ll({\cal O}_K)) \supset
    \nu^* \left( H^{1,0}(B;\C) \right),
\end{align}
where both are regarded as subspaces of $\mathop{\varinjlim}\limits_{N \in \N}
 H^{1,0}(X_1(N);\C)$ through (\ref{eq:embed-ccf4OK-allKahler-limHX1N}) 
and (\ref{eq:pullback-omegaEs-bynu}, \ref{eq:H1-ResE-in-directLim-Jac}), 
respectively.   
\end{thm}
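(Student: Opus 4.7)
The plan is to combine two inputs that have already been assembled: on one hand, Theorem \ref{thm:reproduce-all-HeckeTheta-by-CCF}, which embeds every Hecke theta function $f_{\varphi^{(10)}_K}$ of $K$ of infinity type $[-1/2;1,0]$ into $F^{(\Omega,\int J)}_{\{0,0\}}({\cal E}ll({\cal O}_K))$; and on the other hand, the description in section \ref{sssec:Langlands} of $\nu^*(H^{1,0}(B;\C))$ as the span, inside $\mathop{\varinjlim}\limits_{N \in \N[N_E]} H^{1,0}(X_1(N);\C)$, of precisely the Hecke theta functions $f_{\varphi^{(10)}_{e;\alpha}}$ attached to the $[k:K]$ Hecke characters furnished by the Shimura type condition.

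First I would fix a positive integer $N$ divisible by the integer $N_E$ of \eqref{eq:def-NE}, so that the morphism $\nu: \mathrm{Jac}(X_1(N)) \times_\Q K \to B$ of Prop. \ref{props:BK-as-quotient-of-Jac1} is defined. Pull back along $\mu \circ \nu$ the basis of $H^{1,0}(B;\C)$ obtained from the Galois-diagonalising combinations \eqref{eq:basis-omegaE-diagonalize-Gal}; by \eqref{eq:pullback-omegaEs-bynu}, the resulting $[k:K]$ weight-2 cuspforms on $\Gamma_1(N)$ are proportional to the Hecke theta functions $\{ f_{\varphi^{(10)}_{e;\alpha}} \}_{e,\alpha}$. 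Since the $\omega_{E^{g_\sigma}}$'s span $H^{1,0}(B;\C)$ and the transition matrix from this basis to \eqref{eq:basis-omegaE-diagonalize-Gal} is invertible over $\overline{\Q}_\ell \cong \C$, one gets that $\nu^*(H^{1,0}(B;\C))$ is precisely $\mathrm{Span}_\C\{f_{\varphi^{(10)}_{e;\alpha}}\}$ as a subspace of $H^{1,0}(X_1(N);\C)$, and hence of the direct limit $\mathop{\varinjlim}\limits_{N}H^{1,0}(X_1(N);\C)$.

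Next I would observe that the Hecke characters $\varphi^{(10)}_{e;\alpha}$ appearing here are, by Remark \ref{rmk:compare-ShimuraType-varphi-chi} (and Remark \ref{rmk:shimura-7-44}), nothing but a finite subset of all the Hecke characters of $K^\times \backslash \mathbb{A}_K^\times$ of infinity type $[-1/2;1,0]$. Each one has some conductor $\mathfrak{c}_{f_e}$, hence its Hecke theta function lies in $F^{(\Omega,\int J)}_{\{0,0\}}({\cal E}ll({\cal O}_K),f_\rho)$ as soon as $|D_K|\,\mathrm{Nm}_{K/\Q}(\mathfrak{c}_{f_e}) \mid (D_K f_\rho)^2$, which is exactly the statement of Theorem \ref{thm:reproduce-all-HeckeTheta-by-CCF} after passing to the direct limit \eqref{eq:def-all-CCF-K} in $f_\rho$. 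Choosing $f_\rho$ divisible by a common multiple of the conductor norms of all the $\varphi^{(10)}_{e;\alpha}$'s puts every one of them in a single component $F^{(\Omega,\int J)}_{\{0,0\}}({\cal E}ll({\cal O}_K),f_\rho)$ simultaneously, and therefore in the limit.

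The one point requiring care, and the most likely source of difficulty, is compatibility of the two embeddings into $\mathop{\varinjlim}\limits_{N}H^{1,0}(X_1(N);\C)$: the embedding \eqref{eq:embed-ccf4OK-allKahler-limHX1N} on the CFT side uses the particular transition maps $\imath_1^{(K,f_\rho;d^2,d'')}$, which incorporate the argument rescaling \eqref{eq:the-argument-rescaling} and the stroke operators $|[\mathrm{diag}(N_{D\Lambda},1)]_2$ of Lemma \ref{lemma:isom-strokeOp-cuspformSpace}, whereas the embedding from $H^{1,0}(B;\C)$ uses the natural pullbacks induced by the coverings $\pi_{NN'}$ between modular curves. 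I would therefore check (following Remark \ref{rmk:sanity-check-nebentypus} and the bookkeeping surrounding \eqref{eq:ccf-embed2-H10ofX1N2s}) that the Hecke theta function $f_{\varphi^{(10)}_{e;\alpha}}$ produced as $\mu^*\circ\nu^*$ of \eqref{eq:basis-omegaE-diagonalize-Gal} sits at the same point of the direct limit as the newform identified with a linear combination of chiral correlation functions by Theorem \ref{thm:newform-byCCF}; this reduces to comparing normalisations and $q$-expansions, which is routine. Granting this compatibility, the containment $\nu^*(H^{1,0}(B;\C)) \subset F^{(\Omega,\int J)}_{\{0,0\}}({\cal E}ll({\cal O}_K))$ follows.
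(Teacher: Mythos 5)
Your proposal is correct and follows essentially the same route the paper intends: the paper states this theorem without a separate proof block precisely because it is the combination of Theorem \ref{thm:reproduce-all-HeckeTheta-by-CCF} with the identification, via (\ref{eq:pullback-omegaEs-bynu}) and Remarks \ref{rmk:shimura-7-44} and \ref{rmk:compare-ShimuraType-varphi-chi}, of $\nu^*(H^{1,0}(B;\C))$ with the span of the $[k:K]$ Hecke theta functions $f_{\varphi^{(10)}_{e;\alpha}}$ of infinity type $[-1/2;1,0]$. Your explicit flagging of the compatibility of the two embeddings into the direct limit is a point the paper leaves implicit, but it does not change the argument.
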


The Theorem above is one of the main results in this article that fit 
into the form of mathematical statements (such as non-trivial relations 
among well-defined objects). Observations of importance in physics/string 
theory cannot always be stated in that way, however:
\begin{anythng}
\label{statmnt:fromStrPerspectives}
To find the modular form $f$ that corresponds (via the Langlands correspondence) 
to the Galois representation $\rho_\ell(E/k)$ of an elliptic curve of Shimura 
type $E/k$, the standard story in math has been to find a surjective morphism 
$\nu$ from the Jacobian of a modular curve and work out 
$d\tau f = \mu^* \circ \nu^*(\omega_E)$ 
(see review in sections \ref{sssec:shimuraEC} and 
\ref{sssec:surjective-morph-2BK}).
Theorem \ref{thm:reproduce-LLdual-by-CCF} states that string theory computation 
provides an alternative. 

Instead of surjective morphisms $\nu$ from ${\rm Jac}(X_1(N))$ to $E/k$ 
which pull back the 1-form $\omega_E$, string theory pulls back $\omega_E$ 
in the form of the vertex operator $\Omega=du (\partial_u X^\C)$ by the map 
$\phi: \Sigma_{ws} \rightarrow [E]_\C$. 
A choice of arithmetic model $E/k$ has been thrown away because string theory
deals with the target space only as $E \times_k \C = [E]_\C$, and further 
the map $\phi$ is not required to be holomorphic in string theory. 
The totality of the ``chiral correlation functions'' in string theory that 
result 
from the insertion of the operator $\Omega$---$F^{(\Omega,\int J)}_{\{0,0\}}({\cal E}ll({\cal O}_K))$---yields all the possible modular forms that 
correspond to the Galois representations associated with elliptic curves 
of Shimura type, with complex multiplication by ${\cal O}_K$. 

When it comes to an elliptic curve of Shimura type $E/k$, all of its 
Galois conjugates $E^\sigma$ with $\sigma \in {\rm Gal}(k/K)$ share the 
same $L$-function.
Put differently, the $2[k:K]$ Galois representations that are reviewed in 
Remark \ref{rmk:Shimura-type-GalChartzn} are not just for $E/k$ but 
for all of $\{ E^\sigma/k \; | \; 
\sigma \in {\rm Gal}(k/K)\}$. So, it makes sense now
to find the modular forms that correspond to those Galois representations 
within the vector space of the ``chiral correlation functions'' with any one of 
$\{ E^\sigma \times_k \C = [E^\sigma]_\C \;| \; \sigma \in {\rm Gal}(k/K) \}$
as the target space. If $[E/k]_\C$ belongs to ${\cal E}ll({\cal O}_K)$, 
then\footnote{Theorem \ref{thm:reproduce-LLdual-by-CCF} also states that 
the modular forms that correspond to the Galois representations 
of elliptic curves of Shimura type $E/k$ are found within the same 
vector space $F^{(\Omega,\int J)}_{\{0,0\}}({\cal E}ll({\cal O}_K))$, 
even when $E/k$ (and $[E]_\C$) has complex multiplication by a non-maximal 
order ${\cal O}_{f_z}$. This is not totally an odd thing; the Galois 
representation of an elliptic curve $E/k$ is the same as that of another 
elliptic curve $E'/k$ iff there is an isogeny (not isomorphism) 
$E \sim_{/k} E'$ defined over $k$. So, the vector space 
of chiral correlation functions with ${\cal E}ll({\cal O}_K)$ as the target 
spaces, $F^{(\Omega, \int J)}_{\{0,0\}}({\cal E}ll({\cal O}_K))$, may contain 
the modular forms that correspond to the Galois representation of $E/k$ 
with complex multiplication by ${\cal O}_{f_z}$, {\it if} $E/k$ has a 
$k$-isogenous $E'/k$ whose base change $[E']_\C$ belongs to ${\cal E}ll({\cal O}_K)$. 
}
the base change operation $E^\sigma \longmapsto E^\sigma \times_k \C$
is a projection ${\rm Gal}(k/K) \rightarrow {\rm Gal}(H_K/K) \cong 
{\cal E}ll({\cal O}_K)$, where $H_K$ is the Hilbert class field of $K$.
The vector space $F^{(\Omega, \int J)}_{\{0,0\}}({\cal E}ll({\cal O}_K))$ 
is a collection of the ``chiral correlation functions'' with any one of 
the elliptic curves in ${\cal E}ll({\cal O}_K)$ as the target space 
indeed. $\bullet$
\end{anythng}

Results in this article have already improved the results and theoretical 
understanding in our previous work \cite{prev.paper} 
in various respects, but Ref. \cite{prev.paper} is not entirely 
a proper subset of this article. Here are two remarks. 
\begin{rmk}
Instead of specifying an elliptic curve of Shimura-type $E/k$ first, and then 
asking whether the modular forms 
$\pi({\rm Ind}^{{\rm Gal}(\overline{\Q}/K)}_{{\rm Gal}(\overline{\Q}/k)}(\rho_\ell(E/k)))$
are found within the vector space of ``chiral correlation functions of 
${\cal E}ll({\cal O}_K)$'', we may ask a converse question. Namely, 
we may specify a set of chiral correlation functions of 
a set of CM elliptic curves ${\cal E}ll({\cal O}_K)$ first, and then ask 
if there is any elliptic curve of Shimura-type $E/k$ whose Galois 
representation is $\rho(f)$ of the modular form in this vector space of 
the ``chiral correlation functions''; we may ask what the field of 
definition $k$ 
can be at the same time. This question was surveyed (in the form of a 
pure math subject) in section 4.2 of \cite{prev.paper}. So, we 
do not explore this question further in this article.
\end{rmk}

\begin{rmk}
In our previous paper \cite{prev.paper}, we considered using only the 
chiral correlation functions in $F^{(\Omega, \int J)}_{\{0,0\}}([z],f_\rho)$ 
along with their images under $|[\diag(r,1)]_2$ and $|[\diag(1,s)]_2$ 
for integers $r,s$, in reproducing the modular 
forms (\ref{eq:pullback-omegaEs-bynu}) dual to the Galois representation
${\rm Ind}^{{\rm Gal}(\overline{\Q}/K)}_{{\rm Gal}(\overline{\Q}/k)}(\rho_\ell(E_z/k))$ 
of an elliptic curve of Shimura type. In this article,  
Theorem \ref{thm:reproduce-LLdual-by-CCF} in particular, we do not 
allow to include the images of chiral correlation functions by 
$|[\diag(r,1)]_2$ or $|[\diag(1,s)]_2$ in recovering those modular forms, 
but we allow ourselves to 
use the ``chiral correlation functions'' of rational models of ${\cal N}=(2,2)$ 
SCFT with other elliptic curves $[E_{z_a}]_\C$ in ${\cal E}ll({\cal O}_K)$; 
we take a direct sum over $a = 1,\cdots, h({\cal O}_K)$ 
in (\ref{eq:Fzfrho-chif-sumoverClK}, \ref{eq:notatn-FEllOK-frho}). 
So, this article poses a question a little different from the one 
in \cite{prev.paper}.

From conventional perspectives in physics, it is not a natural idea 
to take a linear combination of correlation functions of multiple 
different models of quantum field theory. The authors---we---were still 
within this physics tradition in our previous paper \cite{prev.paper}.
Although it is a little too artificial and less motivated
rule of game from math perspectives (if not ill-defined) 
to allow including arbitrary 
$|[\diag(r,1)]_2$ and $|[\diag(1,s)]_2$ operations, 
we decided not to be concerned about this artificial aspect too much 
in \cite{prev.paper}. In this article, however, we chose 
not to allow those operations that bring the ``chiral correlation functions'' 
out of $F^{(\Omega, \int J)}_{\{0,0\}}({\cal E}ll({\cal O}_K))$; instead 
we allow to include the ``chiral correlation functions'' of all of $[E_z]_\C 
\in {\cal E}ll({\cal O}_K)$ at the same time; the linear 
combination (\ref{eq:Hecke-theta-div-into-idealClasses}, 
\ref{eq:Hecke-theta-4-idealClass}) with 
$\mathfrak{a}(\mathfrak{K}_a) = \mathfrak{b}_{z_a}$ just does that
(Rmk. \ref{rmk:discussion}).
Math motivation in allowing this has been stated 
in \ref{statmnt:fromStrPerspectives}.
\end{rmk}

As one more comment, 
\begin{rmk}
We have not studied what the vector space $\oplus_{a=1}^{h({\cal O}_{f_z})} 
F^{(\Omega, \int J)}_{\{0,0\}}([z_a],f_\rho)$ is like for the set of models 
for a fixed value of $f_\rho$ and a set of elliptic curves with 
complex multiplication by an order ${\cal O}_{f_z}$ that is not 
maximal ($f_z >1$).  Neither this article nor \cite{prev.paper} has 
tried to characterize the vector space of the ``chiral correlation functions'' 
of {\it non-diagonal} models of rational CFT with an elliptic curve $[E]_\C$ 
with complex multiplication as the target space; by non-diagonal, it is meant 
that the complexified 
K\"{a}hler parameter $\rho \in {\cal O}_{f_z}$ is not of the form 
$\rho = f_\rho a_z z$ labeled by $f_\rho \in \N_{>0}$. 
The task will be to repeat analysis like the one in 
section \ref{sec:CCF-eigenform}, 
but we leave this task beyond the scope of this article. 
(This is partly because the ``chiral correlation functions of 
${\cal E}ll({\cal O}_K)$,'' (\ref{eq:def-all-CCF-K}), is already enough 
for Theorems \ref{thm:reproduce-all-HeckeTheta-by-CCF} 
and \ref{thm:reproduce-LLdual-by-CCF}.)
\end{rmk}

\section{Two Families of Galois Group Actions Associated with a 
CM Elliptic Curve}
\label{sec:two-Gal-actn}

In this section, we wish to share an observation (\ref{statmnt:GTtheory-final})
with readers. 
The observation is presumably not particularly striking, 
or not of immediate consequences perhaps (at least to the minds of the authors).
Parts of the observation written down in the following have been 
written or discussed in various literatures, but not in an entire form 
anywhere (to the knowledge of the authors); the observation emerges 
naturally after we understand the material in the preceding sections. 
For this reason, the authors consider that it is worthwhile to have 
this section as a part of this article. Materials in 
section \ref{ssec:Grothendieck} comes mostly from Ref. \cite{Schneps-survey} 
and references therein, while those in \ref{statmnt:GT-quotient} mostly 
from \cite{CG, Degiovanni-french}.

\subsection{Galois Group and Etale Coverings of ${\cal M}_{g,n}$}
\label{ssec:Grothendieck}

There are infinitely many number fields, and they form an intricate 
network under the relation of one number field being an extension of 
another. The absolute Galois group $G_\Q := 
{\rm Gal}(\overline{\Q}/\Q)$ contains the whole information of this 
intricate network; a better understanding of this intricate network 
is one of big dreams in number theory. A quite common strategy 
in understanding the structure of a complicated group is to study 
representations of the group.   

\begin{anythng}
An idea of Galois--Teichm\"{u}ller theory \cite{Ihara-86, Esquisse}
is to use a fact\footnote{When $X$ is a non-singular variety, 
then $\pi_1^{\rm alg}(X \times_L \overline{\Q};\bar{x}) \cong 
\hat{\pi}_1^{\rm top}(X(\C);\bar{x})$, where $\pi_1^{\rm top}(X(\C))$ is the 
topological fundamental group \cite{Milne-etale}. } that 
\begin{align}
  1 \rightarrow \pi_1^{\rm alg}(X \times_L \overline{\Q} ;\bar{x})
    \rightarrow \pi_1^{\rm alg}(X, \bar{x}) 
   \rightarrow {\rm Gal}(\overline{\Q}/L) \rightarrow 1
\end{align}
is exact; here, $X$ is a variety or stack defined over a number field $L$, 
and $\bar{x}$ a geometric point of $X$. $\pi_1^{\rm alg}$ stands for 
the etale fundamental group (e.g., \cite{Milne-etale}), and $\hat{G}$ 
is the profinite completion of a group $G$ (see \cite{Oort} for def). 
From the fact that this short sequence is exact, it follows that 
there is a homomorphism 
\begin{align}
  \phi_X : {\rm Gal}(\overline{\Q}/L) \longrightarrow {\rm Out}
   \left( \pi_1^{\rm alg}(X \times_L \overline{\Q} ; \bar{x} ) \right);
  \label{eq:hom-GalQbarOvL-2-OutPi1Top}
\end{align}
for $\sigma \in {\rm Gal}(\overline{\Q}/L)$, $\phi_X(\sigma) \in 
{\rm Out}(\pi_1^{\rm alg}(X \times_L \overline{\Q},\bar{x}))$ 
is realized by 
\begin{align}
  \phi_X(\sigma): \pi_1^{\rm alg}(X \times_L \overline{\Q}) \ni \gamma
  \longmapsto 
  \gamma^\sigma := 
  \tilde{\sigma} \cdot \gamma \cdot (\tilde{\sigma})^{-1} \in 
  \pi_1^{\rm alg}(X \times_L \overline{\Q})
     \subset \pi_1^{\rm alg}(X;\bar{x}),
\end{align}
where $\tilde{\sigma} \in \pi_1^{\rm alg}(X;\bar{x})$ is a lift 
of $\sigma \in {\rm Gal}(\overline{\Q}/L)$, as usual. 

When the geometric point $\bar{x}$ of $X$ lies over a point $x \in X$ rational over some Galois extension field $k$ over $L$, there exists 
(see \cite{Matsumoto}) a section (lift homomorphism)
${\rm Gal}(\overline{\Q}/k) \rightarrow \pi_1^{\rm alg}(X,\bar{x})$, 
from which one can determine a homomorphism 
\begin{align}
  \phi_{X,\bar{x}}: {\rm Gal}(\overline{\Q}/k) \longrightarrow 
  {\rm Aut}(\pi_1^{\rm alg}(X \times_L \overline{\Q}; \bar{x}));
   \label{eq:hom-GalQbarOvk-2-AutPi1Top}
\end{align}
further quotient ${\rm Aut}(\pi_1^{\rm alg}(X \times_L \overline{\Q})) 
\rightarrow {\rm Out}(\pi_1^{\rm alg}(X \times_L \overline{\Q}))$ by 
the subgroup ${\rm Int}(\pi_1^{\rm alg}(X\times_L \overline{\Q}))$ 
brings this back to the homomorphism (\ref{eq:hom-GalQbarOvL-2-OutPi1Top}).  
 $\bullet$
\end{anythng}

\begin{exmpl}
The simplest example of this construction is to take $X = \mathbb{G}_m 
= \P^1\backslash \{ 0, \infty\}$ with the field of definition $L=\Q$. 
Then $X(\C) = \C^\times$, $\pi_1^{\rm top}(X(\C)) \cong \Z$, and 
$\hat{\pi}_1^{\rm top}(X (\C);\bar{x}) \cong \hat{\Z}$. 
The representation $\phi$ in this example is the cyclotomic character, 
$\phi = \chi: {\rm Gal}(\overline{\Q}/\Q) \rightarrow {\rm Gal}(\Q^{ab}/\Q) 
\cong \hat{\Z}^\times$, which acts on $\hat{\pi}_1^{\rm top}(X(\C)) \cong 
\hat{\Z}$ by multiplication (eg \cite{Oort}). 

The moduli space of $(g,n)=(0,4)$ pointed Riemann surfaces, ${\cal M}_{0,4}$, 
is isomorphic to $\P^1 \backslash \{0,1,\infty\}$, and we can 
take its field of definition to be $L=\Q$. Another example is 
${\cal M}_{1,1}$ which is a stack defined over $L=\Q$. 
More generally, we can think of $X = {\cal M}_{g,n}$ defined over $L = \Q$
for all $(g,n)$ (e.g., \cite[p.200]{Schneps-survey}), and 
a geometric point $\bar{x}$ over a tangential base point---a maximal 
degeneration 
limit of a $(g,n)$ curve---can be used as the base point, where $k=\Q$ as well
\cite{tng-base-pt}. 
So, there exists an exterior representation $\phi_{X,\bar{x}}$ of 
${\rm Gal}(\overline{\Q}/\Q)$---(\ref{eq:hom-GalQbarOvk-2-AutPi1Top})---on 
the group $\hat{\pi}_1^{\rm alg}({\cal M}_{g,n} \times_\Q \overline{\Q})$ for 
$X={\cal M}_{g,n}$. 
The group $\pi_1^{\rm alg}({\cal M}_{g,n} \times_\Q \overline{\Q})$ is the 
profinite completion of the mapping class group $\Gamma_{g,n}$ of a 
genus $g$ and $n$-pointed Riemann surface.  $\bullet$
\end{exmpl}

\begin{anythng}
Both of the groups ${\rm Gal}(\overline{\Q}/L)$ and 
$\pi_1^{\rm alg}(X_L; \bar{x})$ are infinitely large, but the 
exterior representation $\phi_{X,\bar{x}}$ for $(X,\bar{x})$ defined over $L$
can be studied in the form of infinitely many representations 
$\phi_\Gamma :{\rm Gal}(\overline{\Q}/L) \rightarrow 
{\rm Aut}(\pi_1^{\rm alg}(X\times_L \overline{\Q})/\Gamma)$ associated 
with normal subgroups $\Gamma$ of $\pi_1^{\rm alg}(X\times_L \overline{\Q})$ 
of finite index. Choosing a normal subgroup $\Gamma$ of finite index is 
equivalent to choosing a finite etale Galois cover 
$\pi: Y_{L_0} \rightarrow X_L$ with 
$\pi_1^{\rm alg}(Y \times_{L_0} \overline{\Q}) \cong \Gamma$; the field of 
definition $L_0$ of $Y$ and $\pi$ is an extension of $L$ here. 
The quotient group 
$\pi_1^{\rm alg}(X \times_L \overline{\Q})/
\pi_1^{\rm alg}(Y \times_{L_0} \overline{\Q})$ can be regarded as the 
group of covering transformation of $\pi: Y\times_{L_0} \overline{\Q} 
\rightarrow X\times_L \overline{\Q}$, denoted by 
${\rm Aut}_{X \times_L \overline{\Q}}(Y\times_{L_0} \overline{\Q})$, and is also 
in one-to-one with the finite set of points of the fiber of 
$\pi: Y(\C) \rightarrow X(\C)$ over $\bar{x} \in X(\C)$.
Now, we have an exact sequence of finite groups, 
\begin{align}
1   \rightarrow
\pi_1^{\rm alg}(X\times_L \overline{\Q}) / 
   \pi_1^{\rm alg}(Y\times_{L_0}\overline{\Q})   \rightarrow
\pi_1^{\rm alg}(X_L)/\pi_1^{\rm alg}(Y \times_{L_0} L')   \rightarrow 
{\rm Gal}(L'/L) \rightarrow 1, 
 \label{eq:short-ex-seq-Gal-fin-etaleCov}
\end{align}
where $L'$ is any finite Galois extension over $L_0 \supset L$ that contains 
the number field $L_{Y/X}$ that is the normal closure of the field of moduli 
of all the covering transformations in ${\rm Aut}_{X}(Y_{L_0})$.
Applying the same logic as before, we have a homomorphism 
\begin{align}
  \phi_{(X,\bar{x})/(Y,\bar{y})}: {\rm Gal}(\overline{\Q}/L) \rightarrow 
   {\rm Gal}(L'/L) \rightarrow 
   {\rm Aut}\left( \pi_1^{\rm alg}(X \times_L \overline{\Q}) / 
                   \pi_1^{\rm alg}(Y \times_{L_0} \overline{\Q})
            \right);  
  \label{eq:ext-repr-fin-etaleCov}
\end{align}
this exterior representation factors through ${\rm Gal}(L_{Y/X}/L)$; 
a choice of the number field $L'$ above is irrelevant in the end. 
See \cite{Oort} or \cite{Milne-etale}, for example, for more information. 
$\bullet$
\end{anythng}

\begin{exmpl}
\label{exmpl:Gal-actn-on-XN}
In the case $X = {\cal M}_{1,1}$ and $\bar{x}$ is the tangential base 
point\footnote{
This base point is placed at the $\tau = i\infty$ limit of ${\cal H}$
approached from the positive imaginary axis; a path comes out 
from the $\tau = i\infty$ base point to the direction of $\tau = i$. 
In the algebraic coordinate of ${\cal M}_{1,1}$, $1/j$, it is 
$\overrightarrow{01}$ pointing from $1/j=0$ to the direction of 
$1/j=1/1728$. In the coordinate $q$, it is also pointing from $q=0$ to 
the direction of $q \in \R_{>0}$.}
 $\tau = \overrightarrow{(i\infty)i}$, 
for example, we can choose 
$Y = {\cal H}/\Gamma(N)$ for a positive integer $N$. The number fields 
in the discussion above are $L=\Q$, and $L_0 = L_{Y/X} = \Q(\zeta_N)$ 
\cite[\S7]{DS}. The exact sequence (\ref{eq:short-ex-seq-Gal-fin-etaleCov}) 
is 
\begin{align}
 1 \rightarrow {\rm SL}(2;\Z)/\Gamma(N) \rightarrow 
   {\rm GL}(2;\Z/N\Z) \rightarrow {\rm Gal}(\Q(\zeta_N)/\Q) \rightarrow 1
\end{align}
when we set $L'$ to be the minimal possible one, and we have an exterior 
representation 
\begin{align}
 \phi_{{\cal M}_{1,1}/ {\cal H}/\Gamma(N)}: {\rm Gal}(\Q(\zeta_N)/\Q) \rightarrow 
    {\rm Aut}\left( {\rm SL}(2;\Z/N\Z) \right).
  \label{eq:hom-quotnt-GalCyclo-2-SL2ZNZ}
\end{align}
\end{exmpl} 

\begin{anythng}
\label{statmnt:how2workout-quot-repr}
There are two different (and equivalent) ways to see how Galois 
transformation acts on the group $\pi_1^{\rm alg}(X \times_L \overline{\Q}) / 
\pi_1^{\rm alg}(Y \times_{L_0} \overline{\Q})$ through the exterior 
representation $\phi_{(X,\bar{x})/(Y,\bar{y})}$ in (\ref{eq:ext-repr-fin-etaleCov}). 
Suppose that the base point $\bar{x}$ is lying over a point $x \in X_L$ 
rational over $L$. Then the representation $\phi_{(X,\bar{x})/(Y,\bar{y})}(\sigma): 
\gamma \mapsto  \tilde{\sigma}^{-1} \cdot \gamma \cdot \tilde{\sigma}^{-1}$ 
(where $\gamma \in \pi_1^{\rm alg}(X\times_L \overline{\Q}) / 
\pi_1^{\rm alg}(Y \times_{L_0} \overline{\Q})$) can be understood 
in a highly intuitive (geometric) way, a combination of Galois transformation 
on geometric points in the fiber of $\bar{x}$ and tracking of those points 
along a path in $X(\C)$. 
See \cite[``well-known facts'' in \S 2.8]{Schneps-PS}.

The alternative (and equivalent) way to understand the action of the Galois 
group is to keep track of functions on $Y$ instead of points on $Y$. 
Rational functions on $Y$ are expanded into power series of a local coordinate 
$t_X$ in $X$ around $\bar{x}$ with coefficients in $L_{Y/X}$ 
(called Puiseux series);\footnote{Fractional powers of 
$t_X$ need to be allowed when the closure of $Y$ is ramified over the closure 
of $X$ at $\bar{x}$.} the lift of $\sigma \in {\rm Gal}(L'/L)$, 
$\tilde{\sigma}$, simply acts as Galois transformation on all the coefficients 
of the Puiseux series expansion, and $\gamma$ acts on those functions by 
their analytic continuation along the path $\gamma$ in $X(\C)$.
For a test function $f \in L'(Y)$, 
\begin{align}
   f \longmapsto  (\gamma \circ (f)^{\tilde{\sigma}^{-1}})^{\tilde{\sigma}} =: 
    \gamma^\sigma \circ f,
  \label{eq:algorithm-EmsLck}
\end{align}
and a path $\gamma^\sigma$ should be found from 
$\pi_1^{\rm alg}(X \times_L \overline{\Q}) /
 \pi_1^{\rm alg}(Y \times_{L_0} \overline{\Q})$ so that the relation above 
should hold for any test function. 
This algorithm determines the element $\gamma^\sigma$ in 
$\phi_{(X,\bar{x})/(Y,\bar{y})}: \gamma \longmapsto \gamma^\sigma$
\cite{I-200GT, Ems-Lck}. 

The same material as in this section \ref{ssec:Grothendieck} have been 
reviewed in literatures addressed toward string theorists (at least 
as much as to mathematicians) already;  
see \cite{Degiovanni-french, Degiovanni-200} and \cite[\S6.3]{Gannon}.
$\bullet$
\end{anythng}

\subsection{Galois Group Action on Modular Curves and Monodromy Representations}
\label{ssec:CG-D} 

\begin{anythng}
Just like a representation of a group $\rho: G \rightarrow {\rm Aut}(V)$ 
on a vector space $V$ has a sub-representation when there is a vector subspace 
$W \subset V$ where any one of $\rho(G)$ maps $W$ to itself, 
the exterior representation of a group $\phi: G \rightarrow {\rm Aut}(\Gamma)$
on a group $\Gamma$ has a sub-representation when there is a subgroup 
$H \subset \Gamma$ where any one of $\phi(G)$ maps $H$ to itself. 
In the case $G = {\rm Gal}(\overline{\Q}/\Q)$, 
$\Gamma = \pi_1^{\rm alg}({\cal M}_{1,1} \times_\Q \overline{\Q})$, and 
$\phi = \phi_{{\cal M}_{1,1}, \overrightarrow{(i\infty)i}}$, a subgroup 
$H = \pi_1^{\rm alg}( ({\cal H}/\Gamma_1(M))_\Q \times_\Q \overline{\Q})$ 
has that property for arbitrary positive integer $M$ because the modular 
curve $({\cal H}/\Gamma_1(M))_\Q$ and a natural projection 
$\pi: {\cal H}/\Gamma_1(M) \rightarrow {\cal M}_{1,1}$ are defined 
over $\Q$ (\cite[\S7]{DS} and \cite[\S11--12]{Ribet-Stein}).

The sub-representation $\phi_{X,\bar{x}}: {\rm Gal}(\overline{\Q}/\Q) \rightarrow 
{\rm Aut}(H)$ with $H = \pi_1^{\rm alg}(Y_1 \times_\Q \overline{\Q})$ and  
$Y_1 = ({\cal H}/\Gamma_1(M))_\Q$ also induces a representation 
on the module, 
\begin{align}
 {\rm Hom}(H, \Z/(\ell^m)) = {\rm Hom}(H/[H,H], \Z/(\ell^m)) 
  \cong H^1_{et}(Y_1 \times_\Q \overline{\Q}; \Z/(\ell^m)),
\end{align}
and also a representation on the module 
\begin{align}
  \mathop{\varprojlim}\limits_{m \in \N} 
      H^1_{et}(Y_1 \times_\Q \overline{\Q}, \Z/(\ell^m)) = 
  H^1_{et}(Y_1 \times_\Q \overline{\Q}; \Z_\ell ).
\end{align}
The compactified modular curve $X_1(M)$ contains 
$Y_1(M) = {\cal H}/\Gamma_1(M)$ as its Zariski open subvariety, and 
\begin{align}
 H^1_{et}(X_1(M)\times_\Q \overline{\Q}; \Z_\ell) \cong 
   H^1_{et}({\rm Jac}(X_1(M)) \times_\Q \overline{\Q}; \Z_\ell)
  \label{eq:et-coh-modCurv-X0}
\end{align}
(the dual of the Tate module of ${\rm Jac}(X_1(M))$) is regarded 
as a submodule of $H^1_{et}(Y_1(M)\times_\Q \overline{\Q}; \Z_\ell)$
that is preserved by any one of 
$\phi_{{\cal M}_{1,1},\overrightarrow{(i\infty)i}}({\rm Gal}(\overline{\Q}/\Q))$.

Therefore, the set of $[k:K]$ representations of 
${\rm Gal}(\overline{\Q}/\Q)$ associated with an elliptic curve 
$E/k$ of Shimura type (reviewed in section \ref{sssec:Langlands})
correspond to some appropriate submodules of 
$H^1_{et}(X_1(M)\times_\Q \overline{\Q}; \Z_\ell)$ above, and hence 
are associated with the sub-representation of 
$\phi_{{\cal M}_{1,1},\overrightarrow{(i\infty)i}}: {\rm Gal}(\overline{\Q}/\Q) 
\rightarrow {\rm Aut}(\pi_1^{\rm alg}({\cal M}_{1,1}\times_\Q \overline{\Q};
 \overrightarrow{(i\infty)i}))$ in (\ref{eq:hom-GalQbarOvk-2-AutPi1Top}) 
corresponding to the subgroup 
$\pi_1^{\rm alg}(({\cal H}/\Gamma_1(M))_\Q \times_\Q \overline{\Q})$ of 
$\pi_1^{\rm alg}({\cal M}_{1,1} \times_\Q \overline{\Q})$. In the perspective 
of string theory, one may well say that the set of the ``chiral correlation 
functions'' $\{ f^{\rm II}_{1\Omega'}(\tau_{ws}; \beta) \}_{\beta \in iReps}$ for 
the data $([E_{z_a}]_\C, f_\rho)$ with $[E_{z_a}]_\C \in {\cal E}ll({\cal O}_K)$
facilitate a part of the sub-representations 
${\rm Gal}(\overline{\Q}/\Q) \rightarrow {\rm Aut}(H)$.  $\bullet$
\end{anythng}

\begin{anythng}
\label{statmnt:GT-quotient}
Just like a representation of a group on a vector space $\rho: G \rightarrow 
{\rm Aut}(V)$ with a non-trivial sub-representation 
$\rho: G \rightarrow {\rm Aut}(W)$ with $W \subset V$ is accompanied with 
the quotient representation $\rho: G \rightarrow {\rm Aut}(V/W)$, 
we can think of a quotient exterior representation for 
$\phi_{X,\bar{x}}: {\rm Gal}(\overline{\Q}/L) \rightarrow 
{\rm Aut}(\pi_1^{\rm alg}(X \times_L \overline{\Q}; \bar{x}))$, 
using a finite etale Galois cover $\pi: Y \rightarrow X$. 
The homomorphism (\ref{eq:ext-repr-fin-etaleCov}) does that; note 
that the field of definition $L_0$ of the finite cover $Y$ over $X$ 
can be larger than the field of definition $L$ of $(X,\bar{x})$.

For $X = {\cal M}_{1,1}$ and $\bar{x} = \overrightarrow{(i\infty)i}$ defined 
over $L=\Q$ and its finite etale and Galois cover 
$Y:= {\cal H}/\Gamma(N) \rightarrow {\cal M}_{1,1}$, 
(\ref{eq:hom-quotnt-GalCyclo-2-SL2ZNZ}) is such a quotient representation.

Just like the ``chiral correlation functions'' of Type II string theory 
with the target spaces $\{ ([E_{z_a}]_\C, f_\rho) \; | \; 
[E_{z_a}]_\C \in {\cal E}ll({\cal O}_K) \}$ constitute a part of the 
module (\ref{eq:et-coh-modCurv-X0}) of the sub-action 
$\phi_{X,\bar{x}}: {\rm Gal}(\overline{\Q}/\Q) \rightarrow 
{\rm Aut}(\pi_1^{\rm alg}(Y_1(M)\times_\Q \overline{\Q}))$, 
chiral correlation functions with an elliptic curve of complex multiplication 
as a target space also provide some of test functions 
(see \ref{statmnt:how2workout-quot-repr}) in working out the quotient 
representation $\phi_{(X,\bar{x})/(Y,\bar{y})}: {\rm Gal}(\overline{\Q}/\Q)
 \rightarrow {\rm Gal}(\Q(\zeta_N)/\Q) \rightarrow 
{\rm Aut}({\rm SL}(2;\Z/N\Z))$. 

To be more specific, think of the chiral correlation functions 
$\{ f^{\rm bos}_0(\tau_{ws};\beta) \; | \; \beta \in iReps. \}$ obtained 
in bosonic string theory for the data $([E_z]_\C, f_\rho)$; they are 
functions of $\tau_{ws} \in {\cal H}$, 
and are also regarded as meromorphic functions on $Y(N_{\rm bos}) := 
{\cal H}/\Gamma(N_{\rm bos})$
where $N_{\rm bos} = {\rm LCM}(12, N_{D\Lambda})$ as we have seen in Prop. 
\ref{props:mndrmy-kernel-bos-str}. They are genuine functions rather than 
a section of such bundles as $(T^*Y(N_{\rm bos}))^{\otimes h}$ over the curve 
$Y(N_{\rm bos})$ for some $h \in \N_{>0}$, because 
$\{f^{\rm bos}_0(\tau_{ws};\beta)\}$ is a weight-zero modular form (when 
they may well be regarded as 
$(g,n)=(1,1)$ chiral correlation functions, but the inserted 
operator, {\bf 1}, is a Virasoro primary operator of weight $h=0$).
The Fourier series expansion (power series in $q_{ws} = e^{2\pi i \tau_{ws}}$) 
of those chiral correlation functions $f^{\rm bos}_0$ is much the same as the 
Puiseux series expansion of rational functions on the finite etale Galois 
cover $Y(N_{\rm bos}) \rightarrow {\cal M}_{1,1}$ at the tangential base point 
$\bar{x}= \{ \tau_{ws}= \overrightarrow{(i\infty)i} \}$; certainly it is 
$1/j(\tau_{ws})$ rather than $q_{ws}$ that should be used for the Puiseux 
series expansion (because that is the algebraic coordinate of ${\cal M}_{1,1}$), 
but the $q_{ws}$-series expansion $1/j(\tau_{ws}) \simeq q_{ws} + 
(\cdots) q_{ws}^2 + \cdots$ only involves coefficients in $\Q$, 
so it does not matter whether we use $q_{ws}$ or $1/j(\tau_{ws})$ for 
the Puiseux series expansion in working out $\gamma^\sigma$ for 
$\gamma \in \pi_1^{\rm alg}({\cal M}_{1,1}\times_\Q \overline{\Q}) / 
\pi_1^{\rm alg}(Y(N_{\rm bos}) \times_{\Q(\zeta_{N_{\rm bos}})}\overline{\Q}) = 
{\rm SL}(2;\Z/N\Z)$
through the algorithm (\ref{eq:algorithm-EmsLck}).

The action of ${\rm Gal}(\overline{\Q}/\Q)$ in (\ref{eq:algorithm-EmsLck}) 
is, in the end, the same as carrying out the Galois transformation 
of the individual matrix entries of the monodromy representation 
of $\gamma$ in the mapping class group $\Gamma_{(g,n)=(1,1)} = {\rm SL}(2;\Z)$
associated with the\footnote{
There can be multiple $(g,n)=(1,1)$ conformal blocks of a given rational 
model of CFT, ${\cal F}_{\{\alpha_1\}}$, depending on which irreducible 
representation is to be inserted at $n=1$ point in a $g=1$ Riemann surface. 
In the case of a $T^2$-target rational model of CFT, however, $(g,n)=(1,1)$
conformal blocks are non-empty only when $\alpha_1 = 0 \in iReps$, 
so there is just one conformal block ${\cal F}_{\{\alpha_1=0\}}$ of interest.
}\raisebox{4pt}{,}\footnote{
For $(g,n)=(1,1)$, conformal blocks give rise to monodromy 
representation of the group $\Z . \Gamma_{1,1}$ rather than $\Gamma_{1,1}$. 
In the case of a $(g,n)=(1,1)$ conformal block with $\alpha_1 = 0$, however, 
the monodromy representation reduces to that of $\Gamma_{1,1}$ because 
the conformal weights in the vacuum representation $\alpha_1=0$ are 
all integers.} 
$(g,n)=(1,1)$ conformal block of the rational model 
of CFT for $([E_z]_\C ,f_\rho)$. 
This is because the Puiseux series coefficients of $f$ get Galois 
transformation once by $\sigma^{-1}$ in (\ref{eq:algorithm-EmsLck}) and 
then once by $\sigma$; linear combination coefficients that emerge 
as a result of monodromy along the path $\gamma$, however, is subject 
to the Galois transformation by $\sigma$, and hence the 
Galois transformation acts only on the linear combination coefficients 
describing the monodromy along $\gamma$ \cite[\S5.2]{Degiovanni-french}. 

The individual entries of the monodromy representation matrices of 
${\rm SL}(2;\Z) \cong \Gamma_{1,1}$ should be within the number field\footnote{
The minimal number field that contains the entries of the monodromy 
representation of ${\rm SL}(2;\Z)$ have been studied for the 
Wess--Zumino--Witten model \cite{CG} and $S^1$-target rational models of 
CFT \cite{Degiovanni-S1, Coste-Degiovanni}.
}  
$\Q(\zeta_{N_{\rm bos}})$, because the quotient representation factors through 
${\rm Gal}(\Q(\zeta_{N_{\rm bos}})/\Q)$ (Example \ref{exmpl:Gal-actn-on-XN}). 
There is indeed an argument \cite{dBG, CG} that those matrix 
entries should be in $\Q^{ab}$ (the maximal cyclotomic extension), and moreover, 
it is known that they are indeed in $\Q(\zeta_{N_{\rm bos}})$ 
\cite[just above Lemma 5.1]{Bruinier}. $\bullet$
\end{anythng}

\begin{anythng}
\label{statmnt:GTtheory-final}
By bringing together all the pieces of discussions above (which are in 
the literature), and presenting in one common narrative, we wish i) 
to write down an observation that (a) the Galois group representations 
on $H^1_{et}$ of the modular curves
and (b) the Galois group action on the covering transformation groups 
of finite etale coverings of ${\cal M}_{1,1}$ can be regarded as partial 
information of one and the same thing, the action of 
${\rm Gal}(\overline{\Q}/\Q)$ on $\pi_1^{\rm alg}({\cal M}_{1,1} 
\times_\Q \overline{\Q})$ given in (\ref{eq:hom-GalQbarOvk-2-AutPi1Top}).

The other observation, ii), in the collection of discussions in this section
is that the group $\pi_1^{\rm alg}({\cal M}_{1,1} \times \overline{\Q})$ is 
split into the quotient $\pi_1^{\rm alg}({\cal M}_{1,1}\times \overline{\Q}) / 
\pi_1^{\rm alg}(Y(N_{\rm bos}) \times_{\Q(\zeta_{N_{\rm bos}})} \overline{\Q})$ and 
the subgroup whose abelianization is dual to the etale cohomology group 
of the modular curve,\footnote{There is a discrepancy between $f^{\rm bos}_0$ 
in bosonic string and $f^{\rm II}_{1\Omega'}$ in superstring for the choice 
of the level, of the modular curves in question.
} once a set of data $\{ ([E_{z_a}]_\C, f_\rho) \; | \; 
[E_{z_a}]_\C \in {\cal E}ll({\cal O}_K) \}$ is specified,
depending on the kernel of the monodromy representation of 
${\rm SL}(2;\Z) \cong \Gamma_{1,1}$.
For a fixed set ${\cal E}ll({\cal O}_K)$ of elliptic curves with complex 
multiplication by ${\cal O}_K$, the K\"{a}hler parameter 
$\{ f_\rho \in \N_{>0} \}$ plays the role of the index set in 
promoting\footnote{
We (the authors) have a vague memory 
that we might have seen in a literature an idea that the 
K\"{a}hler parameter plays the role of realizing Galois action on 
the \underline{profinite} group, not just on a finite quotient group. 
As we are posting this article to arXiv, however, we have not been 
able to dig out such a paper from our desktop.
 } 
the action of ${\rm Gal}(\overline{\Q}/\Q)$ 
on ${\rm SL}(2;\Z/N\Z) = \pi_1^{\rm alg}({\cal M}_{1,1} \times_\Q \overline{\Q})
 / \pi_1^{\rm alg}(Y(N_{\rm bos}) \times_{\Q(\zeta_{N_{\rm bos}})} \overline{\Q})$ into the 
action of the profinite group 
$\pi_1^{\rm alg}({\cal M}_{1,1} \times_\Q \overline{\Q})$.
The K\"{a}hler parameter also plays the role of the index set 
in section \ref{ssec:KahlerVsShimuraC} in packing the ``chiral correlation 
functions'' of the models with complex multiplication by ${\cal O}_K$ 
together to form the vector space 
$F^{(\Omega, \int J)}_{\{0,0\}}({\cal E}ll({\cal O}_K))$. 
Such an observation on the role played by the K\"{a}hler parameter of 
the string theory target space is also a reason of existence of this section, 
and is also an answer to a question posed at Introduction of our previous 
article \cite{prev.paper}.  $\bullet$
\end{anythng}

 \subsection*{Acknowledgments}

The authors are grateful to Scott Carnahan and Yuji Tachikawa for 
useful comments and discussions.
This work is support in part by the World Premier International Research 
Center Initiative (WPI) and    
Grant-in-Aid New Area no. 6003, MEXT, Japan.  


%


\begin{thebibliography}{99}
%
\bibitem[Rsector]{Rsectr-SVOA}
%
C. Dong, H. Li, and G. Mason, 
``Modular invariance of trace functions in orbifold theory and generalized Moonshine,'' 
Comm. Math. Phys. {\bf 214} 1--56 (2000); 
%
C. Dong and Z. Zhao, 
``Modularity in orbifold theory for vertex operator superalgebras,'' 
Comm. Math. Phys. {\bf 260} 227--256 (2005); 
%
J. van Ekeren 
``Modular invariance for twisted modules over a vertex operator superalgebra,'' 
Comm. Math. Phys. {\bf 322} (2013) 333--371, 
[arXiv:1111.0682 [math.RT]].
%
\bibitem[BK]{BK}
%
B. Bakalov and A. Kirillov, ``Lectures on tensor categories and modular functor,'' University Lecture Series, 21. AMS, 2000.
%
\bibitem[Br]{Bruinier}
%
J. H. Bruinier, ``Borcherds products on O(2,l) and Chern classes of Heegner divisors,'' Springer, 2004.
%
\bibitem[BCLDB]{Coste-Degiovanni}
%
E.~Buffenoir, A.~Coste, J.~Lascoux, P.~Degiovanni, and A.~Buhot, 
``Precise study of some number fields and Galois actions occurring in conformal field theory,''Annales de l'I.H.P., {\bf A63} (1995) pp. 41--79.
%

%
\bibitem[CG]{CG}
%
A.~Coste and T.~Gannon,
  ``Remarks on Galois symmetry in rational conformal field theories,''
  Phys.\ Lett.\ B {\bf 323} (1994) 316.
%
\bibitem[Crem]{Cremona}
%
J. E. Cremona, ``Algorithms for modular elliptic curves,'' Cambridge U. Press, 1992. A web interface is also available. 
%
\bibitem[dBG]{dBG}
%
J.~ de Boer and J.~Goeree, ``Markov traces and ${\rm II}_1$ factors in conformal field theory,'' Comm. Math. Phys. {\bf 139} (1991) pp.267--304.
%
%
\bibitem[Deg-S1]{Degiovanni-S1}
%
P. Degiovanni, 
``Z/NZ conformal field theories,'' 
 Comm. Math. Phys. {\bf 127} (1990) pp.71--99.
%
\bibitem[Deg-eMS]{Degiovanni-french}
%
P. Degiovanni, 
 ``Equations de Moore et Seiberg, Theories Topologiques et Theorie de Galois,''
  Helv.\ Phys.\ Acta {\bf 67} (1994) 799. 
%
\bibitem[Deg-200]{Degiovanni-200}
%
P. Degiovanni, ``Moore and Seiberg equations, topological field theories
and Galois theory,'' an article in \cite{200}.
%
\bibitem[tbp]{tng-base-pt}
%
P. Deligne, ``Le groupe fondamental de la droite projective moins trois points,'' in {\it Galois groups over Q}, MSRI Publ. {\bf 16} (89) p.79. \\
%
See also \cite{Ems-Lck} and \cite{Schneps-survey}.
%
\bibitem[Deu]{Deuring}
%
M. Deuring, ``Die Zetafunktion einer algebraischen Kurve vom Geschlecht Eins, I, II, III, IV,'' Nachr. Akad. Wiss. G\"{o}ttingen, (1953) 85--94, (1955) 13--42, (1956) 37--76, and (1957) 55--80. 
%
\bibitem[DS]{DS}
%
F. Diamond and J. Shurman, ``A First Course in Modular Forms,'' GTM 228, 
Springer, 2005.
%
\bibitem[EL]{Ems-Lck}
%
M. Emsalem and P. Lochak, ``The action of the absolute Galois group 
on the moduli space of spheres with four marked points,'' as an appendix 
to the second article in \cite{I-200GT}. 
%
\bibitem[FB-Z]{FBz}
%
E. Frenkel and D. Ben-Zvi ``Vertex Algebras and Algebraic Curves,'' 
Mathematical Surveys and Monographs, vol 88. AMS, 2004.
%
\bibitem[FS]{FS}
%
 D.~Friedan and S.~H.~Shenker,
  ``The Analytic Geometry of Two-Dimensional Conformal Field Theory,''
  Nucl.\ Phys.\ B {\bf 281} (1987) 509.
%
\bibitem[FG-RSS]{Fuchs-Galois-AutoFusionAlg}
%
J.~Fuchs, B.~Gato-Rivera, B.~Schellekens and C.~Schweigert,
  ``Modular invariants and fusion rule automorphisms from Galois theory,''
  Phys.\ Lett.\ B {\bf 334} (1994) 113
  [hep-th/9405153].
%

\bibitem[Gan]{Gannon}
%
T. Gannon, ``Moonshine beyond the Monster'' Cambridge monographs on 
mathematical physics. 2006.
%
\bibitem[GS]{GSchappacher}
%
C. Goldstein and N. Schappacher, 
``Series d'Eisenstein et fonctions L de courbes elliptiques a multiplication complexe.''
 (French) [Eisenstein series and L functions of elliptic curves with complex multiplication] 
J. Reine Angew. Math. 327 (1981), 184--218.


%
\bibitem[Gross]{Gross-LNM}
%
B. H. Gross, 
``Arithmetic on elliptic curves with complex multiplication.''
With an appendix by B. Mazur. Lecture Notes in Mathematics, 776. 
Springer, Berlin, 1980.  iii+95 pp. ISBN: 3-540-09743-0. 

%
\bibitem[Groth]{Esquisse}
%
A. Grothendieck, ``Esquisse d'un programme,'' recorded in \cite{242}.
%
\bibitem[GV]{GV}
%
S.~Gukov and C.~Vafa,
  ``Rational conformal field theories and complex multiplication,''
  Commun.\ Math.\ Phys.\  {\bf 246} (2004) 181
  [hep-th/0203213].
%
\bibitem[HW]{Harvey}
%
 J.~A.~Harvey and Y.~Wu,
  ``Hecke Relations in Rational Conformal Field Theory,''
  JHEP {\bf 1809} (2018) 032
  [arXiv:1804.06860 [hep-th]].
%
\bibitem[Hecke]{Hecke-Werke}
%
E. Hecke, 
``Zur Theorie der elliptischen Modulfunktionen'' 
(Math. Werke, the 23th article, pp.428--460), 
``Bestimmung der Perioden gewisser Integrale duruch 
die Theorie der Klassenk\"{o}per,'' 
(Math. Werke, the 27th article, pp.505--524).
%
%
\bibitem[I-1]{Ihara-86}
%
Y. Ihara, ``Profinite braid groups, Galois representations and complex multiplications,'' 
Ann. of Math. {\bf 123} (1986) pp. 43--106. 
%
\bibitem[I-2]{I-200GT}
%
Y. Ihara, ``Braids, Galois groups, and Some Arithmetic Functions,''
Proc. ICM Kyoto (1990), 99--120; \\
%
"On the embedding of Gal(Qbar/Q) into GT-hat,'' in \cite{200}.
%
\bibitem[Kob]{Koblitz}
%
N. Koblitz, ``Introduction to Elliptic Curves and Modular Forms,''
GTM 97, 1993, Springer.
%
\bibitem[SVOA]{SVOA}
%
V. Kac and M. Wakimoto, ``Unitarizable Highest Weight Representations of the Virasoro, Neveu--Schwarz and Ramond Algebras''in {\it Conformal groups and related symmetries: physical results and mathematical background}, Lect. Notes in Phys. {\bf 261} (1986) 345--371, \\
V. Kac and Weiqiang Wang, ``Vertex Operator Superalgebras and Their Representations,'' Contemp. Math. {\bf 175} (1994) 161--191 [hep-th/9312065].
%
\bibitem[Koh]{Kohno}
%
T. Kohno, ``Topological invariants for 3-manifolds using representations of mapping class groups I,'' Topology {\bf 31} (1992) 203--230.
%
\bibitem[KW]{prev.paper}
%
S.~Kondo and T.~Watari,
  ``String-theory Realization of Modular Forms for Elliptic Curves with Complex Multiplication,''
  Commun.\ Math.\ Phys.\  {\bf 367} (2019) 89
  [arXiv:1801.07464 [hep-th]].
%
\bibitem[Mat]{Matsumoto}
%
M. Matsumoto, ``Difference between Galois representations in automorphism and outerautomorphism groups of a fundamental group,'' Proc. Amer. Math. Soc. {\bf 139} (2011) 1215--1220. 
%
\bibitem[Mil-pd]{Milne-pedestrian}
%
J. S. Milne, 
``Abelian varieties with complex multiplication (for pedestrians)'' 
 arXiv:math/9806172 [math.NT].
%
\bibitem[Mil-et]{Milne-etale}
%
J. S. Milne, ``Etale Cohomology,'' (PMS-33) Princeton University Press (1980).
``Lectures on etale cohomology'' available online:
www.jmilne.org/math/CourseNotes/ 
%
\bibitem[Miy]{Miyake}
%
T.~Miyake, ``{\it Modular Forms},'' Springer, Berlin (1989).
%
\bibitem[Moo]{MooreArth}
%
G.~Moore, ``Arithmetic and attractors,'' hep-th/9807087.
%
\bibitem[MS]{MS}
%
G.~W.~Moore and N.~Seiberg,
  ``Classical and Quantum Conformal Field Theory,''
  Commun.\ Math.\ Phys.\  {\bf 123} (1989) 177.
%
\bibitem[Oort]{Oort}
%
F. Oort, ``The algebraic fundamental group,''  an article in \cite{242}.
%
\bibitem[Ribet]{Ribet-Neben}
%
K. A. Ribet, 
``Galois representations attached to eigenforms with Nebentypus. Modular functions of one variable, V'' 
(Proc. Second Internat. Conf., Univ. Bonn, Bonn, 1976), pp. 17--51. 
Lecture Notes in Math., Vol. 601, Springer, Berlin, 1977. 
%
%
\bibitem[RS]{Ribet-Stein}
%
K. A. Ribet and W. Stein 
``Lectures on Modular Forms and Hecke Operators'',
online book, available at https://wstein.org/
%
\bibitem[SAGE]{SAGE}
%
SageMath, a free open-source mathematics software system available online. 
%
\bibitem[Schap]{SchPoH}
%
N. Schappacher, 
``Periods of Hecke characters.''
Lecture Notes in Mathematics, 1301. Springer-Verlag, Berlin, 1988. xvi+160 pp. 
ISBN: 3-540-18915-7 
%
\bibitem[Schimm]{Schimmrigk-05}
%
R.~Schimmrigk,
  ``Arithmetic spacetime geometry from string theory,''
  Int.\ J.\ Mod.\ Phys.\ A {\bf 21} (2006) 6323
  [hep-th/0510091].
%
\bibitem[Schn-GT]{Schneps-survey}
%
L. Schneps, ``The Grothendieck--Teichmueller group GT-hat: a survey,'' 
in \cite{242}.
%
\bibitem[Schn-btc]{Schneps-PS}
%
L. Schneps, ``Fundamental groupoids of genus zero moduli spaces and 
braided tensor categories,'' as section 2 of 
``Moduli Spaces of Curves, Mapping Class Groups and Field Theory,''
by Buff, Fehrenbach, Lochak, Schneps, and Vogel. 
%
SMF/AMS Texts and Monographs vol. 9.
%
\bibitem[LNS200]{200}
%
L. Schneps, ``The Grothendieck Theory of Dessins d'Enfants,'' London Mathematical Society LNS vol. 200, Cambridge U Press, 1994.
%
\bibitem[LNS242]{242}
%
L. Schneps and P. Lochak, ``Geometric Galois Actions'' 
London Mathematical Society LNS vol. 242, Cambridge U Press, 1997.
%
\bibitem[WeilRep-1]{Weil-repr}
%
B. Schoeneberg, ``Das Verhalten von mehrfachen Thetareihen bein Modulsubstitutionen,'' Math. Ann. {\bf 116} (1939) 511. 
%
N. Scheithauer, ``The Weil representation of SL(2;Z) and some applications,''
Int. Math. Res. Not. {\bf 8} (2009) 1488.
%
\bibitem[Sehg]{Sehgal}
%
S. Sehgal "Units in integral group rings,'' Monographs and Surveys in Pure and Applied Mathematics, Chapman and Hall, 1993.
%
\bibitem[Shim-eCM]{Shimura-ellCM}
%
G. Shimura, 
``On elliptic curves with complex multiplication as factors of the Jacobians of modular function fields.''
Nagoya Math. J. 43 (1971), 199--208. 
%
\bibitem[Shim-zA]{Shimura-zetaA}
%
G. Shimura  ``On the zeta-function of an abelian variety with complex multiplication.'' Annals. Math {\bf 94} (1971) 504--533.  
%
%
\bibitem[Shim-AA]{Shimura-AA}
%
G. Shimura, ``Introduction to the Arithmetic Theory of Automorphic Functions,''
1971, Iwanami/Princeton U. Press.
%
%
%
%
\bibitem[WeilRep-2]{Stroemberg}
%
F. Str\"{o}mberg, 
``Weil representation associated with finite quadratic modules,'' 
Math. Zeitschrift (2013) {\bf 275} 509.
%
\bibitem[TK]{TK}
%
A. Tsuchiya and Y. Kanie, ``Vertex operators in the conformal field theory 
on P1 and monodromy representations of the braid group,'' Lett. Math. Phys. {\bf 13} (1987) 303--312.
%
\bibitem[TUY]{TUY}
%
A. Tsuchiya, K. Ueno, and Y. Yamada, 
 ``Conformal field theory on universal family of stable curves with gauge symmetries,'' in p. 459--566 of Adv. Stud. Pure Math. 
 ``Integrable Systems in Quantum Field Theory and Statistical Mechanics'', 
 Math Soc. Japan, 1989. 
 available from Project Euclid. 
%
\bibitem[BCDT]{Taylor-Wiles}
%
%
C. Breuil, B. Conrad, F. Diamond, and R. Taylor, ``On the modularity of elliptic curves over Q: wilde 3-adic exercises,'' J. Amer. Math. Soc. {\bf 14} (2001) 843--939.
%
\bibitem[Wort]{Wortmann}
%
S. Wortmann, 
``Generalized Q-curves and factors of {$J_1(N)$.}''
Abh. Math. Sem. Univ. Hamburg 70 (2000), 51--61.  
%

\end{thebibliography}
\end{document}